\theoremstyle{plain}
\newtheorem{theorem}{Theorem}
\newtheorem{proposition}{Proposition}
\newtheorem{lemma}{Lemma}
\newtheorem{corollary}{Corollary}
\newtheorem{example}{Example}
\theoremstyle{definition}
\newtheorem{assumption}{Assumption}
\newtheorem{definition}{Definition}
\newcommand\blfootnote[1]{%
  \begingroup
  \renewcommand\thefootnote{}\footnote{#1}%
  \addtocounter{footnote}{-1}%
  \endgroup
}
\def\spacingset#1{\renewcommand{\baselinestretch}%
{#1}\small\normalsize}
\title{Covariate-assisted Grade of Membership Models via Shared Latent Geometry}
\author{Zhiyu Xu}
\author{Yuqi Gu}
\affil{Department of Statistics, Columbia University}
\date{}
\newcommand{\di}{\mathcal{P}_{\text{diag}}}
\newcommand{\od}{\mathcal{P}_{\text{off-diag}}}
\newcommand{\cR}{\mathcal{R}}
\newcommand{\cG}{\mathcal{G}}
\newcommand{\cX}{\mathcal{X}}
\newcommand{\cD}{\mathcal{D}}
\newcommand{\boldPi}{\boldsymbol{\Pi}}
\newcommand{\boldTheta}{\boldsymbol{\Theta}}
\newcommand{\boldM}{\mathbf{M}}
\newcommand{\boldX}{\mathbf{X}}
\newcommand{\boldR}{\mathbf{R}}
\newcommand{\boldU}{\mathbf{U}}
\newcommand{\boldG}{\mathbf{G}}
\newcommand{\boldSigma}{\boldsymbol{\Sigma}}
\newcommand{\E}{{E}}
\newcommand{\rP}{\text{pr}}
\newcommand{\Um}{\mathbf{U}^{(m)}}
\newcommand{\UmT}{\mathbf{U}^{(m)\top}}
\newcommand{\Hm}{\mathbf{H}^{(m)}}
\newcommand{\cGm}{\mathbf{G}^{(m)}}
\newcommand{\boldH}{\mathbf{H}}
\newcommand{\boldA}{\mathbf{A}}
\newcommand{\boldE}{\mathbf{E}}
\newcommand{\Pm}{\mathcal{P}_{-m,:}}
\newcommand{\Uml}{\mathbf{U}^{(m,l)}}
\newcommand{\Hml}{\mathbf{H}^{(m,l)}}
\newcommand{\UmlT}{\mathbf{U}^{(m,l)\top}}
\newcommand{\zetaop}{\zeta_{\text{op}}}
\newcommand{\bu}{\boldsymbol{u}}
\newcommand{\boldL}{\mathbf{L}}
\newcommand{\cL}{\mathcal{L}}
\def\hat{\widehat}
\def\tilde{\widetilde}
\begin{document}

\maketitle
\begin{abstract}
The grade of membership model is a flexible latent variable model for analyzing multivariate categorical data through individual-level mixed membership scores. In many modern applications, auxiliary covariates are collected alongside responses and encode information about the same latent structure. Traditional approaches to incorporating such covariates typically rely on fully specified joint likelihoods, which are computationally intensive and sensitive to misspecification. We introduce a covariate-assisted grade of membership model that integrates response and covariate information by exploiting their shared low-rank simplex geometry, rather than modeling their joint distribution. We propose a likelihood-free spectral estimation procedure that combines heterogeneous data sources through a balance parameter controlling their relative contribution. To accommodate high-dimensional and heteroskedastic noise, we employ heteroskedastic principal component analysis before performing simplex-based geometric recovery. Our theoretical analysis establishes weaker identifiability conditions than those required in the covariate-free model, and further derives finite-sample, entrywise error bounds for both mixed membership scores and item parameters. These results demonstrate that auxiliary covariates can provably improve latent structure recovery, yielding faster convergence rates in high-dimensional regimes. Simulation studies and an application to educational assessment data illustrate the computational efficiency, statistical accuracy, and interpretability gains of the proposed method. The code for reproducing these results is open-source and available at \texttt{\url{https://github.com/Toby-X/Covariate-Assisted-GoM}}.
\end{abstract}
\textit{Keywords}: Covariate assistance; Grade of membership model; Mixed membership; Identifiability; Spectral methods; Multi-view data; Entrywise eigenvector perturbation.

\blfootnote{Emails: \texttt{xu.zhiyu@columbia.edu}; \texttt{yuqi.gu@columbia.edu}.}

\section{Introduction}
Multivariate categorical data arise in a wide range of scientific applications, including social science surveys \citep{Erosheva2007Describing}, genomic studies \citep{Dey2017Visualizing}, educational and psychological assessments \citep{Shang2021Partial-Mastery}, and topic modelling \citep{Blei2003}. A central inferential goal in such settings is to uncover interpretable latent structure from observed responses. The grade of membership (GoM) model \citep{Woodbury1978Mathematical, erosheva2002grade, Airoldi2014Handbook} provides a flexible and widely used framework for this purpose, allowing each individual to exhibit partial membership across multiple latent extreme profiles, rather than belonging exclusively to a single class. At the individual level, each subject has a vector of latent mixed membership scores, which are nonnegative and sum to one. At the population level, each latent extreme profile is characterized by its unique distribution over the multivariate categorical responses. By combining a small number of prototypical latent profiles with individual-specific mixed membership scores, the GoM model offers a flexible latent variable framework with both strong expressive power and clear interpretability.

The flexibility of the GoM model comes at the cost of estimation and computation challenges. Due to the need to integrate out the mixed membership scores, computationally intensive Markov chain Monte Carlo algorithms are dominant approaches to estimating the model and its variants in the literature \citep{erosheva2002grade, Erosheva2007Describing, Gormley2009grade, Gu2023DimensionGrouped}. Similarly, the marginal maximum likelihood estimation approach requires advanced optimization to tackle the intractable integral, leading to inefficient computation. To circumvent the troublesome integral, joint maximum likelihood estimation \citep{erosheva2002grade} treats the mixed membership scores as fixed parameters, instead of latent random variables, and jointly updates all parameters until convergence. However, joint MLE remains sensitive to initialization and does not scale well to high-dimensional settings.
These challenges have motivated recent interest in likelihood-free alternatives. In particular, \citet{chen2024spectral} exploited the low-rank structure of the population response matrix to develop a spectral estimator, which is computationally efficient and consistent under mild identifiability conditions.

In many modern data collection settings, response data are accompanied by auxiliary covariates measured on the same individuals. For example, in large-scale educational assessments such as trends in international mathematics and science study \citep{mullis2017timss}, students' social economic background is gathered. In genomics, recent technological advances allow for the simultaneous collection of the spatial information alongside gene expression data \citep{staahl2016visualization}. In biodiversity studies \citep{piirainen2023species}, species count for every location is supplemented with habitat information of each location.
Such covariates often encode information about the \emph{same latent structure} that governs the primary responses. Traditional approaches to incorporating covariates in mixed membership models typically rely on fully specified joint likelihoods \citep{Airoldi2014Handbook}, which can be difficult to justify for mixed-type covariates and are sensitive to misspecification. Moreover, these approaches do not directly address how auxiliary information may affect identifiability or finite-sample estimation error.

In this paper, we show that covariates can be incorporated into grade of membership analysis in a clean likelihood-free way. Rather than modelling the joint distribution of responses and covariates, we exploit the fact that both data sources provide noisy observations of the same underlying latent mixed membership structure. Both the response matrix and the covariate matrix admit low-rank decompositions driven by the same individual-level latent membership scores. This shared geometric structure allows covariates to be integrated through their contribution to the latent singular subspace, without imposing additional distributional assumptions.

We introduce a covariate-assisted grade of membership model and propose a spectral estimation procedure that jointly leverages response and covariate information. Our approach constructs a weighted combination of carefully modified Gram matrices obtained from the responses and covariates, with a balance parameter controlling their relative influence. 
Statistically, the balance parameter $\alpha$ weights two noisy data matrices reflecting the same latent singular subspace; its role is to control the contribution of response- and covariate-derived information according to their relative signal-to-noise ratios.
High-dimensionality and heteroskedastic noise are intrinsic features of multivariate binary responses and mixed-type covariates. To accommodate these features, we employ heteroskedastic principal component analysis \citep{Zhang2021} to estimate the shared singular subspace.
Based on the singular subspace estimate, we then exploit the simplex geometry of mixed membership scores to recover individual-level memberships and population-level parameters.

Our theoretical analysis establishes several new results. First, we show that incorporating covariates weakens the identifiability requirements relative to the covariate-free GoM model, enlarging the class of models for which the latent structure is uniquely recoverable. Second, we derive finite-sample, entrywise error bounds for both mixed membership scores and other parameters. These bounds reveal that auxiliary covariates can lead to strictly faster convergence rates, particularly in high-dimensional regimes where the number of items or covariates grows with the sample size. Third, we analyse the role of the balance parameter and demonstrate that naively stacking response and covariate matrices can be suboptimal when their signal-to-noise ratios differ.

To make the proposed method fully practical, we develop a data-driven procedure for selecting the balance parameter based on cross-validated prediction error. Extensive simulation studies demonstrate that the covariate-assisted estimator is both more accurate and more computationally efficient than existing likelihood-based and spectral alternatives. We further illustrate the method using data from a large-scale educational assessment, where incorporating student-level covariates yields more interpretable latent proficiency profiles and improved predictive performance.

The rest of this paper is organized as follows. Section \ref{sec:method} introduces the model formulation and the covariate-assisted spectral estimation procedure. Section \ref{sec:theory} presents identifiability results and finite-sample error bounds. Section \ref{sec:simulation} presents simulation studies to assess the performance of the proposed method. Section \ref{sec:application} illustrates the proposed method on an educational assessment dataset.
Section \ref{sec:conclusion} concludes the paper and discusses future directions. The proofs of all theoretical results are in the supplementary materials.

\section{Methodology}
\label{sec:method}
\subsection{Notation}
We write $A_n\lesssim B_n$ if there exists some constant $c$ such that $A_n\leq cB_n$ for sufficiently large $n$, write $A_n=\tilde{O}(B_n)$ if $A_n\lesssim B_nf(\log n)$, where $f$ is a polynomial function of some finite degree. For any matrix $\mathbf{A}$, let $\|\mathbf{A} \|$ denote its spectral norm, 
and $\|\mathbf{A} \|_{F}$ the Frobenius norm. Denote the two-to-infinity norm by $\|\mathbf{A}\|_{2,\infty} :=\max_{i}\|\mathbf{A}_{i,:}\|_2 $, and the infinity norm by $\|\cdot\|_{\infty}:=\max_{i,j}|A_{ij}| $. For any random vector $\boldsymbol{u}$, let $\|\cdot\|_2$ denote its $L_2$ norm. For any random variable $Y$, define its sub-Gaussian norm as $\|Y\|_{\psi_2}:= \inf\{t>0: {E}[\exp(Y^2/t^2)]\leq 2\}$. Its equivalent definitions can be referred to \citet{vershynin2018high}. Let $\mathbf{I}(\cdot)$ denote the indicator function. Let $\Delta^r$ denote the $r$ dimension simplex, i.e. $\forall \pi \in \Delta^r$, $\sum_{k=1}^r\pi_k=1$ and $\pi_k\geq 0$. Denote the rank-$r$ identity matrix by $\mathbf{I}_r$, and denote the all-one vector by $\boldsymbol{1}_r:=(1,\dots,1)\in\mathbb{R}^r$.

For any symmetric matrix $\mathbf{S} \in \mathbb{R}^{d_1\times d_1}$ of rank $r$, denote its $i$th largest eigenvalue to be $\lambda_{i}(\mathbf{S})$. For any matrix $\mathbf{A} \in \mathbb{R}^{d_1\times d_2}$ of rank $r$, let $\sigma_i(\mathbf{A})$ denote the $i$th largest singular value of $\mathbf{A}$, and $\sigma_i(\mathbf{A})=\sqrt{\lambda_{i}(\mathbf{A}^\top \mathbf{A})}$.
Define the condition number for any matrix of $\mathbf{A}$ as $\kappa(\mathbf{A}) = \sigma_1(\mathbf{A})/\sigma_r(\mathbf{A})$. 
Define the incoherence parameter \citep{Chen2021}
for $\mathbf{A}$ to be $\mu = \max\bigl\{{d_1d_2\|\mathbf{A}\|_{\infty}^2}/{\|\mathbf{A}\|_F^2}, {d_1\|\mathbf{U}\|_{2,\infty}^2}/{r}, {d_2\|\mathbf{V}\|_{2,\infty}^2}/{r}\bigr\}$, where $\boldU$ and $\mathbf{V}$ are respectively the top-$K$ left and right singular vectors of $\mathbf{A}$. 
For a square matrix $\mathbf{A}=(a_{ij})\in \mathbb{R}^{d\times d}$,  define $\di(\mathbf{A})=\text{diag}\left( a_{11},a_{22},\dots,a_{dd} \right)$. 
The hollowing operator is defined as $\od(\mathbf{A}):=\mathbf{A}-\di(\mathbf{A})$. Define $\mathcal{O}^{r\times r}$ as the set of all $r\times r$ orthonormal matrices.

\subsection{Model set up and motivation}

In this work, we focus on multivariate binary responses, which are frequently encountered in social and biomedical sciences, such as yes/no responses in social survey items \citep{Erosheva2007Describing}, correct/wrong answers to questions in educational assessments \citep{Shang2021Partial-Mastery}, and presence/absence of certain gene expressions in single-cell sequencing data \citep{BravoGonzalez-Blas2019cisTopic}. We will briefly discuss how to extend the proposed method to multivariate categorical data in the discussion section.

We first introduce the notation. Denote the data matrix by $\boldR = \{R_{ij}\}\in \{0,1\}^{N\times J}$, collecting the responses of $N$ subjects to $J$ items.
For each subject $i \in[N]$, we denote the mixed membership score as $\bm{\pi}_i\in \Delta^{K-1}$. Define the collection of mixed membership scores for every object to be $\boldPi=[ \bm{\pi}_1\mid \cdots \mid \bm{\pi}_N ]^{\top}\in [0,1]^{N\times K}$.
At the population level, define
\begin{equation*}
    \theta_{jk} = \rP(R_{ij}=1 \mid i\text{ in the }k\text{th extreme profile})
\end{equation*}
In the latent class model or grade of membership literature, the $\theta_{jk}$ are also called the \emph{item parameters}, which characterize the properties of the $J$ items.
The conditional distribution of $R_{ij}$ given $\bm{\pi}_i$ and $\bm{\theta}_j$ is
    $\rP(R_{ij}=1\mid \bm{\pi}_i,\bm{\theta}_j) = \sum_{k=1}^K\pi_{ik}\theta_{jk},$
which is a convex combination of the item parameters $\theta_{jk}$ with weights defined by the mixed membership scores $\pi_{ik}$.

In the literature of grade of membership models, there are two typical methods to estimate the mixed membership scores $\bm{\pi}_i$, adopting the random-effect and the fixed-effect perspective, respectively  \citep{erosheva2002grade}. The random effect perspective is traditionally more popular, which assumes $\bm{\pi}_i$ is random variable following some distribution on the simplex $\Delta^K$.
A common choice is the Dirichlet distribution, denoted as $D_{\bm{\alpha}}(\bm{\pi}_i)$, where $\bm{\alpha}=(\alpha_1,\ldots,\alpha_K)$ is the parameter vector for the Dirichlet distribution. Then, the marginal likelihood of population parameters is
\begin{equation}
    \label{eqn:marginal}
    L(\boldTheta, \bm{\alpha}\mid \boldR) = \prod_{i=1}^N\int\prod_{j=1}^J \left(\sum_{k=1}^K\pi_{ik}\theta_{jk} \right)^{R_{ij}} \left(1- \sum_{k=1}^K\pi_{ik}\theta_{jk} \right)^{1-R_{ij}}dD_{\bm{\alpha}}(\bm{\pi}_i).
\end{equation}
Given such complicated integral of the mixed memberships, Bayesian MCMC methods \citep{Erosheva2007Describing, Gu2023DimensionGrouped, bhattacharya2012simplex} are predominant estimation paradigms.

From the fixed-effect perspective, mixed membership scores are directly viewed as parameters. The joint likelihood function for both $\boldPi$ and $\boldTheta$ is
\begin{equation}
    \label{eqn:joint}
    L(\boldPi, \boldTheta \mid \boldR) = \prod_{i=1}^N\prod_{j=1}^J \left(\sum_{k=1}^K\pi_{ik}\theta_{jk} \right)^{R_{ij}} \left(1- \sum_{k=1}^K\pi_{ik}\theta_{jk} \right)^{1-R_{ij}}.
\end{equation}
This avoids the complicated integral and enables a more efficient joint MLE algorithm that maximizes (\ref{eqn:joint}) to obtain estimates \citep{erosheva2002grade}. Still, the likelihood is highly nonconvex, and thus the algorithm suffers severely from sensitivity to initialization and slow convergence.

Departing from the likelihood perspective,
a recent study \citet{chen2024spectral} made the key observation that the binary response matrix can be written as
\begin{equation}
    \label{eqn:R decom}
    \boldR = \underbrace{\boldPi\boldTheta^\top}_{\text{``signal''}} + \underbrace{\boldE^R}_{\text{``noise''}},
\end{equation}
where $\boldE^R$ is the mean zero noise matrix with entries $E^R_{ij} = R_{ij} - E[R_{ij}]$. For notational simplicity, from now on, we will denote $\cR=E[\boldR]$. The low-rank factorization implies fast spectral
methods may be applied to estimate the parameters for the grade of membership models. Based on an interesting geometric structure, 
\citet{chen2024spectral} propose an efficient and accurate singular value decomposition based algorithm for estimation.

Building on this key observation, we assume that the covariates share the same mixed membership structure as the responses at the population level.
In many applications, this is a reasonable and common assumption. Specifically, covariate-assisted graph learning \citep{Binkiewicz2017, Yan2018} usually assumes the network exhibits the same latent structure as the covariates. For example, \citet{Binkiewicz2017} aims to uncover cluster structure underlying the brain connectivity network. They utilize the spatial location and the brain atlas region as covariates. Biologically, the spatial location of the brain and the brain atlas region are good representations of which part of the brain are more likely to be connected. Hence, \citet{Binkiewicz2017} assumes the two covariates share the same latent block structure as the connectivity matrix. 
Denote the covariate matrix as $\boldX \in \mathbb{R}^{N\times W}$. 
Using a similar spirit, we assume the covariates $\boldX$ satisfy that
\begin{equation}
    \label{eqn:X decom}
    \boldX = \underbrace{\boldPi\boldM^\top}_{\text{"signal"}} + \underbrace{\boldE^X}_{\text{"noise"}},
\end{equation}
where $\boldM\in [-\xi,\xi]^{W\times K}$, $\boldE^X$ is a mean zero noise matrix. Similarly as the response matrix, denote $\cX=E[\boldX]$.

\subsection{Covariate-assisted spectral estimation of grade of membership models}

To combine two sources of information from the responses and covariates, Equations (\ref{eqn:R decom}) and (\ref{eqn:X decom}) suggest the signal part of $\boldR$ and $\boldX$ both capture the mixed membership structure of $N$ subjects. Therefore, both signal gram matrices $\cR\cR^\top = \boldPi \boldTheta^\top\boldTheta \boldPi^\top$ and $\cX\cX^\top=\boldPi\boldM^\top \boldM\boldPi$ are informative about the mixed membership scores $\boldPi$.
Although the signal parts share the same mixed membership structure, it is important to notice that in reality, the two data matrices can have very different noise levels. It is essential to account for this difference to achieve accurate estimation of the shared latent space \citep{baharav2025stacked}. Hence, we introduce a balance parameter $\alpha$ to balance these two sources of information. 
We start by considering a weighted sum of the signal gram matrices, which bears the following two low-rank decomposition,
\begin{equation}
    \label{eqn:oracle_g}
    \cG := \cR\cR^\top + \alpha \cX\cX^\top \stackrel{(i)}{=} \boldPi\left( \boldTheta^\top \boldTheta + \alpha \boldM^\top \boldM \right)\boldPi^\top := \boldPi\cD\boldPi^\top
    \stackrel{(ii)}{=} \boldU \boldsymbol\Lambda \boldU^\top.
\end{equation}
where $(i)$ follows from the latent structure of $\cR$ and $\cX$, and $(ii)$ is the eigenvalue decomposition of $\cG$.  Here, $\boldU$ is a $N\times K$ matrix collecting the $K$ eigenvectors of $\cG$ and satisfies $\boldU^\top\boldU = \mathbf{I}_K$, and $\boldsymbol\Lambda =\text{diag}(\lambda_1,\ldots,\lambda_K)$ collects the eigenvalues. We call $\boldU$ the \emph{signal eigenspace}.
The matrices $\cR\cR^\top$ and $\cX\cX^\top$ are both reflecting the same signal eigenspace $\boldU$, but with different noise levels. The balance parameter $\alpha$ controls their relative contribution to estimating $\boldU$. When $\alpha=0$, the method reduces to the covariate-free estimator; when $\alpha$ is too large, the estimated eigenspace can be dominated by noisy covariates.

First, for identifiability of the model, we introduce a mild condition.

\begin{definition}[Pure subject]
    Subject $i$ is a pure subject for an extreme profile $k$ if the only positive entry of $\boldsymbol{\pi}_i$ is located at index $k$; that is, $\pi_{ik}=1$ and $ \pi_{il} = 0$ for $l\neq k$.
\end{definition}
\begin{assumption}
    \label{assump:extreme}
    $\boldPi$ satisfies that each of the $K$ extreme latent profiles has at least one pure subject.
\end{assumption}
Assumption \ref{assump:extreme} implies that after certain row and column permutation, there exists an identity submatrix $\mathbf I_K$ inside $\boldPi$. This condition directly leads to the following proposition.

\begin{proposition}
    \label{prop:geom}
    Under Assumption \ref{assump:extreme}, the eigenspace $\mathbf{U}$ of $\cG=\cR\cR^\top + \alpha \cX\cX^\top$ in \eqref{eqn:oracle_g} satisfies,
    \begin{equation}
    \label{eqn:geom}
        \boldU = \boldPi \boldU_{S,:},
    \end{equation}
    where $S=(S_1,\dots,S_K)$ is one set of the indices of the $K$ pure subjects for each extreme latent profile; i.e., $\boldPi_{S,:} = \mathbf I_K$. Further, $\boldPi$ can also be represented as,
    \begin{equation}
        \label{eqn:Pi est}
        \boldPi = \boldU\boldU_{S,:}^{-1}.
    \end{equation}
\end{proposition}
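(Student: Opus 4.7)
The plan is to argue that the column spaces of $\boldU$ and $\boldPi$ coincide, and then use the pure-subject rows to explicitly identify the change-of-basis matrix between them. Concretely, from the factorization $\cG = \boldPi \cD \boldPi^\top$ in \eqref{eqn:oracle_g}, every column of $\cG$ is a linear combination of columns of $\boldPi$, so $\mathrm{col}(\cG) \subseteq \mathrm{col}(\boldPi)$. Assumption \ref{assump:extreme} guarantees that $\boldPi$ contains $\mathbf{I}_K$ as a submatrix after row permutation, hence $\mathrm{rank}(\boldPi) = K$. Combined with the fact that $\cD = \boldTheta^\top\boldTheta + \alpha \boldM^\top\boldM$ is positive definite whenever $\boldTheta$ has full column rank (and $\alpha \ge 0$), this forces $\mathrm{rank}(\cG) = K$, so $\mathrm{col}(\cG) = \mathrm{col}(\boldPi)$. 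Since $\boldU$ also spans $\mathrm{col}(\cG)$ by the eigendecomposition $\cG = \boldU\boldsymbol{\Lambda}\boldU^\top$, we conclude $\mathrm{col}(\boldU) = \mathrm{col}(\boldPi)$.

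Given equal column spaces and $\mathrm{rank}(\boldPi) = K$, there exists a unique matrix $\boldA \in \mathbb{R}^{K\times K}$ such that $\boldU = \boldPi \boldA$. I would then plug in the pure-subject rows: for $S = (S_1, \dots, S_K)$ with $\boldPi_{S,:} = \mathbf{I}_K$ from Assumption \ref{assump:extreme}, we have
\begin{equation*}
    \boldU_{S,:} \;=\; \boldPi_{S,:}\, \boldA \;=\; \mathbf{I}_K\, \boldA \;=\; \boldA,
\end{equation*}
which yields $\boldU = \boldPi\, \boldU_{S,:}$, establishing \eqref{eqn:geom}.

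For the inversion step, I would note that $\boldU$ has rank $K$, and if $\boldU_{S,:}$ were singular then $\boldPi \boldU_{S,:}$ would have rank less than $K$, contradicting $\mathrm{rank}(\boldU) = K$. Hence $\boldU_{S,:}$ is invertible, and multiplying \eqref{eqn:geom} on the right by $\boldU_{S,:}^{-1}$ gives $\boldPi = \boldU\, \boldU_{S,:}^{-1}$, which is \eqref{eqn:Pi est}.

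The content is essentially a standard column-space argument, so no step is really a serious obstacle; the only subtle point is ensuring $\cG$ truly has rank $K$ so that $\boldU$ has exactly $K$ columns matching the factor dimensions of $\boldPi$. This needs the mild background condition that $\boldTheta$ is of full column rank (so $\cD$ is positive definite), which is a standard nondegeneracy requirement and should be spelled out as a regularity assumption either here or in the theoretical section \ref{sec:theory}. Everything else is linear algebra.
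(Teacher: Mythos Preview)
Your argument is correct and slightly more direct than the paper's. The paper proceeds by introducing the stacked matrix $\mathcal{L} = [\cR \mid \sqrt{\alpha}\cX]$, writing its SVD as $\mathcal{L} = \boldU\boldSigma\mathbf{V}^\top$ (so that the left singular vectors of $\mathcal{L}$ coincide with the eigenvectors of $\cG = \mathcal{L}\mathcal{L}^\top$), and then reading off $\mathcal{L}_{S,:} = \boldU_{S,:}\boldSigma\mathbf{V}^\top = [\boldTheta^\top \mid \sqrt{\alpha}\boldM^\top]$ from the pure-subject rows; comparing with $\mathcal{L} = \boldPi[\boldTheta^\top \mid \sqrt{\alpha}\boldM^\top]$ and cancelling $\boldSigma\mathbf{V}^\top$ gives $\boldU = \boldPi\boldU_{S,:}$. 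Your column-space argument on $\cG$ itself avoids introducing $\mathcal{L}$ and is more self-contained for this proposition. The paper's detour through $\mathcal{L}$ pays off later, since the same SVD objects $\boldSigma$ and $\mathbf{V}$ reappear in the estimation of $\boldTheta$ and $\boldM$ (Step~7 of Algorithm~\ref{alg:gom} and Theorem~\ref{theorem:VL}), so their proof doubles as setup for that machinery. Both routes require the same nondegeneracy condition $\mathrm{rank}(\cD) = K$, which you correctly flag and which the paper handles implicitly here and explicitly in Theorem~\ref{theorem:identifiability}(a).
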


Similar conditions and geometric structures have been discovered in related works on mixed membership models \citep{Mao2019, ke2022using, chen2024spectral}.
Equation (\ref{eqn:geom}) implies $\boldU_{i,:} = \sum_{k=1}^K \pi_{ik} \boldU_{S_k,:}$ for every $i\in[N]$. This means each row of $\boldU$ lies in a $K$ dimensional simplex. The vertices of this simplex are $\boldU_{S_1,:}, \ldots, \boldU_{S_K,:}$, corresponding to one set of $K$ pure subjects.
Therefore, under Assumption \ref{assump:extreme}, as long as we can estimate $S=(S_1,\ldots,S_K)$, we can recover $\boldPi$ via (\ref{eqn:Pi est}). 

One of the most popular algorithms to find the vertices of a simplex is the successive projection algorithm \citep{Araujo2001successive, Gillis2014}. 
This algorithm successively finds which row of $\boldU$ has the largest $l_2$ norm, denoted as $\bm{u}_k$. Then, it projects other rows of $\boldU$ to the orthogonal space of $\bm{u}_k$. By sequentially performing this procedure $K$ times, one can uncover the $K$ vertices of the simplex. The algorithm is presented in Step.3 in Algorithm \ref{alg:gom}.

Thus far, we have described a procedure to recover the parameters 
for the oracle signal matrices. However, with noise, estimation becomes more complex. By substituting the sample version of $\cR$ and $\cX$ in (\ref{eqn:oracle_g}), we have
\begin{equation}
    \label{eqn:decom}
    \boldR\boldR^\top + \alpha\boldX\boldX^\top = \cR\cR^\top + \boldE^R\cR^\top + \cR\boldE^{R\top} + \boldE^R\boldE^{R\top} + \alpha \left(\cX\cX^\top + \boldE^X\cX^\top + \cX\boldE^{X\top} + \boldE^X\boldE^{X\top}\right).
\end{equation}
Under the previous assumption of mean zero independent noise, the expectations of the two quadratic terms of the noise, $\boldE^R\boldE^{R\top}$ and $\boldE^X\boldE^{X\top}$, are diagonal matrices. 
The diagonal entries are respectively $\{\sum_{j=1}^J\text{var}(e_{ij}^R)\}_{i=1,\ldots,N}$ and $\{\sum_{l=1}^W\text{var}(e_{il}^X)\}_{i=1,\ldots,N}$. Hence, $\boldR\boldR^\top + \alpha\boldX\boldX^\top$ is a biased estimate of the oracle weighted sum $\cG=\cR\cR^\top + \alpha \cX\cX^\top$, and the bias only occurs in the diagonal entries.
Further, the bias becomes more severe under high-dimensional and heteroscedastic noise settings, which are our motivating settings. To be specific, in modern data collection processes, especially in social sciences and genomics, scientists are able to collect large numbers of both item responses and covariates. This leads to the challenge of high-dimensionality with both $J$ and $W$ being large. As one can see from the explicit form of the bias, as $J$ and $W$ get larger, the bias gets larger. An even more challenging issue is that multiple categorical responses and mixed types of covariates inherently create heteroskedastic noise. These two challenges make $\boldR\boldR^\top + \alpha \boldX\boldX^\top$ an unsatisfactory approximation to $\cG$. Hence, the eigenspace of $\boldR\boldR^\top + \alpha \boldX\boldX^\top$ is a poor estimator of the signal eigenspace $\boldU$.

We propose to leverage recent developments in eigen-space estimation to provide a better approximation to $\cG$. Motivated by the observation that only the diagonal entries of the gram matrices are biased, \citet{caiSubspaceEstimationUnbalanced2020} deletes the diagonal entries of the gram matrices when the number of columns greatly exceeds the number of rows and heteroscedasticity is severe. Although at the same time, signal in the diagonal entries is also deleted along with the noise, this procedure still benefits the estimation when the signal is weak and the noise is strong.
To avoid the downside of eliminating the signal in the diagonal entries, 
\citet{Zhang2021} further proposed the heteroskedastic principal component analysis (HeteroPCA).
HeteroPCA iteratively imputes the diagonal entries of the gram matrix using the off-diagonals of its low-rank approximation.
Through this iterative approach, HeteroPCA can achieve a better estimate of the left singular subspace both in theory and in simulation studies \citep{Yan2021}.
The full algorithm is presented in Algorithm \ref{alg:HeteroPCA}. Notably, empirical evidence shows a small number of iterates, like 10, succeeds in having a significantly better estimate of the eigen-subspace.

Motivated by the above recent developments, we propose to first apply HeteroPCA to both $\boldR\boldR^\top$ and $\boldX\boldX^\top$ to obtain better estimators of the signal gram matrices $\boldG_R$ and $\boldG_X$.
Then, we define the refined estimate of $\cG$ as
\begin{equation*}
    \boldG = \boldG_R + \alpha\boldG_X.
\end{equation*}
From now on, we will denote the top $K$ singular value decomposition of $\cG$ to be $\boldU \mathbf{\Lambda} \mathbf{V}^\top $, and top-$K$ singular value decomposition of $\boldG$ to be $\hat{\boldU}\hat{\mathbf{\Lambda}}\hat{\mathbf{V}}^\top$. 

For the estimation of $\boldTheta$, we rely on the population level decomposition
\begin{equation*}
    \cL:= [\cR \mid \sqrt{\alpha}\cX] = \boldPi[\boldTheta \mid \sqrt{\alpha}\boldM]^\top.
\end{equation*}
Following the previous procedures, we have obtained an estimate of $\boldPi$. An estimate of $\cL$ can be obtained by projecting the sample level $\boldL=[\boldR\mid \alpha \boldX]$ to the column space of $\hat{\boldU}$, that is $\hat{\boldU}\hat{\boldU}^\top \boldL$.
Therefore, following a similar idea as multivariate linear regression, we can have a closed form estimate for $\boldTheta$, which is written in Step.7 of Algorithm \ref{alg:gom}.

We present the complete algorithm in Algorithm \ref{alg:gom}. Step.3 describes each step of the successive projection algorithm. Due to noise, each row of the estimated $\tilde{\boldPi}$ may not lie in a $K$ dimensional simplex. Hence, we find the closest vector inside the unit simplex by employing Algorithm 1 in \citet{condat2016fast}.
Step.8 is not necessary for our estimation procedure; but it is the default setting in package \texttt{sirt} in \texttt{R} for the joint MLE method for the grade of membership model. For a fair comparison, we set $\varepsilon=0.001$ to match their settings.

\begin{algorithm}[h!]
\caption{Heteroskedastic principal component analysis \citep{Zhang2021}.}
\label{alg:HeteroPCA}
\begin{algorithmic}[1]
    \Require Given gram matrix $\boldG = \boldR\boldR^\top \text{ or } \boldX\boldX^{\top}\in \mathbb{R}^{N\times N}$, rank $K$, maximum iterations $T$, error threshold $\epsilon$.
    \State Set $\mathbf{N}^{(0)}=\od(\mathbf{G})$, $t=0$.
    \While{$\hat{\epsilon} \geq \epsilon$ and $t \leq T$}
        \State Perform singular value decomposition on $\mathbf{N}^{(t)}$ and let its top $K$ approximation be $\tilde{\mathbf{N}}^{(t)}= \sum_{i=1}^K \sigma_i^{(t)} \boldsymbol{u}^{(t)}_i\boldsymbol{v}^{(t)\top}_i$.
        \State Let $\mathbf{N}^{(t+1)}= \di\bigl(\tilde{\mathbf{N}}^{(t)}\bigr) + \od\left(\mathbf{N}^{(t)}\right)$.
        \State Set $t = t+1$ and $\hat{\epsilon} = \|\mathbf{N}^{(t)}-\mathbf{N}^{(t-1)}\|$.
    \EndWhile
    \State Set $\tilde{\mathbf{G}} = \mathbf{N}^{(t)}$.
\end{algorithmic}
\end{algorithm}

\begin{algorithm}[h!]
\caption{Covariate-assisted spectral estimation of grade of membership models via the shared simplex geometry.}
\label{alg:gom}
\begin{algorithmic}[1]
    \Require Given response matrix $\boldR$, covariate matrix $\boldX$, number of extreme profiles $K$, threshold $\epsilon$.
    \State Input $\boldR\boldR^\top$, $\boldX\boldX^\top$ and $K$ into Algorithm \ref{alg:HeteroPCA} to obtain $\boldG_R$ and $\boldG_X$.
    \State Obtain the top $K$ singular value decomposition of $\boldG=\boldG_R+\alpha\boldG_X$ as $\hat{\boldU}\hat{\mathbf{\Lambda}} \hat{\mathbf{V}}$.
    \State Let $\mathbf{Y}=\hat{\mathbf{U}}$.
    \For{$k \in [K]$}
        \State $\hat{S}_k = \arg\max(\{\|\mathbf{Y}_{i,:}\|_2 :i\in [N] \}) $.
        \State $\boldsymbol{u} = \mathbf{Y}_{\hat{S}_{k,:}}/ \|\mathbf{Y}_{\hat{S}_{k,:}}\|_2 $.
        \State $\mathbf{Y} = \mathbf{Y}\left(\mathbf{I}_K-\boldsymbol{u} \boldsymbol{u}^{\top}\right) $.
    \EndFor
    \State $\tilde{\boldPi} = \hat{\boldU} \left(\hat{\boldU}_{\hat{S},:}\right)^{-1} $.
    \State Project each row of $\tilde{\boldPi}$ to $\Delta^K$ to get $\hat{\boldPi}$.
    \State Compute $\hat{\boldSigma} = \hat{\mathbf{\Lambda}}^{1/2}$, and $\hat{\mathbf{V}} = \mathbf{L}^\top \hat{\boldU}\hat{\boldSigma}^{-1}$.
    \State $\tilde{\boldTheta} = \left[\hat{\mathbf{V}} \hat{\boldSigma} \hat{\boldU}^{\top} \hat{\boldPi}\left(\hat{\boldPi}^{\top}\hat{\boldPi}\right)^{-1}\right]_{1:J,:}$, and $\hat{\mathbf{M}} = \frac{1}{\sqrt{\alpha}} \left[\hat{\mathbf{V}} \hat{\boldSigma} \hat{\boldU}^{\top} \hat{\boldPi}\left(\hat{\boldPi}^{\top}\hat{\boldPi}\right)^{-1}\right]_{(J+1):(J+W),:}$.
    \State Truncate each entry of $\hat{\boldTheta}$ by $\hat{\theta}_{jk} = \min\{\max(\Tilde{\theta}_{jk}, \varepsilon), 1-\varepsilon\}$.
\end{algorithmic}
\end{algorithm}

\subsection{Selection of the balance parameter $\alpha$}
\label{sec:chooseparam}

Our goal is accurate recovery of the shared latent structure encoded by $\boldsymbol\Pi$, but $\boldsymbol\Pi$ is unobserved. We therefore select the balance parameter $\alpha$ using a predictive criterion on held-out response entries, exploiting the principle that better recovery of the low-rank signal leads to improved out-of-sample prediction in matrix estimation problems.
Predicting unseen entries in $\boldR$ is closely related to matrix recovery with missing data \citep{Chen2021}. For a low-rank matrix $\mathcal{B}$ with observed matrix with missing $\mathbf{B}$, if each entry is observed randomly with probability $p$, we can construct an unbiased estimate of $\mathcal{B}$ by inverse probability weighting
\begin{equation*}
    \tilde{B}_{ij} = \left\{
    \begin{aligned}
        &\frac{1}{p} B_{ij}, \text{ if } (i,j) \text{ is observed,}\\
        &0, \text{ otherwise.}
    \end{aligned}
    \right.
\end{equation*}
Then, we can recover the missing entries of $\mathcal{B}$ by performing spectral methods on $\tilde{\mathbf{B}}$, with sharp theoretical guarantee \citep{Yan2021}.

Following the idea of matrix recovery with missing data, we propose a cross-validation based algorithm to select the appropriate balance parameter. First, we randomly partition the entries in $\boldR$ into $k$ disjoint subsets $\mathcal{A}_1^{R},\ldots , \mathcal{A}_k^R$. Similarly, we partition the entries in $\boldX$ into $\mathcal{A}_1^X,\ldots, \mathcal{A}_k^X$. Then, let $\mathcal{A}_i=\mathcal{A}_i^R\cup \mathcal{A}_i^X$ for all $i \in [k]$, where $\mathcal{A}_i \cap \mathcal{A}_j=\emptyset$ if $i\neq j$. 
For each set, we mask the entries in the set and apply Algorithm \ref{alg:gom} to predict the entries in the set in $\boldR$. 
We use the mean absolute error as the metric for the prediction error for each fold. The mean absolute error for $\alpha_i$ in fold $j$ is
\begin{equation}
    \label{eqn:mae_pred}
    \text{MAE}_{i,j} = \sum_{(a,b)\in \mathcal{A}_j}\frac{|\boldR_{a,b}-\hat{\boldR}_{a,b}|}{|\mathcal{A}_j|}.
\end{equation}
Then, we select the balance parameter that gives the smallest cross-validated prediction error of the unseen entries in the matrix.
\begin{equation}
    \label{eqn:choose_alpha}
    \alpha = \arg\min_{\alpha_i}\sum_{j=1}^k\text{MAE}_{i,j}.
\end{equation}
A similar idea of cross-validated prediction error was used in
\citet{McGrath2024LEARNER}. 
Algorithm \ref{alg:cv} describes the full details of our procedure.

We conduct extensive simulation studies to assess our proposed Algorithm \ref{alg:cv}. Figure \ref{fig:tuning illus} presents the simulation results. The two simulations are conducted under $N=500$, $J=50$, $W=10$, $K=3$. Each row of $\boldPi$ is independently generated from $\text{Dirichlet}(1,1,1)$. Each entry of $\boldTheta$ is independently generated from $\text{unif}[0,1]$, and each entry of $\boldM$ follows $N(0,1)$. Additional independent $N(0,1)$ noise is added to $\cX$ for $\boldX$. 
Plots of the same column come from the same simulation, where the first row is the cross-validated prediction error and the second row is the mean absolute error between the oracle $\boldPi$ and the estimated $\hat{\boldPi}$ via Algorithm \ref{alg:gom}. The optimal $\alpha$ should minimize the mean absolute error between the estimated $\hat{\boldPi}$ and the oracle $\boldPi$. Figure \ref{fig:tuning illus} shows that our algorithm successfully selects $\alpha$ that is close to the optimal $\alpha$.
The similar performance is observed across all the simulations, and demonstrates the validity of Algorithm \ref{alg:cv}.

\begin{algorithm}[h!]
\caption{Selection procedure for the balance parameter $\alpha$.}
\label{alg:cv}
\begin{algorithmic}[1]
    \Require Given main response data $\boldR$, covariates $\boldX$, rank $K$, a sequence of $\bm{\alpha}=(\alpha_1, \dots, \alpha_l)$, fold $k$.
    \State Randomly partition each entry of $\boldR$ into $k$ disjoint subsets $\mathcal{A}_1^R,\ldots ,\mathcal{A}_k^R$, and partition $\boldX$ to $k$ disjoint subsets $\mathcal{A}_1^X,\ldots, \mathcal{A}_k^X$.
    \State Let $\mathcal{A}_i=\mathcal{A}_i^R \cup\mathcal{A}_i^X$. Note that $\mathcal{A}_i \cap \mathcal{A}_j=\emptyset$ for $i\neq j$, and $\cup_{i=1}^k \mathcal{A}_i = [N]\times [J+W]$.
    \For{$i=1$ to $p$}
        \For{$j=1$ to $k$}
            \State $\boldR^{(j)}=\boldR$. Let $\boldR^{(j)}_{m,n}=0$ for all $(m,n)\in \mathcal{A}_{j}$. Similarly define $\boldX^{(j)}$.
            \State Perform Algorithm \ref{alg:gom} with $\boldR^{(j)}$, $\boldX^{(j)}$, $\alpha_i$, and obtain the prediction $\hat{\boldR}^{(j)}=\hat{\boldPi}^{(j)}\hat{\boldTheta}^{(j)\top}$.
            \State Calculate $\text{MAE}_{i,j}$ with (\ref{eqn:mae_pred}).
        \EndFor
    \EndFor
    \State Choose $\alpha$ according to (\ref{eqn:choose_alpha}).
\end{algorithmic}
\end{algorithm}

\begin{figure}[h!]
    \centering
    \includegraphics[width=0.65\textwidth]{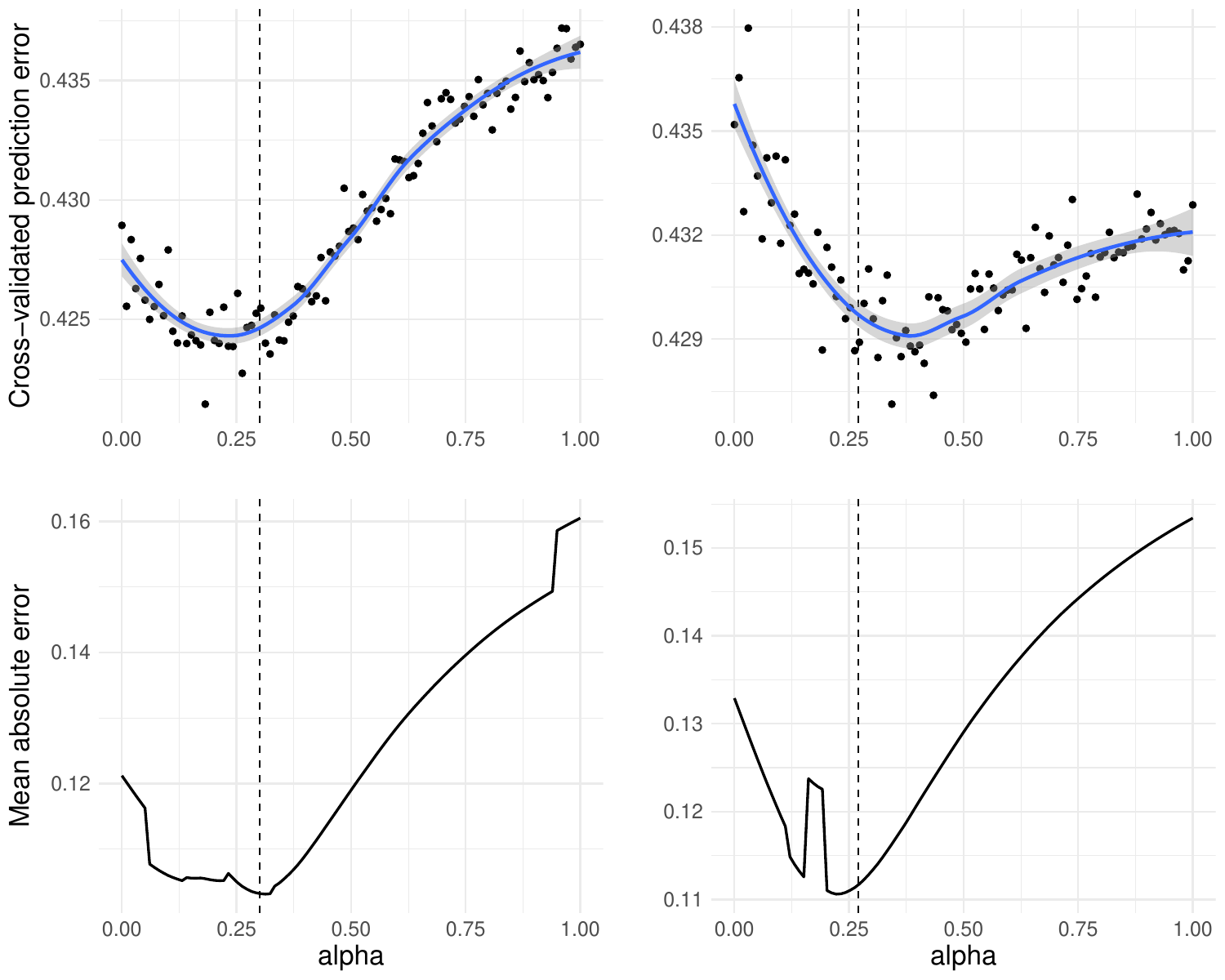}
    \caption{The $x$ axis shows different values of $\alpha$ and the $y$ axis the prediction error (\ref{eqn:mae_pred}) for the first row, and the mean absolute error between $\hat{\boldPi}$ and $\boldPi$ for the second row.}
    \label{fig:tuning illus}
\end{figure}

To determine the range of the balance parameter $(\alpha_1,\ldots,\alpha_l)$ as input to Algorithm \ref{fig:tuning illus}, we follow a similar spirit in \citet{Binkiewicz2017} and define
\begin{align}
    \label{eqn:alphamin alphamax}
    \alpha_1=\alpha_{\min}&=\frac{\lambda_K(\boldG_R)-\lambda_{K+1}(\boldG_R)}{\lambda_1(\boldG_X)},\quad
    % \\
    % \label{eqn:alphamax}
    \alpha_l=\alpha_{\max}=\frac{\lambda_1(\boldG_R)}{\lambda_K(\boldG_X)-\lambda_{K+1}(\boldG_X)}.
\end{align}
The rationale for defining this range comes from how the balance parameter $\alpha$ affects the singular value decomposition of $\boldG_R+\alpha \boldG_X$. To provide more intuition, let us first consider two general matrices $\boldG_R$ and $\boldG_X$ without any structural assumptions. The increase of $\alpha$ changes the magnitude of singular values of $\boldG_R+\alpha \boldG_X$, and can cause the $K$th and $(K+1)$th singular values to cross. This reordering leads to a sharp change in the singular vectors, which leads to the non-smooth transitions in the estimation result as observed in the second row of Figure \ref{fig:tuning illus}. By applying Weyl's inequality \citep{Chen2021}
, we can characterize the range of $\alpha$ that enables such exchange to take place. Readers can refer to \citet{Binkiewicz2017} for further technical details. 
Since the analysis in \citet{Binkiewicz2017} applies to general matrices without any structural assumptions, this interval is a sound starting point for selecting $\alpha$ under our problem. We recommend using it as an initial guide, and then refining the search for $\alpha$ based on the cross-validated prediction error.

\section{Theoretical guarantees for the proposed method}
\label{sec:theory}
\subsection{Identifiability}
\label{sec:identifiability}
Identifiability is vital to statistical models, especially for latent variable models, because it is the prerequisite for reliable statistical inference and meaningful interpretation of latent structures. 
To study identifiability, we 
investigate under what conditions, the matrix decomposition at the population level gives unique $\boldPi$ and $\boldTheta$.
The formal definition is as follows.
\begin{definition}
\label{def-id}
    A covariate-assisted grade of membership model with parameter set $(\boldPi, \cD, \boldTheta, \boldM)$ is identifiable, if any other set of parameters $(\tilde{\boldPi}, \tilde{\cD}, \tilde{\boldTheta}, \tilde{\boldM})$, $\boldPi\cD\boldPi^\top = \tilde{\boldPi}\tilde{\cD}\tilde{\boldPi}^\top$,  $\boldPi\boldTheta^{\top} = \tilde{\boldPi} \tilde{\boldTheta}^{\top}$, and $\boldPi\boldM^\top = \tilde{\boldPi} \tilde{\boldM}^\top$ holds if and only if $(\boldPi,\cD,\boldTheta, \boldM)$ and $(\tilde{\boldPi}, \tilde{\cD}, \tilde{\boldTheta}, \tilde{\boldM})$ are identical up to a permutation of $K$ extreme profiles.
\end{definition}

We next present the sufficient conditions for the expectation identifiability of covariate-assisted grade of membership models. First, we show $(\boldPi,\cD)$ is identifiable under mild conditions.
The theorem below shows that covariate assistance relaxes the rank requirements needed for identifiability, relative to the covariate-free GoM model in \cite{chen2024spectral}.

\begin{theorem}[Identifiability of $\boldPi$ and $\mathcal{D}$]
    \label{theorem:identifiability}
    Suppose $\boldPi$ satisfies Assumption \ref{assump:extreme}. For notational simplicity, denote $\cD=\boldTheta^{\top} \boldTheta+\alpha \boldM^{\top}\boldM$.
    \begin{itemize}
        \item [(a)] If $rank(\cD)=K$, then the covariate-assisted grade of membership model is identifiable.
        \item [(b)] If $rank(\cD)=K-1$ and no column of the top $K-1$ rows of $\cD$ is an affine combination of the other columns of the top $K-1$ rows of $\cD$. Since $\mathcal{D}$ is symmetric, the same reasoning applies to the transpose. Then, the covariate-assisted grade of membership model is identifiable. (An affine combination of vectors $\boldsymbol{x}_1, \dots, \boldsymbol{x}_n$ is defined as $\sum_{i=1}^na_i\boldsymbol{x}_i $ with $\sum_{i=1}^na_i=1 $.)
        \item [(c)] In any other case, if there exists a subject $i$ such that $\pi_{ik}>0$ for every $k=1,\dots,K$, then the covariate-assisted grade of membership model is not identifiable.
    \end{itemize}
\end{theorem}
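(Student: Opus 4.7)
The plan is to reduce identifiability to recovering the pure-subject index set $S$ from the eigenspace geometry of Proposition~\ref{prop:geom}, after which $\cD$, $\boldTheta$, and $\boldM$ can be read off via pseudoinverses; this handles parts (a) and (b), while part (c) is proved by exhibiting an explicit perturbation. Suppose $(\boldPi,\cD,\boldTheta,\boldM)$ and $(\tilde\boldPi,\tilde\cD,\tilde\boldTheta,\tilde\boldM)$ yield the same $\boldPi\cD\boldPi^\top$, $\boldPi\boldTheta^\top$, and $\boldPi\boldM^\top$. Under Assumption~\ref{assump:extreme}, the rows of the top eigenspace $\boldU$ of $\boldPi\cD\boldPi^\top$ lie in the convex hull of the $K$ vertex rows $\boldU_{S,:}$, and these vertices are uniquely recoverable, up to relabeling of the $K$ profiles, as extreme points of that hull whenever they are affinely independent within the column space.

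For part (a), rank$(\cD)=K$ makes $\boldU_{S,:}$ a $K\times K$ invertible matrix, so affine independence is automatic; then $\boldPi=\boldU\boldU_{S,:}^{-1}$ determines $\boldPi$, and $\cD=\boldPi^{+}(\boldPi\cD\boldPi^\top)(\boldPi^{+})^\top$, $\boldTheta^\top=\boldPi^{+}(\boldPi\boldTheta^\top)$, $\boldM^\top=\boldPi^{+}(\boldPi\boldM^\top)$ close out the recovery. For part (b), $\boldU_{S,:}$ is $K\times(K-1)$, and affine independence of its rows in $\mathbb{R}^{K-1}$ is now a nontrivial condition. Using the identity-submatrix structure of $\boldPi_{S,:}$ together with symmetry of $\cD$, a short computation shows this affine independence is equivalent to saying that no column (equivalently, no row) of the top $K-1$ rows of $\cD$ is an affine combination of the others, which is precisely the assumption in (b). The rest of the recovery then proceeds as in (a).

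For part (c), I would construct an explicit non-identifiability witness. In each excluded case there exists a nonzero $v\in\mathbb{R}^K$ with $\cD v=\mathbf{0}$ and $\mathbf{1}^\top v=0$: when rank$(\cD)\leq K-2$, the kernel has dimension at least two and must meet the hyperplane $\{\mathbf{1}^\top v=0\}$ nontrivially; when rank$(\cD)=K-1$ with the affine condition failing, the unique (up to scale) kernel generator itself has zero coordinate sum. Since $\cD=\boldTheta^\top\boldTheta+\alpha\boldM^\top\boldM$ is positive semidefinite, $\cD v=\mathbf{0}$ forces $v^\top\cD v=\|\boldTheta v\|^2+\alpha\|\boldM v\|^2=0$, and hence $\boldTheta v=\mathbf{0}$ and $\boldM v=\mathbf{0}$. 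Letting $i$ be the interior subject with $\pi_{ik}>0$ for all $k$, define $\tilde\boldPi$ to agree with $\boldPi$ except in row $i$, where $\tilde\bpi_i=\bpi_i+\varepsilon v$ for sufficiently small $\varepsilon\neq 0$; then $\tilde\bpi_i$ remains in $\Delta^{K-1}$, and $\tilde\boldPi\cD\tilde\boldPi^\top=\boldPi\cD\boldPi^\top$, $\tilde\boldPi\boldTheta^\top=\boldPi\boldTheta^\top$, $\tilde\boldPi\boldM^\top=\boldPi\boldM^\top$, while $\tilde\boldPi$ is not a permutation of $\boldPi$.

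The main obstacle I expect is the linear-algebraic bridge in part (b): translating the algebraic non-affine-combination condition on the $(K-1)\times K$ submatrix of $\cD$ into the geometric statement that the $K$ vertex rows $\boldU_{S,:}$ are affinely independent in the $(K-1)$-dimensional eigenspace. This equivalence requires tracking how the kernel of $\cD$ propagates through $\cG=\boldPi\cD\boldPi^\top=\boldU\boldsymbol\Lambda\boldU^\top$ and then coupling it with the identity block of $\boldPi_{S,:}$ guaranteed by Assumption~\ref{assump:extreme}; the other two parts are comparatively routine once this bridge is in place.
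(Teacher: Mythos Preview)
Your proposal is correct, and parts (a) and (b) follow essentially the same convex-geometry route as the paper: both recognize that the two candidate pure-subject sets index the same vertex set of the row simplex of $\boldU$, and that affine independence of those vertices (equivalently, $\mathbf{1}^\top v_0\neq 0$ for the unique kernel direction $v_0$ of $\cD$ when $\mathrm{rank}(\cD)=K-1$) forces $\boldPi=\tilde\boldPi\mathbf{P}$. The paper phrases this more algebraically via a block decomposition $\cD=[\mathbf{I}_{K-1};\mathbf{W}_1^\top]\cD_{1:(K-1),1:(K-1)}[\mathbf{I}_{K-1},\mathbf{W}_1]$ and reduces to $(\boldPi-\tilde\boldPi\mathbf{P})[\mathbf{I}_{K-1};\mathbf{W}_1^\top]=0$ with $\mathbf{W}_1^\top\mathbf{1}\neq 1$, citing \citet{chen2024spectral} for the last step; your extreme-point framing is equivalent once one checks that affine independence of the rows of $\boldU_{S,:}$ is exactly $\mathbf{W}_1^\top\mathbf{1}\neq 1$.

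Part (c) is where your approach genuinely differs and is in fact more self-contained. The paper simply invokes \citet{chen2024spectral}, Theorem 2(c), whereas you construct the witness explicitly by perturbing a single interior row along a kernel vector $v$ with $\mathbf{1}^\top v=0$. The key ingredient you add---that $\cD v=0$ forces $\boldTheta v=0$ and $\boldM v=0$ via the positive-semidefinite decomposition $\cD=\boldTheta^\top\boldTheta+\alpha\boldM^\top\boldM$---is what simultaneously preserves all three population matrices $\boldPi\cD\boldPi^\top$, $\boldPi\boldTheta^\top$, and $\boldPi\boldM^\top$, and this observation does not appear in the paper's proof. Your argument that the perturbed $\tilde\boldPi$ is not a column permutation of $\boldPi$ is also easily completed: since the pure-subject rows $e_k$ are unchanged, any permutation $\mathbf{P}$ with $\tilde\boldPi=\boldPi\mathbf{P}$ would have to fix every $e_k$ and hence be the identity, contradicting $\varepsilon\neq 0$.
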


We give several examples to further illustrate the implication of Theorem \ref{theorem:identifiability} in Appendix \ref{append:identifiability}. Then, we provide a sufficient condition for the identification of $\boldTheta$ and $\boldM$.

\begin{proposition}[Identifiability of $\boldTheta$ and $\boldM$]
    \label{prop:identifiability}
    If
    $\boldPi$ is already identified, 
    if the columns of $\bigl[
        \boldTheta^\top \mid  \boldM^\top
    \bigr]^\top$ are affine independent, $\boldTheta$ and $\boldM$ are identifiable.
\end{proposition}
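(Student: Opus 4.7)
The plan is to leverage identifiability of $\boldPi$ together with the pure-subject structure from Assumption~\ref{assump:extreme} to pin down $\boldTheta$ and $\boldM$ column by column. Since $\boldPi$ is identified, any competing $(\tilde{\boldPi},\tilde{\boldTheta},\tilde{\boldM})$ consistent with the equations in Definition~\ref{def-id} must satisfy $\tilde{\boldPi}=\boldPi P$ for some $K\times K$ permutation matrix $P$. Absorbing $P$ into the other parameters by replacing $\tilde{\boldTheta},\tilde{\boldM}$ with $\tilde{\boldTheta}P^\top,\tilde{\boldM}P^\top$, I may assume $\tilde{\boldPi}=\boldPi$, and the two factorization identities collapse to $\boldPi(\boldTheta-\tilde{\boldTheta})^\top=\mathbf{0}$ and $\boldPi(\boldM-\tilde{\boldM})^\top=\mathbf{0}$.

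Next, I would concatenate these into a single equation by defining $\boldA=[\boldTheta^\top\mid\boldM^\top]^\top\in\mathbb{R}^{(J+W)\times K}$ and $\tilde{\boldA}$ analogously, reducing the problem to $\boldPi(\boldA-\tilde{\boldA})^\top=\mathbf{0}$. By Assumption~\ref{assump:extreme}, there exist pure-subject indices $i_1,\ldots,i_K$ with $\boldPi_{i_k,:}=\boldsymbol{e}_k^\top$; reading off the $i_k$-th row of the previous equation gives $\boldsymbol{e}_k^\top(\boldA-\tilde{\boldA})^\top = (\boldA-\tilde{\boldA})_{:,k}^\top = \boldsymbol{0}^\top$, so the $k$-th column of $\boldA-\tilde{\boldA}$ is zero. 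Iterating over $k\in[K]$ yields $\boldA=\tilde{\boldA}$, which unwinds to $\tilde{\boldTheta}=\boldTheta P$ and $\tilde{\boldM}=\boldM P$, matching the permutation of $\boldPi$.

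The affine independence of the columns of $\boldA$ enters as a safeguard that makes the identification unambiguous even outside the simplest full-rank regime: it is equivalent to $\ker(\boldA)\cap H_0=\{\boldsymbol{0}\}$ with $H_0:=\{\boldsymbol{c}\in\mathbb{R}^K:\boldsymbol{1}^\top\boldsymbol{c}=0\}$, which rules out hidden perturbations $\boldA\mapsto\boldA+\boldsymbol{u}\boldsymbol{c}^\top$ for $\boldsymbol{c}\in H_0$ that could otherwise leave $\boldPi\boldA^\top$ invariant under compensating simplex-preserving shifts. The main obstacle is ensuring that the single permutation $P$ aligns $\tilde{\boldTheta}$ and $\tilde{\boldM}$ to $\boldTheta$ and $\boldM$ simultaneously; the vertical concatenation into $\boldA$ resolves this automatically, since a single column-matching on $\boldA$ ties the two factorization equations together.
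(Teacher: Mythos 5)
The paper's appendix does not actually contain a separate proof of Proposition~\ref{prop:identifiability}, so I assess your argument on its own terms. Your core deduction is correct for the statement as literally written: once $\boldPi$ is identified up to a permutation $\mathbf{P}$ and (via Assumption~\ref{assump:extreme}, which is in force throughout) contains an identity submatrix and hence has full column rank $K$, the equations $\boldPi\boldTheta^\top=\tilde{\boldPi}\tilde{\boldTheta}^\top$ and $\boldPi\boldM^\top=\tilde{\boldPi}\tilde{\boldM}^\top$ force $\tilde{\boldTheta}=\boldTheta\mathbf{P}$ and $\tilde{\boldM}=\boldM\mathbf{P}$; stacking into $\boldA=[\boldTheta^\top\mid\boldM^\top]^\top$ and reading off the pure-subject rows, as you do, is a clean way to see this, and the concatenation does automatically tie the two blocks to a single permutation.

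The weak point is your final paragraph. Your argument never uses the affine-independence hypothesis, and the explanation you offer for it is not a real argument: once $\tilde{\boldPi}=\boldPi\mathbf{P}$ with $\boldPi$ of full column rank, a perturbation $\boldA\mapsto\boldA+\bu\boldsymbol{c}^\top$ changes $\boldPi\boldA^\top$ by $\boldPi\boldsymbol{c}\bu^\top\neq\mathbf{0}$ for every $\boldsymbol{c}\neq\boldsymbol{0}$, whether or not $\boldsymbol{1}_K^\top\boldsymbol{c}=0$, so there are no ``hidden perturbations'' for affine independence to rule out. You should either state plainly that the hypothesis is redundant under the literal premise, or --- more plausibly the authors' intent, given the ensuing comparison with the covariate-free condition of \citet{chen2024generalized} --- read the proposition as identifying $(\boldPi,\boldTheta,\boldM)$ jointly from the stacked population matrix $\cL=\boldPi[\boldTheta\mid\sqrt{\alpha}\boldM]^\top$ without presupposing that Theorem~\ref{theorem:identifiability}'s rank conditions on $\cD$ hold. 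Under that reading the affine independence of the $K$ stacked profile vectors is precisely what makes the vertices of the row simplex of $\cL$ and the barycentric coordinates of each row unique, and your proof omits that geometric argument entirely; as written it proves only the trivial version of the claim.
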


In comparison, in the covariate-free grade of membership model, identifiability indicates the identification of $\boldTheta$ and $\boldPi$. \citet{chen2024generalized} shows that under Assumption \ref{assump:extreme} and the columns of $\boldTheta$ is affine independent, the covariate-free model is identifiable. Although simultaneous identification of $\boldTheta$ and $\boldM$ seemingly should require more stringent conditions in the considered covariate-assisted case, Proposition \ref{prop:identifiability} actually requires weaker conditions. Under the condition that columns of $\boldTheta$ are affine independent, the columns of $\bigl[\boldTheta^\top \mid \boldM^\top\bigr]^\top$ are affine independent. Thus, Proposition \ref{prop:identifiability} holds under the identifiability conditions of the covariate-free model. Similarly, if the columns of $\boldM$ are affine independent, conditions in Proposition \ref{prop:identifiability} are satisfied. Even if neither $\boldTheta$ nor $\boldM$ has affine independent columns, conditions in Proposition \ref{prop:identifiability} can still be satisfied with some specific $\boldTheta$ and $\boldM$.

\subsection{Finite-sample estimation error bounds}
\label{sec:guarantee}
This section presents the theoretical guarantees of our method. We establish finite-sample error bound for all the parameters in the covariate-assisted grade of membership models. First, we need to adopt several assumptions in singular subspace estimation \citep{Yan2021}.

\begin{assumption}[Heteroskedastic noise]
    \label{assump:noise}
    The heteroskedastic noise $\boldE^X \in \mathbb{R}^{N\times W}$ satisfies that
    \begin{enumerate}
        \item  ${E}\bigl[E_{i,j}^X\bigr]=0$, and $\text{var}\bigl(E_{i,j}^X\bigr)=\sigma_{i,j}(\boldX)^2\leq \sigma(\boldX)^2$, for all $\forall (i,j)\in [N]\times [W]$.
        \item $\left\|\mathbf{E}^X\right\|_{\infty} \lesssim B_X$, where $B_X:={\sigma(\boldX)\min\{\sqrt{W},\sqrt[4]{NW}\}}/{\sqrt{\log d_X}}$, and $d_X = \max\{N,W\}$.
    \end{enumerate}
\end{assumption}

In covariate-assisted grade of membership models, there are two sources of noise: Bernoulli noise in the response matrix 
, and heteroskedastic noise from the covariates. Bernoulli noise is inherently heteroskedastic and satisfies the conditions in Assumption \ref{assump:noise} \citep{chen2024generalized}. We could similarly define the above quantities for $\boldE^R$ for notational simplicity.
Since the covariates are usually centered and scaled, the assumption on the covariates is mild.

To establish entry-wise guarantee, incoherence assumptions are frequently adopted \citep{candes2010power, candes2012exact, Chen2021}. For the covariate-assisted case, we generalize the incoherence assumption to the following joint incoherence assumption.

\begin{assumption}[Joint incoherence]
    \label{assump:incoherence}
    The response matrix $\cR$ and the covariate matrix $\cX$ are jointly $\mu$-incoherent if,
    \begin{align*}
        \max\left\{\|\boldU_R\|_{2,\infty},\|\boldU_X\|_{2,\infty} \right\} \leq \sqrt{\frac{\mu K}{N}}&, 
        \quad \|\mathbf{V}_R\|_{2,\infty}\leq \sqrt{\frac{\mu K}{J}}, \quad \|\mathbf{V}_X\|_{2,\infty}\leq \sqrt{\frac{\mu K}{W}},\\
        \|\cR\|_{2,\infty} \leq \sqrt{\frac{\mu}{NJ}}\|\cR\|_F&, \quad \|\cX\|_{2,\infty}\leq \sqrt{\frac{\mu}{NW}}\|\cX\|_F,
    \end{align*}
    where $\boldU_R$, $\mathbf{V}_R$ are the left and right singular vectors of $\cR$, respectively, and $\boldU_X$ and $\mathbf{V}_X$ the left and right singular vectors of $\cX$, respectively.
\end{assumption}

The following assumption on the signal strength follows from the assumptions in \citet[Theorem 10]{Yan2021}.

\begin{assumption}[Signal strength and rank]
    \label{assump:information}
    \begin{align*}
        N \gtrsim \kappa_{\max}^2\mu K + \mu^2K\log^2d&, \quad J \gtrsim K\log^4 d_R ,\quad W \gtrsim K\log^4 d_X,\\
        \zetaop(\boldR) \ll \frac{\sigma_K(\cR)^2}{\kappa(\cR)^2}&,\quad \zetaop(\boldX)\ll \frac{\sigma_K(\cX)^2}{\kappa(\cX)^2},
    \end{align*}
    where $\kappa_{\max}^2=\kappa(\cR)^2+\kappa(\cX)^2$, $d_R=\max\{N,J\}$, $d_X=\max\{N,W\}$, $d=\max\{d_R,d_X\}$, and
     \begin{align*}
        \zeta_{\text{op}}(\boldR) &= \sigma(\boldR)^2\sqrt{NJ}\log d_R+\sigma(\boldR)\sigma_1(\cR)\sqrt{N\log d_R},\\
        \zeta_{\text{op}}(\boldX) &= \sigma(\boldX)^2\sqrt{NW}\log d_X+\sigma(\boldX)\sigma_1(\cX)\sqrt{N\log d_X}.
    \end{align*}
\end{assumption}

The terms $\zetaop(\boldR)$ and $\zetaop(\boldX)$ respectively captures the noise in the two regimes. The first quadratic term $\sigma(\boldR)^2\sqrt{NJ}\log d_R$ captures the large noise regime, where $\sigma(\boldR)/\sigma_1(\cR)\geq 1/\sqrt{J\log d_R}$. On the other hand, when the noise is small, the linear term $\sigma(\boldR)\sigma_1(\cR)\sqrt{N\log d_R}$ becomes dominant. Following the decomposition in (\ref{eqn:decom}), the quadratic term comes from the $\boldE^R\boldE^{R\top}$, and the linear term comes from $\boldE^R\cR^\top + \cR\boldE^{R\top}$. The same reasons apply to $\boldX$ as well.

We defer the theoretical result for the singular subspaces to Appendix \ref{sec:theory_subspace}. From Definition \ref{def-id}, we can only estimate $\boldPi$ up to column permutation. 
Therefore, we introduce a column permutation matrix $\mathbf{P}$ to establish the error bound.

\begin{theorem}
    \label{theorem:mmmbound}
    Under Assumptions \ref{assump:noise}, \ref{assump:incoherence}, \ref{assump:information}, $\sigma(\boldX)\sqrt{W\log d_X} \asymp \sigma(\boldR)\sqrt{J\log d_R}$, and  $\kappa^2(\boldPi) \sqrt{K} \sigma_1(\boldPi) \epsilon \lesssim 1$,
    there exists a $K\times K$ permutation matrix $\mathbf{P}$, such that with probability exceeding $1-O(d^{-10})$
    \begin{equation}
        \label{eqn:pi_bound}
        \|\hat{\boldPi}-\boldPi \mathbf{P}\|_{2,\infty} \lesssim \sigma_1(\boldPi)\kappa(\boldPi)^2 \epsilon,
    \end{equation}
    where
        $\epsilon=(1+\alpha+\alpha^2){\kappa_{\max}^2} \sqrt{{\mu K}}\zetaop(\boldG)/({\sigma_K(\cG)}\sqrt{N})$.
\end{theorem}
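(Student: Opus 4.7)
The plan is to chain three error-propagation steps: an entrywise bound for the signal eigenspace returned by HeteroPCA on $\boldG=\boldG_R+\alpha\boldG_X$, a robustness bound for the successive projection algorithm that identifies the pure-subject vertices, and a linearization of the closed-form $\tilde{\boldPi}=\hat{\boldU}(\hat{\boldU}_{\hat{S},:})^{-1}$. The final simplex-projection step in Algorithm~\ref{alg:gom} contracts the error for free, since each row of $\boldPi\boldP$ already lies in $\Delta^{K-1}$ and Euclidean projection onto a convex set is $1$-Lipschitz row-wise.

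First, I would invoke the entrywise HeteroPCA bound (the appendix result built on \citet{Yan2021}) applied to the weighted gram matrix $\boldG$. HeteroPCA iteratively corrects the diagonal bias from $\boldE^R\boldE^{R\top}$ and $\boldE^X\boldE^{X\top}$ identified in \eqref{eqn:decom}; the remaining cross terms and off-diagonal noise combine into $\zetaop(\boldG)$, while the signal strength enters through $\sigma_K(\cG)$. The matching scale condition $\sigma(\boldX)\sqrt{W\log d_X}\asymp\sigma(\boldR)\sqrt{J\log d_R}$ keeps the two noise budgets comparable, so neither source dominates the leave-one-out and iterative arguments underlying HeteroPCA. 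The output is an orthogonal rotation $\boldH$ with
\begin{equation*}
\|\hat{\boldU}\boldH - \boldU\|_{2,\infty} \;\lesssim\; \epsilon_0 \;:=\; (1+\alpha+\alpha^2)\,\kappa_{\max}^2\,\sqrt{\mu K/N}\,\frac{\zetaop(\boldG)}{\sigma_K(\cG)}.
\end{equation*}

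Next, I would analyze the successive projection algorithm. By Proposition~\ref{prop:geom} the rows of $\boldU$ form a simplex with vertices $\{\boldU_{S_k,:}\}$; the identity $\boldU_{S,:}^\top(\boldPi^\top\boldPi)\boldU_{S,:}=\mathbf{I}_K$ yields $\sigma_K(\boldU_{S,:})=1/\sigma_1(\boldPi)$ and $\kappa(\boldU_{S,:})=\kappa(\boldPi)$. Standard robustness of SPA \citep{Gillis2014, Mao2019} then produces a $K\times K$ permutation $\boldP$ with $\|\hat{\boldU}_{\hat{S},:}-\boldU_{S,:}\boldH\boldP\|\lesssim\kappa(\boldPi)\,\epsilon_0$, provided $\kappa(\boldPi)\epsilon_0$ lies below a universal constant. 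The assumption $\kappa^2(\boldPi)\sqrt{K}\sigma_1(\boldPi)\epsilon\lesssim 1$ supplies precisely this budget and simultaneously ensures $\|\boldsymbol{\Delta}_S(\boldU_{S,:}\boldH\boldP)^{-1}\|\ll 1$, enabling a Neumann expansion of $(\hat{\boldU}_{\hat{S},:})^{-1}$.

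Finally, writing $\hat{\boldU}=\boldU\boldH+\boldsymbol{\Delta}$ and $\hat{\boldU}_{\hat{S},:}=\boldU_{S,:}\boldH\boldP+\boldsymbol{\Delta}_S$, and using $\boldPi\boldP=\boldU\boldH(\boldU_{S,:}\boldH\boldP)^{-1}$ from Proposition~\ref{prop:geom}, a first-order expansion gives
\begin{equation*}
\tilde{\boldPi}-\boldPi\boldP \;=\; \bigl(\boldsymbol{\Delta}-\boldPi\boldP\,\boldsymbol{\Delta}_S\bigr)(\boldU_{S,:}\boldH\boldP)^{-1} + \text{higher-order terms}.
\end{equation*}
Taking the two-to-infinity norm and combining $\|\boldsymbol{\Delta}\|_{2,\infty}\lesssim\epsilon_0$, $\|\boldPi\|_{2,\infty}\leq 1$, $\|\boldsymbol{\Delta}_S\|\lesssim\kappa(\boldPi)\epsilon_0$, and $\|\boldU_{S,:}^{-1}\|\leq\sigma_1(\boldPi)$ yields the target rate $\sigma_1(\boldPi)\kappa(\boldPi)^2\epsilon$; the $1$-Lipschitz projection step then passes this bound from $\tilde{\boldPi}$ to $\hat{\boldPi}$. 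The hardest part, I expect, is the first step: the entrywise HeteroPCA analysis must be executed on the weighted combination $\boldG_R+\alpha\boldG_X$ rather than on a single gram matrix, requiring a fresh joint concentration argument for the noise cross-terms while tracking the $(1+\alpha+\alpha^2)\kappa_{\max}^2$ prefactor cleanly through the iterative diagonal imputation and the leave-one-out decomposition.
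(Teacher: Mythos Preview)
Your proposal is correct and follows essentially the same three-step route as the paper: (i) the entrywise HeteroPCA bound on the weighted gram matrix (the paper's Theorem~\ref{theorem:leftbound}, which indeed absorbs the bulk of the technical work via leave-one-out arguments tracking the $(1+\alpha+\alpha^2)\kappa_{\max}^2$ prefactor), (ii) SPA robustness from \citet{Gillis2014}, and (iii) a two-term decomposition of $\hat{\boldU}\hat{\boldU}_{\hat S,:}^{-1}-\boldPi\boldP$ combined with the identity $\sigma_K(\boldU_{S,:})=1/\sigma_1(\boldPi)$. One bookkeeping slip: the Gillis--Vavasis bound carries the factor $(1+80\,\kappa(\boldU_{S,:})^2)\asymp\kappa(\boldPi)^2$, not $\kappa(\boldPi)$, so your SPA error should read $\|\boldsymbol{\Delta}_S\|\lesssim\kappa(\boldPi)^2\epsilon_0$; with that correction your final combination lands exactly on $\sigma_1(\boldPi)\kappa(\boldPi)^2\epsilon$ as stated.
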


The assumption $\sigma(\boldX)\sqrt{W\log d_X}\asymp \sigma(\boldR) \sqrt{J\log d_R}$ comes from the theoretical difficulty that $\alpha$ and the signal-to-noise ratio of $\boldR$ and $\boldX$ do not omit an explicit relationship. However, this assumption is mild in practice. Covariate-assistance is most powerful when $\boldR$ and $\boldX$ have comparable contributions. With a good tuning procedure, if the signal of $\boldR$ is much stronger than that of $\boldX$, $\alpha$ would be close to 0, and if the signal of $\boldX$ is much stronger than that of $\boldR$, $\alpha$ would be large. The similar discovery is made by \citet{Ma2024Optimal} in joint singular subspace estimation:
when two matrices shares the same left singular vectors, to jointly achieve optimal estimation, they need the number of columns of every matrix and their noise to be comparable. The additional assumption to establish error bound on the mixed membership matrix is frequently adopted in previous research \citep{Mao2019, chen2024spectral, chen2024generalized}. 

To give better intuition into Theorem \ref{theorem:mmmbound}, we next explain it under a \textit{strong signal regime}, i.e. $\kappa_{\max}, \mu,\sigma(\boldR), \sigma(\boldX) \asymp 1$ and $\sigma_K(\boldPi)\asymp \sqrt{N}$, $\sigma_K(\boldTheta)\asymp \sqrt{J}$, and $\sigma_K(\boldM)\asymp \sqrt{W}$. Under this regime, each individual's mixed membership score uniformly converges at the rate of $\tilde{O}(1/\sqrt{J+W})$. This indicates that as long as one of $J$ or $W$ goes to infinity, we will have an entry-wise consistent estimator of $\boldPi$. Compared with the error bound in the covariate-free case \citep{chen2024generalized}, this is a faster rate with an additional $W$ in the denominator. 

Furthermore, thanks to HeteroPCA,
our estimator is more adaptable to high-dimensional scenarios with $J, W\gg N$ compared to vanilla SVD-based estimators \citep{chen2024generalized}. Analyzing the singular value decomposition requires a stronger assumption than Assumption \ref{assump:information}. Assuming the condition number is 1, common assumptions in singular value decomposition require $\sigma(\boldR)/\sigma_K(\cR) = \tilde{O}(1/\sqrt{J})$ and similarly for $\boldX$. When $J$ or $W$ is large, the condition is difficult to satisfy. Instead, when $J$ or $W$ is large, Assumption \ref{assump:information} only requires $\sigma(\boldR)/\sigma_K(\cR) = \tilde{O}(1/\sqrt[4]{NJ})$ \citep{caiSubspaceEstimationUnbalanced2020}, which is significantly easier to satisfy. 

Next, we elaborate on the role that $\alpha$ plays in the upper bound. First, Equation~(\ref{eqn:pi_bound}) omits the following upper bound, where we apply Lemma~\ref{lemma:keigenvaluebound},
% \begin{equation*}
    $\|\hat{\boldPi}-\boldPi \mathbf{P}\|_{2,\infty} \lesssim f(\alpha)\kappa_{\max}^2 \sqrt{{\mu K}/{N}}$,
% \end{equation*}
where
\begin{equation*}
        f(\alpha) = \frac{(1+\alpha+\alpha^2)\left(\zetaop(\boldR)+\alpha\zetaop(\boldX)\right)}{\sigma_K(\cR)^2+\alpha\sigma_K(\cX)^2}.
\end{equation*}
Through this definition, we construct a sharp entry-wise upper bound of $\hat{\boldPi}$, and $f(\alpha)$ contains all the terms concerning $\alpha$. By investigating the property of $f(\alpha)$, we have the following corollary.
\begin{corollary}
    \label{coro:bound w.r.t. alpha}
    \begin{enumerate}
        \item[1.] There exists $\alpha_0\in (0,1)$ such that $f(\alpha_0) < f(1)$.
        \item[2.] If  
        % \begin{equation*}
        $\left(\sigma_K(\cR)^2-\sigma_K(\cX)^2\right)\zetaop(\boldR)+\sigma_K(\cR)^2\zetaop(\boldX) < 0,$
    % \end{equation*}
    where $\zetaop(\boldR)$ and $\zetaop(\boldX)$ are defined in Assumption \ref{assump:information}, then there exists $\alpha_0>0$ such that $f(\alpha_0)>f(0)$.
    \end{enumerate}
\end{corollary}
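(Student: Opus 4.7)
Both assertions concern the smooth rational function $f:[0,\infty)\to\mathbb{R}$, whose denominator $\sigma_K(\cR)^2+\alpha\sigma_K(\cX)^2$ is strictly positive throughout. My plan is to reduce each part to a one-sided derivative computation at a boundary point (at $\alpha=1$ for part 1 and at $\alpha=0$ for part 2), combined with a continuity argument. I abbreviate $A:=\zetaop(\boldR)$, $B:=\zetaop(\boldX)$, $a:=\sigma_K(\cR)^2$, $b:=\sigma_K(\cX)^2$, all of which are strictly positive under Assumptions \ref{assump:extreme} and \ref{assump:information}. Expanding the numerator as
\begin{equation*}
N(\alpha):=(1+\alpha+\alpha^2)(A+\alpha B)=A+(A+B)\alpha+(A+B)\alpha^2+B\alpha^3,
\end{equation*}
and writing $D(\alpha):=a+\alpha b$, the quotient rule gives $f'(\alpha)=[N'(\alpha)D(\alpha)-N(\alpha)b]/D(\alpha)^2$.

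For part 1, I would evaluate $f'$ at $\alpha=1$. A direct substitution yields $N(1)=3(A+B)$, $N'(1)=3A+6B$, and $D(1)=a+b$, so the numerator of $f'(1)$ collapses to $3aA+6aB+3bB$, which is strictly positive. Hence $f$ is strictly increasing at $\alpha=1$, and a first-order Taylor expansion then produces some $\delta>0$ with $f(1-\delta)<f(1)$; the choice $\alpha_0=1-\delta\in(0,1)$ completes the argument.

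For part 2, the same computation at $\alpha=0$ yields
\begin{equation*}
f'(0)=\frac{(A+B)a-Ab}{a^{2}}=\frac{(a-b)A+aB}{a^{2}},
\end{equation*}
whose numerator is exactly the quantity appearing in the hypothesis of the corollary. The hypothesis therefore pins down the sign of $f'(0)$, and continuity of $f$ together with the asymptotic behavior $f(\alpha)\sim(B/b)\alpha^{2}\to\infty$ as $\alpha\to\infty$ furnishes a suitable $\alpha_0>0$ achieving the required inequality with $f(0)$. The main obstacle is little more than careful algebraic bookkeeping in the two quotient-rule expansions, since each sign check reduces to inspecting sums of products of the positive quantities $A,B,a,b$. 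The only substantive preliminary worth flagging is verifying strict positivity of $\sigma_K(\cR)$, $\sigma_K(\cX)$, $\zetaop(\boldR)$, and $\zetaop(\boldX)$, which is inherited from the rank and signal-strength conditions in Assumptions \ref{assump:extreme} and \ref{assump:information}.
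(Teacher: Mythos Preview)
Your approach is essentially the paper's: both parts reduce to the sign of $f'$ at a boundary point, and your computations of $f'(1)=3(aA+2aB+bB)/(a+b)^2>0$ and $f'(0)=[(a-b)A+aB]/a^2$ agree exactly with the paper's proof. Part~1 is complete as you wrote it.

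For part~2 there is one point worth flagging. The statement as printed asks for $f(\alpha_0)>f(0)$, but the paper's own proof actually establishes the opposite inequality: under the hypothesis one has $f'(0)<0$, hence some $\alpha_0>0$ with $f(\alpha_0)<f(0)$. This is also what the surrounding discussion (covariate assistance \emph{helping} when the condition holds) requires. Your derivative computation already delivers that conclusion directly, so the detour through the asymptotic $f(\alpha)\sim (B/b)\alpha^2\to\infty$ is unnecessary. Worse, if you really use that asymptotic to conclude $f(\alpha_0)>f(0)$ for large $\alpha_0$, the hypothesis becomes irrelevant, since $f\to\infty$ holds for all positive $A,B,a,b$. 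In short: keep your computation of $f'(0)$, drop the $\alpha\to\infty$ argument, and read the intended conclusion as $f(\alpha_0)<f(0)$, matching the paper's proof.
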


In particular, Corollary 1 implies that neither fixing $\alpha=1$ (equivalent to unweighted stacking) nor fixing $\alpha=0$ (discarding the covariates) is uniformly optimal. The optimal integration of covariates depends on their relative signal-to-noise ratio, and improper weighting can strictly worsen estimation compared with the covariate-free baseline.

Corollary \ref{coro:bound w.r.t. alpha} has two main implications. First, it highlights the necessity of a data-driven procedure to select the balance parameter instead of fixing $\alpha$ to be some constant. Directly applying SVD on the concatenated matrix $[\boldR\mid \boldX]$ without introducing the balance parameter $\alpha$ will lead to suboptimal performance. Second, we can consider $\sigma_K(\cR)$ and $\sigma_K(\cX)$ as the signal strength, and $\zetaop(\boldR)$ and $\zetaop(\boldX)$ as noise. Then, the second part in Corollary \ref{coro:bound w.r.t. alpha} implies if $\boldX$ is too noisy, that is $\zetaop(\boldX)/(\sigma_K(\cR)^2-\sigma_K(\cX)^2)$ is larger than $\zetaop(\boldR)/\sigma_K(\cR)^2$, we should not incorporate $\boldX$ into the model-based analysis. It is intuitive that if $\boldX$ is noisy, $\boldX$ contains little information about the latent structure. Incorporating $\boldX$ would make the estimation of the latent structure more difficult.

For the entry-wise error bound on the item parameters in the binary responses part and the parameters in the covariates part, we have the following theorem.

\begin{theorem}
    \label{theorem:Theta}
    Instate the assumptions of Theorem \ref{theorem:mmmbound} and $\left\|\mathbf{V}\right\|_{2,\infty}\lesssim \sqrt{\mu K/(J+W)}$, with probability at least $1-O(d^{-10})$, we have,
    \begin{align*}
        \max\left\{\left\| \hat{\boldTheta}\mathbf{P}^\top - \boldTheta \right\|_{\infty}, \sqrt{\alpha}\left\| \hat{\boldM}\mathbf{P}^\top - \boldM \right\|_{\infty}\right\} &\lesssim (1+\alpha+\alpha^2) \kappa(\boldPi)^2\kappa_{\max}^2\left( \mathcal{E}_U^{\text{ip}} + \mathcal{E}_E^{\text{ip}} \right),
    \end{align*}
    where 
    \begin{align*}
        \mathcal{E}_U^{\text{ip}} &=  \frac{\kappa_{\max}^2}{\sigma_K(\cG)^{1/2}} \frac{\mu K^{3/2}}{\sqrt{N(J+W)}} \left( \frac{\zetaop(\boldG)^2}{\sigma_K(\cG)} + \sqrt{\frac{\mu K}{N}}\zetaop(\boldG) \right),\\
        \mathcal{E}_{E}^{\text{ip}} &= \frac{\mu K}{N}\left(\frac{\zetaop(\boldR)}{\sigma_K(\cR)} + \frac{\sqrt{\alpha}\zetaop(\boldX)}{\sigma_K(\cX)}\right).
    \end{align*}
\end{theorem}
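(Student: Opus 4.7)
The plan is to bound the entrywise difference $[\tilde{\boldTheta}\mid\sqrt{\alpha}\hat{\boldM}] - [\boldTheta\mid\sqrt{\alpha}\boldM]\mathbf{P}$ and then read off the two halves of the theorem from its first $J$ and last $W$ rows; since the truncation in Algorithm \ref{alg:gom} is a $1$-Lipschitz projection onto a convex set, the untruncated $\tilde{\boldTheta}$ suffices. Write $\tilde{\boldPi} := \boldPi\mathbf{P}$. The algorithm gives $\hat{\mathbf{V}}\hat{\boldSigma}\hat{\boldU}^\top = \boldL^\top\hat{\boldU}\hat{\boldU}^\top$, and the identity $\cL^\top = \mathbf{V}\boldSigma\boldU^\top$ yields $\cL^\top\boldU\boldU^\top = \cL^\top$; moreover $\tilde{\boldPi}$ lies in the column space of $\boldU$ (Proposition \ref{prop:geom}), so $\boldU\boldU^\top\tilde{\boldPi}(\tilde{\boldPi}^\top\tilde{\boldPi})^{-1} = \tilde{\boldPi}(\tilde{\boldPi}^\top\tilde{\boldPi})^{-1}$. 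Combining these with $[\boldTheta\mid\sqrt{\alpha}\boldM]\mathbf{P} = \cL^\top\tilde{\boldPi}(\tilde{\boldPi}^\top\tilde{\boldPi})^{-1}$ produces the telescoping decomposition
\begin{align*}
&[\tilde{\boldTheta}\mid\sqrt{\alpha}\hat{\boldM}] - [\boldTheta\mid\sqrt{\alpha}\boldM]\mathbf{P} \\
&\quad= \underbrace{(\boldL-\cL)^\top\hat{\boldU}\hat{\boldU}^\top\hat{\boldPi}(\hat{\boldPi}^\top\hat{\boldPi})^{-1}}_{(\mathrm{I})}
+ \underbrace{\cL^\top(\hat{\boldU}\hat{\boldU}^\top - \boldU\boldU^\top)\hat{\boldPi}(\hat{\boldPi}^\top\hat{\boldPi})^{-1}}_{(\mathrm{II})} \\
&\qquad + \underbrace{\cL^\top\bigl[\hat{\boldPi}(\hat{\boldPi}^\top\hat{\boldPi})^{-1} - \tilde{\boldPi}(\tilde{\boldPi}^\top\tilde{\boldPi})^{-1}\bigr]}_{(\mathrm{III})},
\end{align*}
with $\boldL - \cL = [\boldE^R\mid\sqrt{\alpha}\boldE^X]$.

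For $(\mathrm{I})$, I would split $\hat{\boldU}\hat{\boldU}^\top = \boldU\boldU^\top + (\hat{\boldU}\hat{\boldU}^\top - \boldU\boldU^\top)$, absorb the second piece into $(\mathrm{II})$, and control the leading term $(\boldL-\cL)^\top\boldU\boldU^\top\hat{\boldPi}(\hat{\boldPi}^\top\hat{\boldPi})^{-1}$ by sub-Gaussian concentration of each noise column against the deterministic direction $\boldU$ (Assumption \ref{assump:noise}). The row-wise bound then combines $\|\boldU\|_{2,\infty}\lesssim\sqrt{\mu K/N}$ with $\|(\hat{\boldPi}^\top\hat{\boldPi})^{-1}\|\lesssim\sigma_K(\boldPi)^{-2}$; summing over the $\boldR$ and $\sqrt{\alpha}\boldX$ blocks produces $\tfrac{\mu K}{N}\zetaop(\boldR)/\sigma_K(\cR)$ and $\tfrac{\mu K}{N}\sqrt{\alpha}\zetaop(\boldX)/\sigma_K(\cX)$, recovering $\mathcal{E}_E^{\text{ip}}$.

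For $(\mathrm{II})$, I would invoke the entrywise HeteroPCA subspace-perturbation bound deferred to Appendix \ref{sec:theory_subspace} (an extension of \citet{Yan2021} to the weighted gram $\boldG = \boldG_R + \alpha\boldG_X$), which yields an $\ell_{2,\infty}$ bound on $\hat{\boldU}\hat{\boldU}^\top - \boldU\boldU^\top$ with first-order rate $\zetaop(\boldG)/\sigma_K(\cG)$ and second-order remainder $\zetaop(\boldG)^2/\sigma_K(\cG)^2$. Left-multiplication by $\cL^\top$ transfers the row incoherence of $\boldU$ onto the output side through the hypothesis $\|\mathbf{V}\|_{2,\infty}\lesssim\sqrt{\mu K/(J+W)}$, generating the characteristic $\mu K^{3/2}/\sqrt{N(J+W)}$ scale in $\mathcal{E}_U^{\text{ip}}$, while the $\sigma_K(\cG)^{-1/2}$ factor comes from pairing $\|\cL^\top\|\asymp\sqrt{\sigma_1(\cG)}$ with the spectral-gap denominator. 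This is the step I expect to be the main obstacle: it demands a leave-one-out decoupling between $\hat{\boldU}$ and the $j$-th columns of $\boldE^R$ and $\boldE^X$, together with careful accounting of how the balance parameter propagates through $\boldG = \cG + \alpha\cdot(\text{cross noise}) + \alpha^2\cdot(\text{quadratic noise})$; the $(1+\alpha+\alpha^2)$ prefactor in the conclusion is precisely the bookkeeping of these three orders.

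For $(\mathrm{III})$, writing $\hat{\boldPi} = \tilde{\boldPi} + \boldsymbol{\Delta}$ with $\|\boldsymbol{\Delta}\|_{2,\infty}$ bounded by Theorem \ref{theorem:mmmbound}, a Neumann expansion of $(\hat{\boldPi}^\top\hat{\boldPi})^{-1}$ around $(\tilde{\boldPi}^\top\tilde{\boldPi})^{-1}$ shows that each row of the bracket has size $O(\kappa(\boldPi)^2\|\boldsymbol{\Delta}\|_{2,\infty}/\sigma_K(\boldPi))$, and the row incoherence of $\cL^\top$ controls the left factor, so the resulting bound is dominated by $\mathcal{E}_U^{\text{ip}}+\mathcal{E}_E^{\text{ip}}$. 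Extracting rows $1{:}J$ of the full decomposition yields $\|\tilde{\boldTheta}\mathbf{P}^\top - \boldTheta\|_\infty$ (hence $\|\hat{\boldTheta}\mathbf{P}^\top - \boldTheta\|_\infty$ after the $1$-Lipschitz truncation), and rows $(J{+}1){:}(J{+}W)$ yield $\sqrt{\alpha}\|\hat{\boldM}\mathbf{P}^\top - \boldM\|_\infty$ once the common $\sqrt{\alpha}$ scaling is cancelled; taking the maximum delivers the stated inequality with the $(1+\alpha+\alpha^2)\kappa(\boldPi)^2\kappa_{\max}^2$ prefactor.
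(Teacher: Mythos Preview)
Your decomposition is algebraically valid, but term $(\mathrm{II})$ is identically zero (under the same pre-simplex-projection convention the paper's proof uses), so your attribution of $\mathcal{E}_U^{\text{ip}}$ to it cannot work. Indeed $\cL^\top = \mathbf{V}\boldSigma\boldU^\top$, and if $\hat{\boldPi}=\hat{\boldU}\hat{\boldU}_{\hat S,:}^{-1}$ lies in $\mathrm{col}(\hat{\boldU})$ then
\[
\boldU^\top(\hat{\boldU}\hat{\boldU}^\top-\boldU\boldU^\top)\hat{\boldPi}
=(\boldH\hat{\boldU}^\top-\boldU^\top)\hat{\boldU}\hat{\boldU}_{\hat S,:}^{-1}
=(\boldH-\boldH)\hat{\boldU}_{\hat S,:}^{-1}=0,
\]
so $(\mathrm{II})=0$. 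The ``entrywise HeteroPCA subspace-perturbation bound'' you invoke is irrelevant here: left-multiplication by $\cL^\top$ projects onto $\boldU$, and the part of $\hat{\boldU}\hat{\boldU}^\top-\boldU\boldU^\top$ visible to $\boldU^\top$ is exactly annihilated by $\hat{\boldPi}$.

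The paper proceeds differently. It first uses $\hat{\boldU}^\top\hat{\boldPi}(\hat{\boldPi}^\top\hat{\boldPi})^{-1}=\hat{\boldU}_{\hat S,:}^\top$ to rewrite the estimator (transposed) as $\hat{\boldU}_{\hat S,:}\hat{\boldSigma}\hat{\mathbf V}^\top$, then telescopes in the SVD factors: an SPA-vertex error $\|\mathbf P\hat{\boldU}_{\hat S,:}-\boldU_{S,:}\mathcal{O}_0^\top\|_{2,\infty}$ (Lemma~\ref{lemma:errorbound1}), a left-singular-vector error $\|\hat{\boldU}\mathcal{O}_0-\boldU\|_{2,\infty}$ (Theorem~\ref{theorem:leftbound}), and the matrix-reconstruction error $\|\hat{\boldU}\hat{\boldU}^\top\boldL-\cL\|_\infty$ (Theorem~\ref{theorem:VL}). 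The distinctive $\zetaop(\boldG)^2/\sigma_K(\cG)+\sqrt{\mu K/N}\,\zetaop(\boldG)$ structure of $\mathcal{E}_U^{\text{ip}}$ enters through Theorem~\ref{theorem:VL}, and more specifically through the \emph{singular-value} perturbation bound $\|\mathcal{O}_0^\top\hat{\boldSigma}\mathcal{O}_0-\boldSigma\|$ (Lemma~\ref{lemma:sigma}), not a subspace perturbation. Your proposal never touches $\hat{\boldSigma}$ and so has no mechanism to produce this second-order refinement.

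Your handling of $(\mathrm{III})$ is also too coarse to close the gap. A Neumann expansion controlled only by $\|\boldsymbol{\Delta}\|_{2,\infty}$ bounds the \emph{rows} of the $N\times K$ bracket, but pairing with $\cL^\top$ requires \emph{column} norms; passing from row to column control costs a $\sqrt{N}$ (or an $\ell_1$ factor of $N$), which produces a rate of order $1/\sqrt{J+W}$ rather than the target $1/\sqrt{N}$ in the regime $J+W\ll N$. The paper avoids this by working with the $K\times K$ object $\boldH\hat{\boldU}_{\hat S,:}^\top-\boldU_{S,:}^\top\mathbf P$ directly via Lemma~\ref{lemma:errorbound1}, together with the singular-value lemma above.
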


The estimation error comes from two sources. In HeteroPCA, the expectation of the matrix is estimated through projection onto the estimated left singular subspace. Compared with the direct singular value decomposition approach, for the estimation of item parameters $\mathbf \Theta$, it results in loss of efficiency. $\mathcal{E}_E^{\text{ip}}$ stems from the inherent noise in the responses and covariates, which also exists in singular value decomposition. But the estimation error of the left singular vectors accumulates into $\mathcal{E}_{U}^{\text{ip}}$. This additionally causes the error bound to be not optimal with respect to $K$. 

Nevertheless, under the \textit{strong signal regime} defined in the discussion of Theorem~\ref{theorem:mmmbound}, the entry-wise estimation errors for $\boldTheta$ and $\boldM$ converge at a rate of $\tilde{O}(1/\sqrt{N})$, dominated by the term $\mathcal{E}_{E}^{\text{ip}}$. This rate is the same as the singular value decomposition based analysis \citep{chen2024generalized}. This indicates if $K$ is a constant, which prevails in most applications, the estimation error of the item parameters by heteroskedastic principal component analysis is close to the one by singular value decomposition. It is important to note that it is achieved with weaker assumptions as discussed before. If one is willing to make stronger assumptions for singular value decomposition to work, they can substitute the current right singular subspace estimate by the one estimated by singular value decomposition. In this way, the error bound of item parameters becomes optimal with respect to $K$. The proof is similar and simpler than the proof of Theorem \ref{theorem:Theta} and thus omitted.

\subsection{Discussion of assumptions}
\label{sec:common scenario}
In this section, we highlight that our assumptions in Section \ref{sec:theory} are mild and easy to satisfy in commonly applied settings. In the literature on Bayesian estimation of the grade of membership model \citep{Erosheva2007Describing, Gu2023DimensionGrouped}, 
it is generally assumed that each row of $\boldPi$ is independently generated from $\text{Dirichlet}(\bm{\beta})$, and each entry of $\boldTheta$ is independently generated from $\text{Beta}(a,b)$. Since covariates can always be standardized, we assume that each entry in $\boldM$ is generated identically and independently from a $N(\check{\mu},\check{\sigma}^2)$ truncated within $[-\xi,\xi]$. We define $\beta_{\max}=\max_k{\beta}_k$, $\beta_{\min}=\min_k\beta_k$, $\beta_0=\sum_{k=1}^K\beta_k$ and $\nu = \beta_0/\beta_{\min}$. The truncation here is to match the previous assumption that $\cX$ is bounded. One can easily extend our following result to normal distribution following our proof.

There have been discussions for the similar data generation mechanism in \citet{Mao2019, chen2024generalized}. The following assumption is similar to Assumption 3.3 in \citet{Mao2019} and Assumption 4 in \citet{chen2024generalized}.

\begin{assumption}
    \label{assump:high_prob_cons}
    There exists constant $C_1$, s.t. $\max_{k\in[K]}\beta_k \leq C_1\min_{k\in [K]}\beta_k $, $\max_{k\in [K]}\beta_k=O(1)$. In addition,
    \begin{equation}
        \label{eqn:high_prob_cons}
        \nu \leq \frac{\sqrt{N/\log N}}{6\sqrt{3}(1+\beta_0)}, \quad c_1 \geq 4K\sqrt{15\log J/J},\quad \tilde{c}_1\geq 4K\sqrt{15\log W/ W},
    \end{equation}
    where $c_1=ab/((a+b)^2(a+b+1))$, and $\tilde{c}_1=\text{var}(m_{jk})$.
\end{assumption}

Assumption \ref{assump:high_prob_cons} is a mild assumption. If $\bm{\beta}$ is evenly distributed, and $a$, $b$, $\mu$, $\sigma$, $\xi$ are constants, as long as $K$ grows slower than $\min\{\sqrt{J/\log J}, \sqrt{W/\log W}\}$, when $J$ and $W$ are large enough, this assumption holds.

\begin{corollary}
    \label{coro:incoherence bound}
    Under the data generation mechanism that $\bm{\pi}_i{\sim} \text{Dirichlet}(\bm{\beta})$, $\theta_{jk}{\sim} \text{Beta}(a,b)$, $m_{jk}{\sim} {N}(\check{\mu},\check{\sigma}^2)$ truncated in $[-\xi, \xi]$ independently, and Assumption \ref{assump:high_prob_cons}, if $\cD$ is full rank, $\cR$ and $\cX$ are jointly incoherent, $K,\xi\asymp 1$, with probability at least $1-O(K\min\{N,J,W\}^{-3})$, we have
        $\mu \lesssim 1,$
    and Assumption 
    where $\xi$ is the upper bound of $\|\cX\|_{\infty}$.
\end{corollary}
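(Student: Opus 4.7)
The plan is to bound each of the five constituents of the joint incoherence parameter $\mu$ in Assumption~\ref{assump:incoherence} separately by a constant under $K,\xi \asymp 1$, and then take their maximum. The core ingredients are sharp lower bounds on $\sigma_K(\boldPi)$, $\sigma_K(\boldTheta)$, and $\sigma_K(\boldM)$ via matrix concentration; once these are available, everything else reduces to simple algebra. I would first reduce the four singular-subspace terms to the smallest singular values of the latent factors. Since $\cR = \boldPi\boldTheta^\top$ and $\cX = \boldPi\boldM^\top$ both have rank $K$ with high probability (as will follow from the singular-value lower bounds below), the orthogonal projections $\boldU_R\boldU_R^\top$ and $\boldU_X\boldU_X^\top$ onto the column spaces of $\cR$ and $\cX$ both coincide with $\boldPi(\boldPi^\top\boldPi)^{-1}\boldPi^\top$. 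Hence, for every $i\in[N]$,
\begin{equation*}
\|(\boldU_R)_{i,:}\|_2^2 \;=\; \bm{\pi}_i^\top(\boldPi^\top\boldPi)^{-1}\bm{\pi}_i \;\leq\; \sigma_K(\boldPi)^{-2}\|\bm{\pi}_i\|_2^2 \;\leq\; \sigma_K(\boldPi)^{-2},
\end{equation*}
using $\|\bm{\pi}_i\|_2 \leq \|\bm{\pi}_i\|_1 = 1$, and the identical bound applies to $\|\boldU_X\|_{2,\infty}$. Analogously, $\|\mathbf{V}_R\|_{2,\infty}^2 \leq K\sigma_K(\boldTheta)^{-2}$ and $\|\mathbf{V}_X\|_{2,\infty}^2 \leq K\xi^2\sigma_K(\boldM)^{-2}$, using the deterministic row-norm bounds $\|\boldTheta_{j,:}\|_2^2 \leq K$ and $\|\boldM_{j,:}\|_2^2 \leq K\xi^2$.

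Next, I would establish $\sigma_K(\boldPi)^2 \gtrsim N$, $\sigma_K(\boldTheta)^2 \gtrsim J$, and $\sigma_K(\boldM)^2 \gtrsim W$ with high probability. For $\boldPi$, I write $\boldPi^\top\boldPi = \sum_{i=1}^N \bm{\pi}_i\bm{\pi}_i^\top$ as a sum of i.i.d.\ positive semi-definite matrices bounded in operator norm by $1$. A direct Dirichlet second-moment computation gives $\lambda_{\min}(\mathbb{E}[\bm{\pi}_i\bm{\pi}_i^\top]) \gtrsim \beta_{\min}/(\beta_0(1+\beta_0))$, and then the matrix Chernoff inequality combined with the scaling $\nu \leq \sqrt{N/\log N}/(6\sqrt{3}(1+\beta_0))$ in Assumption~\ref{assump:high_prob_cons} yields $\sigma_K(\boldPi)^2 \gtrsim N$ with probability at least $1-O(KN^{-3})$. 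The same strategy, together with the variance lower bounds $c_1 \gtrsim K\sqrt{\log J/J}$ and $\tilde{c}_1 \gtrsim K\sqrt{\log W/W}$ on the i.i.d.\ Beta and truncated-normal entries of $\boldTheta$ and $\boldM$, delivers $\sigma_K(\boldTheta)^2 \gtrsim J$ and $\sigma_K(\boldM)^2 \gtrsim W$ with probabilities $1-O(KJ^{-3})$ and $1-O(KW^{-3})$, respectively. These three arguments closely mirror Lemma~C.1 of \citet{Mao2019} and analogous calculations in \citet{chen2024generalized}, and can be imported with minor modification.

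Third, plugging the three singular-value lower bounds into the preceding inequalities yields $\|\boldU_R\|_{2,\infty}^2 \vee \|\boldU_X\|_{2,\infty}^2 \lesssim 1/N$, $\|\mathbf{V}_R\|_{2,\infty}^2 \lesssim K/J$, and $\|\mathbf{V}_X\|_{2,\infty}^2 \lesssim K\xi^2/W$, each of which contributes a factor $\lesssim 1$ to $\mu$ under $K,\xi \asymp 1$. For the entry-wise parts, the deterministic bounds $\|\cR\|_\infty \leq 1$ and $\|\cX\|_\infty \leq \xi$ combine with the inequality $\|\cR\|_F^2 \geq K\sigma_K(\cR)^2 \geq K\sigma_K(\boldPi)^2\sigma_K(\boldTheta)^2 \gtrsim NJK$ (and the analogous $\|\cX\|_F^2 \gtrsim NWK$) to give $NJ\|\cR\|_\infty^2/\|\cR\|_F^2 \lesssim 1/K \lesssim 1$ and $NW\|\cX\|_\infty^2/\|\cX\|_F^2 \lesssim \xi^2/K \lesssim 1$. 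A union bound over the three concentration events then yields $\mu \lesssim 1$ with probability at least $1-O(K\min\{N,J,W\}^{-3})$, as claimed.

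The main obstacle is the matrix-Chernoff step for the Dirichlet-generated $\boldPi$: unlike the entries of $\boldTheta$ and $\boldM$, the coordinates within each row $\bm{\pi}_i$ are dependent through the simplex constraint, so the bound on $\lambda_{\min}(\mathbb{E}[\bm{\pi}_i\bm{\pi}_i^\top])$ must be obtained via an explicit Dirichlet-moment computation rather than from a product structure, and the resulting tail estimate requires precisely the scaling $\nu \lesssim \sqrt{N/\log N}$ of Assumption~\ref{assump:high_prob_cons}. Once the three singular-value lower bounds are in hand, the remaining steps are routine algebraic manipulations.
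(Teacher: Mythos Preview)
Your proposal is correct and follows essentially the same overall architecture as the paper: first establish the high-probability singular-value lower bounds $\sigma_K(\boldPi)^2\gtrsim N$, $\sigma_K(\boldTheta)^2\gtrsim J$, $\sigma_K(\boldM)^2\gtrsim W$ via matrix concentration (the paper imports these directly from \citet{Mao2019} as Lemmas~\ref{lemma:pi_bound}--\ref{lemma:m_bound}, using matrix Bernstein rather than Chernoff but with the same effect), and then convert these into bounds on each piece of $\mu$.

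Where you diverge is in the second step. The paper bounds $\|\boldU_R\|_{2,\infty}$ via the pure-subject simplex geometry of Proposition~\ref{prop:geom}, writing $\boldU=\boldPi\boldU_{S,:}$ so that every row norm is dominated by $\|\boldU_{S,:}\|\leq 1/\sigma_K(\boldPi)$, and bounds $\|\mathbf{V}_R\|_{2,\infty}$ through $\boldTheta\boldU_{S,:}^{-\top}\boldSigma_{\cR}^{-1}$, picking up an extra factor of $\kappa(\boldPi)$. You instead use the projection identity $\boldU_R\boldU_R^\top=\boldPi(\boldPi^\top\boldPi)^{-1}\boldPi^\top$ (and the analogous identity on the $\boldTheta$ side), which yields the same bound on $\|\boldU_R\|_{2,\infty}$ and a slightly cleaner bound $\|\mathbf{V}_R\|_{2,\infty}^2\leq K/\sigma_K(\boldTheta)^2$ without the $\kappa(\boldPi)$ factor. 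Your route is more elementary in that it does not invoke Assumption~\ref{assump:extreme} at all, which is arguably more natural here since the stated Dirichlet data-generating mechanism does not by itself produce exact pure subjects; the paper's argument implicitly relies on that extra structure. Under $K\asymp 1$ the difference is immaterial to the conclusion, but your derivation is a genuine, if modest, simplification.
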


Therefore, under the assumptions in Corollary \ref{coro:incoherence bound}, all the assumptions and the \textit{strong signal regime} defined in Section \ref{sec:guarantee} are satisfied. The more detailed discussion is in Appendix \ref{sec:dis_incoherence}.

\section{Simulation studies}
\label{sec:simulation}

\subsection{Comparison with the joint maximum likelihood estimation}
\label{sec:compare}

In this section, we compare the performance of covariate-assisted grade of membership models with covariate-free cases estimated via the joint maximum likelihood and a spectral method, respectively. We consider two settings for the number of latent profiles: $K=3$ and $K=8$. Simulations for the sample size $N=200,500,1000$ are performed, with $J=N/10$, $W=N/20$ for each sample size. Each row of $\mathbf{\Pi}$ is generated from a Dirichlet distribution of all $1$s. Each entry in $\boldTheta$ is generated from a uniform distribution in $[0,1]$. Each entry of $\boldM$ is generated from a standard normal distribution. Each entry of $\boldE^X$ is generated from $N(0,0.5^2)$.

For the balance parameter $\alpha$, we apply Algorithm \ref{alg:cv} and fit a loess curve to promote automatic selection of this parameter. Each setting has 100 independent replicates. The maximum iterations for heteroskedastic principal component analysis is set as $10$. We choose the parameters as the default for the joint MLE algorithm in the package \texttt{sirt}.

The result is shown in Figure \ref{fig:simulation}. We compare their performance based on computation time and mean absolute estimation error of $\hat{\boldPi}$ and $\hat{\boldTheta}$. The first row in the plot shows the results for $K=3$, and the second for $K=8$. From the computation time, we see that even with an additional procedure to select the balance parameter $\alpha$, the computational complexity of our method is still significantly smaller than that of the joint maximum likelihood algorithm. Additionally, 
the efficiency of the joint maximum likelihood algorithm degrades more drastically when the size of the data matrix becomes larger.

The empirical estimation error of $\hat{\boldPi}$ aligns well with our theory. First, the covariate-assisted grade of membership model outperforms the other methods significantly. Further, as $N$, $J$ and $W$ grows, the estimation error of all methods decreases, and the one estimated by the proposes method decreases the fastest. For the estimation error of $\hat{\boldTheta}$, the theory in Section \ref{sec:guarantee} does not directly guarantee the estimate of the covariate-assisted method converges faster than the covariate-free ones. 
However, in practice, thanks to a better estimated $\hat{\boldPi}$, the covariate-assisted estimate provides a better estimate of $\boldTheta$.

\begin{figure}[h!]
    \centering
    \includegraphics[width=\textwidth]{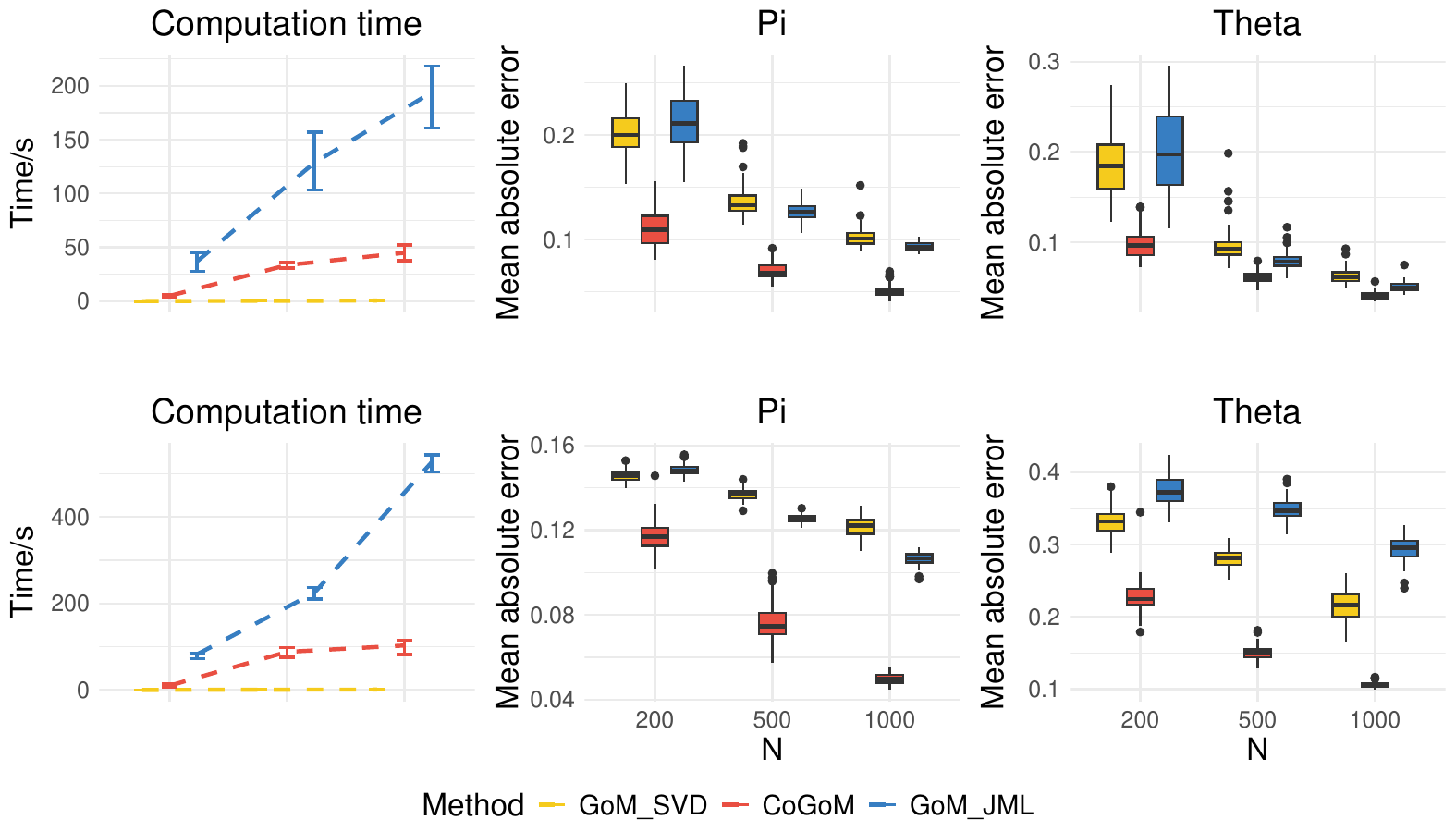}
    \caption{Comparison of different estimation methods. CoGoM represents the estimator of our proposed method. GoM\_JML represents the estimator computed by the joint maximum likelihood method. GoM\_SVD represents the grade of membership model estimated by spectral methods. The $x$ axis for each plot is the sample size $N$. Our proposed method is superior to JML in both computation efficiency and estimation accuracy.}
    \label{fig:simulation}
\end{figure}

\subsection{Why unweighted stacking is suboptimal and the necessity of selecting $\alpha$}
\label{sec:sim_tuning}
The two simulations in this section serve two purposes. First, we aim to illustrate the necessity of carefully selecting the balance parameter in practice. Second, we aim to demonstrate the effectiveness of Algorithm \ref{alg:cv}.
On a related note, \citet{Ma2024Optimal} proposed to perform SVD directly on the stacked matrix $[\boldR\mid \boldX]$ to obtain the estimate of the shared left singular vectors.
However, in the settings we consider, the simulation results show that the direct matrix stacking does not give the best performance. In some scenarios, performing SVD on $[\boldR \mid \boldX]$ can lead to even worse estimation of the signal eigenspace than a direct SVD on $\boldR$.

In the first simulation setting, we generate each row of $\boldPi$ from a Dirichlet distribution with parameters $(1,\dots,1)$, with the first $K$ rows set to be the identity matrix to satisfy the identifiability Assumption \ref{assump:extreme}. Each entry in $\boldTheta$ is independently generated from $\text{unif}[0,1]$. Each entry of $\boldM$ is independently generated from a standard normal distribution. Each entry of $\boldE^X$ is generated from $N(0, \sigma^2)$. The size of the matrices is $(N,J,W,K)=(500,50,25,3)$, and $\sigma$ is set to be $0.5,1,1.5,2$. For the second setting, $\boldPi$, $\boldTheta$ and $\boldM$ are generated the same as before. We set $\sigma=1$ and $(N,J,K)=(500,50,3)$. $J/W$ varies for each simulation to be $1, 2, 5, 10$.

Figure \ref{fig:com_stacksvd} shows that our method achieves higher parameter estimation accuracy by introducing and appropriately selecting $\alpha$. The result also coincides with Corollary \ref{coro:bound w.r.t. alpha} that the optimal value of $\alpha$ is determined by the signal-to-noise ratio between the response matrix $\mathbf R$ and the covariate matrix $\mathbf X$. When the noise of the covariates is strong, a smaller $\alpha$ is preferred. Therefore, when $\sigma$ and $J/W$ grows, the balance parameter helps partly offset the negative effect of the larger noise.

\begin{figure}[h!]
    \centering
    \includegraphics[width=0.8\textwidth]{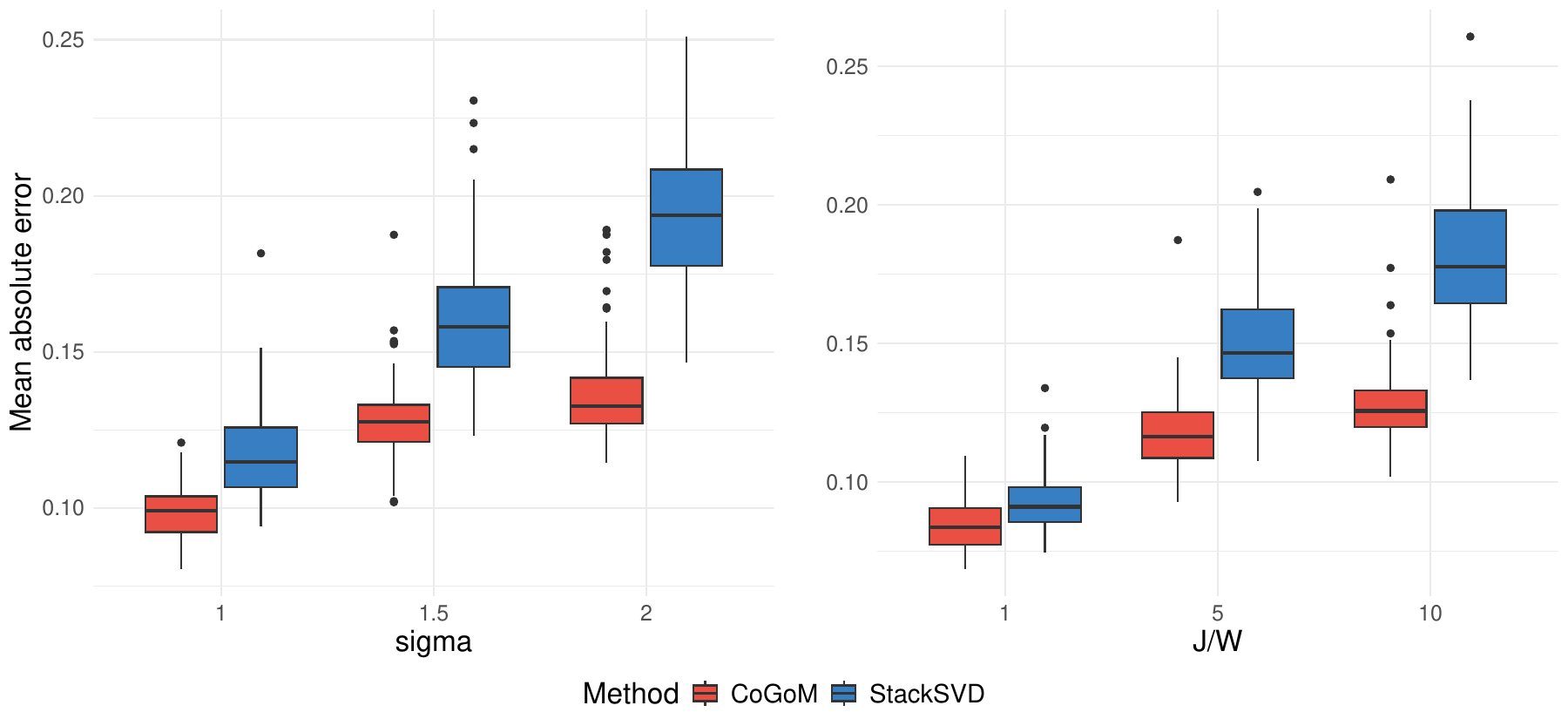}
    \caption{Boxplots of the estimation accuracy of singular value decomposition on the stacked matrix and our proposed method in mean absolute error. The $y$-axis is the mean absolute error between $\hat{\boldPi}$ and $\boldPi$. The $x$-axis for the figure on the left is different variance of noise, and for the figure on the right different ratios of the number of columns of the response matrix and the covariate matrix.}
    \label{fig:com_stacksvd}
\end{figure}

\subsection{Superiority of heteroskedastic principal component analysis}
\label{sec:sim_hetero}
In this section, we present simulation studies in the challenging scenarios where serious heteroscedasticity is present. We compare the performance in estimating the mixed-membership matrix of heteroskedastic principal component analysis and vanilla singular value decomposition.

Our setting is adapted from \citet{Zhang2021}. Let
    $v_1,\dots ,v_N\stackrel{i.i.d.}{\sim} \text{Unif}[0,1]$ and $\sigma_k^2 = {(N v^\beta)}/{(\sum_{i=1}^N v_i^\beta)},$ for  $k=1,\dots ,p$,
where $\sigma_k^2$ is the variance of the noise in each entry in the $k$th row of $\boldX$. In this setting, $\sum_{i=1}^N\sigma_i^2=N$, and the larger $\beta$ is, the more heteroskedastic the noise matrix will be. The noise becomes homoskedastic if $\beta=0$. Since the multiple binary response matrix is inherently heteroskedastic, it is generated in the same way as in Section \ref{sec:compare}. We select $\alpha$ according to Section~\ref{sec:chooseparam} with $l=20$, and $(\alpha_1,\ldots,\alpha_{20})$ equally spaced on $[0,1]$.

The simulation is conducted under $(N,J,W,K)=(100, 1000, 500, 4)$. The result is presented in Figure \ref{fig:HeteroPCA}. One can see that when heteroscedasticity is present, the result for heteroskedastic principal component analysis is significantly better than vanilla singular value decomposition. As the heteroscedasticity becomes more severe, their difference becomes more drastic. The variation in the performance is also smaller for the heteroskedastic principal component analysis.

\begin{figure}[h!]
    \centering
    \includegraphics[width=0.5\textwidth]{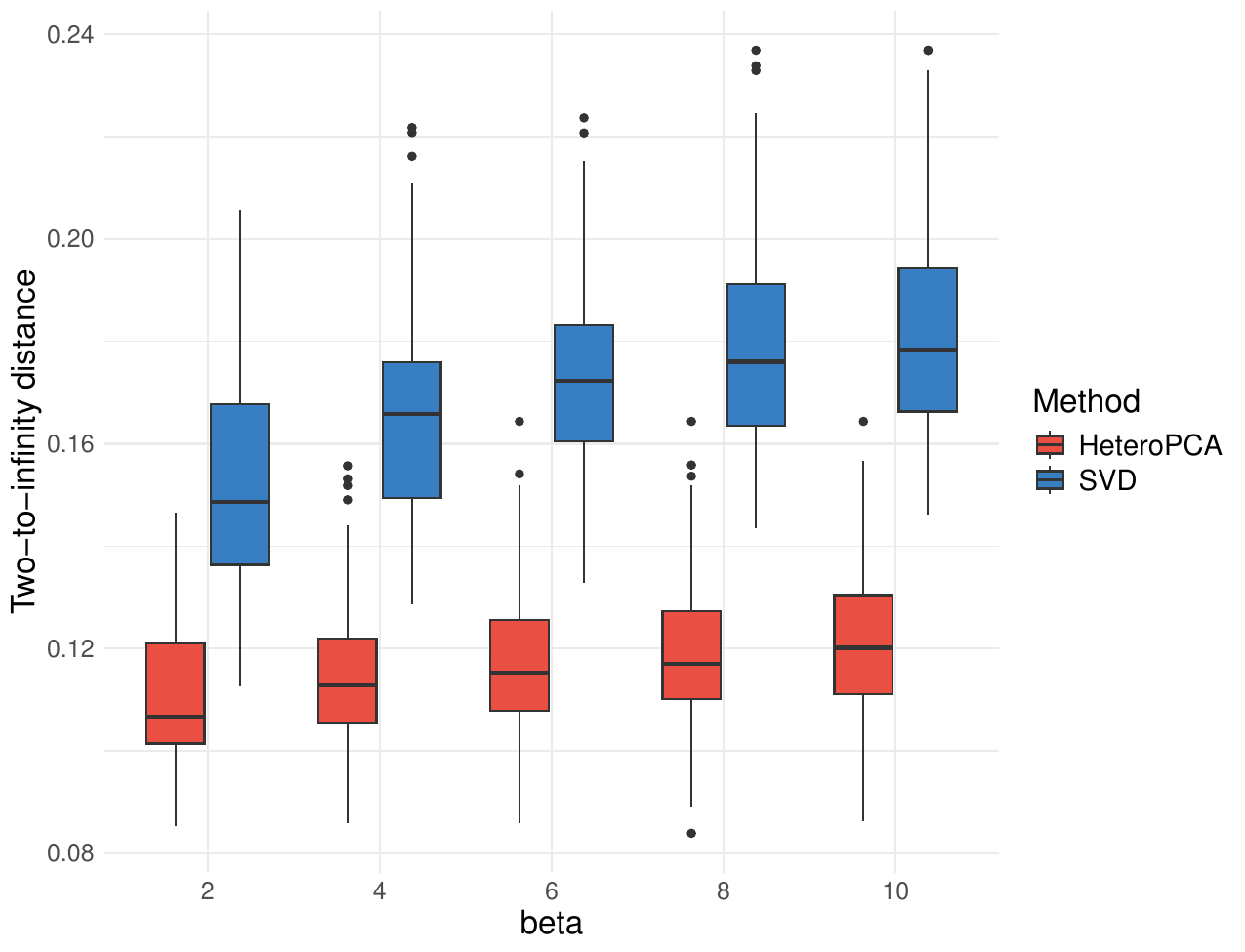}
    \caption{The boxplot of estimation error of $\boldPi$ in two-to-infinity norm.
    The $y$-axis is the two-to-infinity norm. The $x$-axis is $\beta$ that represents the degree of heteroscedasticity. HeteroPCA represents the result of the heteroskedastic-principal-component-analysis-based estimator, and SVD for the singular-value-decomposition-based one.}
    \label{fig:HeteroPCA}
\end{figure}

\section{Real data application}
\label{sec:application}
We illustrate the proposed method using a dataset from the trends in international mathematics and science study (TIMSS) \citep{foy2013timss}. TIMSS is a large-scale international assessment that measures student achievement in mathematics and science. 
We use the grade 8 mathematics assessment data in 2011. One can download the data by \texttt{R} package \texttt{EdSurvey} or from the Database: \texttt{\url{https://timssandpirls.bc.edu/timss2011/international-database.html}}. Each question is labelled with its content domain (e.g. algebra, geometry) and cognitive domain (knowing, applying, and reasoning). Besides mathematics and science test items, the dataset provides extensive contextual covariates from questionnaires administered to students, teachers, and schools. These covariates capture a wide range of aspects, such as students attitudes towards mathematics, students' social economic background, and information about students' schools and teachers.
Based on these covariates, the TIMSS research team provide summary statistics in several domains, including the degree of home resources of a student, whether a student is being bullied at school, etc.
This dataset is particularly suitable for covariate-assisted analysis, as student-level covariates are known to correlate with latent mathematical proficiency.

For subsequent analysis, we select a subset comprising $J=26$ mathematical questions and $W=8$ student-level covariates. These covariates include students' self-assessment of their mathematical ability, the average income level of their residential area, and six composite variables provided by TIMSS: home educational resources, students bullied at school, students like learning mathematics, students value learning mathematics, confidence with mathematics, and students engagement in mathematics lessons. The following analysis only includes students who answers all the 26 selected questions in the United States, resulting in a final sample of $N=620$ participants.

We preprocess the variables in the following way. Students' responses to the mathematics items are binarized, with 1 indicating a correct answer and 0 indicating incorrect. We standardize all eight covariates to be mean 0 and variance 1. For the ordinal covariates, this standardization is preceded by a numerical mapping: the three income levels are coded as -1, 0, and 1, while the four levels of self-perceived mathematics ability are assigned values of -3, -1, 1, and 3.

We apply parallel analysis \citep{Horn1985} and choose $K=3$. We use Algorithm \ref{alg:cv} to select the balance parameter $\alpha$ as $0.45$. The 5-fold cross validation prediction error is displayed in Figure \ref{fig:timss tuning}. A loess line is fitted for clarity. Figure \ref{fig:timss tuning} shows that choosing an appropriate $\alpha$ significantly improves model fit and reduces prediction error compared to the covariate-free approach, which corresponds to the $\alpha=0$ setting.

\begin{figure}[h!]
    \centering
    \includegraphics[width=0.45\textwidth]{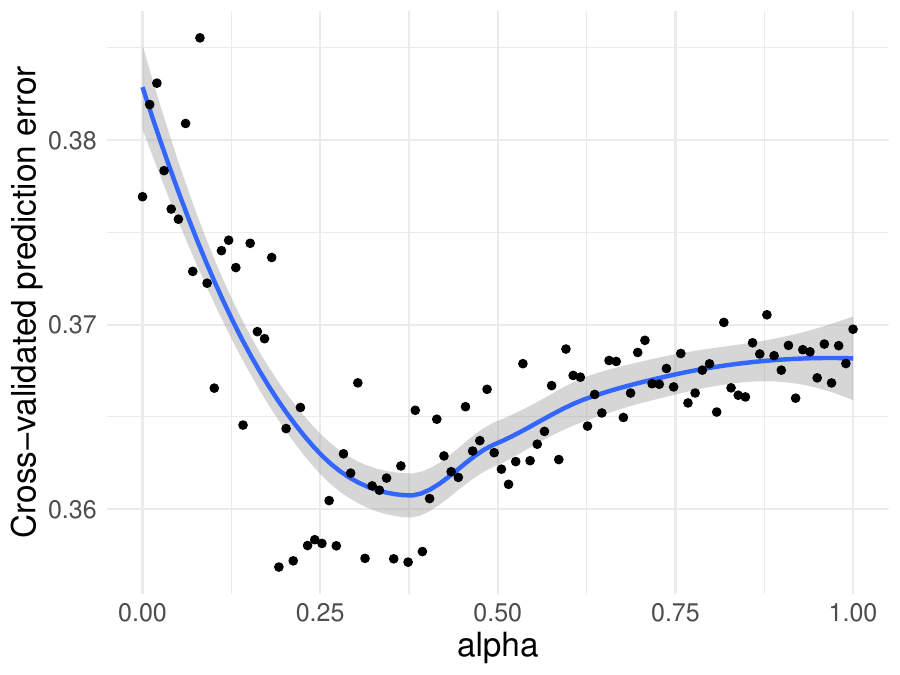}
    \caption{Cross-validated prediction error in mean absolute error by Algorithm \ref{alg:cv}. The $x$-axis represents different values of $\alpha$. The $y$-axis is the mean absolute error. The blue line is the fitted loess line.}
    \label{fig:timss tuning}
\end{figure}

First, we present the estimated item parameters $\hat{\boldTheta}$ from the covariate-assisted model. Figure \ref{fig:TIMSS_theta} displays the heatmap of $\hat{\boldTheta}$. The deeper red shades indicate higher values. Each entry $\hat{\theta}_{jk}$ represents the probability for a student from the $k$th extreme profile answers the $j$th item correctly. The heatmap reveals a clear structure, allowing for a straightforward interpretation of the latent profiles. \textit{Profile 2} represents students who are excellent in mathematics that have high probability in answering all the questions correctly. \textit{Profile 1} characterizes students who are moderately capable in mathematics, and \textit{profile 3} those who are insufficiently capable in mathematics. Additionally, this visualization highlights the specific items that most effectively differentiate students between these proficiency levels.

\begin{figure}[h!]
    \centering
    \includegraphics[width=0.8\textwidth]{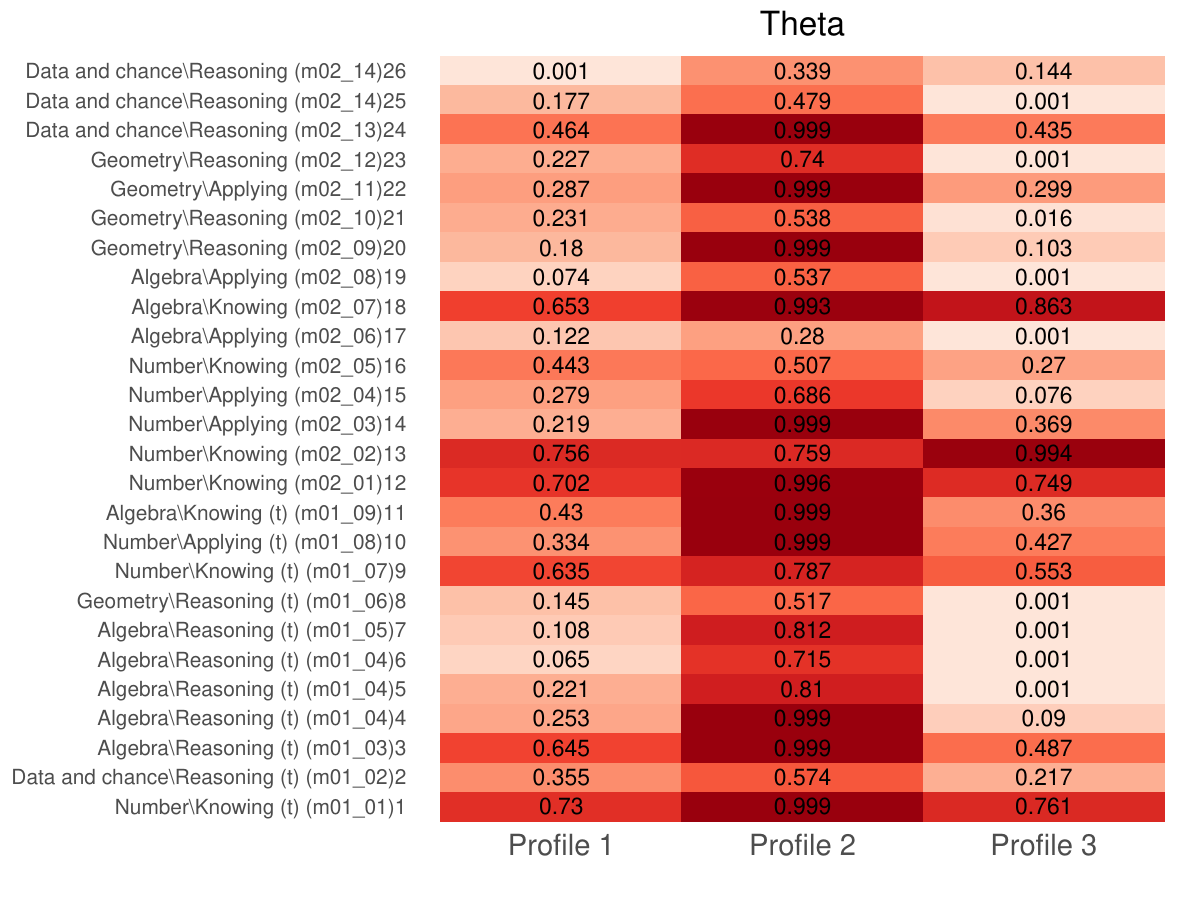}
    \caption{Heatmap of $\Hat{\boldTheta}$ with or without covariates. The values can be interpreted as the probability of answering each question correctly. The $y$ axis displays the knowledge and cognitive domain of each question.}
    \label{fig:TIMSS_theta}
\end{figure}

Next, we examine the estimated mixed membership scores $\hat{\boldPi}$. Figure \ref{fig:timss_pi} presents ternary plots of $\hat{\boldPi}$ from both our proposed covariate-assisted model (left column) and a singular-value-decomposition based covariate-free model (right column). We emphasize that in our procedure, the covariates are not used for post hoc interpretation, but directly inform estimation of the latent structure. In these plots, each point represents a student, whose position reflects its mixed membership score. The closer a point is to the $k$th vertex, the larger its corresponding membership score $\pi_{ik}$ is. For each plot, the points are colored by their value for a given covariate.

The results from our covariate-assisted model show a clear and interpretable structure. For the covariate "home resources", students with more resources are closer to Profile 2, which represent excellence in math, while students with fewer resources are distributed between the "medium" and "insufficient" vertices. There is a different pattern for the "value math" covariate. Students who highly value mathematics are concentrated between Profile 1 and 2, whereas students who do not are positioned closer to being insufficient in math. The clear stratification is also observed for categorical covariates like self-assessed mathematic ability in Figure \ref{fig:timss_ga_c}. 

In contrast, the ternary plots for the covariate-free model do not show such discernible patterns. The covariate values appear randomly distributed across the simplex. In particular, the "good at math" covariate in Figure \ref{fig:timss_ga_nc} is a covariate expected to be highly correlated with actual student performance. However, the covariate-free model fails to capture this relationship, distributing students of all self-assessed abilities almost uniformly.

\begin{figure}[h!]
    \centering
    \subfigure[Home resources without covariates]{
    \centering
    \includegraphics[width=0.40\textwidth]{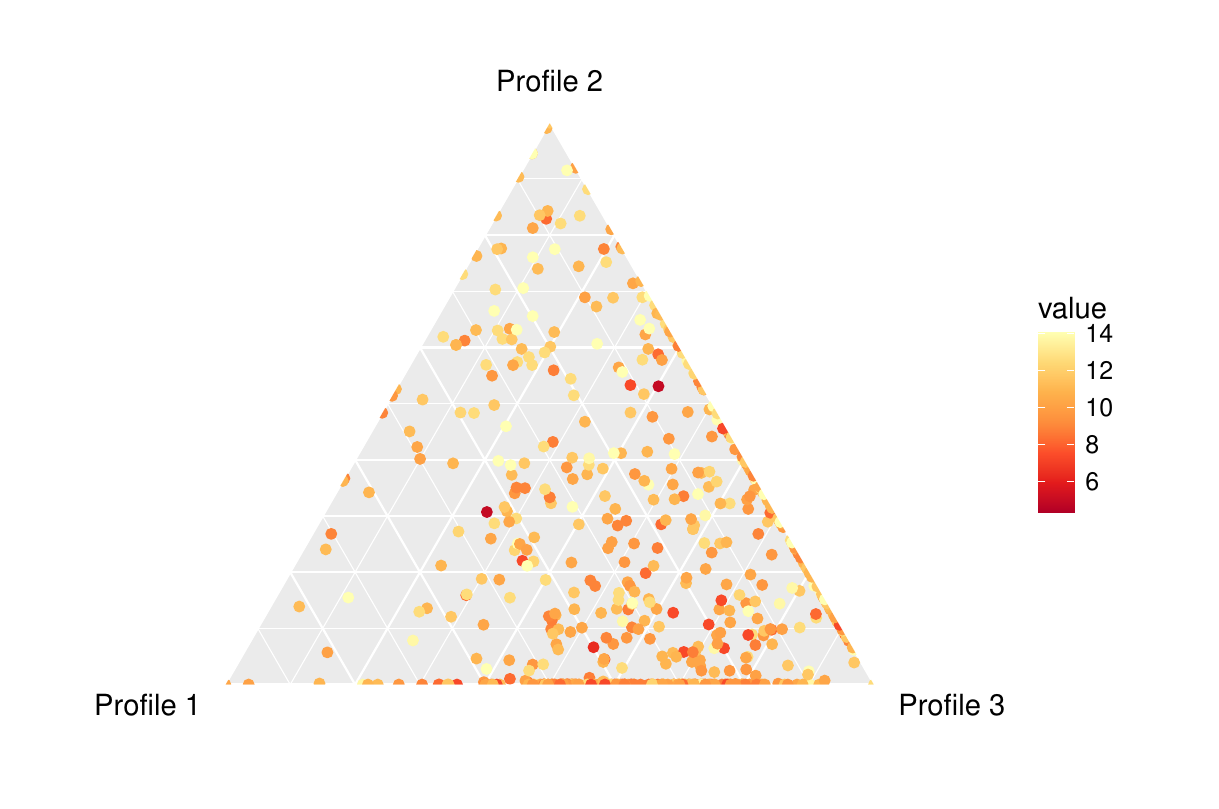}
    \label{fig:timss_hr_nc}
    }
    \hfill
    \subfigure[Home resources with covariates]{
    \centering
    \includegraphics[width=0.40\textwidth]{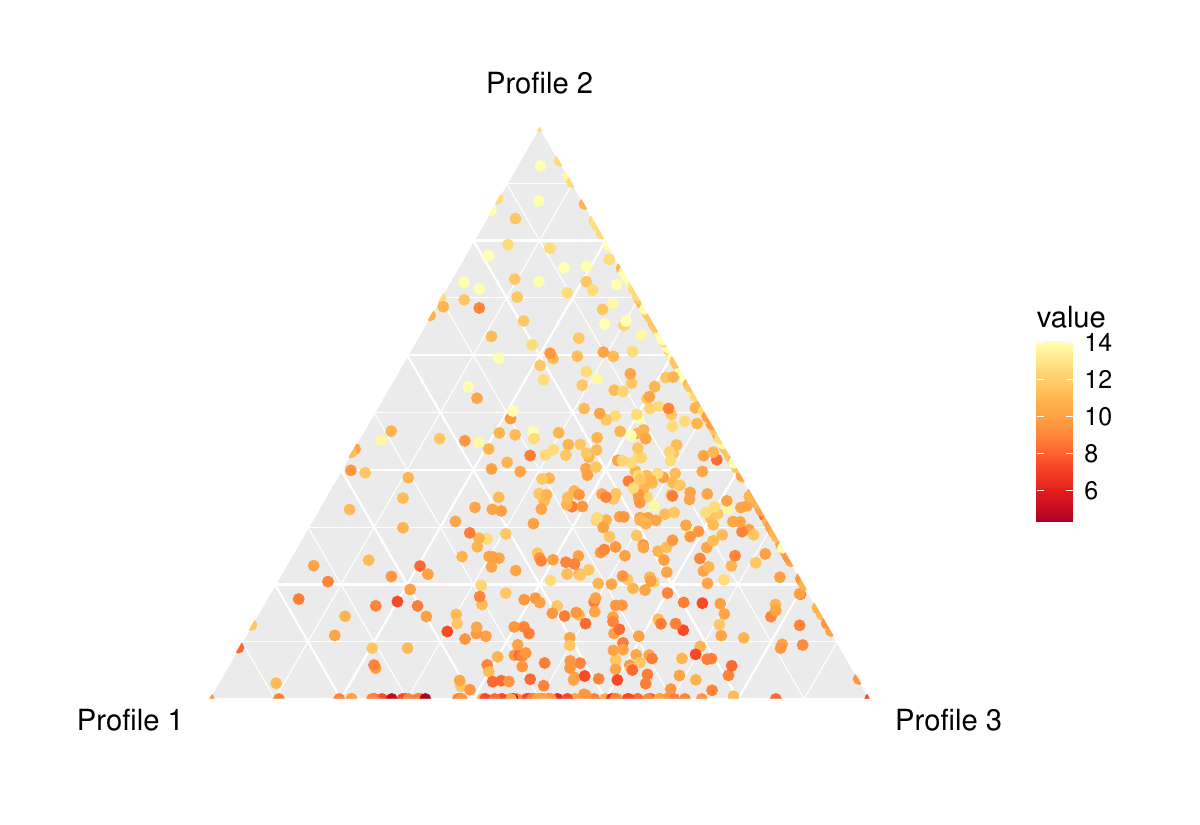}
    \label{fig:timss_hr_c}
    }\\
    \subfigure[Value math without covariates]{
    \centering
    \includegraphics[width=0.40\textwidth]{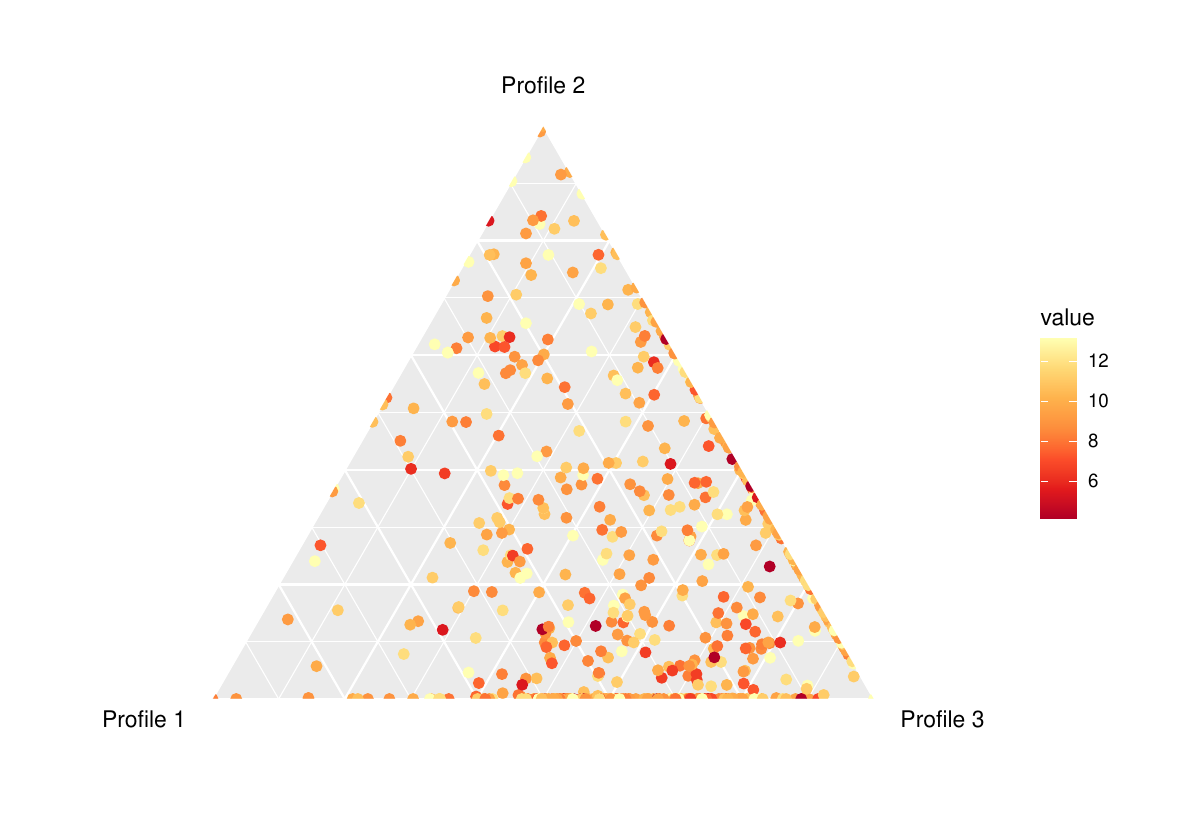}
    \label{fig:timss_value_nc}
    }
    \hfill
    \subfigure[Value math with covariates]{
    \centering
    \includegraphics[width=0.40\textwidth]{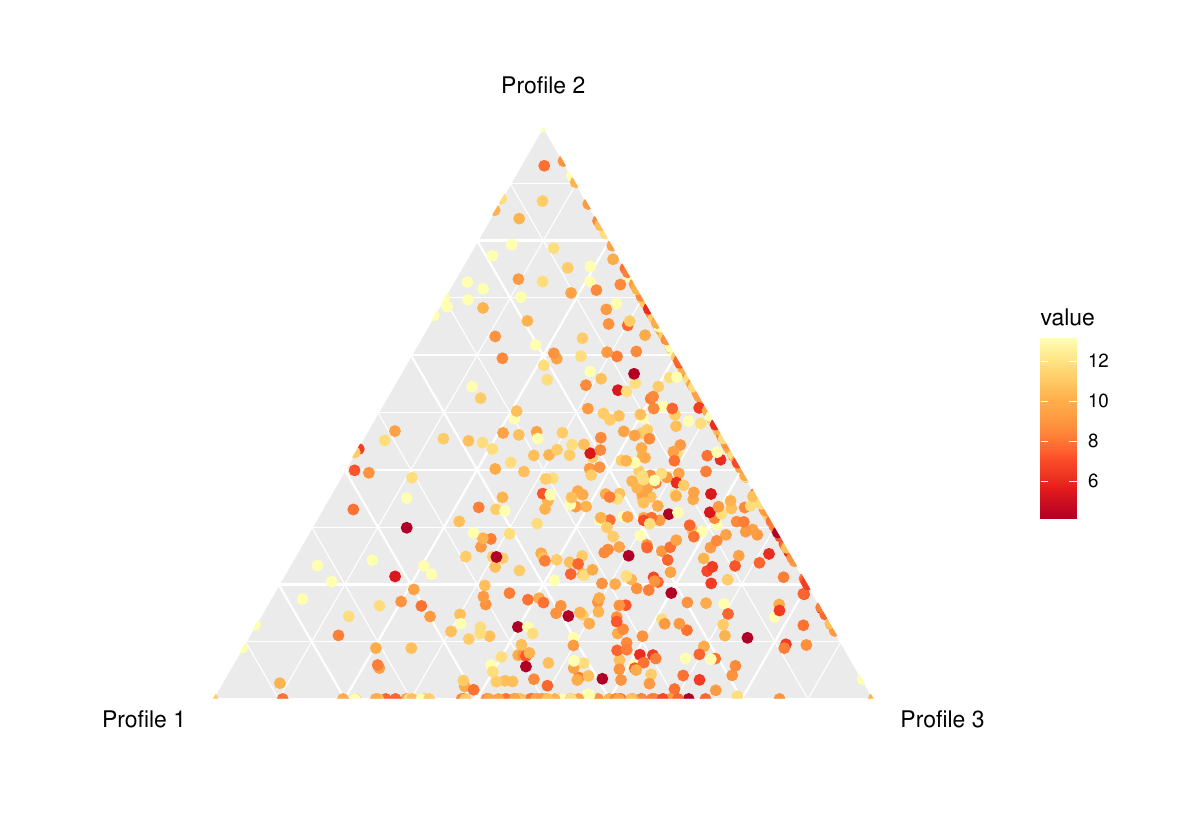}
    \label{fig:timss_value_c}
    }\\
    \subfigure[Good at math without covariates]{
    \centering
    \includegraphics[width=0.40\textwidth]{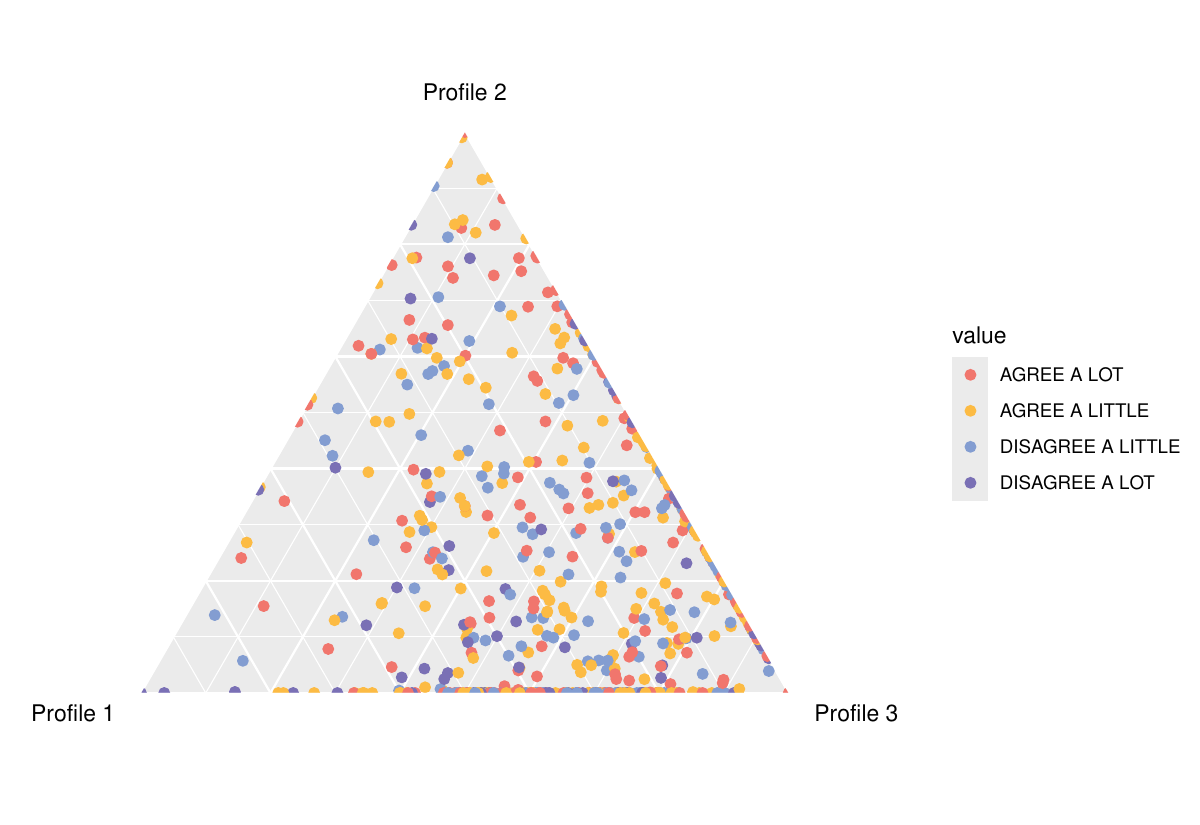}
    \label{fig:timss_ga_nc}
    }
    \hfill
    \subfigure[Good at math with covariates]{
    \centering
    \includegraphics[width=0.40\textwidth]{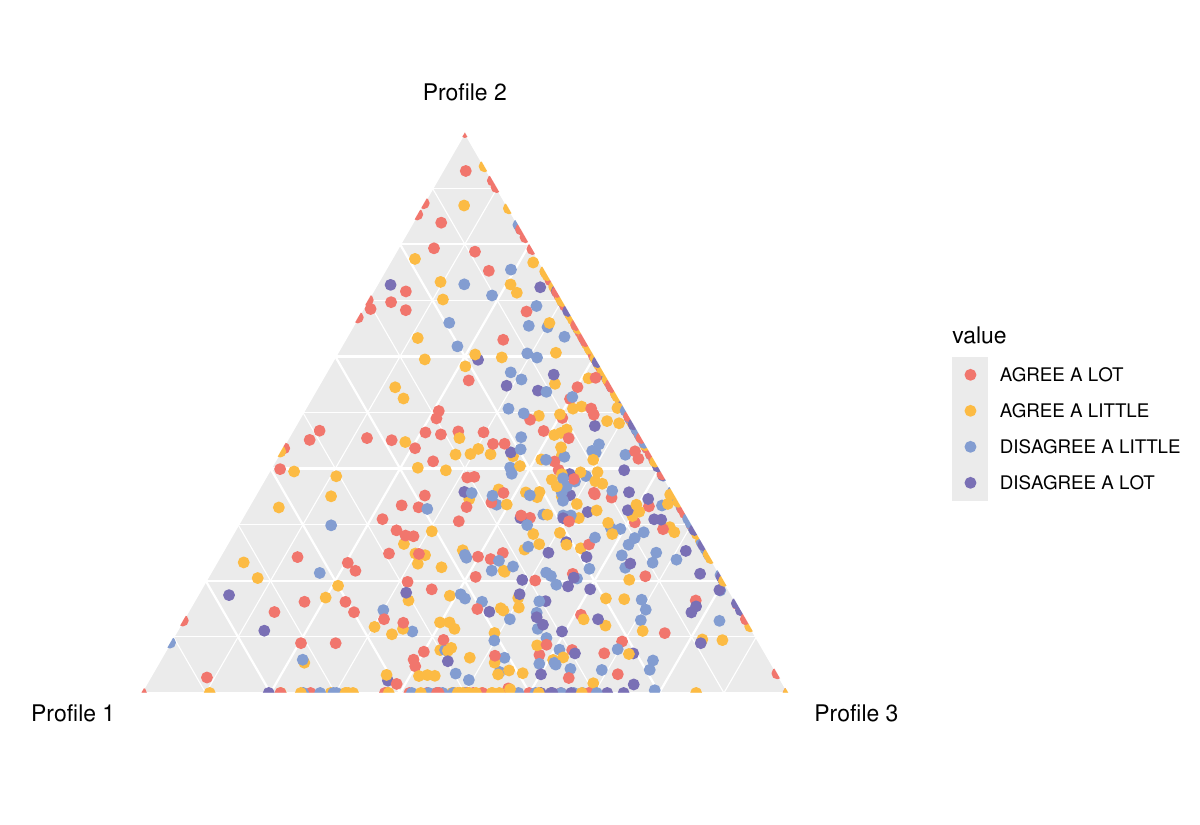}
    \label{fig:timss_ga_c}
    }
    \caption{The ternary plot of the estimated $\hat{\boldPi}$. Each ternary plot is coloured by the corresponding covariate. The relationship between the colour and its value is shown in the legend. Covariate-assistance improves interpretability of the mixed membership estimate.}
    \label{fig:timss_pi}
\end{figure}

In conclusion, this real data application demonstrates two primary advantages of the covariate-assisted grade of membership model. First, by systematically incorporating covariates, the model achieves a lower cross-validated prediction error, indicating superior predictive performance. Second, it produces a more interpretable estimation of the individual mixed membership scores, revealing meaningful relationships between the student-level covariates and their latent mathematical proficiency profiles.

\vspace{-4mm}
\section{Discussion}
\label{sec:conclusion}
In this paper, we have proposed the covariate-assisted grade of membership model. We establish its identifiability under weaker conditions than its covariate-free counterpart. Throughout, the balance parameter $\alpha$ has a structural interpretation as weighting multiple noisy views of the same latent mixed-membership geometry, rather than serving as a purely algorithmic tuning constant. For estimation, we develop an efficient spectral algorithm and provided entry-wise, finite-sample error bounds for all parameters. Our theoretical results show provable improvements in estimating the mixed membership matrix $\boldPi$ when incorporating covariates. We also present a data-driven procedure for selecting the key balance parameter to balance the two sources of information to more effectively estimate the latent structure.
Extensive simulation studies demonstrate that our method has a significant advantage over existing methods in both computational efficiency and estimation accuracy.
While our theory focuses on cases where the number of covariates $W$ exceeds the number of latent profiles $K$, our method is also applicable when $W<K$.

This work focuses on multivariate binary responses in the grade of membership model part. In many applications, the responses are multivariate categorical, such as Likert scale responses. \citet{chen2024generalized} recently proposed an efficient spectral method for such data. They transform each categorical variable with $k$ levels into $k$ correlated binary variables, and flatten the transformed data 
to create an extended response matrix. Our framework can be adapted to this setting. One can first apply this transformation to the categorical responses and then use the algorithm presented in this paper on the resulting extended matrix.

Choosing the number of latent profiles $K$ is a critical step in applying latent variable models. For likelihood-free spectral methods, there are many theoretically sound methods for this task, including parallel analysis \citep{Horn1985, Dobriban2019Deterministic}, singular value thresholding \citep{cai2010singular, donoho2023screenot}, bi-cross validation \citep{owen2009bi, owen2016bi}. In our application, we successfully apply parallel analysis. 

Our work relates to the growing literature on estimating common singular subspaces from multiple matrices \citep{Ma2024Optimal}. \citet{Ma2024Optimal} assume the covariates and the response share the same left singular subspace. They propose directly concatenating the matrices and performing a singular value decomposition to estimate the left singular subspace. While they show this is minimax optimal with respect to the matrix dimensions in the Frobenius norm, their approach does not account for different signal-to-noise ratios across data sources. In Corollary \ref{coro:bound w.r.t. alpha} and Section \ref{sec:sim_tuning}, we theoretically and empirically demonstrate the necessity of adjusting for these differences. Especially, Figure \ref{fig:tuning illus} shows that when the covariate matrix is noisy, without an appropriate balance parameter, direct concatenation of matrices can result in worse performance than the covariate-free case. Under the same setting as \citet{Ma2024Optimal}, \citet{baharav2025stacked} derives an optimal weighting scheme for each source of data under fixed $K$ and known population-level singular values. Our theory allows $K$ to grow to infinity and does not require knowledge of the population singular values. Determining the optimal weighting for mixed membership models is more challenging and remains a topic for future research.

This research also connects to transfer learning of latent representations. Our setting is similar to \citet{Duan2023Target}, but with key differences. First, we focus different models. While \citet{Duan2023Target} studies continuous factor models without additional structures, we focus on mixed membership models, where there are explicit simplex constraints on the mixed membership scores for  interpretability. The more complex mixed membership structures necessitate a tailored estimation algorithm to leverage the simplex geometry and a more delicate analysis to derive theoretical guarantees.
Second, we provide uniform, entry-wise estimation error bounds under minimal assumptions on the balance parameter $\alpha$. \citet{Duan2023Target} provides Frobenius norm guarantees and asymptotic results under more stringent assumptions, including explicit assumptions on $\alpha$. Third, we offer a practical, data-driven selection procedure for $\alpha$.

{We have demonstrated the proposed method on an educational assessment dataset, but our method has far more applications in many other domains.}
For example, our method can be used to analyse the mixed membership structure of single-cell spatial transcriptomics data \citep{BravoGonzalez-Blas2019cisTopic, staahl2016visualization}, and political survey data such as American National Election Studies \citep{debell2010analyze}, with background information of the voters.

\counterwithin{theorem}{section}
\counterwithin{lemma}{section}
\counterwithin{proposition}{section}
\counterwithin{corollary}{section}
\counterwithin{definition}{section}
\renewcommand{\thetheorem}{\Alph{section}.\arabic{theorem}}
\renewcommand{\thelemma}{\Alph{section}.\arabic{lemma}}
\renewcommand{\theproposition}{\Alph{section}.\arabic{proposition}}
\renewcommand{\thecorollary}{\Alph{section}.\arabic{corollary}}
\renewcommand{\thedefinition}{\Alph{section}.\arabic{definition}}

\bibliographystyle{apalike}
\bibliography{ref.bib}

\appendix

\section{Further illustration of Theoretical results}
\subsection{Identifiability}
\label{append:identifiability}
In this section, we first demonstrate Assumption \ref{assump:extreme} is a near necessary condition for identifiability. The proof of the following proposition is deferred to Appendix \ref{sec:iden_nec}.
\begin{proposition}
    \label{prop:iden_nec}
    Suppose $\theta_{jk}\in (0,1)$ for all $j=1,\dots,p$ and $k=1,\dots,K$. If there is one extreme profile that does not have any pure subject, then the covariate-assisted grade of membership model is not identifiable. 
\end{proposition}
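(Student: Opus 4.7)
}

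The strategy is constructive: I will exhibit an explicit one-parameter family of alternative parameters $(\tilde{\boldPi},\tilde{\cD},\tilde{\boldTheta},\tilde{\boldM})$ that reproduces every observable product $\boldPi\boldTheta^\top$, $\boldPi\boldM^\top$, and $\boldPi\cD\boldPi^\top$, yet differs from $(\boldPi,\cD,\boldTheta,\boldM)$ in a way no column permutation can absorb. The vehicle is an invertible, row-stochastic, rank-one perturbation of the identity whose existence as a legitimate reparametrization is enabled precisely by the missing-pure-subject assumption: if profile~$k^\star$ has no pure subject, then $\max_i \pi_{ik^\star} < 1$, leaving slack that can be redistributed along another profile without leaving the simplex.

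Without loss of generality take $k^\star = 1$, and pick a second index, say $k=2$. For $\epsilon>0$ define
\[
T_\epsilon \;=\; \mathbf{I}_K + \epsilon\, \mathbf{e}_1(\mathbf{e}_2 - \mathbf{e}_1)^\top,
\]
so the only non-identity entries are $T_{11}=1-\epsilon$ and $T_{12}=\epsilon$. This $T_\epsilon$ has row sums equal to $1$, is invertible for $\epsilon\in(0,1)$ (a short Sherman--Morrison calculation gives $T_\epsilon^{-1} = \mathbf{I}_K - \tfrac{\epsilon}{1-\epsilon}\mathbf{e}_1(\mathbf{e}_2-\mathbf{e}_1)^\top$), and is manifestly not a permutation. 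Set
\[
\tilde{\boldPi} \;=\; \boldPi T_\epsilon, \qquad
\tilde{\boldTheta} \;=\; \boldTheta T_\epsilon^{-\top}, \qquad
\tilde{\boldM} \;=\; \boldM T_\epsilon^{-\top}, \qquad
\tilde{\cD} \;=\; T_\epsilon^{-1} \cD T_\epsilon^{-\top}.
\]
The invariance $\tilde{\boldPi}\tilde{\boldTheta}^\top = \boldPi\boldTheta^\top$, $\tilde{\boldPi}\tilde{\boldM}^\top = \boldPi\boldM^\top$, and $\tilde{\boldPi}\tilde{\cD}\tilde{\boldPi}^\top = \boldPi\cD\boldPi^\top$ then follows by direct cancellation of $T_\epsilon T_\epsilon^{-1}$.

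The feasibility verification uses the two structural inputs of the proposition. For $\tilde{\boldPi}$: column~$1$ becomes $(1-\epsilon)\boldPi_{:,1}$, column~$2$ becomes $\boldPi_{:,2}+\epsilon\boldPi_{:,1}$, other columns unchanged, so each row remains nonnegative with sum $1$. For $\tilde{\boldTheta}$: columns $k\ge 2$ are unchanged, and $\tilde{\theta}_{j1}=(\theta_{j1}-\epsilon\theta_{j2})/(1-\epsilon)$. The open-interval hypothesis $\theta_{jk}\in(0,1)$ gives strict slack on both sides, so the inequalities $\theta_{j1}>\epsilon\theta_{j2}$ and $1-\theta_{j1}>\epsilon(1-\theta_{j2})$ hold uniformly in $j$ for all $\epsilon$ smaller than an explicit positive constant; hence $\tilde{\boldTheta}\in(0,1)^{J\times K}$. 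The analogous check for $\tilde{\boldM}\in[-\xi,\xi]^{W\times K}$ goes through whenever no entry of $\boldM$ is pinned to $\pm\xi$; if needed one enlarges $\xi$ (the bound is a parameter-space convention, not an observable), so this is not a genuine obstruction.

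It remains to rule out permutation equivalence. Since columns $2,\ldots,K$ of $\tilde{\boldTheta}$ coincide with the corresponding columns of $\boldTheta$, any permutation matrix $\mathbf{P}$ with $\tilde{\boldTheta}\mathbf{P}=\boldTheta$ must fix those indices, hence must be the identity; but $\tilde{\theta}_{j1}\neq\theta_{j1}$ whenever $\theta_{j1}\neq\theta_{j2}$ for some $j$, which is guaranteed up to a generic non-degeneracy we may assume (if $\boldTheta_{:,1}\equiv\boldTheta_{:,2}$, one instead checks the same contradiction through $\tilde{\boldPi}$ or $\tilde{\boldM}$, both of which also differ from $\boldPi$ and $\boldM$ in column~$1$).

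\textbf{Main obstacle.} The only delicate point is guaranteeing that the perturbed columns simultaneously satisfy \emph{all} parameter-space constraints (the simplex on $\boldPi$, the open cube on $\boldTheta$, and the box on $\boldM$) while certifying non-permutation. The open-cube hypothesis on $\boldTheta$ and the no-pure-subject slack on $\boldPi$ handle the first two cleanly; the bounded-box constraint on $\boldM$ is the mildly annoying case, but since shrinking $\epsilon$ makes the change in $\boldM$ arbitrarily small, any strictly interior $\boldM$ works, and boundary-binding configurations can be dismissed as measure-zero edge cases (or absorbed by slightly enlarging $\xi$). I do not anticipate any conceptual difficulty beyond bookkeeping.
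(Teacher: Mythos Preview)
Your construction works and does prove the proposition, but it takes a genuinely different route from the paper, and the difference is worth understanding.

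The paper uses a more elaborate $K\times K$ matrix
\[
\mathbf{N}_{\varepsilon} = \begin{bmatrix}
1+(K-1)\varepsilon^2 & -\varepsilon^2\boldsymbol{1}_{K-1}^{\top}\\
\boldsymbol{0}_{K-1} & \varepsilon \boldsymbol{1}_{K-1} \boldsymbol{1}_{K-1}^{\top} + (1-(K-1)\varepsilon)\mathbf{I}_{K-1}
\end{bmatrix},
\]
which is row-stochastic but has \emph{negative} entries in the first row. Because of those negative entries, the verification that $\tilde{\boldPi}=\boldPi\mathbf{N}_\varepsilon$ is entrywise nonnegative genuinely requires the slack $\pi_{i1}\le 1-\delta$: the paper shows $\tilde\pi_{ik}\ge -\varepsilon^2\pi_{i1}+\varepsilon(1-\pi_{i1})$, which is positive only because $1-\pi_{i1}\ge\delta>0$. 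Your $T_\epsilon$, by contrast, is a \emph{nonnegative} row-stochastic matrix, so $\tilde{\boldPi}=\boldPi T_\epsilon\ge 0$ automatically. Your stated motivation (``enabled precisely by the missing-pure-subject assumption'') is therefore not what is actually driving your argument: nowhere in your feasibility check do you use $\max_i\pi_{i1}<1$.

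This is logically harmless for proving the proposition (you have proved a stronger statement), but it should give you pause: your argument, taken at face value, would exhibit a non-permutation alternative even when Assumption~\ref{assump:extreme} holds, which sits uneasily with Theorem~\ref{theorem:identifiability}. The resolution is that the paper's identifiability result implicitly quantifies over alternatives that themselves satisfy Assumption~\ref{assump:extreme}; your $\tilde{\boldPi}$ destroys the pure subject for profile~1, so it falls outside that class. The paper's $\mathbf{N}_\varepsilon$ is tailored so that the hypothesis is visibly doing the work, which makes the ``near-necessity'' content of the proposition transparent; your simpler $T_\epsilon$ obscures this. Two smaller points: your permutation-exclusion argument via generic non-degeneracy of $\boldTheta$ is unnecessary---since $T_\epsilon$ is not a permutation and $\boldPi$ may be taken full column rank, $\boldPi T_\epsilon=\boldPi\mathbf{P}$ already forces $T_\epsilon=\mathbf{P}$, a contradiction---and ``enlarge $\xi$'' is not a legitimate move, though the paper itself does not verify the $[-\xi,\xi]$ constraint on $\tilde{\boldM}$ either.
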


Next, we give some examples to further illustrate Theorem \ref{theorem:identifiability}.
\begin{example}
    Consider
    \begin{equation*}
        \cD_1 = \begin{pmatrix}
            0.3 & 0.4 & 0.5\\
            0.4 & 0.3 & 0.7\\
            0.5 & 0.7 & 0.2
        \end{pmatrix},
        \quad 
        \cD_2 = \begin{pmatrix}
            0.1 & 0.2 & 0.3\\
            0.2 & 0.3 & 0.6\\
            0.3 & 0.6 & 0.9
        \end{pmatrix},
        \quad
        \cD_3 = \begin{pmatrix}
            0.1 & 0.2 & 0.3\\
            0.2 & 0.4 & 0.6\\
            0.3 & 0.6 & 1.0
        \end{pmatrix}.
    \end{equation*}
    For the first example, $\text{rank}(\cD_1)=K$. Therefore, it falls into the scenario (a) in Theorem \ref{theorem:identifiability}. If $\boldPi$ satisfies Assumption \ref{assump:extreme}, $(\boldPi,\cD_1)$ is identifiable. 
    
    For the second example, $\text{rank}(\cD_2)=K-1$. Among the top $K-1$ rows of $\cD$, no column is an affine combination of the other columns. Therefore, it falls into the scenario (b) in Theorem \ref{theorem:identifiability}. If Assumption \ref{assump:extreme} is satisfied, $(\boldPi,\cD_2)$ is identifiable.
    
    For the third example, $\text{rank}(\cD_3)=K-1$. In the first two rows, the second column is the average of the first column and the last column. Hence, it falls into the scenario (c) in Theorem \ref{theorem:identifiability}, and $(\boldPi,\cD)$ is not identifiable. 
\end{example}

\subsection{Theoretical guarantees for singular subspaces}
\label{sec:theory_subspace}

Due to the rotation ambiguity of the left singular vectors, we are only able to estimate the singular vectors up to a certain rotation. Therefore, we introduce the rotation matrix $\mathcal{O}_0= \min_{\mathcal{Z}\in \mathcal{O}^{K\times K}} \|\hat{\boldU}\mathcal{Z}-\boldU\|_F$. The proof of the following theorem is in Appendix \ref{sec:proof_leftbound}.

\begin{theorem}
    \label{theorem:leftbound}
    Under Assumption \ref{assump:noise}, \ref{assump:incoherence}, \ref{assump:information} and $\sigma(\boldX)\sqrt{W\log d_X}\asymp \sigma(\boldR)\sqrt{J\log d_R}$, with probability at least $1-O(d^{-10})$, we have
    \begin{align}
        \label{eqn:leftspectralbound}
        \left\|\hat{\boldU}\mathcal{O}_0-\boldU\right\| &\lesssim \frac{1}{\sigma_K(\mathcal{G})} \zetaop(\boldG),\\
        \label{eqn:lefttwoinftybound}
        \left\|\hat{\boldU}\mathcal{O}_0-\boldU\right\|_{2,\infty} &\lesssim (1+\alpha+\alpha^2)\frac{\kappa_{\max}^2}{\sigma_K(\cG)} \sqrt{\frac{\mu K}{N}}\zetaop(\boldG),
    \end{align}
    where $\hat{\boldU}$ and $\boldU$ are respectively the top $K$ eigenvectors of $\boldG$ and $\mathcal{G}$, and $\zetaop(\boldG)=\zeta_{\text{op}}(\boldR) + \alpha \zeta_{\text{op}}(\boldX)$.
\end{theorem}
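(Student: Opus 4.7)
The plan is to treat $\boldG$ as a perturbation of $\cG$ with
\begin{equation*}
\boldG - \cG = (\boldG_R - \cR\cR^\top) + \alpha(\boldG_X - \cX\cX^\top),
\end{equation*}
and control each summand via the HeteroPCA analyses of \citet{Zhang2021} and \citet{Yan2021} applied separately to $\boldR\boldR^\top$ and $\boldX\boldX^\top$. Under Assumptions \ref{assump:noise}--\ref{assump:information}, these references supply the spectral bounds $\|\boldG_R - \cR\cR^\top\|\lesssim \zetaop(\boldR)$ and $\|\boldG_X - \cX\cX^\top\|\lesssim \zetaop(\boldX)$, each with probability at least $1-O(d^{-10})$. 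The triangle inequality then gives $\|\boldG - \cG\|\lesssim \zetaop(\boldR)+\alpha\zetaop(\boldX)=\zetaop(\boldG)$, and \eqref{eqn:leftspectralbound} follows immediately from the Davis--Kahan $\sin\Theta$ theorem combined with the eigengap $\sigma_K(\cG)$.

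For the two-to-infinity bound \eqref{eqn:lefttwoinftybound}, I will invoke the entrywise eigenvector perturbation framework (for example, Theorem 4.4 of \citet{Chen2021}). After absorbing the self-referential term into the spectral-gap denominator (which is valid because Assumption \ref{assump:information} forces $\|\boldG-\cG\|/\sigma_K(\cG)\ll 1$), the master inequality reduces to
\begin{equation*}
\|\hat\boldU\mathcal{O}_0 - \boldU\|_{2,\infty} \lesssim \frac{\|(\boldG - \cG)\boldU\|_{2,\infty}}{\sigma_K(\cG)} + \frac{\|\boldG - \cG\|\,\|\boldU\|_{2,\infty}}{\sigma_K(\cG)}.
\end{equation*}
The second term is already controlled by the spectral bound together with the joint incoherence $\|\boldU\|_{2,\infty}\lesssim \sqrt{\mu K/N}$ from Assumption \ref{assump:incoherence}. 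Crucially, because $\cR = \boldPi\boldTheta^\top$ and $\cX = \boldPi\boldM^\top$ share the column space of $\boldPi$, the eigenspaces $\boldU_R, \boldU_X$ of $\cR\cR^\top, \cX\cX^\top$ coincide with $\boldU$ up to right orthogonal rotations, so two-to-infinity norms are preserved exactly between them. For the first term, I will split $(\boldG - \cG)\boldU = (\boldG_R - \cR\cR^\top)\boldU + \alpha(\boldG_X - \cX\cX^\top)\boldU$ and apply the entrywise HeteroPCA bounds of \citet[Theorem 10]{Yan2021} on each piece, rotating $\boldU_R, \boldU_X$ onto $\boldU$.

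The main obstacle is tracking how the two HeteroPCA errors combine after weighting by $\alpha$ and renormalization by $\sigma_K(\cG)$. The $\kappa_{\max}^2$ prefactor enters when I express the per-matrix HeteroPCA bounds, which naturally carry $\sigma_K(\cR)^2$ and $\sigma_K(\cX)^2$ in their denominators, through the combined eigengap $\sigma_K(\cG)$ via Weyl's inequality. The $(1+\alpha+\alpha^2)$ prefactor arises from expanding the HeteroPCA iterative-refinement residual to second order: the leading projected noise contributes the weighting $(1+\alpha)$ matching the linearity of $\zetaop(\boldG)$ in $\alpha$, while the quadratic refinement residual, which is of order (noise/signal)${}^2$ in each HeteroPCA analysis, contributes an additional $(1+\alpha)^2\lesssim 1+\alpha+\alpha^2$ once the two weighted noise sources are combined and rebalanced by $\sigma_K(\cG)$. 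The noise-scale matching condition $\sigma(\boldX)\sqrt{W\log d_X}\asymp \sigma(\boldR)\sqrt{J\log d_R}$ ensures that neither HeteroPCA error dominates the other by more than logarithmic factors, so that the two per-source bounds cleanly aggregate into $\zetaop(\boldG)$; the final $1-O(d^{-10})$ probability follows from a union bound over the HeteroPCA events and the standard sub-Gaussian noise concentration events used inside each.
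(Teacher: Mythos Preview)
Your spectral-norm argument for \eqref{eqn:leftspectralbound} is correct and matches the paper's proof (Lemma~\ref{lemma:wholebound} plus Davis--Kahan).

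The two-to-infinity argument has a genuine gap. The ``master inequality'' you write down does not follow from Theorem~4.4 of \citet{Chen2021} or similar entrywise frameworks: those results require the perturbation matrix to have independent entries, whereas $\boldG-\cG$ is built from Gram-type terms and HeteroPCA diagonal imputations, so its $(i,j)$-entry shares noise with every other entry in row $i$ and column $j$. In the correct Abbe-type decomposition the self-referential term is $\sigma_K(\cG)^{-1}\|\boldG_{m,:}(\hat\boldU\boldH-\boldU)\|_2$, not merely $\sigma_K(\cG)^{-1}\|(\boldG-\cG)\boldU\|_{2,\infty}$; since $\boldG_{m,:}$ contains $\boldE^R_{m,:}$ and $\boldE^X_{m,:}$, which are \emph{not} independent of $\hat\boldU$, you cannot absorb it by the spectral-gap trick. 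Relatedly, ``applying entrywise HeteroPCA bounds of \citet[Theorem~10]{Yan2021} on each piece'' does not work: that theorem controls the eigenvectors of $\boldG_R$ and $\boldG_X$ separately, but $\hat\boldU$ is the eigenvector of the weighted sum $\boldG_R+\alpha\boldG_X$, a different object; rotating $\boldU_R,\boldU_X$ onto $\boldU$ does not transfer those per-source bounds.

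The paper instead runs a \emph{joint} leave-one-out analysis: it defines $\boldG^{(m)}=\boldG_R^{(m)}+\alpha\boldG_X^{(m)}$ with row $m$ removed from \emph{both} $\boldR$ and $\boldX$, and controls quantities like $\|\boldE^R_{m,:}[\mathcal{P}_{-m,:}(\boldR)]^\top(\hat\boldU^{(m)}\boldH^{(m)}-\boldU)\|_2$ directly (Lemmas~\ref{lemma:32}, \ref{lemma:GUbound}, \ref{lemma:33}). The $(1+\alpha+\alpha^2)$ factor is not a vague second-order residual; it arises concretely because the leave-one-out surrogate $\hat\boldU^{(m)}$ depends on both data sources, so bounding the $\boldR$-row contribution produces cross terms of order $\alpha\,\zetaop(\boldR)\zetaop(\boldX)/\sigma_K(\cG)$, and after adding the $\alpha$-weighted $\boldX$-row contribution one picks up an $\alpha^2$ term. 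This joint decoupling is the missing idea in your proposal.
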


Then, we present the entry-wise estimation guarantee for the right singular vectors estimated via heteroskedastic principal component analysis. The proof of the following theorem is in Appendix \ref{sec:proof_vl}.
\begin{theorem}
    \label{theorem:VL}
    Instate the assumptions of Theorem \ref{theorem:leftbound}, and additionally $\left\|\mathbf{V}\right\|_{2,\infty}\lesssim \sqrt{\mu K/(J+W)}$, with probability exceeding $1-O(d^{-10})$, we have,
    \begin{align*}
        \left\| \hat{\mathbf{V}} \mathcal{O}_0 - \mathbf{V} \right\|_{2,\infty} &\lesssim \mathcal{E}_{U}^V + \mathcal{E}_E^V,\\
        \left\| \hat{\boldU}\hat{\boldU}^\top \boldL - \cL \right\|_{\infty} &\lesssim \mathcal{E}_U^L + \mathcal{E}_E^L,
    \end{align*}
    where $\hat{\mathbf{V}} = \mathbf{L}^\top \hat{\mathbf{U}} \hat{\boldSigma}^{-1}$, $\mathcal{O}_0$ is defined the same as in Theorem \ref{theorem:leftbound}, and
    \begin{align*}
        \mathcal{E}_{U}^V &= \frac{\kappa_{\max}^2}{\sigma_K(\cG)}\frac{\sqrt{\mu} K}{\sqrt{J+W}}\left(\frac{\zetaop(\boldG)^2}{\sigma_K(\cG)} + \sqrt{\frac{\mu K}{N}}\zetaop(\boldG)\right),\\
        \mathcal{E}_E^V &= \frac{1}{\sigma_K(\cG)^{1/2}}\sqrt{\frac{\mu K}{N}}\left( \frac{\zetaop(\boldR)}{\sigma_K(\cR)} + \frac{\sqrt{\alpha}\zetaop(\boldX)}{\sigma_K(\cX)} \right),\\
        \mathcal{E}_U^L &= (1+\alpha+\alpha^2) \frac{\kappa_{\max}^3}{\sigma_K(\cG)^{1/2}} \frac{\mu K^{3/2}}{\sqrt{N(J+W)}} \left( \frac{\zetaop(\boldG)^2}{\sigma_K(\cG)} + \sqrt{\frac{\mu K}{N}}\zetaop(\boldG) \right),\\
        \mathcal{E}_E^L &=  \kappa_{\max} \frac{\mu K}{N}\left(\frac{\zetaop(\boldR)}{\sigma_K(\cR)} + \frac{\sqrt{\alpha}\zetaop(\boldX)}{\sigma_K(\cX)}\right).
    \end{align*}
\end{theorem}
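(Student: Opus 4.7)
The plan is to derive both bounds by reducing them to Theorem~\ref{theorem:leftbound} plus direct Bernstein-type concentration on centered sub-Gaussian quantities, exploiting the identity $\hat{\boldU}\hat{\boldU}^\top \boldL = \hat{\boldU}\hat{\boldSigma}\hat{\mathbf{V}}^\top$ so that the two stated bounds are close cousins rather than independent calculations.

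For the infinity-norm bound on $\hat{\boldU}\hat{\boldU}^\top \boldL - \cL$, I start from the separation
\begin{equation*}
    \hat{\boldU}\hat{\boldU}^\top \boldL - \cL = (\hat{\boldU}\hat{\boldU}^\top - \boldU\boldU^\top)\cL + \hat{\boldU}\hat{\boldU}^\top \boldE^L,
\end{equation*}
using $\boldU\boldU^\top \cL = \cL$; the first summand produces $\mathcal{E}_U^L$ and the second $\mathcal{E}_E^L$. I apply the symmetric expansion $\hat{\boldU}\hat{\boldU}^\top - \boldU\boldU^\top = (\hat{\boldU}\mathcal{O}_0 - \boldU)\mathcal{O}_0^\top \hat{\boldU}^\top + \boldU(\hat{\boldU}\mathcal{O}_0 - \boldU)^\top$, take a single entry, use Cauchy--Schwarz row-by-column, and note $\cL_{:,j} = \boldU\boldSigma\mathbf{V}_{j,:}^\top$ so that each column of $\cL$ has Euclidean norm at most $\sigma_1(\cL)\|\mathbf{V}\|_{2,\infty}\lesssim \sigma_1(\cL)\sqrt{\mu K/(J+W)}$ by the new incoherence hypothesis; combining this with (\ref{eqn:lefttwoinftybound}) and (\ref{eqn:leftspectralbound}) produces the $1/\sqrt{N(J+W)}$ denominator in $\mathcal{E}_U^L$. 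For the projected noise I split $\hat{\boldU}\hat{\boldU}^\top \boldE^L = \boldU\boldU^\top \boldE^L + (\hat{\boldU}\hat{\boldU}^\top - \boldU\boldU^\top)\boldE^L$: the leading entries $\boldU_{i,:}\boldU^\top \boldE^L_{:,j}$ involve a deterministic $\boldU$ and an independent sub-Gaussian vector $\boldU^\top \boldE^L_{:,j}$, so Bernstein combined with $\|\boldU\|_{2,\infty}\lesssim \sqrt{\mu K/N}$ yields $\mathcal{E}_E^L$, with the block structure $\boldE^L = [\boldE^R \mid \sqrt{\alpha}\boldE^X]$ giving the additive $\zetaop(\boldR)/\sigma_K(\cR) + \sqrt{\alpha}\zetaop(\boldX)/\sigma_K(\cX)$ pattern; the remaining cross term is absorbed into $\mathcal{E}_U^L$ via (\ref{eqn:leftspectralbound}) together with the spectral noise bound $\|\boldE^L\|\lesssim \sqrt{\zetaop(\boldG)}$.

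For the two-to-infinity bound on $\hat{\mathbf{V}}\mathcal{O}_0 - \mathbf{V}$, I use $\mathbf{V}_{j,:} = \cL_{:,j}^\top \boldU\boldSigma^{-1}$ and $\hat{\mathbf{V}}_{j,:} = \boldL_{:,j}^\top \hat{\boldU}\hat{\boldSigma}^{-1}$ to write
\begin{equation*}
    (\hat{\mathbf{V}}\mathcal{O}_0 - \mathbf{V})_{j,:} = (\boldE^L_{:,j})^\top \boldU\boldSigma^{-1} + \boldL_{:,j}^\top\bigl(\hat{\boldU}\hat{\boldSigma}^{-1}\mathcal{O}_0 - \boldU\boldSigma^{-1}\bigr).
\end{equation*}
The first summand contributes $\mathcal{E}_E^V$ by the same Bernstein argument as above, rescaled by $\|\boldSigma^{-1}\| = \sigma_K(\cG)^{-1/2}$. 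In the second, I further isolate subspace- and spectrum-perturbation pieces,
\begin{equation*}
    \hat{\boldU}\hat{\boldSigma}^{-1}\mathcal{O}_0 - \boldU\boldSigma^{-1} = (\hat{\boldU}\mathcal{O}_0 - \boldU)\mathcal{O}_0^\top \hat{\boldSigma}^{-1}\mathcal{O}_0 + \boldU\bigl(\mathcal{O}_0^\top \hat{\boldSigma}^{-1}\mathcal{O}_0 - \boldSigma^{-1}\bigr),
\end{equation*}
control the spectrum factor through Weyl on $\boldG$ (so $|\hat{\sigma}_k^2 - \sigma_k(\cL)^2| \leq \|\boldG - \cG\| \lesssim \zetaop(\boldG)$) together with the Procrustes identity $\|\boldU^\top \hat{\boldU}\mathcal{O}_0 - \mathbf{I}_K\| = \|\hat{\boldU}\mathcal{O}_0 - \boldU\|^2/2$, and combine with the entrywise bound (\ref{eqn:lefttwoinftybound}) to yield $\mathcal{E}_U^V$.

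The hard part will be the non-commutativity of $\mathcal{O}_0$ with the diagonal $\hat{\boldSigma}$: although $\hat{\boldSigma}$ is diagonal, $\mathcal{O}_0^\top \hat{\boldSigma}\mathcal{O}_0$ is not, so $\|\mathcal{O}_0^\top \hat{\boldSigma}\mathcal{O}_0 - \boldSigma\|$ cannot be read off Weyl directly. My plan is to route through $\mathcal{O}_0^\top \hat{\boldU}^\top \boldG \hat{\boldU}\mathcal{O}_0 - \boldU^\top \cG \boldU$, expand $\boldG = \cG + (\boldG-\cG)$ together with $\hat{\boldU}\mathcal{O}_0 = \boldU + (\hat{\boldU}\mathcal{O}_0 - \boldU)$, and bound the cross residuals using Theorem~\ref{theorem:leftbound} and $\|\boldG-\cG\|\lesssim \zetaop(\boldG)$; this costs one extra factor of the subspace error, which is precisely why $\mathcal{E}_U^V$ carries both a $\zetaop(\boldG)^2$ term and a linear one. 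A persistent but lower-stakes nuisance is keeping the two noise scales $\sigma(\boldR)$ and $\sqrt{\alpha}\sigma(\boldX)$ separated across the two column blocks of $\boldL$ so that $1/\sigma_K(\cR)$ and $1/\sigma_K(\cX)$ surface individually rather than being collapsed into $1/\sigma_K(\cG)$; this separation is preserved by the compatibility assumption $\sigma(\boldX)\sqrt{W\log d_X}\asymp \sigma(\boldR)\sqrt{J\log d_R}$ carried in from Theorem~\ref{theorem:leftbound}.
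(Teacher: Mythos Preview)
Your overall architecture is sensible and close in spirit to the paper, but there is a genuine gap in the step you correctly flag as ``the hard part'': bounding $\|\mathcal{O}_0^\top\hat{\boldSigma}\mathcal{O}_0 - \boldSigma\|$ (equivalently the $\hat{\boldSigma}^2$ version). Your plan is to expand $(\hat{\boldU}\mathcal{O}_0)^\top\boldG(\hat{\boldU}\mathcal{O}_0)-\boldU^\top\cG\boldU$ by writing $\hat{\boldU}\mathcal{O}_0=\boldU+\Delta$ and $\boldG=\cG+(\boldG-\cG)$, then control all residuals using Theorem~\ref{theorem:leftbound} and the coarse bound $\|\boldG-\cG\|\lesssim\zetaop(\boldG)$. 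The cross terms involving $\Delta$ indeed produce the $\zetaop(\boldG)^2/\sigma_K(\cG)$ contribution, but after the expansion you are left with the main term $\boldU^\top(\boldG-\cG)\boldU$, and bounding that by $\|\boldG-\cG\|$ gives only $\zetaop(\boldG)$. Propagated through, this yields a $\beta_1$-type contribution of order $\frac{\sqrt{\mu}K}{\sqrt{J+W}}\cdot\frac{\zetaop(\boldG)}{\sigma_K(\cG)}$, which is larger than the stated $\mathcal{E}_U^V$ by the factor $(\zetaop(\boldG)/\sigma_K(\cG)+\sqrt{\mu K/N})^{-1}\gg 1$. The theorem as stated requires the sharper estimate $\|\boldU^\top(\boldG-\cG)\boldU\|\lesssim\kappa_{\max}^2\sqrt{\mu K/N}\,\zetaop(\boldG)$. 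The paper obtains this in its Lemma on $\|\mathcal{O}_0^\top\hat{\boldSigma}^2\mathcal{O}_0-\boldSigma^2\|$ by decomposing $\boldG-\cG$ into (i) the HeteroPCA-controlled diagonal, (ii) the linear terms $\boldE^L\cL^\top+\cL\boldE^{L\top}$, and (iii) the off-diagonal of $\boldE^L\boldE^{L\top}$, and then using concentration on $\|\boldU^\top\boldE^L\mathbf{V}\|$ for (ii). That concentration step is precisely where the extra incoherence hypothesis $\|\mathbf{V}\|_{2,\infty}\lesssim\sqrt{\mu K/(J+W)}$ is consumed; in your plan, by contrast, $\mathbf{V}$-incoherence is invoked only to bound $\|\cL_{:,j}\|_2$, which is a signal that this refinement is missing.

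A smaller issue: the claimed spectral bound $\|\boldE^L\|\lesssim\sqrt{\zetaop(\boldG)}$ used to absorb the cross term $(\hat{\boldU}\hat{\boldU}^\top-\boldU\boldU^\top)\boldE^L$ is not guaranteed under the theorem's assumptions; since $\|\boldE^R\|^2\asymp\sigma(\boldR)^2\max\{N,J\}$ while $\zetaop(\boldR)\asymp\sigma(\boldR)^2\sqrt{NJ}\log d_R$ in the large-noise regime, the inequality fails when the aspect ratio $\max\{N,J\}/\min\{N,J\}$ exceeds $\log^2 d_R$. The paper sidesteps this by instead controlling $\|\boldE^{L\top}\hat{\boldU}\|_{2,\infty}$ directly (splitting into $\boldE^{L\top}\boldU$ plus $\boldE^{L\top}(\hat{\boldU}\mathcal{O}_0-\boldU)$ and bounding the latter via $\|\boldE^L\|_\infty$ times the $2{,}\infty$ subspace error), which avoids $\|\boldE^L\|$ altogether. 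Your direct decomposition $\hat{\boldU}\hat{\boldU}^\top\boldL-\cL=(\hat{\boldU}\hat{\boldU}^\top-\boldU\boldU^\top)\cL+\hat{\boldU}\hat{\boldU}^\top\boldE^L$ for the $\cL$-bound is otherwise cleaner than the paper's three-term route through $\hat{\mathbf{V}}$, and would work once this cross term is handled columnwise.
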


Different from previous results in singular subspace estimation \citep{Chen2021, chen2024generalized}, we align the estimated right singular vectors and the population ones with the same rotation matrix $\mathcal{O}_0$ for $\hat{\mathbf{U}}$. Since HeteroPCA estimates the right singular vectors via the estimated left singular vectors, the alignment uses the same rotation matrix.

The error bound on $\hat{\mathbf{V}}$ and $\hat{\mathbf{L}}$ can be decomposed into two parts. $\mathcal{E}_U^V$ and $\mathcal{E}_U^L$ are the estimation error resulting from the estimation error of $\boldU$. Because there is inherent error in $\mathbf{L}$, this source of error results in $\mathcal{E}_E^V$ and $\mathcal{E}_E^L$. Under the strong signal scenario, $\mathcal{E}_U^V = \tilde{O}(1 / [(J+W)\sqrt{\min\{N, J+W\}}])$, $\mathcal{E}_U^L = \tilde{O}(1/\sqrt{(J+W)\min\{N, J+W\}})$, $\mathcal{E}_E^V = \tilde{O}(1/\sqrt{N(J+W)})$, and $\mathcal{E}_E^L = \tilde{O}(1/\sqrt{N})$. Even though the estimator of right singular subspace is different from vanilla singular value decomposition \citep{Chen2021, chen2024generalized}, the proposed estimator enjoys the same convergence rate with relaxed assumptions under the strong signal regime. But similarly as in the discussion in Section \ref{sec:theory}, the error bound is not optimal with respect to $K$. If one is willing to make stronger assumptions, they can estimate $\hat{\mathbf{V}}$ via vanilla SVD. In most applications, $K$ is a constant, and there is little difference between the two estimators of the right singular subspace.

\section{Some auxiliary Lemmas}
In this section, we present some basic lemmas concerning the smallest eigenvalue or singular value based on the low-rank structure, and some inequalities frequently applied in later proofs.
\begin{lemma}
    \label{lemma:eigenvaluePiU}
    If $\cD$ is full rank, one has 
    \begin{equation*}
        \boldPi^{\top}\boldPi=(\boldU_S\boldU_S^{\top})^{-1},
    \end{equation*}
    which implies $\lambda_1(\boldU_S\boldU_S^{\top})=1/ \lambda_K(\boldPi^{\top}\boldPi)$, $\lambda_K(\boldU_S\boldU_S^{\top}) = 1/\lambda_1(\boldPi^{\top}\boldPi)$ and $\kappa(\boldU_S\boldU_S^{\top})= \kappa(\boldPi^{\top}\boldPi)$.
\end{lemma}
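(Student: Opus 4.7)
The plan is to leverage Proposition~\ref{prop:geom}, which under Assumption~\ref{assump:extreme} gives the factorization $\boldU = \boldPi \boldU_{S,:}$, together with the orthonormality $\boldU^\top \boldU = \mathbf{I}_K$. The identity $\boldPi^\top\boldPi = (\boldU_S\boldU_S^\top)^{-1}$ should then follow from a purely algebraic manipulation, and the eigenvalue/condition-number statements from the fact that inversion reverses the order of eigenvalues while preserving the condition number.

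First, I would establish that $\boldU_S := \boldU_{S,:}\in\mathbb{R}^{K\times K}$ is invertible. Since $\cD = \boldTheta^\top\boldTheta+\alpha\boldM^\top\boldM$ is full rank by hypothesis and $\boldPi$ has rank $K$ by Assumption~\ref{assump:extreme} (it contains an identity submatrix), the matrix $\cG = \boldPi\cD\boldPi^\top$ has rank $K$, so its top-$K$ eigenbasis $\boldU$ has $K$ orthonormal columns. Then $\boldU = \boldPi\boldU_S$ has rank $K$, which forces the square matrix $\boldU_S$ to be nonsingular.

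Next, substituting $\boldU = \boldPi\boldU_S$ into $\boldU^\top\boldU=\mathbf{I}_K$ yields
\begin{equation*}
    \boldU_S^\top \boldPi^\top\boldPi\, \boldU_S = \mathbf{I}_K,
\end{equation*}
and solving for $\boldPi^\top\boldPi$ gives
\begin{equation*}
    \boldPi^\top\boldPi = (\boldU_S^\top)^{-1}\boldU_S^{-1} = (\boldU_S\boldU_S^\top)^{-1}.
\end{equation*}
From this identity, the $i$th largest eigenvalue of $\boldU_S\boldU_S^\top$ equals $1/\lambda_{K+1-i}(\boldPi^\top\boldPi)$, which specializes to $\lambda_1(\boldU_S\boldU_S^\top) = 1/\lambda_K(\boldPi^\top\boldPi)$ and $\lambda_K(\boldU_S\boldU_S^\top) = 1/\lambda_1(\boldPi^\top\boldPi)$, and hence $\kappa(\boldU_S\boldU_S^\top) = \kappa(\boldPi^\top\boldPi)$.

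There is no real obstacle beyond making sure the invertibility of $\boldU_S$ is properly justified; the rest is one line of matrix algebra followed by the standard eigenvalue inversion identity. If one wanted to be slightly more careful, one could check that the factorization $\boldU = \boldPi\boldU_S$ from Proposition~\ref{prop:geom} does not require $\cD$ full rank, so the only place the hypothesis enters is through the rank of $\cG$ (and thus the dimension of the signal eigenspace), which ensures $\boldU_S$ is a genuine $K\times K$ invertible matrix rather than a deficient one.
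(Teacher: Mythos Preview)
Your proposal is correct and follows essentially the same route as the paper: use Proposition~\ref{prop:geom} to write $\boldU=\boldPi\boldU_S$, substitute into $\boldU^\top\boldU=\mathbf{I}_K$, and invert using the nonsingularity of $\boldU_S$. You are slightly more explicit than the paper about why $\boldU_S$ is invertible (the paper defers this to the proof of Proposition~\ref{prop:geom}), but the argument is otherwise identical.
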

\begin{proof}
    From Proposition \ref{prop:geom}, $\boldU=\boldPi \boldU_S$, thus,
    \begin{equation*}
        \textbf{I}=\boldU^{\top}\boldU=\boldU_S^{\top}\boldPi^{\top}\boldPi \boldU_S.
    \end{equation*}
    Since $\boldU_S$ is full rank, $\boldPi^{\top}\boldPi=(\boldU_S \boldU_S^{\top})^{-1}$, which implies,
    \begin{align*}
        \lambda_1(\boldU_S\boldU_S^{\top})&=\frac{1}{\lambda_K(\boldPi^{\top} \boldPi)},\\
        \lambda_K(\boldU_S\boldU_S^{\top})&=\frac{1}{\lambda_1(\boldPi^{\top} \boldPi)}.
    \end{align*}
    so $\kappa(\boldU_S\boldU_S^{\top})=\kappa(\boldPi\boldPi^{\top})$.
\end{proof}

\begin{lemma}
    \label{lemma:ksingularvaluebound}
    For any matrix $\boldA_1$, $\boldA_2$, if $\boldA_1\neq \boldsymbol{0}$ and $\boldA_2$ has full row rank, or $\boldA_1$ has full column rank and $\boldA_2\neq \boldsymbol{0}$ one has
    \begin{equation*}
        \sigma_{\min}(\boldA_1\boldA_2)\geq \sigma_{\min}(\boldA_1) \sigma_{\min}(\boldA_2).
    \end{equation*}
\end{lemma}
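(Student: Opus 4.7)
The plan is to handle both cases together by reducing Case~1 to Case~2 via transposition, and then proving Case~2 with a single variational argument on the appropriate subspace. I will interpret $\sigma_{\min}(\cdot)$ as the smallest \emph{nonzero} singular value (i.e.\ $\sigma_r$ for a rank-$r$ matrix), which is the convention consistent with the paper's rank-$K$ usage throughout.

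Since $(\boldA_1\boldA_2)^\top = \boldA_2^\top \boldA_1^\top$ and singular values are invariant under transposition, the hypothesis ``$\boldA_2$ has full row rank'' is equivalent to ``$\boldA_2^\top$ has full column rank,'' and ``$\boldA_1\neq \boldsymbol 0$'' is equivalent to ``$\boldA_1^\top\neq \boldsymbol 0$.'' Thus Case~1 reduces directly to Case~2 applied to the pair $(\boldA_2^\top, \boldA_1^\top)$, and I only need to treat Case~2.

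For Case~2 ($\boldA_1$ has full column rank and $\boldA_2\neq\boldsymbol 0$), the crucial observation is that the trivial kernel of $\boldA_1$ forces $\ker(\boldA_1\boldA_2)=\ker(\boldA_2)$, hence $(\ker \boldA_1\boldA_2)^\perp = \mathrm{row}(\boldA_2)$. Using the variational formula $\sigma_{\min}(\mathbf{C}) = \min_{\|x\|=1,\, x\perp \ker \mathbf{C}} \|\mathbf{C} x\|$, it will suffice to chain two bounds on any unit vector $x \in \mathrm{row}(\boldA_2)$: first $\|\boldA_1\boldA_2 x\|\geq \sigma_{\min}(\boldA_1)\|\boldA_2 x\|$, which holds because $\boldA_1$ has trivial kernel so that $\|\boldA_1 y\|\geq \sigma_{\min}(\boldA_1)\|y\|$ for every $y$; and second $\|\boldA_2 x\|\geq \sigma_{\min}(\boldA_2)\|x\|$, which holds precisely because $x\in \mathrm{row}(\boldA_2)$, where $\boldA_2$ acts injectively with smallest positive singular value $\sigma_{\min}(\boldA_2)$. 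Multiplying the two and taking the infimum over such $x$ yields $\sigma_{\min}(\boldA_1\boldA_2)\geq \sigma_{\min}(\boldA_1)\sigma_{\min}(\boldA_2)$.

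The main subtlety I expect to flag is the meaning of $\sigma_{\min}$. Under the alternative convention $\sigma_{\min}(\mathbf{C})=\sigma_{\min(m,n)}(\mathbf{C})$ (which admits the value zero when $\mathbf{C}$ is rank deficient), the stated inequality fails outright: taking $\boldA_1=(1,0)^\top$ and $\boldA_2=(1,0,0)$ satisfies both cases of the hypothesis, yet $\sigma_{\min}(\boldA_1\boldA_2)=0 < 1 = \sigma_{\min}(\boldA_1)\sigma_{\min}(\boldA_2)$. The variational argument above returns the smallest \emph{positive} singular value of $\boldA_1\boldA_2$, matching how the paper invokes $\sigma_{\min}$ in practice (always at the target rank $K$, as in Lemma~\ref{lemma:eigenvaluePiU}), so the resulting bound is exactly the one needed downstream.
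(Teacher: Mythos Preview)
Your proposal is correct and follows essentially the same variational/quotient argument as the paper: chain $\|\boldA_1\boldA_2 x\|\ge\sigma_{\min}(\boldA_1)\|\boldA_2 x\|\ge\sigma_{\min}(\boldA_1)\sigma_{\min}(\boldA_2)\|x\|$ and take the infimum, with the second case obtained ``by symmetry'' (your transposition reduction). The paper writes $\sigma_{\min}(\boldA_1\boldA_2)=\min_{\mathbf x\neq 0}\|\boldA_1\boldA_2\mathbf x\|/\|\mathbf x\|$ and dismisses $\mathbf x\in\ker\boldA_2$ as ``trivial,'' whereas you explicitly restrict the infimum to $\mathrm{row}(\boldA_2)=(\ker\boldA_1\boldA_2)^\perp$; this is the same argument but your handling of the kernel is cleaner, and your observation that the lemma is false under the $\sigma_{\min(m,n)}$ convention (via $\boldA_1=(1,0)^\top$, $\boldA_2=(1,0,0)$) is a genuine clarification the paper does not spell out.
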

\begin{proof}
    First, when $\boldA_2\neq\boldsymbol{0}$ and $\boldA_1$ is full column rank, from the quotient form of the smallest singular value, one has,
    \begin{align*}
        \sigma_{\min}(\boldA_1\boldA_2) &= \min_{\mathbf{x}\neq 0} \frac{\|\boldA_1\boldA_2\mathbf{x}\|_2}{\|\mathbf{x}\|_2}\\
        &\geq \min_{\mathbf{x}\neq0} \frac{\|\boldA_1\boldA_2\mathbf{x}\|_2}{\|\boldA_2\mathbf{x}\|_2} \frac{\|\boldA_2\mathbf{x}\|_2}{\|\mathbf{x}\|_2}\\
        &\geq \min_{\mathbf{y}\neq0}\frac{\|\boldA_1\mathbf{y}\|_2}{\|\mathbf{y}\|_2} \min_{\mathbf{x}\neq 0}\frac{\|\boldA_2\mathbf{x}\|_2}{\|\mathbf{x}\|_2}\\
        &\geq \sigma_{\min}(\boldA_1)\sigma_{\min}(\boldA_2).
    \end{align*}
    Note that if $\mathbf{x}$ is in the null space of $\boldA_2$, the inequality is trivial. Now, when $\boldA_1\neq 0$ and $\boldA_2$ is full row rank, similarly, if $\boldA_2\textbf{x}$ is in the null space of $\boldA_1$, the inequality is trivial. Otherwise, it follows by symmetry.
\end{proof}

\begin{lemma}
    \label{lemma:keigenvaluebound}
    If $\cD$ is full rank,
    \begin{equation*}
        \lambda_K(\mathcal{G})\geq \lambda_K(\mathcal{D})\lambda_K(\boldPi^{\top} \boldPi).
    \end{equation*}
\end{lemma}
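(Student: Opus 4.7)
The plan is to reduce the eigenvalue bound to a singular-value bound via a symmetric square-root factorization, and then invoke the product inequality from Lemma \ref{lemma:ksingularvaluebound}.

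First, I would observe that $\mathcal{D} = \boldsymbol{\Theta}^\top \boldsymbol{\Theta} + \alpha \mathbf{M}^\top \mathbf{M}$ is a sum of two Gram matrices, hence symmetric positive semi-definite; together with the full-rank hypothesis, $\mathcal{D}$ is in fact positive definite. Its symmetric square root $\mathcal{D}^{1/2}$ therefore exists, is invertible, and satisfies $\sigma_K(\mathcal{D}^{1/2})^2 = \lambda_K(\mathcal{D})$.

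Second, using the factorization $\mathcal{G} = \boldsymbol{\Pi} \mathcal{D} \boldsymbol{\Pi}^\top$ from \eqref{eqn:oracle_g}, I would write
\begin{equation*}
\mathcal{G} = \bigl(\boldsymbol{\Pi}\mathcal{D}^{1/2}\bigr)\bigl(\boldsymbol{\Pi}\mathcal{D}^{1/2}\bigr)^\top.
\end{equation*}
Since $\mathcal{G}$ is symmetric positive semi-definite of rank $K$, its $K$-th largest eigenvalue coincides with the square of the $K$-th largest singular value of $\boldsymbol{\Pi}\mathcal{D}^{1/2}$, i.e.\ $\lambda_K(\mathcal{G}) = \sigma_K(\boldsymbol{\Pi}\mathcal{D}^{1/2})^2$.

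Third, I would apply Lemma \ref{lemma:ksingularvaluebound} with $\mathbf{A}_1 = \boldsymbol{\Pi}$ and $\mathbf{A}_2 = \mathcal{D}^{1/2}$. The hypotheses are satisfied because $\mathcal{D}^{1/2}$ is an invertible $K \times K$ matrix (hence of full row rank) and $\boldsymbol{\Pi}$ is nonzero, since by Assumption \ref{assump:extreme} it contains an identity submatrix. The lemma yields
\begin{equation*}
\sigma_K\bigl(\boldsymbol{\Pi}\mathcal{D}^{1/2}\bigr) \geq \sigma_K(\boldsymbol{\Pi})\,\sigma_K\bigl(\mathcal{D}^{1/2}\bigr).
\end{equation*}
Squaring both sides and using $\sigma_K(\boldsymbol{\Pi})^2 = \lambda_K(\boldsymbol{\Pi}^\top\boldsymbol{\Pi})$ together with $\sigma_K(\mathcal{D}^{1/2})^2 = \lambda_K(\mathcal{D})$ gives the desired inequality.

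The only delicate point in this short argument is verifying the rank condition of Lemma \ref{lemma:ksingularvaluebound} for the factor $\mathcal{D}^{1/2}$, and this is immediate from the full-rank hypothesis on $\mathcal{D}$. There is no substantial obstacle; the proof is essentially a Gram-factorization identity followed by a submultiplicative bound on the smallest singular value, so I would expect it to fit in a few lines.
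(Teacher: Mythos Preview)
Your proof is correct, and it takes a slightly different route from the paper's. The paper does not use the square-root factorization; instead it invokes the fact that $\boldPi(\cD\boldPi^\top)$ and $(\cD\boldPi^\top)\boldPi$ share the same nonzero eigenvalues (Horn's $AB\leftrightarrow BA$ spectrum equivalence), so that $\lambda_K(\cG)=\lambda_K(\cD\,\boldPi^\top\boldPi)$, and then applies a product inequality directly to the two $K\times K$ positive-definite factors $\cD$ and $\boldPi^\top\boldPi$. Your approach factors $\cG=(\boldPi\cD^{1/2})(\boldPi\cD^{1/2})^\top$, reads off $\lambda_K(\cG)=\sigma_K(\boldPi\cD^{1/2})^2$, and feeds $\boldPi$ and $\cD^{1/2}$ into Lemma~\ref{lemma:ksingularvaluebound}. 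Both arguments are equally short; yours has the mild advantage of staying entirely in singular-value language and invoking the preceding lemma verbatim, whereas the paper's version tacitly relies on $\lambda_K(\cD\,\boldPi^\top\boldPi)\geq\sigma_K(\cD\,\boldPi^\top\boldPi)$ for the non-symmetric product (true, but not stated). One cosmetic point: you cite Assumption~\ref{assump:extreme} to conclude $\boldPi\neq\mathbf 0$, but that assumption is not among the hypotheses of the lemma; simply noting that each row of $\boldPi$ lies in the simplex (hence is nonzero) suffices.
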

\begin{proof}
    From \citet[Theorem 1.3.22]{Horn1985}, $(\mathcal{D}\boldPi^{\top})\boldPi$ and $\boldPi(\mathcal{D}\boldPi^{\top})$ have the same $K$ largest eigenvalues in magnitude.
    Note that for all full rank positive definite matrix $\textbf{A}_1, \textbf{A}_2\in \mathbb{R}^{K\times K} $, $\|(\textbf{A}_1 \textbf{A}_2)^{-1}\|\leq \|\textbf{A}_1^{-1}\|\|\textbf{A}_2^{-1}\| $. Besides, $\sigma_K(\textbf{A}_1)=1/\|\textbf{A}_1^{-1}\|$, the same reasoning applies to $\textbf{A}_2$ and $\textbf{A}_1\textbf{A}_2$. Therefore,
    \begin{equation}
        \label{eqn:Keigeninequality}
        \sigma_K(\textbf{A}_1\textbf{A}_2)\geq \sigma_K(\textbf{A}_1) \sigma_K(\textbf{A}_2).
    \end{equation}
    Then, combining \citet[Theorem 1.3.22]{Horn1985} and (\ref{eqn:Keigeninequality}) yields,
    \begin{equation*}
        \lambda_K(\mathcal{G})=\lambda_K(\boldPi\mathcal{D}\boldPi^{\top})= \lambda_K(\mathcal{D}\boldPi^{\top}\boldPi)\geq \lambda_K(\mathcal{D}) \lambda_K(\boldPi^{\top}\boldPi).
    \end{equation*}
\end{proof}

\begin{lemma}
    \label{lemma:small_vals}
    If all $\boldPi$, $\boldTheta$ and $\boldM$ have full column rank $K$, then,
    \begin{equation*}
        \sigma_K(\cG) \geq \sigma_K(\cR)^2 + \alpha\sigma_K(\cX)^2.
    \end{equation*}
\end{lemma}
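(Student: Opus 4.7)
The plan is to combine a variational characterization of $\sigma_K(\cG)$ with the key structural observation that $\cR\cR^\top$ and $\cX\cX^\top$ share the same column space, namely $\mathrm{col}(\boldPi)$. A naive application of Weyl's inequality in the ambient $N\times N$ space would only yield $\sigma_K(\cG)\geq\sigma_K(\cR)^2+\lambda_N(\alpha\cX\cX^\top)=\sigma_K(\cR)^2$, since $\alpha\cX\cX^\top$ has $N-K$ zero eigenvalues. The refinement requires restricting attention to the common $K$-dimensional range of $\boldPi$, on which both matrices act as strictly positive-definite operators.

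First, I would establish that $\mathrm{col}(\cR\cR^\top)=\mathrm{col}(\cX\cX^\top)=\mathrm{col}(\boldPi)$. Since $\boldTheta$ has full column rank $K$, the matrix $\boldTheta^\top$ has full row rank, so $\mathrm{col}(\cR)=\mathrm{col}(\boldPi\boldTheta^\top)=\mathrm{col}(\boldPi)$. Because $\cR$ then has rank $K$, $\mathrm{col}(\cR\cR^\top)=\mathrm{col}(\cR)$. The same reasoning with $\boldM$ in place of $\boldTheta$ handles $\cX\cX^\top$. Therefore $\cG$ is PSD with $\mathrm{col}(\cG)\subseteq\mathrm{col}(\boldPi)$; since Lemma \ref{lemma:keigenvaluebound} gives $\lambda_K(\cG)>0$, in fact $\mathrm{col}(\cG)=\mathrm{col}(\boldPi)$ and $\sigma_K(\cG)=\lambda_K(\cG)$.

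Next, I apply Courant--Fischer. Because $\cG$ is PSD of rank $K$ with range $\mathrm{col}(\boldPi)$,
\begin{equation*}
    \sigma_K(\cG)=\min_{\substack{v\in\mathrm{col}(\boldPi)\\\|v\|_2=1}}v^\top\cG v
    =\min_{\substack{v\in\mathrm{col}(\boldPi)\\\|v\|_2=1}}\bigl(v^\top\cR\cR^\top v+\alpha\, v^\top\cX\cX^\top v\bigr).
\end{equation*}
For any unit vector $v\in\mathrm{col}(\boldPi)=\mathrm{col}(\cR\cR^\top)$, the restriction of $\cR\cR^\top$ to its own range is positive definite with smallest eigenvalue $\sigma_K(\cR)^2$, so $v^\top\cR\cR^\top v\geq\sigma_K(\cR)^2$. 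The identical argument yields $v^\top\cX\cX^\top v\geq\sigma_K(\cX)^2$. Summing the two lower bounds and taking the minimum over $v$ gives $\sigma_K(\cG)\geq\sigma_K(\cR)^2+\alpha\sigma_K(\cX)^2$, which is the claim.

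The only conceptual step that could be called an obstacle is recognizing that the sum decomposition $\cG=\cR\cR^\top+\alpha\cX\cX^\top$ must be analysed after projecting onto the shared range $\mathrm{col}(\boldPi)$; once this is done, the remaining arguments are routine linear algebra. An equivalent and fully symmetric alternative is to factor $\boldPi=QR$ with $Q\in\mathbb{R}^{N\times K}$ orthonormal and $R\in\mathbb{R}^{K\times K}$ invertible, which reduces $\cG$ to a $K\times K$ PSD matrix $R\boldTheta^\top\boldTheta R^\top+\alpha R\boldM^\top\boldM R^\top$ whose smallest eigenvalue can be bounded directly by Weyl's inequality for PSD summands; I would mention this only briefly since it leads to the same conclusion.
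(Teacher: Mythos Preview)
Your proof is correct and follows essentially the same route as the paper: both observe that $\cR\cR^\top$, $\cX\cX^\top$, and $\cG$ share the common column space $\mathrm{col}(\boldPi)$, then apply the variational characterization of $\lambda_K$ restricted to that subspace and lower-bound the two summands separately. Your argument is slightly more detailed in justifying the column-space equality and in noting why a naive Weyl argument fails, but the core idea is identical.
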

\begin{proof}
    From the definition of $\cR=\boldPi\boldTheta^\top$, $\cX=\boldPi\boldM^\top$, and $\cG=\boldPi(\boldTheta^\top\boldTheta + \alpha\boldM^\top\boldM)\boldPi^\top$, we can see that
    \begin{equation*}
        \text{col}(\cR)=\text{col}(\cX)=\text{col}(\cG).
    \end{equation*}
    Hence, let $\mathcal{M} = \text{col}(\boldU_R)=\text{col}(\boldU_X)=\text{col}(\boldU)$, where $\boldU_R$, $\boldU_X$ and $\boldU$ are respectively the left singular subspace of $\cR$, $\cX$ and $\cG$. Then, from the variational form of eigenvalues,
    \begin{align*}
        \sigma_K(\cG) = \lambda_K(\cG) &= \min_{\bu\in \mathcal{M},\|\bu\|=1} \bu^\top \cG \bu\\
        &\geq \min_{\bu\in \mathcal{M},\|\bu\|=1}\bu^\top \cR\cR^\top\bu + \alpha \min_{\bu\in \mathcal{M},\|\bu\|=1}\bu^\top\cX\cX^\top\bu\\
        &= \lambda_K(\cR\cR^\top)+\alpha\lambda_K(\cX\cX^\top)\\
        &= \sigma_K(\cR)^2 + \alpha\sigma_K(\cX)^2.
    \end{align*}
\end{proof}

\begin{lemma}
    \label{lemma:incoherence}
    Under Assumption \ref{assump:incoherence} and all $\boldPi$, $\boldTheta$ and $\boldM$ have full column rank $K$, then
    \begin{equation*}
        \left\|\boldU\right\|_{2,\infty}  \leq \sqrt{\frac{\mu K}{N}}.
    \end{equation*}
\end{lemma}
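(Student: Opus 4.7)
The plan is to reduce the bound on $\|\boldU\|_{2,\infty}$ to the incoherence bound on $\boldU_R$ (or equivalently $\boldU_X$) that is already imposed in Assumption~\ref{assump:incoherence}, by observing that $\boldU$ and $\boldU_R$ share the same column space and therefore differ only by an orthogonal rotation, which preserves row norms.

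First, I would show that $\cR$, $\cX$, and $\cG$ all have the same $K$-dimensional column space, namely $\mathrm{col}(\boldPi)$. Indeed, since $\boldPi$ has full column rank $K$, and $\boldTheta$, $\boldM$ also have full column rank $K$, the factorizations $\cR=\boldPi\boldTheta^\top$ and $\cX=\boldPi\boldM^\top$ immediately give $\mathrm{col}(\cR)=\mathrm{col}(\cX)=\mathrm{col}(\boldPi)$. Because $\cR\cR^\top$ is symmetric positive semidefinite with range equal to $\mathrm{col}(\cR)$, and similarly for $\cX\cX^\top$, each of these $N\times N$ matrices has range $\mathrm{col}(\boldPi)$. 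Their nonnegative combination $\cG=\cR\cR^\top+\alpha\cX\cX^\top$ is still positive semidefinite with range contained in $\mathrm{col}(\boldPi)$; and for any nonzero $v\in\mathrm{col}(\boldPi)$, $v^\top\cG v = \|\cR^\top v\|^2+\alpha\|\cX^\top v\|^2>0$, so $\mathrm{col}(\cG)=\mathrm{col}(\boldPi)$ as well.

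Next, I would use the fact that both $\boldU$ and $\boldU_R$ are $N\times K$ matrices with orthonormal columns, and by the previous step both are orthonormal bases of the same $K$-dimensional subspace $\mathrm{col}(\boldPi)\subset\mathbb{R}^N$. Hence there exists an orthogonal matrix $\boldsymbol{Q}\in\mathcal{O}^{K\times K}$ such that $\boldU=\boldU_R\boldsymbol{Q}$. For every row index $i\in[N]$, orthogonal invariance of the Euclidean norm then gives
\begin{equation*}
\|\boldU_{i,:}\|_2=\|\boldU_{R,i,:}\boldsymbol{Q}\|_2=\|\boldU_{R,i,:}\|_2.
\end{equation*}
Taking the maximum over $i$ and invoking Assumption~\ref{assump:incoherence} yields
\begin{equation*}
\|\boldU\|_{2,\infty}=\|\boldU_R\|_{2,\infty}\leq\sqrt{\frac{\mu K}{N}},
\end{equation*}
which is the claimed bound.

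There is no real obstacle here: the entire argument rests on the structural observation that $\cG$ inherits its column space from the shared factor $\boldPi$, at which point $\boldU$ and $\boldU_R$ become two orthonormal bases of the same subspace and the two-to-infinity norm transfers verbatim. The only minor subtlety is verifying the column space equality for $\cG$ (and not merely an inclusion); this is handled by the positive semidefinite argument above, using that both summands are simultaneously positive on $\mathrm{col}(\boldPi)$.
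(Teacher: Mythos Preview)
Your proposal is correct and follows essentially the same approach as the paper: both arguments establish that $\boldU$ and $\boldU_R$ are orthonormal bases of the same column space $\mathrm{col}(\boldPi)$, hence differ by an orthogonal matrix, and then transfer the $\|\cdot\|_{2,\infty}$ bound from $\boldU_R$ to $\boldU$. Your version is in fact slightly sharper in presentation, since you note the equality $\|\boldU\|_{2,\infty}=\|\boldU_R\|_{2,\infty}$ via orthogonal invariance of row norms, whereas the paper writes the inequality $\|\boldU\|_{2,\infty}\leq\|\boldU_R\|_{2,\infty}\|\boldsymbol{Q}\|$; and you spell out the positive-semidefinite argument for $\mathrm{col}(\cG)=\mathrm{col}(\boldPi)$ that the paper leaves implicit.
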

\begin{proof}
    Similar as the proof of Lemma \ref{lemma:small_vals}, we can see that
    \begin{equation*}
        \text{col}(\boldU_R)=\text{col}(\boldU_X)=\text{col}(\boldU).
    \end{equation*}
    Also, note that they are all orthonormal bases of the same column space. Hence, there exists a matrix $\boldA \in \mathbb{R}^{K\times K}$, s.t.
    \begin{equation*}
        \boldU = \boldU_R\boldA,
    \end{equation*}
    where from each column of $\boldU$ is of unit length and every two different columns of $\boldU$ are orthogonal, one has
    \begin{align*}
        \|\bu_k\|_{2}&=\sum_{i=1}^Ka_{ik}^2=1,\quad \forall k\in [K],\\
        \bu_k^\top\bu_l &= \sum_{i=1}^Ka_{ik}a_{il}=0,\quad \forall k\neq l.
    \end{align*}

    Hence, we have
    \begin{equation*}
        \boldA^\top \boldA = \mathbf{I}_K,
    \end{equation*}
    and $\|\boldA\|=1$.

    Then, 
    \begin{equation*}
        \|\boldU\|_{2,\infty}\leq \|\boldU_R\|_{2,\infty}\|\boldA\|\leq \sqrt{\frac{\mu K}{N}}.
    \end{equation*}
\end{proof}

\section{Proofs of identifiability}
\subsection{Proof of Proposition \ref{prop:geom}}
First, denote the corresponding expected data matrix and covariate matrix for the set $S$ as $\cR_{S,:}$ and $\cX_{S,:} $ respectively. Then, denote
\begin{equation*}
    \mathcal{L} = \begin{bmatrix}
        \cR & \sqrt{\alpha}\cX
    \end{bmatrix},
\end{equation*}
and
\begin{equation*}
    \mathcal{L}_{S,:} = \begin{bmatrix}
        \cR_{S,:} & \sqrt{\alpha}\cX_{S,:}
    \end{bmatrix}
    = \begin{bmatrix}
        \boldTheta^\top & \sqrt{\alpha}\boldM^\top
    \end{bmatrix}
    = \boldU_{S,:}\boldSigma\mathbf{V}^\top.
\end{equation*}
Therefore,
\begin{equation*}
    \mathcal{L} = \boldPi\begin{bmatrix}
        \boldTheta^\top & \sqrt{\alpha}\boldM^\top
    \end{bmatrix}=\boldPi\boldU_{S,:}\boldSigma\mathbf{V}^\top = \boldU\boldSigma\mathbf{V}^\top,
\end{equation*}
where the last inequality comes from direct SVD of $\mathcal{L}$. Since $\mathbf{V}$ is an orthonormal matrix with full column rank, and $\boldSigma$ is full rank, one has
\begin{equation*}
    \boldU=\boldPi \boldU_{S,:}.
\end{equation*}
Since both $\boldU$ and $\boldPi$ are rank $K$, $\boldU_{S,:}$ must be full rank. Therefore, $\boldPi$ can be expressed as,
\begin{equation*}
    \boldPi =\boldU\boldU_{S,:}^{-1}.
\end{equation*}

\subsection{Proof of Proposition \ref{prop:iden_nec}}
\label{sec:iden_nec}
\begin{proof}
    Without loss of generality, assume that the first extreme latent profile does not have a pure object. Then $\exists \delta >0$, s.t. $\pi_{i1}\leq 1-\delta$, $\forall i=1,\dots,N$. For every $0<\varepsilon<\delta$, define a $K\times K$ matrix
    \begin{equation*}
        \mathbf{N}_{\varepsilon} = \begin{bmatrix}
            1+(K-1)\varepsilon^2 & -\varepsilon^2\boldsymbol{1}_{K-1}^{\top}\\
            \boldsymbol{0}_{K-1} & \varepsilon \boldsymbol{1}_{K-1} \boldsymbol{1}_{K-1}^{\top} + (1-(K-1)\varepsilon)\mathbf{I}_{K-1}
        \end{bmatrix}.
    \end{equation*}
    Then, let $\Tilde{\boldPi}_{\varepsilon}=\boldPi \mathbf{N}_{\varepsilon}$,  $\Tilde{\boldTheta}_{\varepsilon } = \boldTheta\mathbf{N}_{\varepsilon}^{-1} $, $\Tilde{\boldM}_{\varepsilon} = \boldM \mathbf{N}_{\varepsilon}^{-1} $, and $\Tilde{\cD}_{\varepsilon} = \Tilde{\boldTheta}_{\varepsilon}^\top \Tilde{\boldTheta}_{\varepsilon} + \alpha \tilde{\boldM}_{\varepsilon}^\top \tilde{\boldM}_{\varepsilon}$. We will show that $(\tilde{\boldPi}, \tilde{\cD}, \tilde{\boldTheta})$ also form a valid parameter set. 
    
    Since there is no explicit parametric constraint on $\Tilde{\cD}_{\varepsilon}$, we are to prove all entries of $\Tilde{\boldPi}$ and $\Tilde{\boldTheta}$ lies in $[0,1]$, and the sum of each row of $\Tilde{\boldPi}$ equals 1. Since $\mathbf{N}_0=\mathbf{I}_K$ and by the continuity of matrix determinant, there exists a small enough $\varepsilon$ such that $\mathbf{N}_{\varepsilon}$ is still full-rank. Further, note that $\mathbf{N}_{\varepsilon}\boldsymbol{1}_K =\boldsymbol{1}_K$. Therefore, 
    \begin{equation*}
        \Tilde{\boldPi}_{\varepsilon}\boldsymbol{1}_K = \boldPi \mathbf{N}_{\varepsilon}\boldsymbol{1}_K = \boldPi \boldsymbol{1}_K = \boldsymbol{1}_N.
    \end{equation*}
    
    Further, for each $i$, $\Tilde{\pi}_{i1}=\pi_{i1}(1+(K-1)\varepsilon^2)\geq 0 $. For any fixed $k = 2,\dots, K$, $(\mathbf{N}_{\varepsilon})_{k,k}=1-(K-2)\varepsilon $ and $(\mathbf{N}_{\varepsilon})_{m,k}=\varepsilon $ for $m\neq k$. Thus, when $\varepsilon \leq 1/(K-1)$, we have $(\boldM_{\varepsilon})_{m,k}\geq \varepsilon $ for any $m = 1,\dots, K$. The following inequalities hold for every $i=1,\dots,N$ and $k=2,\dots,K$:
    \begin{align*}
        \Tilde{\pi}_{ik} &= -\varepsilon^2\pi_{i1} + \sum_{m=2}^K\pi_{im}(\textbf{N}_{\varepsilon})_{m,k} \geq -\varepsilon^2 \pi_{i1} + \varepsilon \sum_{m=2}^K \pi_{im}\\
        &\geq -\varepsilon^2(1-\delta) + \varepsilon (1-\pi_{i1}) \geq -\varepsilon^2(1-\delta) + \varepsilon\delta \geq \varepsilon\delta^2 >0. 
    \end{align*}
    
    As for $\tilde{\boldTheta}$, for the continuity of matrix inverse and Frobenius norm, it can be shown that $\|\mathbf{N}_{\varepsilon}^{-1}-\mathbf{I}_{K} \|_F \to 0$, $\varepsilon \to 0$. Therefore,
    \begin{equation*}
        \left\|\Tilde{\boldTheta}-\boldTheta\right\|_F = \left\|\boldTheta\left(\mathbf{I}_K - \mathbf{N}_{\varepsilon}^{-1}\right)\right\|_F \leq \|\boldTheta\|_2 \left\|\mathbf{I}_K - \mathbf{N}_{\varepsilon}^{-1}\right\|_F \xrightarrow{\varepsilon\rightarrow 0} 0. 
    \end{equation*}
    Since all the elements of $\boldTheta$ are strictly in $(0,1)$, the elements of $\Tilde{\boldTheta}$ must be in $[0,1]$ when $\varepsilon$ is sufficiently small. Since $\textbf{N}_{\varepsilon}$ is not a permutation when $\varepsilon >0$, the covariate-assisted grade of membership model is not identifiable.
\end{proof}

\subsection{Proof of Theorem \ref{theorem:identifiability}}
\begin{proof}
    Denote $\text{rank}(\cD)=r\leq K$. Without loss of generality, by reordering the rows of the data matrix and covariate matrix simultaneously, we can let $\boldPi_{1:K,:}=\mathbf{I}_K $ by Assumption \ref{assump:extreme}. Denote the spectral decomposition of the population matrix $\mathcal{G}= \boldU\boldsymbol{\Lambda} \boldU^{\top}$, where $\boldU\in\mathbb{R}^{N\times r} $, $\boldsymbol{\Lambda} \in\mathbb{R}^{r\times r} $. According to Proposition \ref{prop:geom},
    \begin{equation}
        \label{eqn:nopermut}
        \boldU=\boldPi \boldU_{1:K,:}.
    \end{equation}
    Following $\text{rank}(AB)\leq\min\{\text{rank}(A),\text{rank}(B) \} $, and $\text{rank}(\boldPi)=K $, $\text{rank}(\boldU)=r $, it suffices to show $\text{rank}(\boldU_{1:K,:})=\text{rank}(\boldU)=r $.\\
    \indent Suppose another set of parameters $(\Tilde{\boldPi}, \Tilde{\cD}, 
    \Tilde{\boldTheta})$ gives rise to the same population matrix $\mathcal{G}$ and denote its corresponding pure subject index vector as $\Tilde{S}$ so that $\Tilde{\boldPi}_{\Tilde{S},:}=\mathbf{I}_K $. Similarly, we have
    \begin{equation}
        \label{eqn:permut}
        \boldU=\Tilde{\boldPi}\boldU_{\Tilde{S},:}.
    \end{equation}
    Taking the $\Tilde{S}$ rows of (\ref{eqn:nopermut}) and the first $K$ rows of (\ref{eqn:permut}) yields
    \begin{align*}
        \boldPi_{\Tilde{S},:}\boldU_{1:K,:}&=\boldU_{\Tilde{S},:},\\
        \Tilde{\boldPi}_{1:K,:}\boldU_{\Tilde{S},:}&=\boldU_{1:K,:}.
    \end{align*}
    Therefore, $\boldU_{1:K}$ is in the convex hull of $\boldU_{\Tilde{S},:}$, and $\boldU_{\Tilde{S},:}$ is in the convex hull of $\Tilde{\boldU}$. Thus, there exists a permutation matrix $P$ such that $\boldU_{\Tilde{S},:}=\mathbf{P}\boldU_{1:K,:} $, and it gives,
    \begin{equation}
        \label{eqn:equpmut}
        (\boldPi-\Tilde{\boldPi}\mathbf{P})\boldU_{1:K,:}=0.
    \end{equation}
    \textbf{Proof of part (a)}. If $r=K$, $\boldU_{1:K,:} $ is full rank from (\ref{eqn:nopermut}). Then, (\ref{eqn:equpmut}) implies $\boldPi=\Tilde{\boldPi}\textbf{P}$ and thus $\Tilde{\boldTheta}=\boldTheta \mathbf{P}^{\top}$, $\Tilde{\boldM}=\boldM \mathbf{P}^{\top}$.\\
    \indent As for $r<K$, $\mathcal{D}$ can be written in the following form,
    \begin{equation*}
        \cD = 
        \begin{bmatrix}
            \mathbf{C} & \mathbf{C}\mathbf{W}_1\\
            \mathbf{W}_1^{\top}\mathbf{C} & \mathbf{W}_1^{\top}\mathbf{C}\mathbf{W}_1
        \end{bmatrix}.
    \end{equation*}
    where $\mathbf{C}\in\mathbb{R}^{r\times r} $ is full rank, $\mathbf{W}_1\in\mathbb{R}^{r\times (K-r)} $. From the proof of Proposition \ref{prop:geom} that $\mathcal{D}=\boldU_{1:K,:}\boldsymbol{\Lambda} \boldU_{1:K,:} $, focusing on the first block and the second block, it yields,
    \begin{align*}
        \mathbf{C} = (\boldU_{1:r,:})\boldsymbol{\Lambda}(\boldU_{1:r,:})^{\top},\\
        \mathbf{C}\mathbf{W}_1 = (\boldU_{1:r,:})\Lambda(\boldU_{(r+1):K,:})^{\top}.
    \end{align*}
    Since $\mathbf{C}$ is full rank, $\boldU_{1:r,:} $ is full rank, the above equations translate to
    \begin{equation}
        \label{eqn:uprop}
        \boldU_{1:K,:}=\begin{bmatrix}
            \mathbf{I}_r\\
            \mathbf{W}_1^{\top}
        \end{bmatrix}
        \boldU_{1:r,:}.
    \end{equation}
    Combining (\ref{eqn:uprop}) and (\ref{eqn:equpmut}), we have
    \begin{equation}
        \label{eqn:rlk_id}
        (\boldPi-\Tilde{\boldPi}\mathbf{P})\begin{bmatrix}
            \mathbf{I}_r\\
            \mathbf{W}_1^{\top}
        \end{bmatrix} = 0.
    \end{equation}
    \textbf{Proof of part (b)} When $\text{rank}(\cD)=K-1$, by simple calculation, one has,
    \begin{align*}
        \mathcal{D} &= \boldU_{1:K,:}\boldsymbol{\Lambda} \boldU_{1:K,:}\\
        &= \begin{bmatrix}
            \mathbf{I}_{K-1}\\
            \mathbf{W}_1^{\top}
        \end{bmatrix}
        \boldU_{1:K-1,:}\boldsymbol{\Lambda} \boldU_{1:K-1,:}\begin{bmatrix}
            \mathbf{I}_{K-1} & \mathbf{W}_1
        \end{bmatrix}\\
        &= \begin{bmatrix}
            \mathbf{I}_{K-1}\\
            \mathbf{W}_1^{\top}
        \end{bmatrix} \mathcal{D}_{1:(K-1),1:(K-1)} \begin{bmatrix}
            \mathbf{I}_{K-1} & \mathbf{W}_1
        \end{bmatrix}.
    \end{align*}
    Hence, $\mathcal{D}_{1:(K-1),K}=\mathcal{D}_{1:(K-1),1:(K-1)}\mathbf{W}_1 $ .
    
    Eqn (\ref{eqn:rlk_id}) is equivalent to the condition in the proof of \citet[Theorem 2]{chen2024spectral}. Hence, to identify the parameters, it needs to satisfy $\mathbf{W}_1^{\top}\boldsymbol{1}_r\neq 1$. In this case, this condition implies that among the first $K-1$ rows, the $K$th column of $\mathcal{D}$ is not an affine combination of the other columns. Since $\mathcal{D}$ is symmetric, it applies to the first $K-1$ columns as well.\\
    \textbf{Proof of part (c).} This is proven in \citet{chen2024spectral} Theorem 2, scenario (c). It states that when conditions in (a) and (b) fail, one can construct $\Tilde{\boldPi}\neq \boldPi$ that $(\Tilde{\boldPi},\boldTheta)$ still qualifies as a valid parameter set. The same reasoning applies here as well.
\end{proof}

\section{Proofs of estimation guarantee}
\subsection{Proof of Bernoulli noise}
In this section, we will show Bernoulli noise does satisfy the noise condition in Assumption 2. First, note that Bernoulli noise of probability $p_{ij}$ has the following form,
\begin{equation*}
    Y_{ij}=\left\{
    \begin{aligned}
        1-p_{ij}&, \text{ with probability } p_{ij}\\
        -p_{ij}&, \text{ with probability } 1-p_{ij}
    \end{aligned}
    \right.
\end{equation*}
\indent It is easy to verify that ${E}[Y_{ij}]=0$. And variance is bounded by,
\begin{align*}
    \mathrm{var}(Y_{ij})=p_{ij}(1-p_{ij})\leq \frac{1}{4}.
\end{align*}
\indent Therefore, $\sigma^2=1$, $T=1$, and $T^2/\sigma^2\leq 4$.

\subsection{A few key Lemmas}

First, we define some quantities for simplicity. Define $\mathbf{E}^R=\boldR-\mathcal{R}$. Following \citet{caiSubspaceEstimationUnbalanced2020}, we have
\begin{align*}
    \label{eqn:incoherencebound}
    \max_{i,j}|E_{i,j}^R|&\leq \sigma(\boldR)\frac{\min\{\sqrt{J}, \sqrt[4]{NJ} \}}{\sqrt{\log d_R}} := B_R,\\
    \max_{i,j}\sqrt{\E[E_{i,j}^2]}&\leq \sigma(\boldR),\\
    \max_{i}\sqrt{\sum_j\E[E_{i,j}^2]}&\leq \sigma(\boldR) \sqrt{J},\\
    \max_{j}\sqrt{\sum_i\E[E_{i,j}^2]} &\leq \sigma(\boldR)\sqrt{N},
\end{align*}
with probability at least $1-O(d_R^{-12})$. The main difference between the above quantities and the ones in \citet{caiSubspaceEstimationUnbalanced2020} come from HeteroPCA while \citet{caiSubspaceEstimationUnbalanced2020} applies diagonal deletion. Similarly, we define the quantities for $\boldX$ using the similar notation.

Our following proof strategy follows the framework of \citet{Yan2021}. A key distinction comes from the two sources of information. We apply the triangle inequality to derive a sharper bound on the balance parameter $\alpha$. We detail only these specific deviations, omitting steps that are analogous to the original analysis to avoid redundancy.

\begin{lemma}
    \label{lemma:wholebound}
    Under Assumption \ref{assump:noise} and \ref{assump:information}, suppose the number of iterations exceeds $t_0\geq\\ \log \max \left\{{\sigma_1(\cR)^2}/{\zeta_{\text{op}}(\boldR)}, {\sigma_1(\cX)^2}/{\zeta_{\text{op}}(\boldX)} \right\}$, with probability at least $1-O(d^{-10})$, 
    \begin{align*}
        \|\boldG-\mathcal{G}\| &\lesssim \zeta_{\text{op}}(\boldR)  + \alpha \zeta_{\text{op}}(\boldX) := \zetaop(\boldG),\\
        \left\|\mathcal{P}_{\text{diag}}\left( \boldG - \cG \right)\right\| &\lesssim \kappa_{\max}^2\sqrt{\frac{\mu K}{N}}(\zetaop(\boldR)+\alpha\zetaop(\boldX)),
    \end{align*}
    where $\zeta_{op}(\boldR)$ and $\zeta_{op}(\boldX)$ are defined to be
    \begin{align*}
        \zeta_{\text{op}}(\boldR) &= \sigma^2(\boldR)\sqrt{NJ}\log d_R+\sigma(\boldR)\sigma_1(\cR)\sqrt{N\log d_R},\\
        \zeta_{\text{op}}(\boldX) &= \sigma^2(\boldX)\sqrt{NR}\log d_X+\sigma(\boldX)\sigma_1(\cX)\sqrt{N\log d_X},
    \end{align*}
    and $d_R$, $d_X$ are defined as in Assumption \ref{assump:information}, and $d=\max\{N,J,W\}$.
\end{lemma}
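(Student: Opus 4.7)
The plan is to decompose $\boldG-\cG$ additively into a response-contribution and a covariate-contribution, and then invoke existing HeteroPCA convergence guarantees from \citet{Yan2021} on each component separately. Concretely, since $\boldG = \boldG_R + \alpha \boldG_X$ and $\cG = \cR\cR^\top + \alpha\cX\cX^\top$, the triangle inequality yields
\begin{equation*}
    \|\boldG - \cG\| \leq \|\boldG_R - \cR\cR^\top\| + \alpha \|\boldG_X - \cX\cX^\top\|,
\end{equation*}
and the same identity holds after applying $\di(\cdot)$ on both sides. This reduces the task to a pair of single-source HeteroPCA analyses on $\boldR\boldR^\top$ and $\boldX\boldX^\top$ with their respective noise budgets $\zetaop(\boldR)$ and $\zetaop(\boldX)$.

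\textbf{Verifying preconditions and applying HeteroPCA.} Next I would check that the conditions needed for the HeteroPCA operator-norm guarantee hold for each gram matrix. Assumption \ref{assump:noise} supplies the entrywise magnitude bound $\|\boldE^R\|_\infty \lesssim B_R$ and $\|\boldE^X\|_\infty \lesssim B_X$ together with the variance control; Assumption \ref{assump:information} gives the signal-strength condition $\zetaop(\boldR) \ll \sigma_K(\cR)^2/\kappa(\cR)^2$ and the analogous statement for $\boldX$; and the hypothesis $t_0 \geq \log\max\{\sigma_1(\cR)^2/\zetaop(\boldR),\sigma_1(\cX)^2/\zetaop(\boldX)\}$ is exactly the burn-in length required by HeteroPCA to reach its fixed-point accuracy. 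Having checked these, I would invoke the HeteroPCA convergence theorem in \citet{Yan2021} twice to obtain, on an event of probability at least $1-O(d_R^{-12})-O(d_X^{-12}) = 1-O(d^{-10})$,
\begin{equation*}
    \|\boldG_R - \cR\cR^\top\| \lesssim \zetaop(\boldR),
    \qquad
    \|\boldG_X - \cX\cX^\top\| \lesssim \zetaop(\boldX).
\end{equation*}
Combining these through the triangle inequality of Step 1 gives the first bound.

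\textbf{Diagonal part.} For the bound on $\|\di(\boldG-\cG)\|$, which equals $\max_i |(\boldG - \cG)_{ii}|$, the key observation is that the HeteroPCA iterate imputes the diagonal from a rank-$K$ factor of its own off-diagonal piece. Writing $\boldG_R \approx \hat\boldU_R\hat\boldLambda_R\hat\boldU_R^\top$ and $\cR\cR^\top = \boldU_R \boldLambda_R \boldU_R^\top$, each diagonal entry is a bilinear form in a single row of a left singular factor. Applying the incoherence bound $\|\boldU_R\|_{2,\infty},\|\boldU_X\|_{2,\infty} \leq \sqrt{\mu K/N}$ from Assumption \ref{assump:incoherence} (together with the row-wise subspace-perturbation analysis that underlies the HeteroPCA row-wise guarantee) yields
\begin{equation*}
    \|\di(\boldG_R - \cR\cR^\top)\| \lesssim \kappa(\cR)^2 \sqrt{\tfrac{\mu K}{N}}\,\zetaop(\boldR),
    \qquad
    \|\di(\boldG_X - \cX\cX^\top)\| \lesssim \kappa(\cX)^2 \sqrt{\tfrac{\mu K}{N}}\,\zetaop(\boldX),
\end{equation*}
and the claimed bound follows from $\kappa(\cR)^2+\kappa(\cX)^2 \leq 2\kappa_{\max}^2$ and the triangle inequality.

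\textbf{Main obstacle.} The clean spectral-norm step is essentially a black-box invocation of \citet{Yan2021}; the delicate part is justifying the $\kappa_{\max}^2\sqrt{\mu K/N}$ gain on the diagonal. This requires bookkeeping of how incoherence propagates through the HeteroPCA fixed-point iteration, and in particular verifying that the condition-number dependence does not accumulate across iterates beyond $\kappa^2$. A second concern is the interplay between the two sources: because the same $\hat\boldU$ is used for $\boldG_R$ and $\boldG_X$ only after they are combined, it is cleaner to analyse the per-source HeteroPCA outputs separately (as above) and then sum; this avoids having to track how the balance parameter $\alpha$ enters the HeteroPCA iteration itself.
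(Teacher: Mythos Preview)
Your proposal is correct and follows essentially the same route as the paper: decompose $\boldG-\cG$ into the $\boldR$- and $\boldX$-contributions via the triangle inequality, invoke the single-source HeteroPCA guarantee (Lemma 23 of \citet{Yan2021}) on each piece, and combine via a union bound on the two high-probability events. The only difference is that the paper treats both the spectral-norm bound \emph{and} the diagonal bound as direct consequences of that cited lemma, so the ``main obstacle'' you flag---tracking the $\kappa^2\sqrt{\mu K/N}$ factor through the HeteroPCA iterates---is already absorbed into the black-box citation and need not be reproved here.
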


\begin{proof}
    See Proof in Appendix \ref{sec:lem_wholebound}.
\end{proof}

\begin{lemma}
    \label{lemma:combine}
    Under Assumption \ref{assump:noise} and \ref{assump:information}, 
    \begin{equation*}
        \zeta_{op}(\boldR) +\alpha\zeta_{op}(\boldX) \ll \lambda_K(\mathcal{G}).
    \end{equation*}
\end{lemma}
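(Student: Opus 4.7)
The plan is to observe that Lemma~\ref{lemma:combine} is essentially an additive combination of the single-matrix signal-strength conditions already baked into Assumption~\ref{assump:information}, glued together via the small-singular-value lower bound in Lemma~\ref{lemma:small_vals}. I would not attempt anything more delicate than this.

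First I would record two preliminaries. Since $\cG=\cR\cR^\top+\alpha\cX\cX^\top$ with $\alpha>0$, the matrix $\cG$ is symmetric positive semidefinite, so $\lambda_K(\cG)=\sigma_K(\cG)$. The signal-strength parts of Assumption~\ref{assump:information}, namely
\begin{equation*}
\zeta_{\text{op}}(\boldR)\ll \frac{\sigma_K(\cR)^2}{\kappa(\cR)^2},\qquad \zeta_{\text{op}}(\boldX)\ll \frac{\sigma_K(\cX)^2}{\kappa(\cX)^2},
\end{equation*}
in particular imply $\sigma_K(\cR)>0$ and $\sigma_K(\cX)>0$, so $\boldPi$, $\boldTheta$, and $\boldM$ all have full column rank $K$. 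This is precisely the hypothesis needed to invoke Lemma~\ref{lemma:small_vals}, which yields $\sigma_K(\cG)\geq \sigma_K(\cR)^2+\alpha\,\sigma_K(\cX)^2$.

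Next, using $\kappa(\cR)\geq 1$ and $\kappa(\cX)\geq 1$, the two bounds in Assumption~\ref{assump:information} weaken to
\begin{equation*}
\zeta_{\text{op}}(\boldR)\ll \sigma_K(\cR)^2,\qquad \zeta_{\text{op}}(\boldX)\ll \sigma_K(\cX)^2.
\end{equation*}
Multiplying the second by $\alpha\geq 0$ and adding the two gives
\begin{equation*}
\zeta_{\text{op}}(\boldR)+\alpha\,\zeta_{\text{op}}(\boldX)\ll \sigma_K(\cR)^2+\alpha\,\sigma_K(\cX)^2\leq \sigma_K(\cG)=\lambda_K(\cG),
\end{equation*}
which is exactly the claim.

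There is no real obstacle here beyond bookkeeping; the only subtlety worth flagging is that the bound $\sigma_K(\cG)\geq \sigma_K(\cR)^2+\alpha\sigma_K(\cX)^2$ from Lemma~\ref{lemma:small_vals} is additive in the two contributions, which is what lets the additive noise bound $\zeta_{\text{op}}(\boldR)+\alpha\zeta_{\text{op}}(\boldX)$ be dominated term-by-term. If instead one only had access to $\sigma_K(\cG)\geq \max\{\sigma_K(\cR)^2,\alpha\sigma_K(\cX)^2\}$, the argument would still go through (losing only a factor of $2$), but the clean additive form in Lemma~\ref{lemma:small_vals} keeps constants sharp.
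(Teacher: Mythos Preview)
Your proposal is correct and follows essentially the same route as the paper: both use Lemma~\ref{lemma:small_vals} to get the additive lower bound $\sigma_K(\cG)\geq \sigma_K(\cR)^2+\alpha\sigma_K(\cX)^2$, then combine the two single-matrix signal-strength conditions from Assumption~\ref{assump:information} additively. The only cosmetic difference is that the paper retains the $\kappa(\cR)^{-2}$ and $\kappa(\cX)^{-2}$ factors in the intermediate inequality before invoking Assumption~\ref{assump:information} directly, whereas you first weaken by dropping them; the logic is identical either way.
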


\begin{proof}
    See Proof in Appendix \ref{sec:lem_com}.
\end{proof}

\begin{lemma}
    \label{lemma:spectralbound}
    Under Assumption \ref{assump:noise} and \ref{assump:information}, with probability at least $1-O(d^{-10})$, we have
    \begin{align*}
        % \|\hat{\boldU}\mathcal{O}_0-\boldU\|&\lesssim \frac{\zeta_{\text{op}}(\boldR)+\alpha\zeta_{\text{op}}(\boldX)}{\sigma_K(\cG)},\\
        \left\|\hat{\boldU}\boldH - \boldU\right\|&\lesssim \frac{\zeta_{\text{op}}(\boldR)+\alpha\zeta_{\text{op}}(\boldX)}{\sigma_K(\cG)},\\
        \|\boldH-\mathcal{O}_0\|&\lesssim \frac{(\zeta_{\text{op}}(\boldR)+\alpha\zeta_{\text{op}}(\boldX))^2}{\sigma_K(\cG)^2},\\
        \left\|\boldH^\top\boldH-\mathbf{I}_K\right\|&\lesssim \frac{(\zeta_{\text{op}}(\boldR)+\alpha\zeta_{\text{op}}(\boldX))^2}{\sigma_K(\cG)^2},
    \end{align*}
    and 
    \begin{equation*}
        \sigma_1(\boldH)\asymp \sigma_K(\boldH) \asymp 1.
    \end{equation*}
\end{lemma}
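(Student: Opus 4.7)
The plan is to deduce all four conclusions from the operator-norm control $\|\boldG-\cG\|\lesssim\zetaop(\boldG)$ established in Lemma \ref{lemma:wholebound}, combined with the eigengap guarantee $\zetaop(\boldR)+\alpha\zetaop(\boldX)\ll\sigma_K(\cG)$ from Lemma \ref{lemma:combine}. The key identification is that $\boldH:=\hat{\boldU}{}^\top\boldU$ is the $K\times K$ alignment matrix whose singular values are exactly the cosines of the principal angles between $\mathrm{col}(\hat{\boldU})$ and $\mathrm{col}(\boldU)$, and $\mathcal{O}_0$ is the orthogonal factor in the polar decomposition of $\boldH$ (the classical Procrustes solution to the Frobenius minimization that defines $\mathcal{O}_0$ in Theorem \ref{theorem:leftbound}). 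Once these two facts are in place, the four claims reduce to standard principal-angle identities.

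First I would invoke Wedin's $\sin\Theta$ theorem. By Weyl's inequality applied to $\boldG=\cG+(\boldG-\cG)$, the eigengap of $\boldG$ at position $K$ is at least $\sigma_K(\cG)-\|\boldG-\cG\|\gtrsim\sigma_K(\cG)$ under Lemma \ref{lemma:combine}. Wedin's theorem then yields
\[
\|\sin\Theta(\hat{\boldU},\boldU)\|=\bigl\|(\mathbf{I}-\hat{\boldU}\hat{\boldU}{}^\top)\boldU\bigr\|\lesssim\frac{\|\boldG-\cG\|}{\sigma_K(\cG)}\lesssim\frac{\zetaop(\boldR)+\alpha\zetaop(\boldX)}{\sigma_K(\cG)}.
\]
The first displayed bound of the lemma follows at once, since $\hat{\boldU}\boldH-\boldU=\hat{\boldU}\hat{\boldU}{}^\top\boldU-\boldU=-(\mathbf{I}-\hat{\boldU}\hat{\boldU}{}^\top)\boldU$.

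Next I would deduce the remaining bounds from the SVD $\boldH=\boldA\,(\cos\Theta)\,\mathbf{B}^\top$ in an appropriate orthonormal basis. Then $\boldH^\top\boldH-\mathbf{I}_K=\mathbf{B}(\cos^2\Theta-\mathbf{I}_K)\mathbf{B}^\top$, whose spectral norm equals $\|\sin\Theta\|^2$, giving the third bound. For the second bound, the Procrustes identity gives $\mathcal{O}_0=\boldA\mathbf{B}^\top$, so $\boldH-\mathcal{O}_0=\boldA(\cos\Theta-\mathbf{I}_K)\mathbf{B}^\top$; combined with the elementary inequality $1-\cos\theta\leq\sin^2\theta$, valid in the regime $\|\sin\Theta\|\ll 1$ guaranteed by Lemma \ref{lemma:combine}, this yields a bound of $\|\sin\Theta\|^2\lesssim\zetaop(\boldG)^2/\sigma_K(\cG)^2$. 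Finally, since $\sigma_i(\boldH)=\cos\Theta_i\in\bigl[\sqrt{1-\|\sin\Theta\|^2},\,1\bigr]$ and $\|\sin\Theta\|\ll 1$, I obtain $\sigma_1(\boldH)\asymp\sigma_K(\boldH)\asymp 1$.

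The main technical work is not in the perturbation arithmetic, which is routine once $\boldH=\hat{\boldU}{}^\top\boldU$ is identified with the matrix of cosines of principal angles. The one point demanding care is verifying that Wedin's theorem applies with the claimed rate: one must ensure that the perturbed eigengap at position $K$ dominates $\|\boldG-\cG\|$ by a constant factor, which reduces to a clean invocation of Weyl's inequality together with Lemma \ref{lemma:combine}. A secondary bookkeeping item, necessary for the SVD analysis of $\boldH$ to be valid, is confirming that $\mathcal{O}_0$ defined as the Frobenius minimizer does coincide with the orthogonal polar factor of $\boldH$; this is an expansion of $\|\hat{\boldU}\mathcal{Z}-\boldU\|_F^2 = 2K - 2\,\mathrm{tr}(\boldH^\top\mathcal{Z})$ and maximization of the trace via von Neumann's trace inequality. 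With these two observations secured, all four conclusions can be read off simultaneously from the principal-angle SVD of $\boldH$.
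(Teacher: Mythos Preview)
Your proposal is correct and follows essentially the same route as the paper: both obtain the $\sin\Theta$ bound via Davis--Kahan/Wedin (the paper states it as Davis--Kahan, since $\boldG$ and $\cG$ are symmetric) using Weyl's inequality and Lemma~\ref{lemma:combine} for the eigengap, then exploit the principal-angle SVD $\boldH=\boldU_l(\cos\boldsymbol{\Psi})\boldU_r^\top$ together with $\mathcal{O}_0=\mathrm{sgn}(\boldH)=\boldU_l\boldU_r^\top$ and the inequality $\|\mathbf{I}_K-\cos\boldsymbol{\Psi}\|\lesssim\|\sin\boldsymbol{\Psi}\|^2$ to read off the remaining three claims. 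The only cosmetic difference is that the paper invokes the already-proved spectral bound \eqref{eqn:spectralbound} from Theorem~\ref{theorem:leftbound} rather than restating the Davis--Kahan step, and it derives $\sigma_1(\boldH)\asymp\sigma_K(\boldH)\asymp 1$ via $\|\boldH-\mathcal{O}_0\|$ and Weyl's inequality rather than directly from $\sigma_i(\boldH)=\cos\Theta_i$.
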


\begin{proof}
    See proof in Appendix \ref{sec:lem_spec}.
\end{proof}

\begin{lemma}
    \label{lemma:kappa}
    Under the assumption that both $\cR$ and $\cX$ are full rank, one has,
    \begin{equation*}
        \kappa(\mathcal{L})\leq \kappa_{\max},
    \end{equation*}
    where $\kappa_{\max}$ is defined as in Assumption \ref{assump:information}.
\end{lemma}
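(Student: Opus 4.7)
The plan is to reduce the claim about $\kappa(\mathcal{L})$ to a statement about the extreme eigenvalues of $\mathcal{G}$, and then combine a standard upper bound on $\lambda_1(\mathcal{G})$ with the lower bound on $\lambda_K(\mathcal{G})$ already established in Lemma~\ref{lemma:small_vals}. The key observation is that, by the definition $\mathcal{L} = [\mathcal{R} \mid \sqrt{\alpha}\,\mathcal{X}]$, we have
\begin{equation*}
    \mathcal{L}\mathcal{L}^\top \;=\; \mathcal{R}\mathcal{R}^\top + \alpha\,\mathcal{X}\mathcal{X}^\top \;=\; \mathcal{G},
\end{equation*}
so $\sigma_i(\mathcal{L})^2 = \lambda_i(\mathcal{G})$ for every $i \in [K]$, and in particular
\begin{equation*}
    \kappa(\mathcal{L})^2 \;=\; \frac{\sigma_1(\mathcal{L})^2}{\sigma_K(\mathcal{L})^2} \;=\; \frac{\lambda_1(\mathcal{G})}{\lambda_K(\mathcal{G})}.
\end{equation*}

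Next I would upper bound the numerator. By subadditivity of the largest eigenvalue of symmetric positive semidefinite matrices (a direct consequence of Weyl's inequality),
\begin{equation*}
    \lambda_1(\mathcal{G}) \;\le\; \lambda_1(\mathcal{R}\mathcal{R}^\top) + \alpha\,\lambda_1(\mathcal{X}\mathcal{X}^\top) \;=\; \sigma_1(\mathcal{R})^2 + \alpha\,\sigma_1(\mathcal{X})^2.
\end{equation*}
For the denominator, since $\mathcal{R}$ and $\mathcal{X}$ are full rank, the factors $\boldPi$, $\boldTheta$, $\boldM$ all have full column rank $K$, so Lemma~\ref{lemma:small_vals} applies and yields
\begin{equation*}
    \lambda_K(\mathcal{G}) \;\ge\; \sigma_K(\mathcal{R})^2 + \alpha\,\sigma_K(\mathcal{X})^2.
\end{equation*}

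The final step is the elementary fraction inequality $\tfrac{a+b}{c+d} \le \tfrac{a}{c} + \tfrac{b}{d}$, which holds for any positive reals $a,b,c,d$ (proof: cross-multiplying gives $(a+b)cd \le (ad+bc)(c+d) = (a+b)cd + ad^2 + bc^2$). Applying this with $a=\sigma_1(\mathcal{R})^2$, $b=\alpha\sigma_1(\mathcal{X})^2$, $c=\sigma_K(\mathcal{R})^2$, $d=\alpha\sigma_K(\mathcal{X})^2$ gives
\begin{equation*}
    \kappa(\mathcal{L})^2 \;\le\; \frac{\sigma_1(\mathcal{R})^2 + \alpha\,\sigma_1(\mathcal{X})^2}{\sigma_K(\mathcal{R})^2 + \alpha\,\sigma_K(\mathcal{X})^2} \;\le\; \frac{\sigma_1(\mathcal{R})^2}{\sigma_K(\mathcal{R})^2} + \frac{\sigma_1(\mathcal{X})^2}{\sigma_K(\mathcal{X})^2} \;=\; \kappa(\mathcal{R})^2 + \kappa(\mathcal{X})^2 \;=\; \kappa_{\max}^2,
\end{equation*}
and taking square roots completes the proof. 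There is no real obstacle here beyond invoking Lemma~\ref{lemma:small_vals} correctly; the $\alpha$-weighting in $\mathcal{L}$ was precisely designed so that $\mathcal{L}\mathcal{L}^\top = \mathcal{G}$, making the reduction exact rather than approximate.
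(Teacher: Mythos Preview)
Your proof is correct and follows essentially the same route as the paper: identify $\kappa(\mathcal{L})^2 = \lambda_1(\mathcal{G})/\lambda_K(\mathcal{G})$, bound the numerator by Weyl's inequality and the denominator by Lemma~\ref{lemma:small_vals}, then apply the elementary fraction inequality $(a+b)/(c+d)\le a/c + b/d$. If anything, your version is slightly more explicit than the paper's in spelling out why $\mathcal{L}\mathcal{L}^\top=\mathcal{G}$ and why the full-rank hypothesis on $\mathcal{R},\mathcal{X}$ lets you invoke Lemma~\ref{lemma:small_vals}.
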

\begin{proof}
    See in Appendix \ref{sec:pf_lemma_kappa}.
\end{proof}

\begin{lemma}
    \label{lemma:29}
    Suppose $N\gtrsim \kappa_{\max}^4\mu K $, then with probability $1-O(d^{-10})$,
    \begin{align*}
        \left\|\cGm - \boldG\right\| &\lesssim \left[\sigma(\boldR)^2+\alpha\sigma(\boldX)^2 \right]\sqrt{N\max\{J, W \}}\log d + \\ &\left[\sigma(\boldR) \sigma_1(\cR) + \alpha \sigma(\boldX)\sigma_1(\cX) \right] \kappa_{\max}^2\sqrt{\mu K\log d} ,\\
        \left\|\cGm - \cG\right\| &\lesssim \zeta_{\text{op}}(\boldR) + \alpha \zeta_{\text{op}}(\boldX).
    \end{align*}
    hold for all $1\leq m \leq N$.
\end{lemma}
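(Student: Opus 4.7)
}

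The plan is to reduce the joint leave-one-out bound to two single-source leave-one-out bounds for \textsc{HeteroPCA}, and then combine them with the appropriate weighting. Let $\boldR^{(m)}$ be the matrix obtained from $\boldR$ by replacing its $m$-th row with its expectation $\cR_{m,:}$, and define $\boldX^{(m)}$ analogously. Let $\boldG_R^{(m)}$ and $\boldG_X^{(m)}$ denote the outputs of Algorithm~\ref{alg:HeteroPCA} applied to $\boldR^{(m)}\boldR^{(m)\top}$ and $\boldX^{(m)}\boldX^{(m)\top}$ respectively, and set $\cGm := \boldG_R^{(m)} + \alpha\boldG_X^{(m)}$. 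By the triangle inequality,
\begin{equation*}
    \|\cGm - \boldG\| \leq \|\boldG_R^{(m)} - \boldG_R\| + \alpha\|\boldG_X^{(m)} - \boldG_X\|.
\end{equation*}
This decoupling is the key observation: since \textsc{HeteroPCA} is applied separately to each gram matrix before the weighted sum is formed, the joint leave-one-out perturbation is governed by the two single-source perturbations.

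Next, I would apply the single-source \textsc{HeteroPCA} leave-one-out bound (in the spirit of \citet{Yan2021}, Theorem~10 and the leave-one-out perturbation lemmas used in its proof) to each summand. Under Assumptions~\ref{assump:noise}--\ref{assump:information} and the sub-condition $N\gtrsim \kappa_{\max}^4\mu K$, this yields
\begin{align*}
    \|\boldG_R^{(m)} - \boldG_R\| &\lesssim \sigma(\boldR)^2\sqrt{NJ}\log d_R + \sigma(\boldR)\sigma_1(\cR)\kappa(\cR)^2\sqrt{\mu K\log d_R},\\
    \|\boldG_X^{(m)} - \boldG_X\| &\lesssim \sigma(\boldX)^2\sqrt{NW}\log d_X + \sigma(\boldX)\sigma_1(\cX)\kappa(\cX)^2\sqrt{\mu K\log d_X}.
\end{align*}
Summing with the $\alpha$-weight, and upper-bounding $\kappa(\cR)^2, \kappa(\cX)^2 \leq \kappa_{\max}^2$ and $d_R, d_X \leq d$, gives the first advertised bound. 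A union bound over $m\in[N]$ costs only a polynomial factor in $d$, which is absorbed in the $O(d^{-10})$ failure probability by enlarging the log-factors in the noise-concentration step.

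For the second inequality, I would use the triangle inequality
\begin{equation*}
    \|\cGm - \cG\| \leq \|\cGm - \boldG\| + \|\boldG - \cG\|,
\end{equation*}
and bound $\|\boldG-\cG\|\lesssim \zetaop(\boldR)+\alpha\zetaop(\boldX)$ directly by Lemma~\ref{lemma:wholebound}. It remains to check that the first bound just derived is also of order $\zetaop(\boldR)+\alpha\zetaop(\boldX)$. The first terms match $\zetaop(\boldR)$ and $\zetaop(\boldX)$ exactly. For the second terms, $\sigma(\boldR)\sigma_1(\cR)\kappa_{\max}^2\sqrt{\mu K\log d}$ is dominated by $\sigma(\boldR)\sigma_1(\cR)\sqrt{N\log d}$ whenever $N\gtrsim \kappa_{\max}^4\mu K$, which is exactly the hypothesis we assumed; the same reasoning applies to the $\boldX$ term.

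The main technical obstacle is the single-source leave-one-out \textsc{HeteroPCA} bound itself. Unlike vanilla SVD, the \textsc{HeteroPCA} iterates mix diagonal imputation with partial SVDs, so the stability of $\boldG_R^{(m)}$ under a one-row perturbation must be tracked through all iterations of Algorithm~\ref{alg:HeteroPCA}. I would carry this out by inducting on the iteration index $t$: at each step, the change in the diagonal imputation is controlled via Davis--Kahan-type perturbation of the current iterate, while the change in the off-diagonal block equals the one-row cross-term $\cR_{m,:}(\boldE^R_{m,:})^\top + \boldE^R_{m,:}\cR_{m,:}^\top + \text{(diagonal correction)}$, whose spectral norm is bounded via matrix Bernstein together with the incoherence bound on $\cR_{m,:}$ from Assumption~\ref{assump:incoherence}. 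The signal-strength assumption $\zetaop(\boldR)\ll \sigma_K(\cR)^2/\kappa(\cR)^2$ ensures the induction contracts, so the bound propagates uniformly through the $t_0$ iterations used in Lemma~\ref{lemma:wholebound}.
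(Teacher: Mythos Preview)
Your proposal is correct and follows essentially the same route as the paper: decouple $\cGm-\boldG$ into the $R$-part and the $X$-part via the triangle inequality, invoke the single-source leave-one-out \textsc{HeteroPCA} bound for each, and combine with the $\alpha$-weight. The paper simply cites \citet[Lemma~29]{Yan2021} as a black box for the single-source bounds rather than re-deriving them through the iteration-wise induction you sketch, and for the second inequality it again cites the single-source result $\|\cGm_R-\cG_R\|\lesssim\zetaop(\boldR)$ directly instead of passing through $\|\cGm-\boldG\|+\|\boldG-\cG\|$ as you do; both routes are equivalent once $N\gtrsim\kappa_{\max}^4\mu K$.
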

\begin{proof}
    See proof in Appendix \ref{sec:lem_29}
\end{proof}

\begin{lemma}
    \label{lemma:leftbound1}
    Suppose $N\gtrsim \kappa_{\max}^4\mu K$, with probability exceeding $1-O(d^{-10})$,
    \begin{equation*}
        \left\|(\boldG-\cG)_{m,:}\boldU \right\|_2 \lesssim \left( \zeta_{\text{op}}(\boldR)+\alpha\zeta_{\text{op}}(\boldX) \right)\sqrt{\frac{\mu K}{N}},
    \end{equation*}
\end{lemma}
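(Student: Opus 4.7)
The plan is to split the row-wise error into response and covariate contributions via the triangle inequality
\begin{equation*}
    \|(\boldG-\cG)_{m,:}\boldU\|_2 \leq \|(\boldG_R - \cR\cR^\top)_{m,:}\boldU\|_2 + \alpha\|(\boldG_X - \cX\cX^\top)_{m,:}\boldU\|_2,
\end{equation*}
so that it suffices to prove $\|(\boldG_R - \cR\cR^\top)_{m,:}\boldU\|_2 \lesssim \zetaop(\boldR)\sqrt{\mu K/N}$; the covariate part follows by the identical argument with $\zetaop(\boldX)$. Exploiting the HeteroPCA construction, which preserves the off-diagonal of $\boldR\boldR^\top$ and only modifies the diagonal, I would then split
\begin{equation*}
    (\boldG_R - \cR\cR^\top)_{m,:}\boldU = [\boldG_R - \cR\cR^\top]_{m,m}\,\boldU_{m,:} + [\od(\boldR\boldR^\top - \cR\cR^\top)]_{m,:}\boldU.
\end{equation*}

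The diagonal imputation term is controlled by results already established. Lemma \ref{lemma:wholebound} gives $\|\di(\boldG - \cG)\| \lesssim \kappa_{\max}^2\sqrt{\mu K/N}\,\zetaop(\boldG)$, so $|[\boldG_R - \cR\cR^\top]_{m,m}|$ is of the same order. Combined with $\|\boldU_{m,:}\|_2 \leq \sqrt{\mu K/N}$ from Lemma \ref{lemma:incoherence}, this contribution is at most $\kappa_{\max}^2(\mu K/N)\,\zetaop(\boldG)$, which is dominated by the target $\zetaop(\boldG)\sqrt{\mu K/N}$ under the signal-strength condition $N\gtrsim \kappa_{\max}^4\mu K$ available from Assumption \ref{assump:information}.

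For the off-diagonal contribution, I would expand $\boldR\boldR^\top - \cR\cR^\top = \cR\boldE^{R\top} + \boldE^R\cR^\top + \boldE^R\boldE^{R\top}$ and bound each summand applied to $\boldU$, subtracting its row-$m$ contribution to the deleted diagonal where relevant. The two linear-in-noise pieces reduce coordinate-wise to sums of independent products; for instance $(\cR_{m,:}\boldE^{R\top}\boldU)_k = \sum_{i,j}\cR_{m,j}E^R_{i,j}\boldU_{i,k}$ has variance at most $\sigma(\boldR)^2\|\cR_{m,:}\|_2^2$, and Bernstein's inequality combined with $\|\cR_{m,:}\|_2\leq\sqrt{\mu K/N}\,\sigma_1(\cR)$ and $|E^R_{i,j}|\leq B_R$ from Assumption \ref{assump:noise} controls each coordinate at the rate $\sigma(\boldR)\sigma_1(\cR)\sqrt{\mu K\log d_R/N}$. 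Aggregating over $K$ coordinates matches the linear part $\sigma(\boldR)\sigma_1(\cR)\sqrt{N\log d_R}\cdot\sqrt{\mu K/N}$ of the target.

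The main obstacle is the quadratic noise term $[\od(\boldE^R\boldE^{R\top})]_{m,:}\boldU = \sum_{i\neq m,\,j}E^R_{m,j}E^R_{i,j}\boldU_{i,:}$, whose bilinear structure precludes a direct Bernstein application. My plan is a leave-one-row-out conditioning: define $Y_{j,k}:=\sum_{i\neq m}E^R_{i,j}\boldU_{i,k}$, which is independent of $\boldE^R_{m,:}$ with variance at most $\sigma(\boldR)^2$. Standard sub-Gaussian bounds first control $\|Y_{:,k}\|_2\lesssim \sigma(\boldR)\sqrt{J\log d_R}$ and $\max_{j,k}|Y_{j,k}|$ uniformly; then, conditionally on $\boldE^R_{-m,:}$, Bernstein applied to $\sum_j E^R_{m,j}Y_{j,k}$ gives a coordinate-wise bound of order $\sigma(\boldR)^2\sqrt{J\log d_R}$, whose $\sqrt{K}$-aggregation matches the quadratic part $\sigma(\boldR)^2\sqrt{NJ}\log d_R\cdot\sqrt{\mu K/N}$ of $\zetaop(\boldR)\sqrt{\mu K/N}$ up to polylogarithmic slack. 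Applying the identical decomposition to $\boldG_X - \cX\cX^\top$, scaling by $\alpha$, and taking a union bound over $m\in[N]$ delivers the uniform conclusion with probability at least $1 - O(d^{-10})$.
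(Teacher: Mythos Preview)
Your proposal is correct and follows the same diagonal/off-diagonal decomposition as the paper, with two cosmetic differences worth noting. First, you split into $\boldR$ and $\boldX$ contributions before separating diagonal from off-diagonal, whereas the paper separates diagonal from off-diagonal first and treats the combined $\boldG$ in each part. The paper's order is slightly cleaner because Lemma~\ref{lemma:wholebound} directly controls $\|\di(\boldG-\cG)\|$ for the full $\boldG$; your inference from that bound to $|[\boldG_R-\cR\cR^\top]_{m,m}|$ alone is not immediate, though it is easily repaired by citing Lemma~\ref{lemma:individualbound} instead. Second, for the off-diagonal term the paper simply invokes \citet[Lemma~2]{caiSubspaceEstimationUnbalanced2020} (see also \citet[Lemma~25]{Yan2021}) as a black box, noting that the $\boldR$ and $\boldX$ parts do not interact so the same treatment applies to each. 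Your Bernstein-plus-leave-one-row-out argument for the quadratic noise term is essentially what those cited lemmas establish, so both routes land on the same bound; the paper's route is shorter by delegation, yours is more self-contained.
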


\begin{proof}
    See proof in Appendix \ref{sec:lem_left1}.
\end{proof}

\begin{lemma}
    \label{lemma:32}
    Under Assumption \ref{assump:noise}, \ref{assump:information}, and $\sigma(\boldR)^2J\log d_R  \asymp \sigma(\boldX)^2W\log d_X $, with probability at least $1-O(d^{-10})$,
    \begin{equation}
        \label{eqn:F18}
        \begin{aligned}
        &\left\|\mathbf{E}_{m,:}^R[\Pm(\boldR)]^\top\left( \Um\Hm-\boldU\right)\right\|_2 \\&\lesssim \zeta_{\text{op}}(\boldR)\left\|\Um\Hm-\boldU\right\|_{2,\infty} + \frac{\zeta_{\text{op}}(\boldR)^2}{\sigma_K(\cG)}\left\| \Um\Hm \right\|_{2,\infty}\\
        &+ \sqrt{\frac{\mu K}{N}}\frac{\zeta_{\text{op}}(\boldR)^2}{\sigma_K(\cG)} +\alpha \left[\frac{\zeta_{\text{op}}(\boldR)\zeta_{\text{op}}(\boldX)}{\sigma_K(\cG)}\left\| \Um\Hm \right\|_{2,\infty} + \sqrt{\frac{\mu K}{N}} \frac{\zetaop(\boldG)^{2}}{\sigma_K(\cG)} \right],
        \end{aligned}
    \end{equation}
    \begin{equation}
        \label{eqn:F19}
        \begin{aligned}
            \left\|\mathbf{E}_{m,:}^R[\Pm(\boldR)]^\top \Um\Hm \right\|_2 \lesssim \zeta_{\text{op}}(\boldR)\left(\left\|\Um\Hm \right\|_{2,\infty} + \sqrt{\frac{\mu K}{N}} \right).
        \end{aligned}
    \end{equation}
\end{lemma}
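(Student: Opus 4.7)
The plan is to prove both (\ref{eqn:F18}) and (\ref{eqn:F19}) by leveraging leave-one-out independence of $\mathbf{E}_{m,:}^R$ from $(\Um, \Hm, \Pm(\boldR), \boldX)$ together with a matrix Bernstein inequality, with extra bookkeeping to track the $\alpha$-dependent contributions coming through $\Um\Hm-\boldU$. By construction, the leave-one-out objects $\Um$ and $\Hm$ depend on the response matrix only through $\Pm(\boldR)$, so they are independent of row $m$ of the response noise.

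For bound (\ref{eqn:F19}) I would fix $A := [\Pm(\boldR)]^\top \Um\Hm$, which is independent of $\mathbf{E}_{m,:}^R$, and apply the matrix Bernstein inequality to $\mathbf{E}_{m,:}^R A = \sum_{j=1}^J E_{mj}^R A_{j,:}$. Using Assumption \ref{assump:noise}, this yields an upper bound of the form $\sigma(\boldR)\sqrt{\log d}\,\|A\| + B_R \log d\,\|A\|_{2,\infty}$ with probability at least $1 - O(d^{-10})$. The spectral norm $\|A\|$ is controlled by $\|\Pm(\boldR)\| \lesssim \sigma_1(\cR) + \|\boldE^R\|$ together with $\|\Um\Hm\| \asymp 1$ from Lemma \ref{lemma:spectralbound}; for the $2,\infty$ norm, a column-wise splitting $[\Pm(\boldR)]_{:,j} = \cR_{:,j} + [\Pm(\boldE^R)]_{:,j}$ combined with the joint incoherence Assumption \ref{assump:incoherence}, Lemma \ref{lemma:incoherence}, and the signal-strength Assumption \ref{assump:information} produces the advertised $\zetaop(\boldR)\bigl(\|\Um\Hm\|_{2,\infty} + \sqrt{\mu K/N}\bigr)$.

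For bound (\ref{eqn:F18}) I would use the identity
\begin{equation*}
\mathbf{E}_{m,:}^R[\Pm(\boldR)]^\top(\Um\Hm-\boldU) = \mathbf{E}_{m,:}^R[\Pm(\boldR)]^\top\Um\Hm \;-\; \mathbf{E}_{m,:}^R[\Pm(\boldR)]^\top\boldU
\end{equation*}
and apply the argument above to each summand. The second summand has $\boldU$ deterministic and incoherent with $\|\boldU\|_{2,\infty} \lesssim \sqrt{\mu K/N}$ (Lemma \ref{lemma:incoherence}), so Bernstein directly produces the $\sqrt{\mu K/N}\,\zetaop(\boldR)^2/\sigma_K(\cG)$ contribution. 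For the first summand, repeating the Bernstein step with $\Um\Hm-\boldU$ in place of $\Um\Hm$ yields the $\zetaop(\boldR)\|\Um\Hm-\boldU\|_{2,\infty}$ term and, after invoking the spectral bound $\|\Um\Hm-\boldU\| \lesssim \zetaop(\boldG)/\sigma_K(\cG)$ from Lemma \ref{lemma:spectralbound} and expanding $\zetaop(\boldG) = \zetaop(\boldR) + \alpha\zetaop(\boldX)$, the remaining $\zetaop(\boldR)^2/\sigma_K(\cG)\cdot\|\Um\Hm\|_{2,\infty}$, the cross term $\alpha\zetaop(\boldR)\zetaop(\boldX)/\sigma_K(\cG)\cdot\|\Um\Hm\|_{2,\infty}$, and the residual $\alpha\sqrt{\mu K/N}\,\zetaop(\boldG)^2/\sigma_K(\cG)$. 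The comparability hypothesis $\sigma(\boldR)\sqrt{J\log d_R}\asymp\sigma(\boldX)\sqrt{W\log d_X}$ ensures the two noise scales are of the same order so that these contributions aggregate in the stated form.

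\textbf{Main obstacle.} The delicate point is that $\Um\Hm - \boldU$ retains a subtle dependence on $\mathbf{E}_{m,:}^R$ through the perturbation-theoretic construction of $\Um$: although single leave-one-out removes the direct dependence on row $m$ of $\boldR$, the covariate-assisted gram $\boldG$ still couples the response and covariate noise inside $\Um$. Naive application of Bernstein on terms quadratic in noise, such as $\mathbf{E}_{m,:}^R[\Pm(\boldE^R)]^\top(\Um\Hm - \boldU)$, therefore risks double counting. The standard workaround is a second-level leave-one-out replacement $\Um\Hm \leftrightarrow \Uml\Hml$ whose discrepancy is controlled by Lemma \ref{lemma:29} and Lemma \ref{lemma:spectralbound}, after which the relevant noise rows are genuinely independent of the doubly-leave-one-out object and matrix Bernstein applies cleanly. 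Verifying that the decoupling error is dominated by the terms already in the statement, uniformly across the $\alpha$-dependent cross contributions, is the main source of technical overhead.
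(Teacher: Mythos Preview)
Your high-level plan---matrix Bernstein conditioned on the leave-one-out objects, followed by a second leave-one-out to decouple---is the same as the paper's. But there is a genuine gap in two places.

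First, the splitting identity you propose for (\ref{eqn:F18}) cannot deliver the stated bound. Bounding $\mathbf{E}_{m,:}^R[\Pm(\boldR)]^\top\Um\Hm$ and $\mathbf{E}_{m,:}^R[\Pm(\boldR)]^\top\boldU$ separately, each by the argument behind (\ref{eqn:F19}), gives at best $\zetaop(\boldR)\bigl(\|\Um\Hm\|_{2,\infty}+\sqrt{\mu K/N}\bigr)$. That is weaker than the target by a factor $\sigma_K(\cG)/\zetaop(\boldG)$: the leading term in (\ref{eqn:F18}) is $\zetaop(\boldR)\|\Um\Hm-\boldU\|_{2,\infty}$, not $\zetaop(\boldR)\|\Um\Hm\|_{2,\infty}$, and the splitting destroys exactly this cancellation. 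The paper instead applies Bernstein directly to $A=[\Pm(\boldR)]^\top(\Um\Hm-\boldU)$, obtaining the two pieces $\sigma(\boldR)\sqrt{\log d_R}\,\|A\|_F$ and $B_R\log d_R\,\|A\|_{2,\infty}$. The Frobenius piece yields $\zetaop(\boldR)\|\Um\Hm-\boldU\|_{2,\infty}$ via $\|A\|_F\le\|\boldR\|\,\|\Um\Hm-\boldU\|_F$ and $\|\cdot\|_F\le\sqrt{N}\|\cdot\|_{2,\infty}$; it is the $2,\infty$ piece, namely $\max_l\|[\Pm(\boldR)]_{:,l}^\top(\Um\Hm-\boldU)\|_2$, that carries all the remaining terms and requires the extra decoupling.

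Second, your diagnosis of the obstacle is off. By construction $\Um$ is built from $\Pm(\boldR)$ and $\boldX$ only, so it is \emph{fully} independent of $\mathbf{E}_{m,:}^R$; there is no residual row-$m$ dependence to worry about. The actual dependence is along the \emph{column} index $l$: to control $\|[\Pm(\boldR)]_{:,l}^\top(\Um\Hm-\boldU)\|_2$ sharply one must separate the noise column $\boldE^R_{:,l}$ from $\Um$, which depends on that column through $\Pm(\boldR)$. This is why the paper introduces the doubly-indexed surrogate $\Uml$ (removing row $m$ and column $l$) and controls $\|\Um\UmT-\Uml\UmlT\|$ via Lemma~\ref{lemma:l1_l2_U}, not Lemma~\ref{lemma:29}. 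The $\alpha$-dependent cross terms and the use of the comparability hypothesis $\sigma(\boldR)\sqrt{J\log d_R}\asymp\sigma(\boldX)\sqrt{W\log d_X}$ enter precisely through Lemma~\ref{lemma:l1_l2_U}, where the $\boldX$-contribution to $\Um$ must be tracked alongside the $\boldR$-contribution when bounding $(B_R\log d_R)\|\Um\UmT-\Uml\UmlT\|$.
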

\begin{proof}
    See proof in Appendix \ref{sec:lem_32}.
\end{proof}

\begin{lemma}
\label{lemma:GUbound}
    Under Assumption \ref{assump:noise}, \ref{assump:information} and $\sigma(\boldR)^2J\log d_R  \asymp \sigma(\boldX)^2W\log d_X$, with probability exceeding $1-O(d^{-10})$, one has
    \begin{align*}
        \left\|\hat{\boldU} \hat{\boldSigma}^2\boldH -\boldG\boldU\right\|_{2,\infty} &= \left\|\hat{\boldG}(\hat{\boldU}\boldH-\boldU) \right\|_{2,\infty}\\
        &\lesssim \kappa_{\max}^2\frac{\zeta_{\text{op}}(\boldR) \zeta_{\text{op}}(\boldG)}{\sigma_K(\cG)}\sqrt{\frac{\mu K}{N}} + \zetaop(\boldR)\left\|\hat{\boldU}\boldH-\boldU\right\|_{2,\infty}\\
        &+\alpha\left[ \kappa_{\max}^2\frac{\zeta_{\text{op}}(\boldX) \zeta_{\text{op}}(\boldG)}{\sigma_K(\cG)}\sqrt{\frac{\mu K}{N}} + \zetaop(\boldX)\left\|\hat{\boldU}\boldH-\boldU\right\|_{2,\infty} +\sqrt{\frac{\mu K}{N}}\frac{\zetaop(\boldG)^{2}}{\sigma_K(\cG)}\right]\\
        &+\alpha^2\left[ \frac{\zeta_{\text{op}}(\boldR)\zetaop(\boldX)}{\sigma_K(\cG)}\sqrt{\frac{\mu K}{N}} + \frac{\zetaop(\boldR)\zetaop(\boldX)}{\sigma_K(\cG)} \left\|\hat{\boldU}\boldH-\boldU\right\|_{2,\infty} +  \sqrt{\frac{\mu K}{N}} \frac{\zetaop(\boldX)^{2}}{\sigma_K(\cG)} \right]. 
    \end{align*}
    
\end{lemma}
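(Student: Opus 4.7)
The plan is to derive the claimed identity from the eigen relation $\hat{\boldG}\hat{\boldU} = \hat{\boldU}\hat{\boldSigma}^2$ satisfied by the rank-$K$ HeteroPCA output, and then to control $\|\hat{\boldG}(\hat{\boldU}\boldH - \boldU)\|_{2,\infty}$ one row at a time. For a fixed row $m$, I would split $\hat{\boldG}_{m,:} = \cG_{m,:} + (\hat{\boldG} - \cG)_{m,:}$ and decompose
\begin{align*}
\hat{\boldG}_{m,:}(\hat{\boldU}\boldH - \boldU) = \underbrace{\cG_{m,:}(\hat{\boldU}\boldH - \boldU)}_{(\mathrm{I})} + \underbrace{(\hat{\boldG} - \cG)_{m,:}(\hat{\boldU}\boldH - \boldU)}_{(\mathrm{II})},
\end{align*}
so that the signal contribution and the noise contribution are treated separately.

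Term $(\mathrm{I})$ is the easier one. Factoring $\cG = \boldU \boldLambda \boldU^\top$ gives $\cG_{m,:}(\hat{\boldU}\boldH - \boldU) = \boldU_{m,:}\boldLambda (\boldU^\top \hat{\boldU}\boldH - \mathbf{I}_K)$, and Lemma \ref{lemma:spectralbound} yields $\|\boldU^\top\hat{\boldU}\boldH - \mathbf{I}_K\| \lesssim \|\boldH^\top\boldH - \mathbf{I}_K\| \lesssim \zetaop(\boldG)^2/\sigma_K(\cG)^2$; combining this with the incoherence $\|\boldU_{m,:}\|_2 \lesssim \sqrt{\mu K/N}$ from Lemma \ref{lemma:incoherence} and $\|\boldLambda\| \leq \kappa_{\max}^2 \sigma_K(\cG)$ via Lemma \ref{lemma:keigenvaluebound}, this contribution folds into the $\sqrt{\mu K/N}\,\zetaop(\boldG)^2/\sigma_K(\cG)$ summands of the target bound. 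Term $(\mathrm{II})$ is handled by leave-one-out decoupling via the surrogate $\Um \Hm$ already used in Lemma \ref{lemma:29} and Lemma \ref{lemma:32}: I would write $\hat{\boldU}\boldH - \boldU = (\Um\Hm - \boldU) + (\hat{\boldU}\boldH - \Um\Hm)$. Because $\Um\Hm$ is built from $\cGm$ and is therefore independent of the $m$-th rows of $\boldE^R$ and $\boldE^X$, the inner product $(\hat{\boldG} - \cG)_{m,:}(\Um\Hm - \boldU)$ is exactly the quantity bounded by \eqref{eqn:F18} in Lemma \ref{lemma:32}; applying it to the response and covariate pieces of $\hat{\boldG} - \cG$ separately, followed by the triangle inequality $\|\Um\Hm - \boldU\|_{2,\infty} \lesssim \|\hat{\boldU}\boldH - \boldU\|_{2,\infty} + \|\hat{\boldU}\boldH - \Um\Hm\|$, produces the $\zetaop(\boldR)\|\hat{\boldU}\boldH - \boldU\|_{2,\infty}$, $\alpha\zetaop(\boldX)\|\hat{\boldU}\boldH - \boldU\|_{2,\infty}$, and mixed $\alpha^2\zetaop(\boldR)\zetaop(\boldX)/\sigma_K(\cG)$ terms appearing in the target. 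The residual $(\hat{\boldG} - \cG)_{m,:}(\hat{\boldU}\boldH - \Um\Hm)$ is handled by Cauchy--Schwarz: Lemma \ref{lemma:leftbound1} controls the row norm (after extracting a factor $\sqrt{\mu K/N}$ through the signal column), while a Davis--Kahan leave-one-out stability argument based on $\|\cGm - \boldG\|$ from Lemma \ref{lemma:29} bounds $\|\hat{\boldU}\boldH - \Um\Hm\|$ by $\zetaop(\boldG)/\sigma_K(\cG)$, yielding the $\kappa_{\max}^2\,\zetaop(\boldR)\zetaop(\boldG)/\sigma_K(\cG)\cdot\sqrt{\mu K/N}$ term together with its $\alpha$- and $\alpha^2$-weighted siblings.

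The main obstacle I anticipate is the $\alpha$-bookkeeping rather than any single inequality. Since each of $\hat{\boldG} - \cG$, $\cGm - \boldG$, and $\zetaop(\boldG)$ is itself a sum of a $\boldR$-piece and an $\alpha$-weighted $\boldX$-piece, the cross-products in term $(\mathrm{II})$ automatically generate the $(1+\alpha+\alpha^2)$ polynomial that appears on the right-hand side. The delicate step is to keep each cross term in its tightest form---never collapsing $\alpha\,\zetaop(\boldR)\zetaop(\boldX)$ into $\zetaop(\boldG)^2$ prematurely---so that the asymmetric contributions of the response and covariate noise survive into the final expression and propagate cleanly into the downstream bounds of Theorem \ref{theorem:leftbound} and Theorem \ref{theorem:VL}.
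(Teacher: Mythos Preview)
Your high-level architecture—separate the signal row $\cG_{m,:}(\hat{\boldU}\boldH-\boldU)$ from the noise row, then decouple the noise piece with the leave-one-out surrogate $\Um\Hm$—matches the paper's proof. However, two of your concrete steps do not deliver the bounds you claim.

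First, the residual $(\hat{\boldG}-\cG)_{m,:}(\hat{\boldU}\boldH-\Um\Hm)$ cannot be controlled by the combination ``Lemma~\ref{lemma:leftbound1} for the row norm plus crude Davis--Kahan via Lemma~\ref{lemma:29}.'' Lemma~\ref{lemma:leftbound1} bounds $\|(\boldG-\cG)_{m,:}\boldU\|_2$, not $\|(\boldG-\cG)_{m,:}\|_2$; the factor $\sqrt{\mu K/N}$ there comes precisely from the projection onto $\boldU$, which is absent when you multiply by $(\hat{\boldU}\boldH-\Um\Hm)$. And crude Davis--Kahan with the operator-norm perturbation of Lemma~\ref{lemma:29} only gives $\|\hat{\boldU}\boldH-\Um\Hm\|\lesssim \zetaop(\boldG)/\sigma_K(\cG)$, with no $\sqrt{\mu K/N}$ refinement. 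So your Cauchy--Schwarz would produce $\zetaop(\boldG)^2/\sigma_K(\cG)$ rather than the target $\kappa_{\max}^2\sqrt{\mu K/N}\,\zetaop(\boldR)\zetaop(\boldG)/\sigma_K(\cG)$. The paper instead bounds the bare noise row $\|\boldE^R_{m,:}[\Pm(\boldR)]^\top\|_2\lesssim \zetaop(\boldR)$ and obtains the $\sqrt{\mu K/N}$ factor from the \emph{refined} leave-one-out stability in Lemma~\ref{lemma:33}, namely $\|\Um\UmT-\hat{\boldU}\hat{\boldU}^\top\|\lesssim \kappa_{\max}^2\,\zetaop(\boldG)/\sigma_K(\cG)\,\bigl(\|\Um\Hm\|_{2,\infty}+\sqrt{\mu K/N}\bigr)$. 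That refined bound, not a direct Davis--Kahan, is where the row-wise gain lives.

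Second, you assert that $(\hat{\boldG}-\cG)_{m,:}(\Um\Hm-\boldU)$ is ``exactly the quantity bounded by \eqref{eqn:F18},'' but Lemma~\ref{lemma:32} treats only the piece $\boldE^R_{m,:}[\Pm(\boldR)]^\top(\Um\Hm-\boldU)$. The row $(\boldG-\cG)_{m,:}$ also contains the signal-times-noise block $\cR_{m,:}[\Pm(\boldE^R)]^\top$ and the imputed-diagonal error $(\boldG_{m,m}-\cG_{m,m})\mathbf{e}_m^\top$, which the paper handles separately (as $\beta_2$ and $\beta_4$) without leave-one-out: the first via $\|\cR_{m,:}[\Pm(\boldE^R)]^\top\|_2\lesssim\sqrt{\mu K}\,\sigma_1(\cR)\sigma(\boldR)$ together with the spectral bound $\|\hat{\boldU}\boldH-\boldU\|$, the second via $\|\di(\boldG-\cG)\|$ from Lemma~\ref{lemma:individualbound}. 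You should make this finer split explicit before invoking Lemma~\ref{lemma:32} on the genuinely dependent piece.
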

\begin{proof}
    See proof in Appendix \ref{sec:lem_GU}.
\end{proof}

\begin{lemma}
    \label{lemma:l1_l2_U}
    Under certain regulatory assumptions, with probability exceeding $1-O\left(d^{-10} \right)$,
    \begin{align*}
        \left\| \Um\UmT- \Uml\UmlT \right\| &\lesssim \frac{1}{\sigma_K(\cG)}\left(B_R\log d_R + \sigma(\boldR)\sqrt{N\log d_R}\right)^2 \left\|\Um\Hm\right\|_{2,\infty}\\
        &+ \frac{\sigma(\boldR)^{2}}{\sigma_K(\cG)} + \frac{1}{\sigma_K(\cG)}\left( B_R\log d_R + \sigma(\boldR)\sqrt{N\log d_R} \right)\left\|\cR^\top\right\|_{2,\infty}\\
        &+ \alpha\left[ \frac{1}{\sigma_K(\cG)}\left(B_X\log d_X + \sigma(\boldX)\sqrt{N\log d_X}\right)^2 \left\|\Um\Hm\right\|_{2,\infty}\right.\\
        &+\left. \frac{\sigma(\boldX)^{2}}{\sigma_K(\cG)} + \frac{1}{\sigma_K(\cG)}\left( B_X\log d_X + \sigma(\boldX)\sqrt{N\log d_X} \right)\left\|\cX^\top\right\|_{2,\infty} \right],
    \end{align*}
    simultaneously for all $m\in [N]$ and $l\in [J+W]$.
\end{lemma}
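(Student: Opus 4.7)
The plan is to use a double leave-one-out device paired with a Davis-Kahan-type perturbation bound. By definition, $\Um$ and $\Uml$ are the top-$K$ eigenvector matrices produced by running HeteroPCA on $\cGm$ and $\cG^{(m,l)}$, respectively, where the second matrix is built from the first by additionally replacing column $l$ of the underlying $[\boldR \mid \sqrt{\alpha}\boldX]$ by its expectation (or zeroing it out, depending on convention). The standard $\sin\Theta$ inequality, once combined with Lemma \ref{lemma:29} to show $\sigma_K(\cGm) \asymp \sigma_K(\cG)$, yields
\[
\left\|\Um\UmT - \Uml\UmlT\right\| \lesssim \frac{1}{\sigma_K(\cG)}\left\|\bigl(\cGm - \cG^{(m,l)}\bigr)\,\Um\right\|,
\]
so the task reduces to controlling the right-hand side, broken into the $l \in [J]$ and $l \in (J, J+W]$ regimes.

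Next I would expand $\cGm - \cG^{(m,l)}$ via the gram structure. When $l \in [J]$, the difference is a sum of a quadratic-in-noise term $\mathbf{E}^R_{:,l}\mathbf{E}^{R\top}_{:,l}$, two cross terms $\cR_{:,l}\mathbf{E}^{R\top}_{:,l} + \mathbf{E}^R_{:,l}\cR_{:,l}^\top$, and HeteroPCA diagonal-imputation corrections; when $l > J$, the analogous decomposition in $\cX$ and $\boldE^X$ appears with a factor $\alpha$ out front. Each of the three noise-driven pieces is then bounded by a separate device. For the quadratic piece I would apply Hanson-Wright or matrix Bernstein under Assumption \ref{assump:noise} to get $\|\mathbf{E}^R_{:,l}\mathbf{E}^{R\top}_{:,l}\| \lesssim (B_R\log d_R + \sigma(\boldR)\sqrt{N\log d_R})^2$; after pairing with $\Um\Hm$, independence of column $l$ from $\Um$ (by the leave-one-out construction) lets me transfer this operator-norm bound to the $\|\Um\Hm\|_{2,\infty}$-weighted term, with the residual $\sigma(\boldR)^2/\sigma_K(\cG)$ piece coming from the uncontrolled trace-type part. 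For the cross term I would use a subGaussian vector concentration to get $\|\cR_{:,l}\mathbf{E}^{R\top}_{:,l}\Um\| \lesssim \|\cR^\top\|_{2,\infty}\,(B_R\log d_R + \sigma(\boldR)\sqrt{N\log d_R})$, which accounts for the third summand in the bound. Adding the $\boldR$ and the $\alpha$-weighted $\boldX$ contributions produces the stated form.

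The main obstacle is that HeteroPCA does not admit a closed-form expression for its iterates, so the $\sin\Theta$ reasoning must be pushed through the fixed-point recursion rather than applied to a single SVD. Concretely, I would invoke the contraction/stability machinery established in the proof of Lemma \ref{lemma:spectralbound} (following the framework of Yan et al.\ 2021) to transfer the perturbation bound on the gram matrices to a perturbation bound on the iterated eigenspaces $\Um$ versus $\Uml$; the recursion does not inflate the bound because $\zetaop(\boldG) \ll \sigma_K(\cG)$ by Lemma \ref{lemma:combine}. A second subtlety is preserving the leave-one-out independence: one must verify that $\Um\Hm$ is measurable with respect to the $\sigma$-algebra that excludes row $m$ and column $l$ of the noise, which follows by induction on the HeteroPCA iterations provided both $m$ and $l$ are excluded at every step.

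Finally, the statement is uniform over all $m \in [N]$ and $l \in [J+W]$, so I would apply a union bound over the $N(J+W) = O(d^2)$ pairs. Setting each event to fail with probability at most $O(d^{-12})$ in the concentration inequalities (which is achievable because the sub-Gaussian tails furnish an extra $\log d$ slack) then yields the overall $1 - O(d^{-10})$ guarantee.
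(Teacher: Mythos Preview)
Your overall architecture matches the paper's proof: Davis--Kahan reduces the problem to controlling $\bigl\|(\boldG^{(m)}-\boldG^{(m,l)})\cdot(\text{eigenvectors})\bigr\|$, the gram-matrix difference is expanded into a quadratic noise block $\boldE_{:,l}\boldE_{:,l}^\top$ plus cross terms $\cR_{:,l}\boldE_{:,l}^\top+\boldE_{:,l}\cR_{:,l}^\top$ plus diagonal corrections, each piece is bounded via Bernstein/sub-Gaussian concentration, and a union bound over $(m,l)$ pairs gives the uniform statement. That part is fine.

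The genuine gap is in your independence claim. You write the Davis--Kahan bound with $\Um$ on the right and then assert that ``independence of column $l$ from $\Um$ (by the leave-one-out construction)'' lets you apply concentration. But $\Um$ is the leave-\emph{row}-$m$-out eigenspace; it is built from all columns of the data, including column $l$, so it is \emph{not} independent of $\boldE_{:,l}^R$ (or $\boldE_{:,l}^X$). Your later remark that $\Um\Hm$ is measurable with respect to the $\sigma$-algebra excluding both row $m$ and column $l$ ``provided both $m$ and $l$ are excluded at every step'' is precisely what fails: the construction of $\Um$ does not exclude column $l$. The paper handles this by applying Davis--Kahan with $\Uml$, not $\Um$, on the right:
\[
\left\|\Um\UmT-\Uml\UmlT\right\|\lesssim \frac{1}{\sigma_K(\cG)}\left\|\bigl(\boldG^{(m)}-\boldG^{(m,l)}\bigr)\,\Uml\right\|,
\]
and then all intermediate concentration bounds are stated in terms of $\|\Uml\Hml\|_{2,\infty}$, for which the required independence genuinely holds. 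Only at the very end does the paper replace $\|\Uml\Hml\|_{2,\infty}$ by $\|\Um\Hm\|_{2,\infty}$, which is legitimate because the quantity being bounded is exactly $\|\Um\UmT-\Uml\UmlT\|$, so a self-bounding argument closes the loop. Without routing through $\Uml$, your concentration step for the quadratic and cross terms does not go through.

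A smaller point: your worry about ``pushing $\sin\Theta$ through the HeteroPCA recursion'' is legitimate but is not handled by the stability machinery of Lemma~\ref{lemma:spectralbound}. The paper instead invokes the framework of \citet[Lemma~30]{Yan2021} and \citet[Lemma~9]{caiSubspaceEstimationUnbalanced2020}, which reduce the HeteroPCA eigenspace comparison to the two quantities $\|\boldG^{(m)}-\boldG^{(m,l)}\|$ and $\|(\boldG^{(m)}-\boldG^{(m,l)})\Uml\|$; once those are controlled, the recursion takes care of itself.
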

\begin{proof}
    See proof in Appendix \ref{sec:lem_l1l2}.
\end{proof}

\begin{lemma}
    \label{lemma:33}
    Under the same assumptions, with probability at least $1-O(d^{-10})$,
    \begin{equation*}
        \left\|\Um\UmT-\hat{\boldU}\hat{\boldU}^\top\right\| \lesssim \kappa_{\max}^2\frac{\zetaop(\cG)}{\sigma_K(\cG)} \left( \left\| \Um\Hm\right\|_{2,\infty} + \sqrt{\frac{\mu K}{N}}\right),
    \end{equation*}
    simultaneously for $m\in [N]$.
\end{lemma}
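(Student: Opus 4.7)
The plan is to apply a Davis--Kahan type inequality to the symmetric pair $(\cGm, \hat{\boldG})$, whose top-$K$ eigenspaces are $\Um$ and $\hat{\boldU}$ respectively. The one-sided form of the sine-theta theorem yields
\[
\left\|\Um\UmT - \hat{\boldU}\hat{\boldU}^\top\right\| \;\lesssim\; \frac{\bigl\|(\cGm - \hat{\boldG})\,\Um\bigr\|}{\sigma_K(\cGm) - \sigma_{K+1}(\hat{\boldG})},
\]
and the denominator is of order $\sigma_K(\cG)$ by Weyl's inequality combined with Lemmas \ref{lemma:wholebound}, \ref{lemma:29}, and \ref{lemma:combine}. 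The task is therefore reduced to bounding the numerator $\|(\cGm - \hat{\boldG})\Um\|$.

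The key observation is that, by construction of the leave-one-out sequence, $\cGm - \hat{\boldG}$ is supported on the $m$-th row and column (up to negligible HeteroPCA bias-correction contributions on the diagonal), so I split
\[
(\cGm - \hat{\boldG})\,\Um \;=\; \mathbf{e}_m \bigl(\cGm - \hat{\boldG}\bigr)_{m,:}\Um \;+\; \bigl(\cGm - \hat{\boldG}\bigr)_{:,m}\bigl(\mathbf{e}_m^{\top}\Um\bigr) \;+\; \text{(lower order)}.
\]
For the row contribution, I apply the replacement $\Um = \boldU + (\Um\Hm - \boldU)\Hm^{-\top}$: the leading part becomes $\|(\boldG - \cG)_{m,:}\,\boldU\|_2$, which by the leave-one-out independence of $\Um$ from the $m$-th noise row is bounded by Lemma \ref{lemma:leftbound1} as $\zetaop(\cG)\sqrt{\mu K/N}$; the residual is bounded by $\|\cGm - \hat{\boldG}\|\cdot \|\Um\Hm - \boldU\|$, where the second factor is itself of order $\zetaop(\cG)/\sigma_K(\cG)$ by a spectral-perturbation argument analogous to Lemma \ref{lemma:spectralbound} applied to $(\cGm, \cG)$ via Lemma \ref{lemma:29}. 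The column contribution is controlled symmetrically by $\|\cGm - \hat{\boldG}\|\cdot \|\mathbf{e}_m^{\top}\Um\|_2 \lesssim \zetaop(\cG)\,\|\Um\Hm\|_{2,\infty}$, using $\|\mathbf{e}_m^{\top}\Um\|_2 = \|\mathbf{e}_m^{\top}\Um\Hm\|_2$ (since $\Hm$ is orthogonal) and Lemma \ref{lemma:29} for $\|\cGm - \hat{\boldG}\|$.

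Collecting the pieces, the numerator is of order $\zetaop(\cG)\bigl(\|\Um\Hm\|_{2,\infty} + \sqrt{\mu K/N}\bigr)$, and dividing by $\sigma_K(\cG)$ produces the stated bound; the $\kappa_{\max}^2$ prefactor arises from the cumulative conditioning loss when the response- and covariate-based HeteroPCA eigengap estimates are combined inside Weyl's inequality. The main obstacle will be justifying the leave-one-out replacement for $(\cGm - \hat{\boldG})_{m,:}\Um$: because HeteroPCA is iterative, the single-shot independence argument that works for vanilla SVD is not available, and the independence between $\Um$ and the $m$-th noise row must instead be propagated through every iterate of both the response and covariate HeteroPCA stages by induction on the iteration count, in the spirit of the argument already developed for Lemma \ref{lemma:l1_l2_U}.
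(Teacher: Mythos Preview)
Your Davis--Kahan framework and denominator control match the paper. The gap is in how you handle the row contribution.

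You replace $\Um$ by $\boldU$ and bound the residual $(\boldG-\cGm)_{m,:}(\Um\Hm-\boldU)$ by the spectral-norm product $\|\boldG-\cGm\|\cdot\|\Um\Hm-\boldU\|$. By Lemma~\ref{lemma:29}, the quadratic-noise part of $\|\boldG-\cGm\|$ is of order $\sigma(\boldR)^2\sqrt{NJ}\log d$ (plus the $\boldX$ analogue), while $\|\Um\Hm-\boldU\|\lesssim\zetaop(\boldG)/\sigma_K(\cG)$. After dividing by $\sigma_K(\cG)$, this leaves a term of order $(\zetaop(\boldG)/\sigma_K(\cG))\cdot\sigma(\boldR)^2\sqrt{NJ}\log d/\sigma_K(\cG)$. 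For this to be dominated by the target $\kappa_{\max}^2(\zetaop(\boldG)/\sigma_K(\cG))\sqrt{\mu K/N}$ you would need $\sigma(\boldR)^2\sqrt{NJ}\log d/\sigma_K(\cG)\lesssim\kappa_{\max}^2\sqrt{\mu K/N}$, which is \emph{not} implied by Assumption~\ref{assump:information}; it fails whenever the incoherence factor $\sqrt{\mu K/N}$ is much smaller than the spectral perturbation ratio. Your residual control is therefore too loose to deliver the stated bound. The paper avoids the replacement entirely: since $\Um$ is independent of $\boldE_{m,:}$ by the leave-one-out construction, it applies Lemma~\ref{lemma:32} (display~\eqref{eqn:F19}) directly to $\|\boldE^R_{m,:}[\Pm(\boldR)]^\top\Um\Hm\|_2$ and obtains $\zetaop(\boldR)\bigl(\|\Um\Hm\|_{2,\infty}+\sqrt{\mu K/N}\bigr)$ in one step, with no residual. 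The diagonal is handled separately via $\di(\cGm)=\di(\cG)$ together with Lemma~\ref{lemma:individualbound}; the column piece is bounded essentially as you propose.

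Smaller points: the leading term after your replacement is $(\boldG-\cGm)_{m,:}\boldU$, not $(\boldG-\cG)_{m,:}\boldU$, so Lemma~\ref{lemma:leftbound1} is the wrong citation (and since $\boldU$ is deterministic, no ``leave-one-out independence'' is involved there anyway). Also $\Hm=\UmT\boldU$ is not orthogonal, only approximately so (Lemma~\ref{lemma:spectralbound}). Finally, the obstacle you anticipate---propagating independence through the HeteroPCA iterations---is not in play for this lemma: $\cGm$ is defined as the full HeteroPCA output on the $m$-th-row-zeroed data, so independence from $\boldE_{m,:}$ holds by construction and no iterate-level induction is needed here.
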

\begin{proof}
    See proof in Appendix \ref{sec:lem_33}.
\end{proof}

\subsection{Proof of Theorem \ref{theorem:leftbound}}
\label{sec:proof_leftbound}
\textbf{(1) Proof of Spectral norm bound:}
First, by \citet[Lemma 2.5 and 2.6]{Chen2021},
    \begin{equation}
        \left\|\Hat{\boldU}\mathcal{O}_0-\boldU\right\|\leq \sqrt{2}\left\|\Hat{\boldU}\Hat{\boldU}^{\top}-\boldU\boldU^{\top}\right\| = \sqrt{2}\left\|\sin \mathbf{\Psi} (\Hat{\boldU},\boldU)\right\|.
    \end{equation}
    $\mathbf{\Psi}(\Hat{\boldU},\boldU)$ is a diagonal matrix whose $i$th diagonal entry is the $i$th principal angle between two subspaces represented by $\Hat{\boldU}$ and $\boldU$. From Lemma \ref{lemma:wholebound} and Lemma \ref{lemma:combine}, 
    \begin{equation*}
        \|\boldG-\mathcal{G}\|\ll \sigma_K(\cG),
    \end{equation*}
    with probability at least $1-O(d^{-10})$. Together with Weyl's inequality gives,
    \begin{equation*}
        \sigma_{K+1}(\boldG)\leq \sigma_{K+1}(\mathcal{G}) + \|\boldG-\mathcal{G}\| \leq \frac{\sigma_K(\cG)}{2}.
    \end{equation*}
    Then, from Davis-Kahan inequality \citep[Chapter 5, Theorem 3.6]{stewart1990matrix}, it is shown that,
    \begin{equation*}
        \left\|\sin \mathbf{\Psi} (\Hat{\boldU},\boldU)\right\| \leq \frac{\|\boldG-\mathcal{G}\|}{\sigma_K(\mathcal{G})-\sigma_{K+1}(\boldG)}\leq \frac{\|\boldG-\mathcal{G}\|}{\sigma_K(\cG)-\sigma_K(\cG)/2} = \frac{2\|\boldG-\mathcal{G}\|}{\sigma_K(\cG)}.
    \end{equation*}
    Therefore, with probability at least $1-O(d^{-10})$,
    \begin{equation}
        \label{eqn:spectralbound}
        \left\|\Hat{\boldU}\mathcal{O}_0-\boldU\right\|\leq \frac{2\sqrt{2}\|\boldG-\mathcal{G}\|}{\sigma_K(\cG)} \leq 2\sqrt{2} \frac{\zeta_{\text{op}}(\boldR)+\alpha\zeta_{\text{op}}(\boldX)}{\sigma_K(\cG)}.
    \end{equation}

\textbf{(2) Proof of $\|\cdot\|_{2,\infty} $ bound:}
    Since $\cG = \boldU\boldSigma^2\boldU^\top$, $\boldU=\cG\boldU (\boldSigma)^{-2}$. We have
    \begin{align*}
        \|\hat{\boldU}\boldH-\boldU\|_{2,\infty} \leq \underbrace{\left\|\hat{\boldU}\boldH - \boldG\boldU(\boldSigma)^{-2}\right\|_{2,\infty}}_{\beta_1} + \underbrace{\left\| (\boldG-\cG)\boldU(\boldSigma)^{-2} \right\|_{2,\infty}}_{\beta_2}.
    \end{align*}

    \textbf{Bounding} $\beta_1$. First, from \citet[Lemma 1]{Abbe2019}, for every $m\in [N]$,
    \begin{equation*}
        \left\|\left(\hat{\boldU}\boldH - \boldG\boldU(\boldSigma)^{-2}\right)_{m,\cdot}\right\|_{2} \lesssim \underbrace{\frac{1}{\sigma_K(\cG)^2} \|\boldG-\cG\|\|\boldG_{m,\cdot}\boldU\|_2}_{\beta_{1,1}} + \underbrace{\frac{1}{\sigma_K(\cG)}\left\| \boldG_{m,\cdot}(\hat{\boldU}\boldH-\boldU) \right\|_2}_{\beta_{1,2}} .
    \end{equation*}
    For $\beta_{1,1}$, 
    \begin{align*}
        \|\boldG_{m,\cdot}\boldU\|_2 &\leq \left\|(\boldG-\cG)_{m,\cdot}\boldU \right\|_2 + \left\|\cG_{m,\cdot}\boldU\right\|_2\\
        &\leq \left\|(\boldG-\cG)_{m,\cdot}\boldU \right\|_2 + \left\|\boldU\boldSigma^2 \boldU^\top \boldU\right\|_{2,\infty}\\
        &= \left\|(\boldG-\cG)_{m,\cdot}\boldU \right\|_2 + \left\| \boldU\boldSigma^2 \right\|_{2,\infty}\\
        &\lesssim \zeta_{\text{op}}(\boldG)\sqrt{\frac{\mu K}{N}} + \sigma_1(\cG)\|\boldU\|_{2,\infty}\\
        &\lesssim \sigma_1(\cG)\sqrt{\frac{\mu K}{N}},
    \end{align*}
    where the second to last inequality comes from Lemma \ref{lemma:leftbound1}, and the last inequality comes from $\zeta_{\text{op}}(\boldG)\lesssim \sigma_1(\cG)$.

    Combining with Lemma \ref{lemma:individualbound}, one has,
    \begin{align*}
        \beta_{1,1} \lesssim \frac{1}{\sigma_K(\cG)^2}\zeta_{\text{op}}(\boldG)\sigma_1(\cG)\sqrt{\frac{\mu K}{N}} \lesssim \kappa_{\max}^2\frac{\zeta_{\text{op}(\boldG)}}{\sigma_K(\cG)}\sqrt{\frac{\mu K}{N}}.
    \end{align*}

    For $\beta_{1,2}$, following Lemma \ref{lemma:GUbound}, one has
    \begin{align*}
        \beta_{1,2} &\leq \frac{1}{\sigma_K(\cG)}\left\|\boldG\left(\hat{\boldU}\boldH-\boldU\right)\right\|_{2,\infty} \\
        &\lesssim \kappa_{\max}^2\frac{\zeta_{\text{op}}(\boldR) \zeta_{\text{op}}(\boldG)}{\sigma_K(\cG)^2}\sqrt{\frac{\mu K}{N}} + \frac{\zetaop(\boldR)}{\sigma_K(\cG)}\left\|\hat{\boldU}\boldH-\boldU\right\|_{2,\infty}\\
        &+\alpha\left[ \kappa_{\max}^2\frac{\zeta_{\text{op}}(\boldX) \zeta_{\text{op}}(\boldG)}{\sigma_K(\cG)^2}\sqrt{\frac{\mu K}{N}} + \frac{\zetaop(\boldX)}{\sigma_K(\cG)}\left\|\hat{\boldU}\boldH-\boldU\right\|_{2,\infty} + \sqrt{\frac{\mu K}{N}} \frac{\zetaop(\boldG)^2}{\sigma_k(\cG)^2} \right]\\
        &+\alpha^2\left[ \frac{\zeta_{\text{op}}(\boldR)\zetaop(\boldX)}{\sigma_K(\cG)^2}\sqrt{\frac{\mu K}{N}} + \frac{\zetaop(\boldR)\zetaop(\boldX)}{\sigma_K(\cG)^2} \left\|\hat{\boldU}\boldH-\boldU\right\|_{2,\infty} + \sqrt{\frac{\mu K}{N}} \frac{\zetaop(\boldG)^2}{\sigma_k(\cG)^2} \right].
    \end{align*}

    Hence, for $\beta_1$, we have,
    \begin{align*}
        \beta_1 &\leq \beta_{1,1} + \beta_{1,2}\\
        &\lesssim \kappa_{\max}^2\frac{\zeta_{\text{op}(\boldG)}}{\sigma_K(\cG)}\sqrt{\frac{\mu K}{N}} + \frac{\zetaop(\boldR)}{\sigma_K(\cG)}\left\|\hat{\boldU}\boldH-\boldU\right\|_{2,\infty}\\
        &+\alpha\left[ \kappa_{\max}^2\frac{\zeta_{\text{op}}(\boldG)}{\sigma_K(\cG)}\sqrt{\frac{\mu K}{N}} + \frac{\zetaop(\boldX)}{\sigma_K(\cG)}\left\|\hat{\boldU}\boldH-\boldU\right\|_{2,\infty} + \sqrt{\frac{\mu K}{N}} \frac{\zetaop(\boldG)^2}{\sigma_k(\cG)^2} \right]\\
        &+\alpha^2\left[ \frac{\zeta_{\text{op}}(\boldR)\zetaop(\boldX)}{\sigma_K(\cG)^2}\sqrt{\frac{\mu K}{N}} + \frac{\zetaop(\boldR)\zetaop(\boldX)}{\sigma_K(\cG)^2} \left\|\hat{\boldU}\boldH-\boldU\right\|_{2,\infty} + \sqrt{\frac{\mu K}{N}} \frac{\zetaop(\boldG)^2}{\sigma_k(\cG)^2} \right],
    \end{align*}
    with provision that $\zetaop(\boldX)\ll \sigma_K(\cX)^2 \lesssim \sigma_K(\cG)$.

    Then, for $\beta_2$, with the help of Lemma \ref{lemma:leftbound1},
    \begin{align*}
        \beta_2 \leq \frac{1}{\sigma_K(\cG)} \leq \zetaop(\cG) \sqrt{\frac{\mu K}{N}}.
    \end{align*}

    Therefore, we obtain
    \begin{align*}
        \left\| \hat{\boldU}\boldH - \boldU \right\|_{2,\infty} &\lesssim \kappa_{\max}^2\frac{\zeta_{\text{op}}(\boldG)}{\sigma_K(\cG)}\sqrt{\frac{\mu K}{N}} + \frac{\zetaop(\boldR)}{\sigma_K(\cG)}\left\|\hat{\boldU}\boldH-\boldU\right\|_{2,\infty}\\
        &+\alpha\left[ \kappa_{\max}^2\frac{\zeta_{\text{op}}(\boldG)}{\sigma_K(\cG)}\sqrt{\frac{\mu K}{N}} + \frac{\zetaop(\boldX)}{\sigma_K(\cG)}\left\|\hat{\boldU}\boldH-\boldU\right\|_{2,\infty} + \sqrt{\frac{\mu K}{N}} \frac{\zetaop(\boldG)^2}{\sigma_k(\cG)^2} \right]\\
        &+\alpha^2\left[ \frac{\zeta_{\text{op}}(\boldR)\zetaop(\boldX)}{\sigma_K(\cG)^2}\sqrt{\frac{\mu K}{N}} + \frac{\zetaop(\boldR)\zetaop(\boldX)}{\sigma_K(\cG)^2} \left\|\hat{\boldU}\boldH-\boldU\right\|_{2,\infty} + \sqrt{\frac{\mu K}{N}} \frac{\zetaop(\boldG)^2}{\sigma_k(\cG)^2} \right]\\
        &\lesssim (1+\alpha+\alpha^2)\kappa_{\max}^2\frac{\zeta_{\text{op}}(\boldG)}{\sigma_K(\cG)} \sqrt{\frac{\mu K}{N}},
    \end{align*}
    where the last inequality comes from $\max\{\zetaop(\boldR),\zetaop(\boldX)\}\lesssim \zetaop(\boldG)\ll \sigma_K(\cG)$.

    Combined with Lemma \ref{lemma:spectralbound}, we arrive at
    \begin{align*}
        \left\|\hat{\boldU}\mathcal{O}-\boldU\right\|_{2,\infty} &\leq \left\| \hat{\boldU}\boldH - \boldU \right\|_{2,\infty} + \left\|\hat{\boldU}\left(\boldH-\mathcal{O}\right)\right\|\\
        &\lesssim (1+\alpha+\alpha^2)\kappa_{\max}^2\frac{\zeta_{\text{op}}(\boldG)}{\sigma_K(\cG)} \sqrt{\frac{\mu K}{N}} + \left(\|\boldU\|_{2,\infty} + \left\|\hat{\boldU}\mathcal{O}-\boldU\right\|_{2,\infty}\right)\left\| \boldH-\mathcal{O} \right\|\\
        &\lesssim (1+\alpha+\alpha^2)\kappa_{\max}^2\frac{\zeta_{\text{op}}(\boldG)}{\sigma_K(\cG)} \sqrt{\frac{\mu K}{N}} + \frac{\zetaop(\boldG)^2}{\sigma_K(\cG)^2}\sqrt{\frac{\mu K}{N}} + \frac{\zetaop(\boldG)^2}{\sigma_K(\cG)^2}\left\| \hat{\boldU}\mathcal{O}-\boldU \right\|_{2,\infty}\\
        &\lesssim (1+\alpha+\alpha^2)\kappa_{\max}^2\frac{\zeta_{\text{op}}(\boldG)}{\sigma_K(\cG)} \sqrt{\frac{\mu K}{N}}.
    \end{align*}    
        
\subsection{Proof of Theorem \ref{theorem:mmmbound}}
First, we present Theorem 3 of \citet{Gillis2014} for completeness.
\begin{lemma}[\citet{Gillis2014}, Theorem 3]
    \label{lemma:quote}
    Let $\boldM^{\prime}=\boldM+\mathbf{N}=\mathbf{W}\mathbf{B}+\mathbf{N}\in \mathbb{R}^{m\times n} $, where $\boldM=\mathbf{W}\mathbf{H}=\mathbf{W}[\mathbf{I}_r|\mathbf{B}^{\prime}]$, $\mathbf{W}\in\mathbb{R}^{m\times r} $, $\mathbf{B}\in \mathbb{R}^{r\times n}_+ $ and $\sum_{k=1}^rB_{kj}^{\prime}\leq 1 $, $\forall j$ and $r \geq 2$. Let $K(\mathbf{B})=\max_i \|\mathbf{B}(:,i)\|_2 $, and $\|\mathbf{N}(:,i)\|_2\leq \epsilon$ for all $i$ with
    \begin{equation*}
        \epsilon < \sigma_r(\mathbf{W})\min\left(\frac{1}{2\sqrt{r-1}},\frac{1}{4} \right)\left(1+80\frac{K(\mathbf{W})^2}{\sigma_r(\mathbf{W})^2} \right)^{-1},
    \end{equation*}
    and $J$ be the index set of cardinality $r$ extracted by sequential projection algorithm, where $\sigma_r(\mathbf{W})$ is the $r$th singular value of $\mathbf{W}$. Then there exists a permutation $\mathbf{P}$ of $\{1,2,\cdots, r\}$ such that
    \begin{equation*}
        \max_{1\leq j\leq r}\|\mathbf{M}^{\prime}(:,J(j))-\mathbf{W}(:,P(j))\|\leq \Bar{\epsilon} = \epsilon\left(1+80\frac{K(\mathbf{W})^2}{\sigma_r(\mathbf{W})^2} \right).
    \end{equation*}
\end{lemma}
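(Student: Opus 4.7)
The plan is induction on the step $k = 1, \ldots, r$ of the successive projection algorithm, establishing at each iteration that SPA selects a column within $\bar\epsilon$ of some previously unchosen vertex $W(:, P(k))$. In the noiseless case the key observation is that for any $\mathbf{h}$ in the scaled simplex $\{\mathbf{h} \geq 0 : \sum_k h_k \leq 1\}$, $\|W\mathbf{h}\|_2^2 = \mathbf{h}^\top W^\top W \mathbf{h}$ is strictly convex because $W^\top W \succeq \sigma_r(W)^2 I$, so its maximum over the simplex is attained only at a vertex and the gap between the vertex value and the next-largest attained value admits an explicit lower bound depending on $\sigma_r(W)$ and $K(W)$. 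Since $\|\mathbf{N}(:,i)\|_2 \leq \epsilon$, the perturbed norms $\|M'(:,j)\|_2$ differ from $\|M(:,j)\|_2$ by at most $\epsilon$, so whenever $\epsilon$ is small compared with this gap, SPA's first argmax must land within $\mathcal{O}(\epsilon/\sigma_r(W))$ of a true vertex.

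For the inductive step, after SPA selects $M'(:, J(1))$ and projects orthogonally to it via $I - \Pi_1$, the remaining problem is structurally of the same type with $r$ replaced by $r - 1$: the projected vertex matrix $(I - \Pi_1) W$ retains a controlled smallest nonzero singular value, the simplex constraint continues to govern the projected columns of $\mathbf{B}$, and the projected noise $(I - \Pi_1) \mathbf{N}$ is contractive, so each column is still bounded by $\epsilon$. One may then invoke the inductive hypothesis with a slightly perturbed vertex matrix obtained by replacing $W(:,P(1))$ with the true selected column. Iterating this over $r - 1$ steps, and carefully tracking that each projection is defined using a noisy column whose deviation from the true vertex is of order $\epsilon/\sigma_r(W)$, one finds that the perturbation propagates with a cumulative amplification factor of order $K(W)^2/\sigma_r(W)^2$. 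This is precisely the source of the factor $1 + 80 K(W)^2/\sigma_r(W)^2$ in the stated $\bar\epsilon$.

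The main obstacle is controlling this amplification in a way that prevents the induction from collapsing. In particular, one must verify that $\sigma_{r - k}$ of the projected vertex matrix does not degrade catastrophically after $k$ successive projections (since projecting out a noisy direction can tilt subsequent vertices), and that the perturbations introduced at earlier steps do not dominate the norm-gap computation at later steps. The hypothesis $\epsilon < \sigma_r(W)\min(1/(2\sqrt{r-1}),1/4)(1+80K(W)^2/\sigma_r(W)^2)^{-1}$ is calibrated exactly to rule out both failure modes simultaneously, and the explicit constant $80$ is an artefact of the (non-sharp) worst-case bookkeeping in the original Gillis--Vavasis argument rather than an optimal constant; any sufficiently large universal constant would close the induction.
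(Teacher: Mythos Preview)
The paper does not prove this lemma at all: it is quoted verbatim from \citet{Gillis2014} (their Theorem~3) and stated ``for completeness'' with no accompanying argument. Immediately after stating it, the paper moves on to Lemma~\ref{lemma:errorbound1}, which \emph{applies} the quoted result to the present setting.

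Your sketch is a reasonable high-level outline of the original Gillis--Vavasis inductive argument (induction on SPA steps, convexity of $\|W\mathbf{h}\|_2^2$ over the simplex forcing the argmax to a vertex, contractivity of projections, and geometric amplification of the perturbation across rounds). Since the paper provides no proof to compare against, there is nothing further to check here; if you want your write-up to be self-contained you would need to carry out the bookkeeping in full, but for the purposes of this paper a citation suffices.
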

\begin{lemma}
    \label{lemma:errorbound1}
    Let $S$ be the index set returned by Algorithm \ref{alg:gom}. Under Assumption \ref{assump:noise} and \ref{assump:information}, there exists a permutation matrix $\mathbf{P}\in \mathbb{R}^{K\times K} $ such that,
    \begin{equation*}
        \left\|\Hat{\boldU}_{S,:}-\mathbf{P}^{\top}\boldU_{S,:} \text{sgn}(\boldH)^\top\right\|_{2,\infty} \lesssim \kappa(\boldPi)^2 \epsilon,
    \end{equation*}
    with probability at least $1-O(d^{-10})$, where $\epsilon$ is defined in Theorem \ref{theorem:mmmbound}.
\end{lemma}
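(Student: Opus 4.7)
The statement reduces to a direct application of Gillis's robustness lemma for the successive projection algorithm (Lemma A.11), combined with the two-to-infinity bound from Theorem B.2. My plan is threefold: (i) rotate $\hat{\boldU}$ into its natural alignment with $\boldU$ and use that SPA is invariant under right-orthogonal rotation, so the index set $S$ is unchanged; (ii) plug the factorization $\boldU^\top=\boldU_{S,:}^\top\boldPi^\top$ into Lemma A.11 with $\boldU_{S,:}^\top$ playing the role of the vertex matrix and $\boldPi^\top$ the role of the column-stochastic coefficient matrix; (iii) translate the geometric constants in Gillis's conclusion into $\sigma_1(\boldPi)$ and $\sigma_K(\boldPi)$ via Lemma A.1, producing a final bound in terms of $\kappa(\boldPi)$.

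\textbf{Reduction.} Let $\tilde{\boldU}:=\hat{\boldU}\,\text{sgn}(\boldH)$. Since $\text{sgn}(\boldH)$ is the orthogonal polar factor of $\boldH$, it coincides with the Procrustes minimizer $\mathcal{O}_0$ of Theorem B.2, which therefore gives $\|\tilde{\boldU}-\boldU\|_{2,\infty}\lesssim\epsilon$. A one-line check shows each SPA iteration commutes with right-multiplication by an orthogonal matrix $\mathcal{O}$: row norms are preserved, and the deflation step satisfies $(\mathbf{Y}\mathcal{O})\bigl(\mathbf{I}_K-(\mathcal{O}^\top\boldsymbol{u})(\mathcal{O}^\top\boldsymbol{u})^\top\bigr)=\mathbf{Y}(\mathbf{I}_K-\boldsymbol{u}\boldsymbol{u}^\top)\mathcal{O}$. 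Hence running SPA on $\hat{\boldU}$ returns the same indices $S$ as running it on $\tilde{\boldU}$. Proposition 1 together with Assumption 1 writes $\boldU^\top=\boldU_{S,:}^\top\boldPi^\top$, where $\boldPi^\top$ is column-stochastic and contains an $\mathbf{I}_K$ block (the pure subjects). Applying Lemma A.11 with input $\tilde{\boldU}^\top$, vertex matrix $\boldU_{S,:}^\top$, and coefficient matrix $\boldPi^\top$, the relevant column-wise noise is $\max_i\|(\tilde{\boldU}-\boldU)_{i,:}\|_2=\|\tilde{\boldU}-\boldU\|_{2,\infty}\lesssim\epsilon$.

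\textbf{Conclusion and obstacle.} By Lemma A.1, $\boldPi^\top\boldPi=(\boldU_S\boldU_S^\top)^{-1}$, whence $\sigma_K(\boldU_{S,:}^\top)=\sigma_K(\boldU_{S,:})=1/\sigma_1(\boldPi)$ and the Gillis quantity $K(\boldU_{S,:}^\top)=\|\boldU_{S,:}\|_{2,\infty}\leq \sigma_1(\boldU_{S,:})=1/\sigma_K(\boldPi)$, so $K(\boldU_{S,:}^\top)^2/\sigma_K(\boldU_{S,:}^\top)^2\leq \kappa(\boldPi)^2$. The noise-admissibility condition of Lemma A.11 then reduces to $\epsilon\,\sigma_1(\boldPi)\sqrt{K}\,\kappa(\boldPi)^2\lesssim 1$, matching exactly the extra hypothesis $\kappa^2(\boldPi)\sqrt{K}\sigma_1(\boldPi)\epsilon\lesssim 1$ of Theorem 3.1. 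Gillis's conclusion then yields a $K\times K$ permutation $\mathbf{P}$ with $\|\tilde{\boldU}_{S,:}-\mathbf{P}^\top\boldU_{S,:}\|_{2,\infty}\leq \epsilon\bigl(1+80\kappa(\boldPi)^2\bigr)\lesssim \kappa(\boldPi)^2\epsilon$. Since $\tilde{\boldU}_{S,:}=\hat{\boldU}_{S,:}\text{sgn}(\boldH)$ and right-multiplication by the orthogonal matrix $\text{sgn}(\boldH)^\top$ preserves $\|\cdot\|_{2,\infty}$, this is equivalent to the claimed $\|\hat{\boldU}_{S,:}-\mathbf{P}^\top\boldU_{S,:}\text{sgn}(\boldH)^\top\|_{2,\infty}\lesssim \kappa(\boldPi)^2\epsilon$. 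The only real subtleties are identifying $\text{sgn}(\boldH)$ with $\mathcal{O}_0$ (automatic from the polar/Procrustes equivalence) and matching the Gillis admissibility bound to the assumed inequality on $\epsilon$; the remaining work is algebra on the constants.
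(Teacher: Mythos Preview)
Your proposal is correct and follows essentially the same route as the paper: both apply the Gillis--Vavasis robustness bound for SPA to the factorization $\boldU^\top=\boldU_{S,:}^\top\boldPi^\top$, invoke the $\|\hat{\boldU}\mathcal{O}_0-\boldU\|_{2,\infty}\lesssim\epsilon$ bound, and convert the geometric constants via $\boldPi^\top\boldPi=(\boldU_S\boldU_S^\top)^{-1}$ to land on $\kappa(\boldPi)^2\epsilon$. The only cosmetic difference is that you first rotate the data to $\tilde{\boldU}=\hat{\boldU}\,\text{sgn}(\boldH)$ and argue SPA is invariant under right-orthogonal rotation, whereas the paper absorbs the rotation directly into the vertex matrix by taking $\mathbf{W}=\text{sgn}(\boldH)\boldU_{S,:}^\top$ when applying Gillis; the two framings are equivalent.
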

\begin{proof}
    From Proposition \ref{prop:geom}, $\boldU=\boldPi \boldU_{S,:}$. Let $\mathbf{M}^{\prime}=\Hat{\boldU}^{\top}$, $\mathbf{M}=sgn(\boldH)\boldU^{\top}$, $\mathbf{W}=sgn(\boldH)\boldU_{S,:}^{\top}$, $\mathbf{B}=\boldPi^{\top}$. From Theorem \ref{theorem:leftbound} and the identity that $\mathcal{O}_0=sgn(\boldH)$, $\|\Hat{\boldU}-\boldU sgn(\boldH)^{\top}\|_{2,\infty}\lesssim \epsilon $ uniformly for probability at least $1-O(d^{-10})$. By Lemma \ref{lemma:quote},
    \begin{equation*}
        \max_{1\leq j\leq K}\left|\left|\left(\Hat{\boldU}_{S,:}^{\top}-sgn(\boldH)\boldU_{S,:}^{\top}\mathbf{P} \right)e_j \right|\right| \lesssim \epsilon \left(1+80 \frac{K(sgn(\boldH)\boldU_{S,:}^\top)^2}{\sigma_K(sgn(\boldH)\boldU_{S,:}^\top)^2} \right).
    \end{equation*}
    \indent Note that $K(sgn(\boldH)\boldU_{S,:}^\top)=\|\boldU_{S,:} sgn(\boldH)^\top\|_{2,\infty}\leq \|\boldU_{S,:}\| \|sgn(\boldH)\| =\sigma_1(\boldU_{S,:}) $. From Lemma \ref{lemma:keigenvaluebound}, $\sigma_K(sgn(\boldH)\boldU_{S,:}^\top) \geq \sigma_K(sgn(\boldH))\sigma_K(\boldU_{S,:})=\sigma_K(\boldU_{S,:}) $. Therefore, one has,
    \begin{equation*}
        \frac{K(sgn(\boldH)\boldU_{S,:}^\top)^2}{\sigma_K^2(sgn(\boldH)\boldU_{S,:}^\top)} \leq \kappa(\boldU_{S,:})^2 = \kappa(\boldPi)^2,
    \end{equation*}
    where the second equality comes from Lemma \ref{lemma:eigenvaluePiU}. Then, by taking transpose, it is shown that
    \begin{equation*}
        \left\|\Hat{\boldU}_{S,:}-\mathbf{P}^{\top}\boldU_{S,:} sgn(\boldH)^\top\right\|_{2,\infty} \lesssim \kappa(\boldPi)^2\epsilon.
    \end{equation*}
\end{proof}

\begin{lemma}
    \label{lemma:errorbound2}
    If Assumption \ref{assump:noise} and \ref{assump:information} are satisfied, 
    \begin{equation*}
        \left|\left|\boldU sgn(\boldH)^\top \left(\Hat{\boldU}_{S,:}^{-1} - (\mathbf{P}^{\top}\boldU_{S,:} sgn(\boldH)^\top)^{-1} \right)\right|\right|_{2,\infty} \lesssim \sigma_1(\boldPi)\kappa^2(\boldPi)\epsilon.
    \end{equation*}
    with probability at least $1-O(d^{-10})$.
\end{lemma}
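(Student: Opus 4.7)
My plan is to reduce the perturbation-of-inverse statement to the spectral-norm perturbation bound $\|\hat\boldU_{S,:}-\mathbf{P}^\top\boldU_{S,:}\,\mathrm{sgn}(\boldH)^\top\|$ already controlled by Lemma \ref{lemma:errorbound1}, through the standard identity $A^{-1}-B^{-1}=B^{-1}(B-A)A^{-1}$, and then simplify the left factor using Proposition \ref{prop:geom}. Set $A:=\hat\boldU_{S,:}$ and $B:=\mathbf{P}^\top\boldU_{S,:}\,\mathrm{sgn}(\boldH)^\top$; both are $K\times K$. Then
\[
  \boldU\,\mathrm{sgn}(\boldH)^\top\bigl(A^{-1}-B^{-1}\bigr) \;=\; \boldU\,\mathrm{sgn}(\boldH)^\top B^{-1}(B-A)A^{-1}.
\]
Since $\mathrm{sgn}(\boldH)$ and $\mathbf{P}$ are orthogonal, $\mathrm{sgn}(\boldH)^\top B^{-1}=\mathrm{sgn}(\boldH)^\top\mathrm{sgn}(\boldH)\boldU_{S,:}^{-1}\mathbf{P}=\boldU_{S,:}^{-1}\mathbf{P}$, and by Proposition \ref{prop:geom} we have $\boldU\boldU_{S,:}^{-1}=\boldPi$. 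Therefore the left factor collapses to $\boldPi\mathbf{P}$ and
\[
  \boldU\,\mathrm{sgn}(\boldH)^\top\bigl(A^{-1}-B^{-1}\bigr) \;=\; \boldPi\mathbf{P}(B-A)A^{-1}.
\]

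Next I take the $\|\cdot\|_{2,\infty}$ norm and use submultiplicativity with spectral norm on the right: $\|XY\|_{2,\infty}\leq \|X\|_{2,\infty}\|Y\|$. Since each row of $\boldPi$ is a probability vector, $\|\boldPi\mathbf{P}\|_{2,\infty}=\|\boldPi\|_{2,\infty}\leq 1$, and so
\[
  \bigl\|\boldU\,\mathrm{sgn}(\boldH)^\top(A^{-1}-B^{-1})\bigr\|_{2,\infty} \;\leq\; \|B-A\|\,\|A^{-1}\|.
\]
For $\|B-A\|$, I use that $B-A$ is $K\times K$, so $\|B-A\|\leq \sqrt{K}\,\|B-A\|_{2,\infty}\lesssim \sqrt{K}\,\kappa(\boldPi)^2\epsilon$ by Lemma \ref{lemma:errorbound1}; the $\sqrt{K}$ is absorbed into the already-$\sqrt{K}$-flavored definition of $\epsilon$ in Theorem \ref{theorem:mmmbound}.

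The main obstacle—and the place where I need to be careful—is bounding $\|A^{-1}\|=1/\sigma_K(A)$. I will apply Weyl's inequality:
\[
  \sigma_K(A) \;\geq\; \sigma_K(B)-\|A-B\| \;=\; \sigma_K(\boldU_{S,:})-\|A-B\|,
\]
because $\mathbf{P}$ and $\mathrm{sgn}(\boldH)$ are orthogonal. By Lemma \ref{lemma:eigenvaluePiU}, $\sigma_K(\boldU_{S,:})=1/\sigma_1(\boldPi)$. Under the assumption $\kappa(\boldPi)^2\sqrt{K}\,\sigma_1(\boldPi)\epsilon\lesssim 1$ imposed in Theorem \ref{theorem:mmmbound}, the perturbation $\|A-B\|\lesssim \sqrt{K}\kappa(\boldPi)^2\epsilon$ is at most a small multiple of $1/\sigma_1(\boldPi)$, so $\sigma_K(A)\gtrsim 1/\sigma_1(\boldPi)$ and hence $\|A^{-1}\|\lesssim \sigma_1(\boldPi)$. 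Plugging these bounds back in yields
\[
  \bigl\|\boldU\,\mathrm{sgn}(\boldH)^\top(A^{-1}-B^{-1})\bigr\|_{2,\infty} \;\lesssim\; \sigma_1(\boldPi)\,\kappa(\boldPi)^2\,\epsilon,
\]
which is exactly the claimed bound. The high-probability qualifier $1-O(d^{-10})$ is inherited directly from Lemma \ref{lemma:errorbound1}, since that is the only stochastic input to the argument.
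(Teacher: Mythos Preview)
Your argument follows the same skeleton as the paper's: use $A^{-1}-B^{-1}=B^{-1}(B-A)A^{-1}$, collapse $\boldU\,\mathrm{sgn}(\boldH)^\top B^{-1}$ to $\boldPi\mathbf{P}$ via Proposition~\ref{prop:geom}, and control $\|A^{-1}\|$ by Weyl's inequality against $\sigma_K(\boldU_{S,:})=1/\sigma_1(\boldPi)$ from Lemma~\ref{lemma:eigenvaluePiU}. The probabilistic bookkeeping is also correct.

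There is one unnecessary loss, and your justification for ignoring it is not valid. After arriving at $\|\boldPi\mathbf{P}(B-A)A^{-1}\|_{2,\infty}$, you apply $\|XY\|_{2,\infty}\le\|X\|_{2,\infty}\|Y\|$ with $X=\boldPi\mathbf{P}$ and $Y=(B-A)A^{-1}$, which forces you to convert $\|B-A\|_{2,\infty}$ into the spectral norm at a cost of $\sqrt{K}$. This factor cannot be ``absorbed into $\epsilon$'': $\epsilon$ is a fixed quantity, and $\sqrt{K}\,\epsilon$ is a strictly larger bound unless $K$ is treated as a constant, which the paper does not assume. The paper avoids this by peeling off $A^{-1}$ in spectral norm first, then applying the sharper inequality $\|\boldPi\mathbf{P}\,Y\|_{2,\infty}\le\|\boldPi\|_\infty\|Y\|_{2,\infty}$ with $Y=B-A$, where $\|\cdot\|_\infty$ is the operator $\infty$-norm (maximum row $\ell_1$-sum). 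Since each row of $\boldPi$ is a probability vector, $\|\boldPi\|_\infty=1$, and one lands directly on $\|B-A\|_{2,\infty}\|A^{-1}\|\lesssim\kappa(\boldPi)^2\epsilon\cdot\sigma_1(\boldPi)$ with no $\sqrt{K}$ loss. Swap in that inequality and your proof matches the claimed bound.
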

\begin{proof}
    For notational simplicity, define $\mathbf{F}=sgn(\boldH)^\top$, and $\Tilde{\boldU}_{S,:}=\mathbf{P}^{\top} \boldU_{S,:} \mathbf{F}$, then,
    \begin{align}
        \label{eqn:Ubound2all}
        \begin{aligned}
            &\left|\left|\boldU sgn(\boldH)^\top\left(\Hat{\boldU}_{S,:}^{-1}-(\mathbf{P}^{\top}\boldU_{S,:} sgn(\boldH)^\top)^{-1} \right)\right|\right|_{2,\infty}\\ &= \|\boldU\mathbf{F}(\Hat{\boldU}_{S,:}^{-1}-\Tilde{\boldU}_{S,:}^{-1})\|_{2,\infty}\\
            &= \|\boldU\mathbf{F}\Tilde{\boldU}_{S,:}^{-1}\left(\Tilde{\boldU}_{S,:}-\Hat{\boldU}_{S,:} \right)\Hat{\boldU}_{S,:}^{-1}\|_{2,\infty}\\
            &\leq \|\boldU\mathbf{F}\mathbf{F}^{-1}\boldU_{S,:}^{-1}\mathbf{P} \left(\Tilde{\boldU}_{S,:}-\Hat{\boldU}_{S,:} \right)\|_{2,\infty} \|\Hat{\boldU}_{S,:}^{-1}\|\\
            &=\|\boldPi \mathbf{P}(\Tilde{\boldU}_{S,:}-\Hat{\boldU}_{S,:})\|_{2,\infty} \|\Hat{\boldU}_{S,:}^{-1}\|\\
            &\leq \|\boldPi\|_{\infty} \|\Hat{\boldU}_{S,:}-\mathbf{P}^{\top}\boldU_{S,:}sgn(\boldH)^\top\|_{2,\infty} \|\Hat{\boldU}_{S,:}^{-1}\|\\
            &\leq \|\Hat{\boldU}_{S,:}-\mathbf{P}^{\top} \boldU_{S,:} sgn(\boldH)^\top\|_{2,\infty} \|\Hat{\boldU}_{S,:}^{-1}\|\lesssim \kappa^2(\boldPi)\epsilon\|\Hat{\boldU}_{S,:}^{-1}\|
        \end{aligned}
    \end{align}
    The first and second inequality follows Proposition 5.6 in \cite{Cape2018}. The third inequality stems from the fact that all entries of $\boldPi$ lies in $[0,1]$ and the fourth inequality follows from Lemma \ref{lemma:errorbound1}. Then, denote $\Hat{\sigma}_i$ as the $i$th singular value of $\Hat{\boldU}_S$. \\
    
    For $\|\hat{\boldU}_S^{-1}\| $, by Weyl's inequality and Lemma \ref{lemma:errorbound1},
    \begin{align*}
        |\Hat{\sigma}_i-\sigma_i(\boldU_{S,:})|&\leq \|\Hat{\boldU}_{S,:}-\mathbf{P}^{\top}\boldU_{S,:}sgn(\boldH)^\top\|
        \lesssim \kappa^2(\boldPi)\epsilon,
    \end{align*}
    with probability at least $1-O(d^{-10})$, where $\hat{\sigma}_i$ is the $i$th largest singular vector of $\hat{\boldU}_{S,:}$. Hence,
    \begin{align}
        \label{eqn:Usigmabound}
        \begin{aligned}
            \Hat{\sigma}_K&\geq \frac{1}{\sqrt{\lambda_1(\boldPi^{\top}\boldPi)}} \left(1-O\left(\kappa(\boldPi^{\top}\boldPi) \sqrt{\lambda_1(\boldPi^{\top}\boldPi)}\epsilon \right) \right)\\
            \Hat{\sigma}_1 &\leq \frac{1}{\sqrt{\lambda_K(\boldPi^{\top}\boldPi)}} \left(1+O\left(\kappa(\boldPi^{\top}\boldPi)\sqrt{\lambda_K(\boldPi^{\top}\boldPi)}\epsilon \right) \right)
        \end{aligned}        
    \end{align}
    Then, combining Eq (\ref{eqn:Usigmabound}) and $\|\Hat{\boldU}_{S,:}^{-1}\| = 1/\Hat{\sigma}_K$
    \begin{equation*}
        \left\|\Hat{\boldU}_{S,:}^{-1}\right\| = \sqrt{\lambda_1(\boldPi^{\top}\boldPi)}\left(1-O\left(\kappa(\boldPi^{\top}\boldPi)\sqrt{\lambda_1(\boldPi^{\top}\boldPi)}\epsilon \right) \right) = O\left(\sqrt{\lambda_1(\boldPi^{\top}\boldPi)} \right).
    \end{equation*}
    The last equality stems from the Assumption in Theorem \ref{theorem:mmmbound}. Plugging the above inequalities into Equation (\ref{eqn:Ubound2all}) and we obtain
    \begin{align*}
        \left|\left|\boldU(\boldU^{\top}\Hat{\boldU}) \left(\Hat{\boldU}_{S,:}^{-1}-(\mathbf{P}^{\top}\boldU_{S,:}(\boldU^{\top}\Hat{\boldU}))^{-1} \right) \right| \right|_{2,\infty} &\lesssim \kappa(\boldPi)\epsilon \left\|\Hat{\boldU}_{S,:}^{-1}\right\| \\
        &\lesssim \sqrt{\lambda_1(\boldPi^{\top}\boldPi)}\kappa^2(\boldPi)\epsilon,
    \end{align*}
    with probability greater than $1-O(d^{-10})$.
\end{proof}

\indent Then, we are able to prove Theorem \ref{theorem:mmmbound}. Recall that $\Hat{\boldPi}=\Hat{\boldU}\Hat{\boldU}_{S,:}^{-1} $. Then,
\begin{align*}
    \left\|\Hat{\boldPi}-\boldPi \mathbf{P}\right\|_{2,\infty}&=\|\Hat{\boldU} \Hat{\boldU}_{S,:}^{-1} -\boldU\boldU_{S,:}^{-1}\mathbf{P} \|_{2,\infty}\\
    &\leq \left\|(\Hat{\boldU}-\boldU sgn(\boldH)^\top) \Hat{\boldU}_{S,:}^{-1}\right\|_{2,\infty}+ \left\|\boldU sgn(\boldH)^\top \left(\Hat{\boldU}_{S,:}^{-1}-(\mathbf{P}^{\top}\boldU_{S,:}sgn(\boldH)^\top)^{-1} \right)\right\|\\
    &\lesssim \left\|\Hat{\boldU}-\boldU sgn(\boldH)^\top\right\|_{2,\infty} \left\|\Hat{\boldU}_{S,:}^{-1}\right\| + \sigma_1(\boldPi) \kappa(\boldPi)^2\epsilon \\
    &\lesssim \epsilon \sigma_1(\boldPi) + \sigma_1(\boldPi) \kappa(\boldPi)^2\epsilon\\
    &\lesssim \sigma_1(\boldPi)\kappa(\boldPi)^2 \epsilon.
\end{align*}

\subsection{Proof of Corollary \ref{coro:bound w.r.t. alpha}}
First, with Lemma \ref{lemma:small_vals}, we have,
\begin{equation*}
    \left\|\hat{\boldU}\mathcal{O}_0-\boldU\right\|_{2,\infty} \lesssim (1+\alpha+\alpha^2)\frac{\kappa_{\max}^2}{\sigma_K(\cR)^2+\alpha\sigma_K(\cX)^2} \sqrt{\frac{\mu K}{N}}\zetaop(\boldG).
\end{equation*}
Combined with Theorem \ref{theorem:mmmbound}, we can see that $f(\alpha)$ is the only term concerned with $\alpha$ in the upper bound for the mixed membership estimation. 

By taking the derivative with respect to $\alpha$, one has,
\begin{align*}
f^{\prime}(\alpha) &= \frac{(2\alpha+1)\zetaop(\boldR)+ \alpha\zetaop(\boldX)+(\alpha^2+\alpha+1)\zetaop(\boldX)}{(\sigma_K(\cR)^2+\alpha\sigma_K(\cX)^2)}\\
&-\frac{(1+\alpha+\alpha^2)\sigma_K(\cX)^2}{(\sigma_K(\cR)^2+\alpha\sigma_K(\cX)^2)^2}.
\end{align*}

Then, when $f^{\prime}(0)<0$, there exists $\alpha>0$, s.t. $f(\alpha)<f(0)$. Substituing $\alpha=0$ into $f^{\prime}(\alpha)$, we can obtain that $f^{\prime}(0)<0$ is equivalent to
\begin{equation*}
    \left[\zetaop(\boldR)+\zetaop(\boldX)\right]\sigma_K(\cR)^2-\zetaop(\boldR)\sigma_K(\cX)^2 < 0.
\end{equation*}

Similarly, one can see that
\begin{equation*}
    f^{\prime}(1) = \frac{3\left[\zetaop(\boldR)\sigma_K(\cR)^2+2\zetaop(\boldX)\sigma_K(\cR)^2+\zetaop(\boldX)\sigma_K(\cX)^2\right]}{(\sigma_K(\cR)^2+\sigma_K(\cX)^2)^2} > 0.
\end{equation*}
Therefore, there exists $\alpha_0<1$ such that the $f(\alpha_0)<f(1)$.

\section{Proof of estimation guarantee for Theta}
\subsection{Some Key Lemmas}
In this section, we present some key technical lemmas to prove the estimation guarantee for $\hat{\boldTheta}$.
\begin{lemma}
    \label{lemma:sigma2}
    Instate the assumptions of Theorem \ref{theorem:leftbound}, and $\left\|\mathbf{V}\right\|_{2,\infty} \leq \sqrt{{\mu K}/{(J+W)}}$
    with probability at least $1-O(d^{-10})$, we have,
    \begin{equation*}
        \left\| \mathcal{O}_0^\top\hat{\boldSigma}^2\mathcal{O}_0 - \boldSigma^2 \right\| \lesssim \kappa_{\max}^2\frac{\zetaop(\boldG)^2}{\sigma_K(\cG)} + \kappa_{\max}^2\sqrt{\frac{\mu K}{N}} \zetaop(\boldG).
    \end{equation*}
\end{lemma}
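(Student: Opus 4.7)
The plan is to reduce the claim to a perturbation identity and then bound three pieces of different orders. Since $\boldG$ is symmetric and $\hat{\boldU}$ collects its top-$K$ eigenvectors, $\hat{\boldSigma}^2 = \hat{\boldU}^\top\boldG\hat{\boldU}$, so with $\bar{\boldU} := \hat{\boldU}\mathcal{O}_0$ the target quantity rewrites as $\bar{\boldU}^\top \boldG \bar{\boldU} - \boldU^\top \cG \boldU$. I would split this as $T_1 + T_2$ where $T_1 := \bar{\boldU}^\top\cG\bar{\boldU} - \boldU^\top\cG\boldU$ captures the misalignment between $\bar{\boldU}$ and $\boldU$, and $T_2 := \bar{\boldU}^\top(\boldG-\cG)\bar{\boldU}$ captures the fluctuation between $\boldG$ and $\cG$.

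For $T_1$, I would substitute $\cG = \boldU\boldSigma^2\boldU^\top$ and use the Procrustes SVD characterization of $\mathcal{O}_0$: if $\hat{\boldU}^\top\boldU = \mathcal{U}\Sigma_0\mathcal{V}^\top$ is the SVD, then $\mathcal{O}_0 = \mathcal{U}\mathcal{V}^\top$ and $\bar{\boldU}^\top\boldU = \mathcal{V}\Sigma_0\mathcal{V}^\top$ is symmetric PSD with eigenvalues $\cos\theta_i$. Hence $\|\mathbf{I}_K - \bar{\boldU}^\top\boldU\| \leq \|\sin\mathbf{\Psi}(\hat{\boldU},\boldU)\|^2 \lesssim \zetaop(\boldG)^2/\sigma_K(\cG)^2$ by Theorem \ref{theorem:leftbound}. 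Expanding $T_1$ in powers of $\mathbf{I}_K - \bar{\boldU}^\top\boldU$ and using $\sigma_1(\cG)/\sigma_K(\cG) \leq \kappa_{\max}^2$ (from Lemma \ref{lemma:small_vals} and the definition of $\kappa_{\max}$) yields $\|T_1\| \lesssim \kappa_{\max}^2 \zetaop(\boldG)^2/\sigma_K(\cG)$, matching the first term of the bound.

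For $T_2$, I would expand $\bar{\boldU} = \boldU + (\bar{\boldU}-\boldU)$, so $T_2 = \boldU^\top(\boldG-\cG)\boldU + \text{(two cross terms)} + (\bar{\boldU}-\boldU)^\top(\boldG-\cG)(\bar{\boldU}-\boldU)$. The cross and quadratic remainders are controlled by $\|\bar{\boldU}-\boldU\|\,\|\boldG-\cG\| \lesssim \zetaop(\boldG)^2/\sigma_K(\cG)$ via Theorem \ref{theorem:leftbound} and Lemma \ref{lemma:wholebound}, and thus are absorbed into the first term of the target. Splitting the leading piece as $\boldU^\top(\boldG-\cG)\boldU = \boldU^\top\di(\boldG-\cG)\boldU + \boldU^\top\od(\boldG-\cG)\boldU$, the diagonal contribution satisfies $\|\boldU^\top\di(\boldG-\cG)\boldU\| \leq \|\di(\boldG-\cG)\| \lesssim \kappa_{\max}^2\sqrt{\mu K/N}\,\zetaop(\boldG)$ directly by the second assertion of Lemma \ref{lemma:wholebound}, supplying the second term of the target.

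The main obstacle is the off-diagonal bilinear form $\boldU^\top\od(\boldG-\cG)\boldU$. Because HeteroPCA leaves off-diagonals untouched, $\od(\boldG-\cG) = \od(\boldE^R\cR^\top + \cR\boldE^{R\top} + \boldE^R\boldE^{R\top}) + \alpha\,\od(\boldE^X\cX^\top + \cX\boldE^{X\top} + \boldE^X\boldE^{X\top})$. For the linear cross terms, I would use the identity $\cR^\top\boldU = \mathbf{V}_{1:J,:}\boldSigma$ (reading off the block decomposition $\cL = \boldU\boldSigma\mathbf{V}^\top$) to rewrite $\boldU^\top\boldE^R\cR^\top\boldU$ as $(\boldU^\top\boldE^R\mathbf{V}_{1:J,:})\boldSigma$ and then apply a Bernstein-type bound on the doubly-incoherent bilinear form $\boldU^\top\boldE^R\mathbf{V}_{1:J,:}$, exploiting both $\|\boldU\|_{2,\infty}\lesssim \sqrt{\mu K/N}$ (Lemma \ref{lemma:incoherence}) and the joint incoherence $\|\mathbf{V}\|_{2,\infty}\lesssim \sqrt{\mu K/(J+W)}$ assumed in the lemma. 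The quadratic terms $\boldU^\top\od(\boldE^R\boldE^{R\top})\boldU$ are mean zero (since $\mathbb{E}[\boldE^R\boldE^{R\top}]$ is diagonal), and I would bound them via a Hanson--Wright-type inequality applied to $\sum_{i\neq j,\,l} u_{ik_1}u_{jk_2}E^R_{il}E^R_{jl}$; careful bookkeeping of index pairings together with the cancellation $\sum_i u_{ik_1}u_{ik_2} = \delta_{k_1k_2}$ keeps the per-entry variance at order $\sigma(\boldR)^4 J$. Combining these estimates with their covariate-block analogues and matching against $\zetaop(\boldR) + \alpha\,\zetaop(\boldX) = \zetaop(\boldG)$ completes the bound. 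The hardest step is keeping the quadratic off-diagonal estimate sharp enough to fit under $\kappa_{\max}^2\sqrt{\mu K/N}\,\zetaop(\boldG)$; this is precisely where the joint incoherence on $\mathbf{V}$ and the signal-balance condition $\sigma(\boldR)\sqrt{J\log d_R}\asymp\sigma(\boldX)\sqrt{W\log d_X}$ become essential.
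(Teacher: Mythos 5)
Your argument is correct in outline and reaches the stated bound, but it routes the outer decomposition differently from the paper. Using $\hat{\boldSigma}^2=\hat{\boldU}^\top\boldG\hat{\boldU}$ to rewrite the target as $\bar{\boldU}^\top\boldG\bar{\boldU}-\boldU^\top\cG\boldU$ with $\bar{\boldU}=\hat{\boldU}\mathcal{O}_0$, and then splitting into the alignment term $T_1$ and the fluctuation term $T_2$, is cleaner than the paper's three-term split $\|\mathcal{O}_0^\top\hat{\boldSigma}^2\mathcal{O}_0-\boldH^\top\hat{\boldSigma}^2\boldH\|+\|\boldH^\top\hat{\boldSigma}^2\boldH-\boldU^\top\boldG\boldU\|+\|\boldU^\top\boldG\boldU-\boldSigma^2\|$: your $T_1$, controlled via $\|\mathbf{I}_K-\bar{\boldU}^\top\boldU\|\le\|\sin\boldsymbol{\Psi}\|^2$ and $\sigma_1(\cG)/\sigma_K(\cG)\le\kappa_{\max}^2$, simultaneously absorbs the paper's $\boldH$-versus-$\mathcal{O}_0$ discrepancy and its residual $\boldU^\top\hat{\boldU}_{\perp}\hat{\boldsymbol{\Lambda}}_{\perp}\hat{\boldU}_{\perp}^\top\boldU$ term, so you never need to track the orthogonal complement. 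The hard core is identical in both arguments: bounding $\boldU^\top(\boldG-\cG)\boldU$, which the paper splits into a diagonal correction (Lemma \ref{lemma:wholebound}), linear cross terms $\boldU^\top\boldE\cL^\top\boldU=\boldU^\top\boldE\mathbf{V}\boldSigma$, and the hollowed quadratic noise $\boldU^\top\od(\boldE^R\boldE^{R\top})\boldU$, importing the last two concentration bounds from \citet[Lemma 27 and Eq.~(C.11)]{Yan2021} rather than proving them. The one thin spot in your sketch is exactly where you flag difficulty: if you bound the quadratic off-diagonal form entrywise by a Hanson--Wright argument and then pass to the operator norm of the $K\times K$ matrix by a union bound, you can pick up an extra factor of $\sqrt{K}$ relative to the target $\sqrt{\mu K/N}\,\zetaop(\boldR)$; the decoupling-plus-matrix-Bernstein treatment of \citet{Yan2021} is what delivers the $\sqrt{K}$ (rather than $K$) dependence, so you should invoke or reproduce that rather than the scalar route. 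With that substitution your proof goes through and is, if anything, a slightly more economical organization of the same estimates.
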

\begin{proof}
    The following proof is adapted from \citet[Lemma 27]{Yan2021}. First, we can decompose the left hand side by,
    \begin{align*}
        \left\| \mathcal{O}_0^\top\hat{\boldSigma}^2\mathcal{O}_0 - \boldSigma^2 \right\| \leq \underbrace{\left\| \mathcal{O}_0^\top \hat{\boldSigma}^2\mathcal{O}_0 - \boldH^\top \hat{\boldSigma}^2\boldH \right\|}_{\beta_1} +\underbrace{\left\| \boldH^\top \hat{\boldSigma}^2\boldH - \boldU^\top \boldG\boldU \right\|}_{\beta_2} + \underbrace{\left\|\boldU^\top \boldG\boldU - \boldSigma^2 \right\|}_{\beta_3}.
    \end{align*}

    In the following, we shall bound each of the three terms.

    For $\beta_1$,
    \begin{align*}
        \beta_1 &\leq \left\|\left( \boldH - \mathcal{O}_0 \right)^\top \hat{\boldSigma}^2 \boldH\right\| + \left\| \mathcal{O}_0^\top \hat{\boldSigma}^2 \left( \boldH - \mathcal{O}_0 \right) \right\|\\
        &\leq \left\|\boldH - \mathcal{O}_0\right\| \left\|\hat{\boldSigma}\right\|^2 \left(\left\| \boldH \right\| + \left\| \mathcal{O}_0 \right\|\right)\\
        &\lesssim \sigma_1(\cG)^2\frac{\zetaop(\boldG)^2}{\sigma_K(\cG)^2}\\
        &\lesssim \frac{\kappa_{\max}^2\zetaop(\boldG)^2}{\sigma_K(\cG)},
    \end{align*}
    where we invoke Lemma \ref{lemma:spectralbound} in the penultimate inequality, and 
    \begin{equation*}
        \left\| \hat{\boldSigma}^2 \right\| \leq \left\| \boldSigma^2 \right\| + \left\| \boldG - \cG \right\| \lesssim \sigma_1(\cG)^2,
    \end{equation*}
    where we apply Lemma \ref{lemma:wholebound}.

    For $\beta_2$, following the same treatment as in \citet[Lemma 27]{Yan2021}, we have
    \begin{align*}
        \beta_2 &= \left\| \boldU^\top \hat{\boldU}_{\perp} \hat{\mathbf{\Lambda}}_{\perp} \hat{\boldU}_{\perp}^{\top} \boldU \right\|\\
        &\leq \left\| \boldU^\top\hat{\boldU}_{\perp} \right\|^2 \left\|\hat{\mathbf{\Lambda}}_{\perp}\right\|\\
        &= \left\| \hat{\boldU}\hat{\boldU}^\top - \boldU\boldU^\top \right\|^2 \left\|\hat{\mathbf{\Lambda}}_{\perp}\right\|\\
        &\lesssim \frac{\zetaop(\boldG)^3}{\sigma_K(\cG)^2},
    \end{align*}
    where by Weyl's inequality and Lemma \ref{lemma:wholebound},
    \begin{equation*}
        \left\|\hat{\mathbf{\Lambda}}_{\perp}\right\| \leq \lambda_{K+1} \left( \cG \right) + \left\| \boldG - \cG \right\| \lesssim \zetaop(\boldG).
    \end{equation*}

    For $\beta_3$, we can similarly decompose $\boldG$ as,
    \begin{align*}
        \boldG &= \od\left[ \left( \mathcal{L} + \boldE^L \right)\left(\mathcal{L} + \boldE^L\right)^\top \right] + \di\left(\cG\right) + \di\left( \boldG - \cG \right)\\
        &= \cG + \di\left(\boldG - \cG\right) + \boldE^R\cR^\top + \cR\boldE^{R\top} + \od\left(\boldE^R\boldE^{R\top}\right) - \di\left(\boldE^R\cR^\top + \cR\boldE^{R\top}\right)\\
        &+ \alpha \left[ \boldE^X\cX^\top + \cX\boldE^{X\top} + \od\left(\boldE^X\boldE^{X\top}\right) - \di\left(\boldE^X\cX^\top\right) \right].
    \end{align*}

    Then, we have,
    \begin{align*}
        \left\|\boldU^\top\boldG\boldU - \boldSigma^2\right\| &\leq \underbrace{\left\|\boldU^\top \left(\boldE^L\mathcal{L}^\top + \mathcal{L}\boldE^{R\top}\right) \boldU\right\|}_{J_1} + \underbrace{ \left\|\boldU^\top \od \left( \boldE^R \boldE^{R\top} \right) \boldU \right\|}_{J_2}\\ &+ \underbrace{\left\|\boldU^\top \di\left( \boldG-\cG \right)\boldU\right\|}_{J_3} + \underbrace{\left\|\boldU^\top \di\left( \boldE^R\cR^\top + \cR\boldE^{R\top} \right)\boldU\right\|}_{J_4}\\
        &+ \alpha\left[ \left\|\boldU^\top \od \left( \boldE^X \boldE^{X\top} \right) \boldU\right\| + \left\|\boldU^\top \di\left( \boldE^X\cX^\top + \cX\boldE^{X\top} \right)\boldU\right\|\right].
    \end{align*}
    In the sequel, we will bound the $J_1,\dots,J_4$ respectively, and the rest can be bounded similarly.

    For $J_1$, one has,
    \begin{equation*}
        \left\| \boldU^\top \boldE \mathcal{L}^\top \boldU \right\| = \left\| \boldU^\top \boldE \mathbf{V}\boldSigma \right\| \leq \sigma_1(\cG)\left\| \mathbf{U}^\top \mathbf{E}\mathbf{V} \right\|.
    \end{equation*}
    Then, by Eqn (C.11) in \citet{Yan2021}, with probability at least $1-O(d^{-10})$,
    \begin{align*}
    \left\| \boldU\boldE\mathbf{V} \right\| &\lesssim \max\{\sigma(\boldR), \sqrt{\alpha}\sigma(\boldX)\}\sqrt{K\log d} + \frac{\max\{B_R, \sqrt{\alpha} B_X\}\mu K\log d}{\sqrt{N\max\{J,W\}}} \\
    &\lesssim \max\{\sigma(\boldR), \sqrt{\alpha}\sigma(\boldX)\} \left( \sqrt{K\log d} + \frac{\mu K\sqrt{\log d}}{\sqrt[4]{N\max\{J,W\}}} \right),
    \end{align*}
    where we invoke the definition of $B_R$ and $B_X$ in the second inequality.

    Therefore,
    \begin{align*}
        J_1 &\leq 2\left\|\boldU^\top \boldE \mathcal{L}^\top \boldU\right\|\\
        &\leq 2\sigma_1(\cG) \left\| \boldU\boldE\mathbf{V} \right\|\\
        &\lesssim \sigma_1(\cG)\max\{\sigma(\boldR), \sqrt{\alpha}\sigma(\boldX)\} \left( \sqrt{K\log d} + \frac{\mu K\sqrt{\log d}}{\sqrt[4]{N\max\{J,W\}}} \right)\\
        &\lesssim \sqrt{\frac{\mu K}{N}}\zetaop(\boldG).
    \end{align*}

    For $J_2$, note that the same procedures in \citet[Lemma 27]{Yan2021} applies here. The only difference is that $\boldU$ is the left singular subspace of $\cG$ instead of $\cR$. But this does not change the proof. Therefore, with probability exceeding $1-O(d_R^{-10})$,
    \begin{equation*}
        J_2 \lesssim \sqrt{\frac{K}{N}}\zetaop(\boldR).
    \end{equation*}
    
    For $J_3$, we have
    \begin{equation*}
        J_3\leq \left\| \di\left(\boldG - \cG \right) \right\| \lesssim \kappa_{\max}^2 \sqrt{\frac{\mu K}{N}} \zetaop(\boldG),
    \end{equation*}
    where the second inequality comes from Lemma \ref{lemma:wholebound}.

    For $J_4$, we have,
    \begin{align*}
        J_4 \leq 2\max_{i\in [N]}\left| \sum_{j=1}^J E_{ij}^R\cR_{ij} \right|.
    \end{align*}
    By the same procedures as in \citet[Lemma 27]{Yan2021}, we have,
    \begin{equation*}
        J_4 \lesssim \frac{\sqrt{\mu K}}{N} \zetaop(\boldR).
    \end{equation*}
    
    Combining all these terms, we have,
    \begin{equation*}
        \beta_3 \lesssim \sqrt{\frac{\mu K}{N}}\zetaop(\boldG) + \sqrt{\frac{K}{N}} \zetaop(\boldG) + \kappa_{\max}^2\sqrt{\frac{\mu K}{N}}\zetaop(\boldG) + \frac{\sqrt{\mu K}}{N}\zetaop(\boldG) \lesssim \kappa_{\max}^2\sqrt{\frac{\mu K}{N}} \zetaop(\boldG),
    \end{equation*}
    as long as $N\gtrsim \mu K$.

    Therefore, combining $\beta_1$, $\beta_2$, $\beta_3$, we arrive at,
    \begin{align*}
        \left\| \mathcal{O}_0^\top\hat{\boldSigma}^2\mathcal{O}_0 - \boldSigma^2 \right\| &\lesssim \beta_1 + \beta_2 + \beta_3\\
        &\lesssim \kappa_{\max}^2\frac{\zetaop(\boldG)^2}{\sigma_K(\cG)} + \frac{\zetaop(\boldG)^3}{\sigma_K(\cG)^2} + \kappa_{\max}^2\sqrt{\frac{\mu K}{N}} \zetaop(\boldG)\\
        &\lesssim \kappa_{\max}^2\frac{\zetaop(\boldG)^2}{\sigma_K(\cG)} + \kappa_{\max}^2\sqrt{\frac{\mu K}{N}} \zetaop(\boldG),
    \end{align*}
    given $\zetaop(\boldG)\ll \sigma_K(\cG)$.
    
\end{proof}

\begin{lemma}
    \label{lemma:sigma}
    Instate the assumptions of Lemma \ref{lemma:sigma2}, with probability at least $1-O(d^{-10})$, we have,
    \begin{equation*}
        \left\| \mathcal{O}_0^\top \hat{\boldSigma}\mathcal{O}_0 - \boldSigma \right\| \lesssim \frac{\kappa_{\max}^2\sqrt{K}}{\sigma_K(\cG)^{1/2}} \left( \frac{\zetaop(\boldG)^2}{\sigma_K(\cG)} + \sqrt{\frac{\mu K}{N}} \zetaop(\boldG) \right).
    \end{equation*}
\end{lemma}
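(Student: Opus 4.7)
The plan is to reduce the estimate for $\hat{\boldSigma}$ to the estimate for $\hat{\boldSigma}^2$ already proved in Lemma~\ref{lemma:sigma2} via a matrix square-root perturbation inequality. The key algebraic fact that makes this work is that, because $\mathcal{O}_0 \in \mathcal{O}^{K\times K}$ is orthogonal and $\hat{\boldSigma}$ is a PSD diagonal matrix, one has $(\mathcal{O}_0^\top \hat{\boldSigma}\mathcal{O}_0)^2 = \mathcal{O}_0^\top \hat{\boldSigma}^2\mathcal{O}_0$, so both $X := \mathcal{O}_0^\top \hat{\boldSigma}\mathcal{O}_0$ and $Y := \boldSigma$ are $K\times K$ PSD matrices whose squares are exactly the two matrices compared in Lemma~\ref{lemma:sigma2}.

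Next I would verify that $X$ and $Y$ are in fact positive definite, not just semidefinite, on the high-probability event. By Weyl's inequality applied to the bound $\|\boldG - \cG\| \lesssim \zetaop(\boldG)$ from Lemma~\ref{lemma:wholebound} together with $\zetaop(\boldG) \ll \sigma_K(\cG)$ from Lemma~\ref{lemma:combine}, we get $\sigma_{\min}(X) = \sqrt{\sigma_K(\boldG)} \asymp \sqrt{\sigma_K(\cG)}$, while $\sigma_{\min}(Y) = \sqrt{\sigma_K(\cG)}$ directly. The operator $X - Y$ then satisfies the Sylvester equation $X(X - Y) + (X - Y)Y = X^2 - Y^2$, and since both $X, Y \succ 0$, admits the integral representation
\[
X - Y = \int_0^{\infty} e^{-X s}(X^2 - Y^2) e^{-Y s}\, ds.
\]
Taking the Frobenius norm and using sub-multiplicativity with the operator norms of the matrix exponentials yields
\[
\|X - Y\|_F \;\leq\; \frac{\|X^2 - Y^2\|_F}{\sigma_{\min}(X) + \sigma_{\min}(Y)} \;\lesssim\; \frac{\|X^2 - Y^2\|_F}{\sqrt{\sigma_K(\cG)}}.
\]

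To obtain the target spectral-norm bound, I would then use the two standard norm comparisons $\|X - Y\| \leq \|X - Y\|_F$ on the left and $\|X^2 - Y^2\|_F \leq \sqrt{K}\,\|X^2 - Y^2\|$ on the right (both are $K\times K$ matrices). This is the step that produces the extra $\sqrt{K}$ factor visible in the statement. Plugging in the spectral-norm estimate on $\|X^2 - Y^2\| = \|\mathcal{O}_0^\top\hat{\boldSigma}^2\mathcal{O}_0 - \boldSigma^2\|$ provided by Lemma~\ref{lemma:sigma2} then gives the claimed bound. The main obstacle is not computational but conceptual: one must be careful that the square-root perturbation step is applied to $X$ and $Y$ rather than directly to $\hat{\boldSigma}$ and $\boldSigma$, because $\hat{\boldSigma}$ and $\boldSigma$ need not be close entrywise; the rotation $\mathcal{O}_0$ is essential for the comparison to make sense, and this is exactly why Lemma~\ref{lemma:sigma2} was phrased with the sandwich by $\mathcal{O}_0$ in the first place.
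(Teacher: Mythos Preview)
Your proposal is correct and takes essentially the same approach as the paper: both reduce the square-root perturbation to the square perturbation via the Sylvester identity $X(X-Y)+(X-Y)Y=X^2-Y^2$ with $X=\mathcal{O}_0^\top\hat{\boldSigma}\mathcal{O}_0$ and $Y=\boldSigma$, and both pick up the $\sqrt{K}$ factor from the Frobenius-to-spectral norm comparison on $K\times K$ matrices. The only cosmetic difference is that the paper solves the Sylvester equation by right-multiplying by $\mathcal{O}_0^\top$ so that both coefficient matrices become diagonal and the solution is explicit entrywise ($\tilde a^{(1)}_{ij}=\tilde a^{(2)}_{ij}/(\hat\sigma_j+\sigma_i)$), whereas you invoke the integral representation $\int_0^\infty e^{-Xs}(X^2-Y^2)e^{-Ys}\,ds$; both routes yield exactly $\|X-Y\|_F\leq \|X^2-Y^2\|_F/(\sigma_{\min}(X)+\sigma_{\min}(Y))$.
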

\begin{proof}
    In the following, we denote $\sigma_j=\sigma_j(\mathcal{L})$, $\hat{\sigma}_j= \sigma_j(\mathbf{L})$ for simplicity. Note that we have the following equality,
    \begin{equation}
        \label{eqn:eq_sigma}
        \mathcal{O}_0^\top\hat{\boldSigma}^2\mathcal{O}_0 - \boldSigma^2 = \left(\mathcal{O}_0^\top\hat{\boldSigma}\mathcal{O}_0 - \boldSigma\right)\mathcal{O}_0^\top\hat{\boldSigma}\mathcal{O}_0 + \boldSigma \left(\mathcal{O}_0^\top\hat{\boldSigma}\mathcal{O}_0 - \boldSigma\right).
    \end{equation}
    For the following proof, for notational simplicity, we denote $\boldA_2=\mathcal{O}_0^\top\hat{\boldSigma}^2\mathcal{O}_0 - \boldSigma^2$, $\boldA_1 = \mathcal{O}_0^\top\hat{\boldSigma}\mathcal{O}_0 - \boldSigma$. Additionally, define $\tilde{\boldA}_2=\boldA_2\mathcal{O}_0^\top$, $\tilde{\boldA}_1 = \boldA_1\mathcal{O}_0^\top$. By right multiplying $\mathcal{O}_0$ on both sides of Equation (\ref{eqn:eq_sigma}), we have,
    \begin{equation*}
        \tilde{\boldA}_2 = \tilde{\boldA}_1\hat{\boldSigma} + \boldSigma\tilde{\boldA}_1.
    \end{equation*}
    Then, this equality implies, for every $(i,j)\in [K]\times [K]$,
    \begin{equation*}
        \tilde{a}^{(2)}_{i,j} = \hat{\sigma}_j\tilde{a}^{(1)}_{i,j}+ \sigma_i\tilde{a}^{(1)}_{i,j}.
    \end{equation*}
    Hence,
    \begin{equation*}
        \tilde{a}^{(1)}_{i,j} = \frac{\tilde{a}^{(2)}_{i,j}}{\hat{\sigma}_j+\sigma_i}.
    \end{equation*}

    Then,
    \begin{align*}
        \left|\left( \tilde{\boldA}_1 \mathbf x \right)_i\right| &=  \left|\sum_{j=1}^K\tilde{a}^{(1)}_{i,j}\mathbf{x}_{j}\right|\\
        &= \left|\sum_{j=1}^K\frac{\tilde{a}_{i,j}^{(2)}\mathbf{x}_j}{\sigma_i + \hat{\sigma}_j} \right|\\
        &\leq \sqrt{\sum_{j=1}^K\frac{\left(\tilde{a}_{i,j}^{(2)}\right)^2}{(\sigma_i + \hat{\sigma}_j)^2}}\\
        &\leq \frac{1}{\sigma_K+\hat{\sigma}_K}\sqrt{\sum_{j=1}^K \left(\tilde{a}_{i,j}^{(2)}\right)^2},
    \end{align*}
    for all $\|\mathbf{x}\|_2=1$, where the penultimate inequality comes from Cauchy-Schwarz inequality.

    Therefore,
    \begin{align*}
        \left\|\tilde{\boldA}_1\right\|^2 &= \sup_{\|\mathbf x\|=1} \left\|\tilde{\boldA}_1 \mathbf{x}\right\|_2^2\\
        &\leq \frac{1}{(\sigma_K+\hat{\sigma}_K)^2}\sum_{i=1}^K\sum_{j=1}^K  \left(\tilde{a}_{i,j}^{(2)}\right)^2\\
        &= \frac{\left\| \tilde{\boldA}_2 \right\|_F^2}{(\sigma_K+\hat{\sigma}_K)^2}\\
        &\leq \frac{K\left\| \tilde{\boldA}_2 \right\|^2}{(\sigma_K+\hat{\sigma}_K)^2}.
    \end{align*}
    Recall that $\|\tilde{\boldA}_1\|=\|\boldA_1\mathcal{O}_0^\top\|=\|\boldA_1\|$, and the similar reason applies to $\|\tilde{\boldA}_2\|=\|\boldA_2\|$. Hence, with probability exceeding $1-O(d^{-10})$
    \begin{align*}
        \left\| \mathcal{O}_0^\top \hat{\boldSigma}\mathcal{O}_0 - \boldSigma \right\| &\leq \frac{\sqrt{K}\left\|\mathcal{O}_0^\top \hat{\boldSigma}^2\mathcal{O}_0 - \boldSigma^2\right\|}{\hat{\sigma}_K+\sigma_K}\\
        &\lesssim \frac{\sqrt{K}}{\sigma_K(\cG)^{1/2}} \left( \kappa_{\max}^2\frac{\zetaop(\boldG)^2}{\sigma_K(\cG)} + \kappa_{\max}^2\sqrt{\frac{\mu K}{N}} \zetaop(\boldG) \right).
    \end{align*}
\end{proof}

\subsection{Proof of Theorem \ref{theorem:VL}}
\label{sec:proof_vl}

\textbf{Proof of estimation guarantee for} $\mathbf{V}$:
The estimate of $\mathbf{V}$ by heteroskedastic principal component analysis is,
\begin{equation*}
    \hat{\mathbf{V}} = \boldL^\top\hat{\boldU} \hat{\boldSigma}^{-1}.
\end{equation*}

Then, the row-wise estimation error of $\mathbf{V}$ can be written as,
\begin{align*}
    \left\| \boldL^\top \hat{\boldU} \hat{\boldSigma}^{-1} \mathcal{O}_0 - \mathbf{V} \right\|_{2,\infty} &= \left\| \mathbf{V}\boldSigma \boldU^\top \hat{\boldU} \hat{\boldSigma}^{-1} \mathcal{O}_0 + \boldE^\top \hat{\boldU} \hat{\boldSigma}^{-1}\mathcal{O}_0 - \mathbf{V} \right\|_{2,\infty}\\
    &\leq \underbrace{\left\| \mathbf{V}\left( \boldSigma \boldH^\top\hat{\boldSigma}^{-1}\mathcal{O}_0 - \mathbf{I}\right) \right\|_{2,\infty}}_{\beta_1} + \underbrace{\left\| \boldE^\top \hat{\boldU} \right\|_{2,\infty}}_{\beta_2}\left\|\hat{\boldSigma}^{-1}\right\|,
\end{align*}
where the first equality results from $\boldL = \boldU\boldSigma\mathbf{V}^\top + \boldE^L$. Then, we bound $\beta_1$ and $\beta_2$ respectively.

For $\beta_1$, we have
\begin{align*}
    \beta_1&\leq \left\| \mathbf{V} \right\|_{2,\infty} \left\|\boldSigma \boldH^\top\hat{\boldSigma}^{-1}\mathcal{O}_0 - \mathbf{I}_K\right\|\\
    &\leq \sqrt{\frac{\mu K}{J+W}}\left\| \boldSigma\boldH^\top\hat{\boldSigma}^{-1} \mathcal{O}_0 - \mathbf{I}_K \right\|\\
    &\leq \sqrt{\frac{\mu K}{J+W}} \left\| \boldSigma \boldH^\top - \mathcal{O}_0^\top \hat{\boldSigma}\right\| \left\| \hat{\boldSigma}^{-1} \mathcal{O}_0 \right\|\\
    &\leq \sqrt{\frac{\mu K}{J+W}} \left\| \hat{\boldSigma}^{-1} \right\|
    \left( \left\|\boldSigma\boldH^\top - \boldSigma\mathcal{O}_0^\top\right\| + \left\| \boldSigma\mathcal{O}_0^\top - \mathcal{O}_0^\top\hat{\boldSigma}\mathcal{O}_0\mathcal{O}_0^\top \right\| \right)\\
    &\lesssim \frac{1}{\sigma_K(\cG)^{1/2}} \sqrt{\frac{\mu K}{J+W}}  \left(\sigma_1(\cG)^{1/2}\left\|\boldH - \mathcal{O}_0\right\| + \left\|\boldSigma -\mathcal{O}_0^\top\hat{\boldSigma}\mathcal{O}_0\right\| \right)\\
    &\lesssim \frac{1}{\sigma_K(\cG)^{1/2}} \sqrt{\frac{\mu K}{J+W}} \left( \kappa_{\max}\frac{\zetaop(\boldG)^2}{\sigma_K(\cG)^{3/2}} + \frac{\sqrt{K}}{\sigma_K(\cG)^{1/2}} \left( \kappa_{\max}^2\frac{\zetaop(\boldG)^2}{\sigma_K(\cG)} + \kappa_{\max}^2\sqrt{\frac{\mu K}{N}} \zetaop(\boldG) \right) \right)\\
    &\lesssim \frac{\kappa_{\max}^2}{\sigma_K(\cG)}\frac{\sqrt{\mu} K}{\sqrt{J+W}}\left(\frac{\zetaop(\boldG)^2}{\sigma_K(\cG)} + \sqrt{\frac{\mu K}{N}}\zetaop(\boldG)\right),
\end{align*}
where the third-to-last inequality comes from \citet{Yan2021} Equation (F.31), and the last inequality invokes $\sigma_K(\hat{\boldSigma})^2 \geq {\sigma_K(\cG)}/{2}$.

For $\beta_2$, 
\begin{align*}
    \beta_2 &\leq \left\| \mathbf{E}^\top\hat{\boldU}\mathcal{O}_0 \right\|_{2,\infty} \left\|\mathcal{O}_0^\top\right\|\\
    &\leq \underbrace{\left\|\mathbf{E}^\top \mathbf{U}\right\|_{2,\infty}}_{\beta_{2,1}} + \underbrace{\left\|\mathbf{E}^{\top}\left(\hat{\boldU}\mathcal{O}_0-\boldU \right)\right\|_{2,\infty}}_{\beta_{2,2}}.
\end{align*}

Then, for $\beta_{2,1}$, by \citet[Lemma 5]{Yan2021}, we have
\begin{align*}
    \left\|\mathbf{E}^\top \mathbf{U} \right\|_{2,\infty} &\lesssim \sigma(\boldR) \left\| \mathbf{U} \right\| _F\sqrt{\log d_R} + B_R\left\|\mathbf{U}\right\|_{2,\infty}\log d_R\\
    &+ \sqrt{\alpha}\left( \sigma(\boldX) \left\| \mathbf{U} \right\| _F\sqrt{\log d_X} + B_X\left\|\mathbf{U}\right\|_{2,\infty}\log d_X \right)\\
    &\leq \sigma(\boldR) \sqrt{K\log d_R} + B_R\sqrt{\frac{\mu K}{N}}\log d_R\\
    &+\sqrt{\alpha}\left(\sigma(\boldX) \sqrt{K\log d_X} + B_X\sqrt{\frac{\mu K}{N}}\log d_X \right)\\
    &\lesssim \sigma(\boldR)\sqrt{K\log d_R} + \sqrt{\alpha} \sigma(\boldX)\sqrt{K\log d_X},
\end{align*}
where the last inequality comes from $B_R\lesssim\sigma(\boldR) \sqrt{N/\mu\log d_R}$ and similarly for $B_X$.

For $\beta_{2,2}$,
\begin{align*}
    \beta_{2,2} &\leq \|\mathbf{E}\|_{\infty}\left\|\hat{\boldU}\mathcal{O}_0-\boldU\right\|_{2,\infty}\\
    &\leq \left(B_R+\sqrt{\alpha}B_X\right)\left\|\hat{\boldU}\mathcal{O}_0-\boldU\right\|_{2,\infty}
\end{align*}

Combining $\beta_{2,1}$ and $\beta_{2,2}$, by Theorem \ref{theorem:leftbound}, we have
\begin{align*}
    \beta_2 &\lesssim \sigma(\boldR)\sqrt{K\log d_R} + B_R\left\|\hat{\boldU}\mathcal{O}_0-\boldU\right\|_{2,\infty} + \sqrt{\alpha}\left[ \sigma(\boldX)\sqrt{K\log d_X} + B_X\left\|\hat{\boldU}\mathcal{O}_0-\boldU\right\|_{2,\infty} \right] \\
    &\lesssim \sigma(\boldR)\sqrt{K\log d_R} + \sigma(\boldR)\sqrt{\frac{N}{\mu \log d_R}}\frac{\kappa_{\max}^2}{\sigma_K(\cG)}\sqrt{\frac{\mu K}{N}} \zetaop(\boldG) (1+\alpha+\alpha^2)\\
    &+ \sqrt{\alpha}\left[ \sigma(\boldX)\sqrt{K\log d_X} + \sigma(\boldX)\sqrt{\frac{N}{\mu \log d_X}}\frac{\kappa_{\max}^2}{\sigma_K(\cG)}\sqrt{\frac{\mu K}{N}} \zetaop(\boldG) (1+\alpha+\alpha^2) \right]\\
    &\lesssim \sigma(\boldR)\sqrt{K\log d_R}\left(1+\frac{1+\alpha + \alpha^2}{\log d_R}\right) + \sigma(\boldX)\sqrt{K\log d_R}\left(1+\frac{1+\alpha + \alpha^2}{\log d_X}\right)\\
    &\lesssim \sigma(\boldR) \sqrt{K\log d_R} + \sqrt{\alpha}\sigma(\boldX) \sqrt{K\log d_X},
\end{align*}
where we invoke the condition that $B_R\lesssim \sigma(\boldR)\sqrt{N/(\mu\log d_R)}$ and $\zetaop(\boldG)\lesssim \sigma_K(\cG)/\kappa_{\max}^2$, and the same for the $\boldX$ terms in the penultimate inequality. The last inequality holds when $d_R$ and $d_X$ is large enough.

Combining $\beta_1$ and $\beta_2$, we have,
\begin{align*}
    \left\| \boldL^\top \hat{\boldU} \hat{\boldSigma}^{-1} \mathcal{O}_0 - \mathbf{V} \right\|_{2,\infty} &\lesssim \frac{\kappa_{\max}^2}{\sigma_K(\cG)}\frac{\sqrt{\mu} K}{\sqrt{J+W}}\left(\frac{\zetaop(\boldG)^2}{\sigma_K(\cG)} + \sqrt{\frac{\mu K}{N}}\zetaop(\boldG)\right)\\
    &+ \frac{\sigma(\boldR)\sqrt{K\log d_R} + \sqrt{\alpha}\sigma(\boldX)\sqrt{K\log d_X}}{\sigma_K(\cG)^{1/2}}\\
    &\lesssim \frac{\kappa_{\max}^2}{\sigma_K(\cG)}\frac{\sqrt{\mu} K}{\sqrt{J+W}}\left(\frac{\zetaop(\boldG)^2}{\sigma_K(\cG)} + \sqrt{\frac{\mu K}{N}}\zetaop(\boldG)\right)\\
    &+ \frac{1}{\sigma_K(\cG)^{1/2}}\sqrt{\frac{\mu K}{N}}\left( \frac{\zetaop(\boldR)}{\sigma_K(\cR)} + \frac{\sqrt{\alpha}\zetaop(\boldX)}{\sigma_K(\cX)} \right),
\end{align*}
where the last inequality comes from $\sigma(\boldR)\sigma_1(\cR)\sqrt{N\log d_R}\leq \zetaop(\boldR)$, and similarly for $\boldX$.

\textbf{Proof of estimation guarantee for} $\cL$.
    \begin{align*}
        \left\| \hat{\boldU}\hat{\boldU}^\top \boldL - \cL \right\|_{\infty} &= \left\| \hat{\boldU}\hat{\boldSigma}\hat{\boldSigma}^{-1}\hat{\boldU}^\top\boldL - \boldU\boldSigma\mathbf{V}^\top \right\|_{\infty}\\
        &= \underbrace{\left\|\hat{\boldU}  \hat{\boldSigma} \left( \boldL\hat{\boldU}\hat{\boldSigma}^{-1} - \mathbf{V}\mathcal{O}_0^\top \right)^\top \right\|_{\infty}}_{\beta_1} + \underbrace{\left\| \left( \hat{\boldU} - \boldU \mathcal{O}_0^\top \right) \hat{\boldSigma} \mathcal{O}_0\mathbf{V}^\top \right\|_{\infty}}_{\beta_2} + \\
        &+ \underbrace{\left\|\boldU \left( \mathcal{O}_0^\top \hat{\boldSigma}\mathcal{O}_0 - \boldSigma \right)\mathbf{V}^\top \right\|_{\infty}}_{\beta_3}.
    \end{align*}
    In the sequel, we shall bound $\beta_1$, $\beta_2$, and $\beta_3$ individually.

    For $\beta_1$,
    \begin{align*}
        \beta_1 &\leq \left\| \hat{\boldU} \right\|_{2,\infty} \left\|\hat{\boldSigma}\right\| \left\|  \boldL\hat{\boldU}\hat{\boldSigma}^{-1} - \mathbf{V}\mathcal{O}_0^\top \right\|_{2,\infty}\\
        &\leq \left\| \hat{\boldU} \right\|_{2,\infty} \left\|\hat{\boldSigma}\right\| \left\|  \boldL\hat{\boldU}\hat{\boldSigma}^{-1} \mathcal{O}_0 - \mathbf{V} \right\|_{2,\infty}\\
        &\lesssim \sigma_1(\cG)^{1/2}\sqrt{\frac{\mu K}{N}}\left[\frac{\kappa_{\max}^2}{\sigma_K(\cG)}\frac{\sqrt{\mu} K}{\sqrt{J+W}}\left(\frac{\zetaop(\boldG)^2}{\sigma_K(\cG)} + \sqrt{\frac{\mu K}{N}}\zetaop(\boldG)\right)\right.\\
        &\left.+ \frac{1}{\sigma_K(\cG)^{1/2}}\sqrt{\frac{\mu K}{N}}\left( \frac{\zetaop(\boldR)}{\sigma_K(\cR)} + \frac{\sqrt{\alpha}\zetaop(\boldX)}{\sigma_K(\cX)} \right)\right]\\
        &\lesssim \frac{\kappa_{\max}^3}{\sigma_K(\cG)^{1/2}} \frac{\mu K^{3/2}}{\sqrt{N(J+W)}} \left( \frac{\zetaop(\boldG)^2}{\sigma_K(\cG)} + \sqrt{\frac{\mu K}{N}}\zetaop(\boldG) \right)\\
        &+ \kappa_{\max} \frac{\mu K}{N}\left(\frac{\zetaop(\boldR)}{\sigma_K(\cR)} + \frac{\sqrt{\alpha}\zetaop(\boldX)}{\sigma_K(\cX)}\right),
    \end{align*}
    where in the penultimate inequality, we employ $\|\hat{\boldU}\|_{2,\infty} \leq \|\boldU\|_{2,\infty} + \|\hat{\boldU}\mathcal{O}_0 - \boldU\|_{2,\infty} \lesssim \sqrt{\mu K/N} $, and similarly ${\sigma}_{1}(\boldG)\lesssim \sigma_1(\cG)$.

    For $\beta_2$,
    \begin{align*}
        \beta_2 &\leq \left\|\hat{\boldU} - \boldU \mathcal{O}_0^\top\right\|_{2,\infty}\left\| \hat{\boldSigma} \right\| \left\| \mathbf{V} \right\|_{2,\infty}\\
        &\leq \left\| \hat{\boldU}\mathcal{O}_0 - \boldU \right\|_{2,\infty} \left\| \hat{\boldSigma} \right\| \left\| \mathbf{V} \right\|_{2,\infty}\\
        &\lesssim \sigma_1(\cG)^{1/2}\sqrt{\frac{\mu K}{J+W}} (1+ \alpha + \alpha^2) \frac{\kappa_{\max}^2}{\sigma_K(\cG)}\sqrt{\frac{\mu K}{N}} \zetaop(\boldG)\\
        &\lesssim \left(1 + \alpha + \alpha^2 \right)\kappa_{\max}^3\frac{\mu K}{\sqrt{N(J+W)}} \frac{\zetaop(\boldG)}{\sigma_K(\cG)^{1/2}}.
    \end{align*}

    For $\beta_3$,
    \begin{align*}
        \beta_3 &\leq \left\| \boldU \right\|_{2,\infty}\left\| \mathcal{O}_0^\top \hat{\boldSigma}\mathcal{O}_0 - \boldSigma \right\| \left\| \mathbf{V} \right\|_{2,\infty}\\
        &\lesssim \frac{\mu K}{\sqrt{N(J+W)}}\frac{\sqrt{K}}{\sigma_K(\cG)^{1/2}} \left( \kappa_{\max}^2\frac{\zetaop(\boldG)^2}{\sigma_K(\cG)} + \kappa_{\max}^2\sqrt{\frac{\mu K}{N}} \zetaop(\boldG) \right).
    \end{align*}

    Combining $\beta_1$, $\beta_2$, $\beta_3$, we can obtain,
    \begin{align*}
        \left\| \hat{\boldU}\hat{\boldU}^{\top} \boldL - \cL \right\|_{\infty} &\leq \beta_1 + \beta_2 + \beta_3\\
        &\lesssim (1+\alpha+\alpha^2)\frac{\kappa_{\max}^3}{\sigma_K(\cG)^{1/2}} \frac{\mu K^{3/2}}{\sqrt{N(J+W)}} \left( \frac{\zetaop(\boldG)^2}{\sigma_K(\cG)} + \sqrt{\frac{\mu K}{N}}\zetaop(\boldG) \right)\\
        &+ \kappa_{\max} \frac{\mu K}{N}\left(\frac{\zetaop(\boldR)}{\sigma_K(\cR)} + \frac{\sqrt{\alpha}\zetaop(\boldX)}{\sigma_K(\cX)}\right).
    \end{align*}
\subsection{Proof of Theorem \ref{theorem:Theta}}
\label{sec:proof_theta}
    \begin{align*}
        &\max\left\{\left\| \hat{\boldTheta}\mathbf{P}^\top - \boldTheta \right\|_{\infty}, \sqrt{\alpha}\left\| \hat{\boldM}\mathbf{P}^\top - \boldM \right\|_{\infty} \right\}\\ &= \left\|\mathbf{P}\hat{\mathbf{U}}_{S,:}\hat{\boldSigma} \hat{\mathbf{V}}^\top-\mathbf{U}_{S,:} \boldSigma \mathbf{V}^\top\right\|_{\infty}\\
        &\leq \left\| \left( \mathbf{P}\hat{\mathbf{U}}_{S,:}-\mathbf{U}_{S,:} \mathcal{O}_0^\top \right)\hat{\boldSigma}\hat{\mathbf{V}}^\top \right\|_{\infty} + \left\| \left(\mathbf{U}_{S,:}\mathcal{O}_0^\top - \hat{\mathbf{U}}_{S,:}\right) \hat{\boldSigma}\hat{\mathbf{V}}^\top \right\|_{\infty}\\ 
        &+ \left\|\hat{\boldU}_{S,:}\hat{\boldSigma} \hat{\mathbf{V}}^\top - \mathbf{U}_{S,:}\boldSigma \mathbf{V}^\top \right\|_{\infty}\\
        &\leq \left\|\mathbf{P}\hat{\boldU}_{S,:}\mathcal{O}_0-{\boldU}_{S,:} \right\|_{2,\infty}\left\| \hat{\boldSigma} \right\| \left\| \hat{\mathbf{V}} \right\|_{2,\infty}+\left\| \hat{\boldU}\mathcal{O}_0 - \boldU \right\|_{2,\infty}\left\|\hat{\boldSigma}\right\| \left\| \hat{\mathbf{V}} \right\|_{2,\infty}\\ 
        &+ \left\|\hat{\boldU} \hat{\boldSigma} \hat{\mathbf{V}}^\top - \boldU\boldSigma \mathbf{V}^\top \right\|_{\infty}\\
        &\stackrel{(i)}{\lesssim} \left[ \kappa(\boldPi)^2 \epsilon + \epsilon \right] \frac{1}{\sigma_K(\cG)^{1/2}} \left[ \sqrt{\frac{\mu K}{J+W}} +  \frac{\kappa_{\max}^2}{\sigma_K(\cG)}\frac{\sqrt{\mu} K}{\sqrt{J+W}}\left(\frac{\zetaop(\boldG)^2}{\sigma_K(\cG)} + \sqrt{\frac{\mu K}{N}}\zetaop(\boldG)\right)\right.\\
        &\left.+ \frac{1}{\sigma_K(\cG)^{1/2}}\sqrt{\frac{\mu K}{N}}\left( \frac{\zetaop(\boldR)}{\sigma_K(\cR)} + \frac{\sqrt{\alpha}\zetaop(\boldX)}{\sigma_K(\cX)} \right)\right]\\
        &+ \frac{\kappa_{\max}^3}{\sigma_K(\cG)^{1/2}} \frac{\mu K^{3/2}}{\sqrt{N(J+W)}} \left( \frac{\zetaop(\boldG)^2}{\sigma_K(\cG)} + \sqrt{\frac{\mu K}{N}}\zetaop(\boldG) \right)\\
        &+ \kappa_{\max} \frac{\mu K}{N}\left(\frac{\zetaop(\boldR)}{\sigma_K(\cR)} + \frac{\sqrt{\alpha}\zetaop(\boldX)}{\sigma_K(\cX)}\right)\\
        &\lesssim (1+\alpha+\alpha^2) \frac{\kappa(\boldPi)^2 \kappa_{\max}^4}{\sigma_K(\cG)^{1/2}} \frac{\mu K^{3/2}}{\sqrt{N(J+W)}} \left( \frac{\zetaop(\boldG)^2}{\sigma_K(\cG)} + \sqrt{\frac{\mu K}{N}}\zetaop(\boldG) \right)\\
        &+ (1+\alpha+\alpha^2)\kappa(\boldPi)^2\kappa_{\max}^2 \frac{\mu K}{N}\left(\frac{\zetaop(\boldR)}{\sigma_K(\cR)} + \frac{\sqrt{\alpha}\zetaop(\boldX)}{\sigma_K(\cX)}\right),
    \end{align*}
    with probability at least $1-O(d^{-10})$, where $(i)$ comes from Lemma \ref{lemma:errorbound1} and Theorem \ref{theorem:VL}, and $\|\hat{\mathbf{V}}\|_{2,\infty}\leq \|\mathbf{V}\|_{2,\infty} + \|\hat{\mathbf{V}}\mathcal{O}_0-\mathbf{V}\|_{2,\infty} $, and the last inequality stems from $\zetaop(\boldG)\lesssim \sigma_K(\cG)$.

\section{Proof of auxiliary Lemmas}

\subsection{Proof of Lemma \ref{lemma:wholebound}}
\label{sec:lem_wholebound}
For completeness, we present Lemma 23 from \citet{Yan2021}.
\begin{lemma}[\citet{Yan2021} Lemma 23]
    \label{lemma:individualbound}
    Under Assumption \ref{assump:noise} and \ref{assump:information}, suppose the number of iterations $t_0\geq \log\left({\sigma_1(\cR)^2}/{\zeta_{op}(\boldR)} \right)$
    \begin{equation*}
        \|\boldG_R-\cG_R\|\lesssim \zeta_{\text{op}}(\boldR),
    \end{equation*}
    \begin{equation*}
        \|\mathcal{P}_{\text{diag}}(\boldG_R-\mathcal{G}_R)\|\lesssim \kappa_{\max}^2\sqrt{\frac{\mu K}{N}}\zeta_{\text{op}}(\boldR),
    \end{equation*}
    with probability at least $1-O(d_R^{-10})$, where $\boldG_R:= \boldG_R^{t_0}$ and $\cG_R=\mathcal{R}\mathcal{R}^{\top}$. Similarly, suppose the number of iterations $t_0\geq \log\left({\sigma_1(\cX)^2}/{\zeta_{op}(\boldX)} \right)$, with probability at least $1-O(d_X^{-10})$, and
    \begin{equation*}
        \|\boldG_X-\cG_X\|\lesssim \zeta_{\text{op}}(\boldX),
    \end{equation*}
    \begin{equation*}
        \|\mathcal{P}_{\text{diag}}(\boldG_X-\mathcal{X}_R)\|\lesssim \kappa_{\max}^2\sqrt{\frac{\mu K}{N}}\zeta_{\text{op}}(\boldX),
    \end{equation*}
    where $\boldG_X:=\boldG_X^{t_0}$ and $\cG_X=\mathcal{X}\mathcal{X}^{\top}$. $\zeta_{\text{op}}(\boldR)$ and $\zeta_{\text{op}}(\boldX)$ are defined as Lemma \ref{lemma:wholebound}. $\sigma(\boldR)$ and $\sigma(\boldX)$ are the upper bound of the variance of noise.
\end{lemma}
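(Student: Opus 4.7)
The plan is straightforward: Lemma~\ref{lemma:individualbound} is a direct restatement of Lemma~23 of \citet{Yan2021}, so the proof amounts to invoking their result twice, once for the pair $(\boldR, \cR)$ and once for $(\boldX, \cX)$, after checking that our Assumptions~\ref{assump:noise}--\ref{assump:information} imply the hypotheses of their lemma on each matrix separately.

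First I would verify the noise model: Assumption~\ref{assump:noise} imposes mean-zero entries with variance bounded entrywise by $\sigma(\boldX)^2$ and sup-norm bounded by $B_X$ for $\boldE^X$, and the corresponding bounds for $\boldE^R$ under Bernoulli noise are established at the start of this appendix. These match the heteroskedastic noise setup of \citet{Yan2021} verbatim. Next, the joint incoherence in Assumption~\ref{assump:incoherence} yields $\|\boldU_R\|_{2,\infty} \le \sqrt{\mu K/N}$, $\|\mathbf V_R\|_{2,\infty} \le \sqrt{\mu K/J}$, and $\|\cR\|_{2,\infty} \le \sqrt{\mu/(NJ)}\|\cR\|_F$, which is exactly their per-matrix $\mu$-incoherence condition; the same bounds hold for $\cX$ with $J$ replaced by $W$. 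Finally, Assumption~\ref{assump:information} supplies the signal-strength condition $\zetaop(\boldR) \ll \sigma_K(\cR)^2/\kappa(\cR)^2$ and the sample-size requirements $N \gtrsim \kappa_{\max}^2 \mu K + \mu^2 K \log^2 d$ and $J \gtrsim K\log^4 d_R$; since $\kappa_{\max} \ge \kappa(\cR)$ by definition, these are at least as strong as the per-matrix conditions required by Lemma~23 of \citet{Yan2021}. An identical verification with $d_R \mapsto d_X$, $J \mapsto W$ and $\kappa(\cR) \mapsto \kappa(\cX)$ covers the covariate matrix.

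Applying their Lemma~23 to $\boldR$ therefore delivers $\|\boldG_R - \cG_R\| \lesssim \zetaop(\boldR)$ and $\|\di(\boldG_R - \cG_R)\| \lesssim \kappa_{\max}^2 \sqrt{\mu K/N}\,\zetaop(\boldR)$ with probability $1 - O(d_R^{-10})$, provided $t_0 \ge \log(\sigma_1(\cR)^2/\zetaop(\boldR))$ HeteroPCA iterations have elapsed; the analogous statement for $\boldG_X$ holds with probability $1 - O(d_X^{-10})$. The main obstacle is purely bookkeeping: reconciling our jointly stated incoherence and condition-number assumptions with the per-matrix hypotheses of \citet{Yan2021}, which is immediate from $\kappa_{\max} \ge \max\{\kappa(\cR), \kappa(\cX)\}$ and the componentwise structure of Assumption~\ref{assump:incoherence}. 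The genuine technical work — concentration of the off-diagonal quadratic noise $\od(\boldE^R \boldE^{R\top})$ and convergence analysis of the HeteroPCA diagonal-imputation iteration — is carried out in \citet{Yan2021} and need not be reproduced here.
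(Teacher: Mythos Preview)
Your proposal is correct and matches the paper's treatment exactly: the paper simply presents this lemma ``for completeness'' as a restatement of Lemma~23 of \citet{Yan2021} with no proof of its own, and your plan of verifying that Assumptions~\ref{assump:noise}--\ref{assump:information} specialize to the per-matrix hypotheses of \citet{Yan2021} and then invoking their result separately for $(\boldR,\cR)$ and $(\boldX,\cX)$ is precisely the intended argument.
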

\indent Then, we are able to establish Lemma \ref{lemma:wholebound}.

    From Lemma \ref{lemma:individualbound} and triangle inequality, the lemma holds if the above inequalities hold simultaneously. Denote $\mathbb{P}(\boldR)$ as the probability of the bound on $\boldR$ holds and $\mathbb{P}(\boldX)$ the probability of the bound on $\boldX$ holds. One has,
    \begin{align*}
        \mathbb{P}(\boldR)\geq 1-O(d_R^{-10}),\\
        \mathbb{P}(\boldX)\geq 1-O(d_X^{-10}),
    \end{align*}
    Then,
    \begin{align*}
        \mathbb{P}(\boldR\cap \boldX)&= 1-\mathbb{P}(\boldR^C\cup \boldX^C) \\
        &\geq 1-\mathbb{P}(\boldR^C)-\mathbb{P}(\boldX^C)\\
        &\geq 1-O(d_R^{-10}+d_X^{-10}).
    \end{align*}
    Recall the definition of $d_R$ and $d_X$, and $d=\max\{N, J, W \}$. Therefore, with probability at least $1-O(d^{-10})$, 
    \begin{align*}
        \|\boldG-\mathcal{G}\| &\leq \|\boldG_R-\cG_R\|+\alpha\|\boldG_X-\cG_X\|\\
        &\lesssim \zeta_{\text{op}}(\boldR)+ \alpha \zeta_{\text{op}}(\boldX) .
    \end{align*}

    Similarly, we have
    \begin{equation*}
        \left\|\mathcal{P}_{\text{diag}}\left( \boldG - \cG \right)\right\| \lesssim \kappa_{\max}^2\sqrt{\frac{\mu K}{N}}(\zetaop(\boldR)+\alpha\zetaop(\boldX)),
    \end{equation*}
    with probability exceeding $1-O(d^{-10})$.

\subsection{Proof of Lemma \ref{lemma:combine}}
\label{sec:lem_com}
\begin{proof}
    By Lemma \ref{lemma:small_vals}, we have
    \begin{equation*}
        \sigma_K(\cR)^2 + \alpha \sigma_k(\cX)^2 \ll \sigma_K(\cG).
    \end{equation*}
    Additionally, since $\kappa(\cR)\geq 1$, $\kappa(\cX) \geq 1$, we have
    \begin{equation}
        \label{eqn:small_vals}
        \frac{\sigma_K(\cR)^2}{\kappa(\cR)^2} + \alpha\frac{\sigma_{K}(\cX)^2}{\kappa(\cX)^2} \leq \sigma_K(\cG).
    \end{equation}
    From \citet[Fact 1]{caiSubspaceEstimationUnbalanced2020}, we have
    \begin{equation}
        \label{eqn:err_vals}
        \zeta_{op}(\boldR) +\alpha\zeta_{op}(\boldX) \ll \frac{\sigma_K(\cR)^2}{\kappa(\cR)^2} + \alpha\frac{\sigma_{K}(\cX)^2}{\kappa(\cX)^2}.
    \end{equation}
    Combine (\ref{eqn:small_vals}) and (\ref{eqn:err_vals}) yields,
    \begin{equation*}
        \zeta_{op}(\boldR) +\alpha\zeta_{op}(\boldX) \ll \sigma_K(\cG).
    \end{equation*}
\end{proof}

\subsection{Proof of Lemma \ref{lemma:spectralbound}}
\label{sec:lem_spec}
\begin{proof}
    Following the Spectral Bound part of Theorem \ref{theorem:leftbound}, we have
    \begin{align*}
        \left\|\hat{\boldU}\boldH-\boldU\right\| &= \left\|\hat{\boldU}\hat{\boldU}^\top \boldU - \boldU\boldU^\top\boldU\right\|\\
        &\leq \left\|\hat{\boldU}\hat{\boldU}^\top - \boldU\boldU^\top\right\|\|\boldU\|\\
        &\leq \left\|\hat{\boldU}\mathcal{O}_0-\boldU\right\|\\
        &\lesssim \frac{\zeta_{\text{op}}(\boldR)+\alpha\zeta_{\text{op}}(\boldX)}{\sigma_K(\cG)},
    \end{align*}
    where the second to last inequality comes from \citet[Lemma 2.6]{Chen2021} and the last from (\ref{eqn:spectralbound}).

    Since $\boldU$ and $\hat{\boldU}$ are both orthogonal matrices, $\boldH$ can be written as
    \begin{equation*}
        \boldH = \mathbf{U}_l\cos\mathbf{\Psi}\mathbf{U}_r^\top,
    \end{equation*}
    where $\mathbf{\Psi}$ is the principal angles between $\hat{\boldU}$ and $\boldU$, and $\boldU_l$, $\boldU_r$ are orthonormal matrices. Since $\mathcal{O}_0=\text{sgn}(\boldH)=\mathbf{U}_l\mathbf{U}_r^\top$ \citep[Appendix D.2.1]{maImplicitRegularizationNonconvex2020}, one has
    \begin{align*}
        \|\boldH - \mathcal{O}_0\| &= \|\boldU_l(\cos\mathbf{\Psi}-\mathbf{I}_K)\boldU_r^\top\|\\
        &\leq \|\mathbf{I}_K-\cos\mathbf{\Psi}\|\\
        &\lesssim \|\sin\mathbf{\Psi}\|^2\\
        &= \|\hat{\boldU}\hat{\boldU}^\top - \boldU\boldU^\top\|^2\\
        &\leq \frac{(\zeta_{\text{op}}(\boldR)+\alpha\zeta_{\text{op}}(\boldX))^2}{\sigma_K(\cG)^2},
    \end{align*}
    where the second to last inequality stems from \citet[Lemma 2.1.2]{Chen2021}. Then, from Lemma \ref{lemma:combine},
    \begin{align*}
        \sigma_1(\boldH)&\leq \sigma_1(\mathcal{O}_0)+\|\boldH-\mathcal{O}_0\|\lesssim 1,\\
        \sigma_K(\boldH)&\geq \sigma_1(\mathcal{O}_0)-\|\boldH-\mathcal{O}_0\|\gtrsim 1.
    \end{align*}
    Therefore,
    \begin{equation*}
        \sigma_1(\boldH)\asymp \sigma_K(\boldH) \asymp 1.
    \end{equation*}
    Similarly,
    \begin{align*}
        \|\boldH^\top\boldH-\mathbf{I}_K\| &= \|\boldU_r (\cos^2\mathbf{\Psi} - \mathbf{I}_K)\mathbf{U}_r^\top\|\\
        &\leq \|\cos^2\mathbf{\Psi} - \mathbf{I}_K\|\\
        &\leq \|\sin\mathbf{\Psi}\|^2\\
        &\leq \frac{(\zeta_{\text{op}}(\boldR)+\alpha\zeta_{\text{op}}(\boldX))^2}{\sigma_K(\cG)^2}.
    \end{align*}
\end{proof}

\subsection{Proof of Lemma \ref{lemma:kappa}}
\label{sec:pf_lemma_kappa}
\begin{align*}
    \kappa(\mathcal{L}) &= \sqrt{\frac{\sigma_1(\mathcal{G})}{\sigma_K(\cG)}}\\
    &\leq \sqrt{\frac{\sigma_1(\cR)^2+\alpha \sigma_1(\cX)^2}{\sigma_K(\cR)^2+\alpha\sigma_K(\cX)^2}}\\
    &\leq \sqrt{\frac{\sigma_1(\cR)^2}{\sigma_K(\cR)^2}+ \frac{\alpha\sigma_1(\cX)^2}{\alpha\sigma_K(X)^2}}\\
    &=\sqrt{\kappa(\cR)^2+\kappa(\cX)^2} = \kappa_{\max},
\end{align*}
where the second equality comes from Lemma \ref{lemma:small_vals} and Weyl's inequality.

\subsection{Proof of Lemma \ref{lemma:29}}
\label{sec:lem_29}
    Note that from the definition,
    \begin{equation*}
        \cGm = \cGm_R + \alpha \cGm_X.
    \end{equation*}
    Therefore, one has
    \begin{align*}
        \left\|\cGm - \boldG \right\| &= \left\|\cGm_R+\alpha \cGm_X-\boldG_R - \alpha \boldG_X \right\|\\
        &\leq \left\|\cGm_R - \boldG_R\right\| + \alpha\left\|\cGm_X - \boldG_X \right\|.
    \end{align*}
    From \citet[Lemma 29]{Yan2021}, we have with probability at least $1-O(d_R^{-10})$
    \begin{equation*}
        \left\|\cGm_R - \boldG_R\right\| \lesssim \sigma(\boldR)^2\sqrt{NJ}\log d_R + \kappa(\cR)^2\sigma(\boldR)\sigma_1(\cR)\sqrt{\mu K\log d_R},
    \end{equation*}
    and similarly for $\|\cGm_X-\boldG_X \|$ with probability at least $1-O(d_X^{-10})$. Combining these inequalities yields, with probability at least $1-O(d^{-10})$
    \begin{align*}
        \left\|\cGm - \boldG \right\| &\lesssim \sigma(\boldR)^2\sqrt{NJ}\log d_R + \kappa(\cR)^2\sigma(\boldR)\sigma_1(\cR)\sqrt{\mu K\log d_R} \\
        &+ \alpha\left(\sigma(\boldX)^2\sqrt{NW}\log d_X + \kappa(\cX)^2\sigma(\boldX)\sigma_1(\cX)\sqrt{\mu K\log d_X}\right)\\
        &\leq \left[\sigma(\boldR)^2+\alpha\sigma(\boldX)^2 \right]\sqrt{N\max\{J, W \}}\log d \\
        &+ \left[\sigma(\boldR) \sigma_1(\cR) + \alpha \sigma(\boldX)\sigma_1(\cX) \right] \kappa_{\max}^2\sqrt{\mu K\log d}.
    \end{align*}

    Following the same reasoning as before, From \citet[Lemma 29]{Yan2021}, one has,
    \begin{equation*}
        \left\|\cGm_R - \cG_R\right\| \lesssim \zeta_{\text{op}}(\boldR),
    \end{equation*}
    and similarly for $\left\|\cGm_X-\cG_X\right\|$. Combining these two inequalities together yields,
    \begin{equation*}
        \left\|\cGm - \cG\right\| \lesssim \zeta_{\text{op}}(\boldR) + \alpha \zeta_{\text{op}}(\boldX),
    \end{equation*}
    with probability exceeding $1-O(d^{-10})$.

\subsection{Proof of Lemma \ref{lemma:leftbound1}}
\label{sec:lem_left1}
\begin{proof}
    From triangle inequality,
    \begin{align*}
        \|(\boldG-\cG)_{m,:}\boldU\|_2&\leq \left\|\left[\od(\boldG-\cG) \right]_{m,:}\boldU \right\|_2 + \|\left[\di(\boldG-\cG) \right]_{m,:}\boldU\|_2\\
        &= \underbrace{\left\|\left[\od(\boldG^0-\cG) \right]_{m,:}\boldU \right\|}_{\beta_1} + \underbrace{\|\left[\di(\boldG-\cG) \right]_{m,:}\boldU\|_2}_{\beta_2},
    \end{align*}
    where $\boldG^0$ is the diagonal-deleted version of the gram matrix, and the last equality hinges on the equivalence of off-diagonal components between diagonal deletion algorithm \citep{caiSubspaceEstimationUnbalanced2020} and heteroskedastic principal component analysis. 
    
    Following from \citet[Lemma 2]{caiSubspaceEstimationUnbalanced2020}, $\beta_1$ can be bounded by,
    \begin{equation*}
        \left\|\left[\od(\boldG^0-\cG) \right]_{m,:}\boldU \right\| \lesssim \left(\zeta_{\text{op}}(\boldR)+ \alpha\zeta_{\text{op}}(\boldX) \right)\sqrt{\frac{\mu K}{N}},
    \end{equation*}
    with probability at least $1-O(d^{-10})$. This hinges on the observation that
    \begin{align*}
        (\boldG-\mathcal{G})_{m,i} &= \langle \boldR_{m,:}, \boldR_{i,:} \rangle + \alpha \langle \boldX_{m,:}, \boldX_{i,:} \rangle - \langle \cR_{m,:}, \cR_{i,:} \rangle - \alpha \langle \cX_{m,:}, \cX_{i,:} \rangle \\
        &= \langle \textbf{E}_{m,:}^R,\textbf{E}_{i,:}^R \rangle + \langle \cR_{m,:}, \textbf{E}_{i,:}^R \rangle + \langle \textbf{E}_{m,:}^R, \cR_{i,:}\rangle\\
        &+\alpha\left(\langle \textbf{E}_{m,:}^X,\textbf{E}_{i,:}^X \rangle + \langle \cX_{m,:}, \textbf{E}_{i,:}^X \rangle + \langle \textbf{E}_{m,:}^X, \cX_{i,:}\rangle\right), \quad i\neq m;\\
        (\boldG^0-\mathcal{G})_{m,m} &= -\mathcal{\boldG}_{m,m} = - \|\cR_{m,:}\|_2^2 - \alpha \|\cX_{m,:}\|_2^2,
    \end{align*}
    and there is no interaction between $\boldR$ and $\boldX$, the same treatment can be applied. The slight difference of our bound lies in the diagonal term. One can also refer to \citet[Lemma 25]{Yan2021} for a detailed discussion.
    
    For $\beta_2$,
    \begin{align*}
        \|\left[\di(\boldG-\cG) \right]_{m,:}\boldU\|_2 &= |\boldG_{m,m}-\cG_{m,m}|\|\boldU_{m,:}\|_{2}\\
        &\lesssim \|\di(\boldG-\cG))\|\|\boldU\|_{2,\infty}\\
        &\lesssim \kappa_{\max}^2\sqrt{\frac{\mu K}{N}}\left( \zeta_{\text{op}}(\boldR) + \alpha\zeta_{\text{op}}(\boldX) \right) \sqrt{\frac{\mu K}{N}}\\
        &\lesssim \left( \zeta_{\text{op}}(\boldR) + \alpha\zeta_{\text{op}}(\boldX) \right) \sqrt{\frac{\mu K}{N}},
    \end{align*}
    where the second inequality is from Lemma \ref{lemma:wholebound} and the last inequality invokes the assumption that $N\gtrsim \kappa_{\max}^4\mu K$.

    Combining $\beta_1$ and $\beta_2$, we have
    \begin{equation*}
        \left\|(\boldG-\cG)_{m,:}\boldU \right\|_2 \lesssim \left( \zeta_{\text{op}}(\boldR)+\alpha\zeta_{\text{op}}(\boldX) \right)\sqrt{\frac{\mu K}{N}},
    \end{equation*}
    with probability at least $1-O(d^{-10})$.
\end{proof}

\subsection{Proof of Lemma \ref{lemma:32}}
\label{sec:lem_32}
\begin{proof}
    The proof is similar for $\boldR$ and $\boldX$. We here only present the proof for $\boldR$. The following proof is adapted from the proof of \citet[Lemma 32]{Yan2021} and \citet[Lemma 7]{caiSubspaceEstimationUnbalanced2020}. Here, we only show the proof of Eqn. (\ref{eqn:F18}). The proof of (\ref{eqn:F19}) is similar and actually simpler than the proof of (\ref{eqn:F18}).

    From matrix Bernstein inequality \citep[Theorem 6.1.1]{Tropp2015} and \citet[Lemma 32]{Yan2021}, one has
    \begin{align*}
        \left\|\mathbf{E}_{m,:}^R[\Pm(\boldR)]^\top\left( \Um\Hm-\boldU\right) \right\| &= \left\| \sum_{j=1}^J \boldE_{m,j}^R \left[[\Pm(\boldR)]^\top\left( \Um\Hm-\boldU\right)\right]_{j,:} \right\|\\
        &\lesssim \underbrace{\sigma(\boldR)\left\|\left[\Pm(\boldR)\right]^\top \left(\Um\Hm-\boldU\right) \right\|_{F}\sqrt{\log d_R}}_{\beta_1}\\
        & + \underbrace{B_R\left\|[\Pm(\boldR)]^\top\left( \Um\Hm-\boldU\right) \right\|_{2,\infty}\log d_R}_{\beta_2}. 
    \end{align*}

    First, for $\beta_1$, with probability at least $1-O(d_R^{-11})$, we have,
    \begin{align*}
        \left\|[\Pm(\boldR)]^\top\left(\Um\Hm-\boldU\right)  \right\|_F &\leq   \|\boldR\|\left\|\Um\Hm-\boldU\right\|_F \\
        &\leq \left( \|\cR\|+\|\boldE^R\|\right) \left\|\Um\Hm-\boldU\right\|_F\\
        &\lesssim \left(\sigma_1(\cR)+\sigma(\boldR)\sqrt{J}\right) \left\|\Um\Hm-\boldU\right\|_F.
    \end{align*}
    As a result, following the subsequent proof in \citet[Lemma 32]{Yan2021}, we obtain,
    \begin{equation*}
        \beta_1 \lesssim \zeta_{\text{op}}(\boldR)\left\|\Um\Hm-\boldU\right\|_{2,\infty}.
    \end{equation*}

    For $\beta_2$, first, by adapting \citet[Lemma 31]{Yan2021} in the aforementioned way, we can obtain
    \begin{equation}
        \label{eqn:F16}
        \begin{aligned}
            &\left\|\mathbf{e}^\top_l[\Pm(\boldR)]^\top \left(\hat{\boldU}^{(m)}\Hm-\boldU \right) \right\|\\ &\lesssim \underbrace{\frac{(\zeta_{\text{op}}(\boldR)+\alpha\zeta_{\text{op}}(\boldX))^2}{\sigma_K(\cG)^2}\|\cR^\top\|_{2,\infty}}_{\beta_{2,1}}  \\
            &+ \underbrace{\left(B_R\log d_R+ \sigma(\boldR) \sqrt{N\log d_R} \right)\|\Um\Hm-\boldU\|_{2,\infty}}_{\beta_{2,2}}\\
            &+ \underbrace{\left( \left\|\cR^\top \right\|_{2,\infty} +  B_R\log d_R+ \sigma(\boldR) \sqrt{N\log d_R} \right)\left\| \Um\UmT - \boldU^{(m,l)}\boldU^{(m,l)\top}\right\|}_{\beta_{2,3}},
        \end{aligned}
    \end{equation}
    and we have a similar inequality for the terms corresponding to $\mathbf{X}$. In the following, we bound $\beta_{2,1},\beta_{2,2},\beta_{2,3}$ respectively. 
    
    For $(B_R\log d_R)\|\cR^\top\|_{2,\infty}$, following the treatment in \citet[Lemma 32]{Yan2021}, one has
    \begin{equation*}
        (B_R\log d_R)\left\| \cR^\top \right\|_{2,\infty} \leq \sigma(\boldR)\sqrt{J\log d_R}\sqrt{\frac{\mu K}{J}}\sigma_1(\cR) \lesssim \sqrt{\frac{\mu K}{N}}\zeta_{\text{op}}(\boldR).
    \end{equation*}
    Then, the whole $(B_R\log d_R)\beta_{2,1}$ will be bounded by,
    \begin{equation*}
        (B_R\log d_R)\beta_{2,1} \lesssim \sqrt{\frac{\mu K}{N}}\frac{\zeta_{\text{op}}(\boldG)^2}{\sigma_K(\cG)^2}\zeta_{\text{op}}(\boldR).
    \end{equation*}

    Similarly for $\beta_{2,2}$, we have
    \begin{equation*}
        (B_R\log d_r)\beta_{2,2} \lesssim \zeta_{\text{op}}(\boldR)\left\| \Um\Hm - \boldU \right\|_{2,\infty}.
    \end{equation*}

    \textbf{As for $\beta_{2,3}$}: Following the treatment in \citet[Lemma 32]{Yan2021}, we have
    \begin{align*}
        &(B_R\log d_R)\left\| \Um\UmT - \Uml\UmlT \right\|\\ &\lesssim \frac{B_R\log d_R}{\sigma_K(\cG)}\left( B_R\log d_R+\sigma(\boldR)\sqrt{N\log d_R} \right)^2\left\| \Um\Hm \right\|_2\\
        &+ \frac{\sigma(\boldR)}{\sigma_K(\cG)}B_R\log d_R + \frac{B_R\log d_R}{\sigma_K(\cG)} \left(B_R\log d_R+\sigma(\boldR)\sqrt{N\log d_R}\right)\left\| \cR^\top \right\|_{2,\infty}\\
        &+ \alpha\left[ \frac{B_R\log d_R}{\sigma_K(\cG)}\left( B_X\log d_X+\sigma(\boldX)\sqrt{N\log d_X} \right)^2\left\| \Um\Hm \right\|_2 \right.\\
        &\left.+ \frac{\sigma(\boldX)}{\sigma_K(\cG)}B_R\log d_R + \left(B_X\log d_X+ \frac{B_R\log d_R}{\sigma_K(\cG)}\sigma(\boldX)\sqrt{N\log d_X}\right)\left\| \cX^\top \right\|_{2,\infty} \right]\\
        &\lesssim \frac{B_R\log d_R}{\sigma_K(\cG)}\zeta_{\text{op}}(\boldR) \left\|\Um\Hm\right\|_{2,\infty}\\
        &+ \left(B_R\log d_R + \sigma(\boldR) \sqrt{N\log d_R}\right)\sqrt{\frac{\mu K}{N}} \frac{\zeta_{\text{op}}(\boldR)}{\sigma_K(\cG)}\\
        &+ \alpha\left[ \frac{B_R\log d_R}{\sigma_K(\cG)}\zeta_{\text{op}}(\boldX) \left\|\Um\Hm\right\|_{2,\infty}\right.\\
        &\left.+ \sqrt{\frac{J\log d_R}{W\log d_X}} \frac{\sigma(\boldR)}{\sigma(\boldX)} \left(B_X\log d_X + \sigma(\boldX) \sqrt{N\log d_X}\right)\sqrt{\frac{\mu K}{N}} \frac{\zeta_{\text{op}}(\boldX)}{\sigma_K(\cG)} \right]\\
        &\stackrel{(a)}{\lesssim} \frac{B_R\log d_R}{\sigma_K(\cG)}\zeta_{\text{op}}(\boldR) \left\|\Um\Hm\right\|_{2,\infty}\\
        &+ \left(B_R\log d_R + \sigma(\boldR) \sqrt{N\log d_R}\right)\sqrt{\frac{\mu K}{N}} \frac{\zeta_{\text{op}}(\boldR)}{\sigma_K(\cG)}\\
        &+ \alpha\left[ \frac{B_R\log d_R}{\sigma_K(\cG)}\zeta_{\text{op}}(\boldX) \left\|\Um\Hm\right\|_{2,\infty}\right.\\
        &\left.+ \left(B_X\log d_X + \sigma(\boldX) \sqrt{N\log d_X}\right)\sqrt{\frac{\mu K}{N}} \frac{\zeta_{\text{op}}(\boldX)}{\sigma_K(\cG)} \right],
    \end{align*}
    where we use the assumption that $\sigma(\boldR)^2 J\log d_R\asymp \sigma(\boldX)^2 W\log d_X$ in (a).  

    Therefore, with probability at least $1-O(d_R^{-10})$, we have,
    \begin{align*}
        (B_R\log d_R)\beta_{2,3} &\lesssim \left(\left\|\cR^\top\right\|_{2,\infty} + B_R\log d_R + \sigma(\boldR)\sqrt{N\log d_R}\right)\left\{ \frac{B_R\log d_R}{\sigma_K(\cG)}\zeta_{\text{op}}(\boldR) \left\|\Um\Hm\right\|_{2,\infty}\right.\\
        &+ \left(B_R\log d_R + \sigma(\boldR) \sqrt{N\log d_R}\right)\sqrt{\frac{\mu K}{N}} \frac{\zeta_{\text{op}}(\boldR)}{\sigma_K(\cG)} + \alpha\left[ \frac{B_R\log d_R}{\sigma_K(\cG)}\zeta_{\text{op}}(\boldX) \left\|\Um\Hm\right\|_{2,\infty}\right.\\
        &\left.\left.+ \left(B_X\log d_X + \sigma(\boldX) \sqrt{N\log d_X}\right)\sqrt{\frac{\mu K}{N}} \frac{\zeta_{\text{op}}(\boldX)}{\sigma_K(\cG)} \right]\right\}\\
        &\lesssim \frac{\zeta_{\text{op}}(\boldR)^2}{\sigma_K(\cG)}\left\| \Um\Hm \right\|_{2,\infty} + \sqrt{\frac{\mu K}{N}}\frac{\zeta_{\text{op}}(\boldR)^2}{\sigma_K(\cG)} \\
        &+\alpha \left[\frac{\zeta_{\text{op}}(\boldR)\zeta_{\text{op}}(\boldX)}{\sigma_K(\cG)}\left\| \Um\Hm \right\|_{2,\infty} + \frac{\mu K}{\sqrt{NW}}\frac{\zetaop(\boldR)\zetaop(\boldX)}{\sigma_K(\cG)}+ \sqrt{\frac{\mu K}{N}}\frac{\zetaop(\boldG)^2}{\sigma_K(\cG)}\right],
    \end{align*}
    where the last step follows from
    \begin{align*}
        &\left(\left\|\cR^\top\right\|_{2,\infty} + B_R\log d_R + \sigma(\boldR)\sqrt{N\log d_R}\right) \left(B_X\log d_X + \sigma(\boldR) \sqrt{N\log d_X}\right)\sqrt{\frac{\mu K}{N}} \frac{\zeta_{\text{op}}(\boldX)}{\sigma_K(\cG)}\\
        &\lesssim \left\|\cR^\top\right\|_{2,\infty}\sigma(\boldX)\sqrt{\mu K\log d_X}\sqrt{\frac{\mu K}{N}}\frac{\zetaop(\boldX)}{\sigma_K(\cG)} + \sqrt{\frac{\mu K}{N}} \frac{\zetaop(\boldX)^{3/2}\zetaop(\boldR)^{1/2}}{\sigma_K(\cG)}\\
        &\lesssim \frac{\mu K}{\sqrt{NW}}\frac{\zetaop(\boldR)\zetaop(\boldX)}{\sigma_K(\cG)} + \sqrt{\frac{\mu K}{N}} \frac{\zetaop(\boldG)^2}{\sigma_K(\cG)},
    \end{align*}
    where the last inequality stems from both $\zetaop(\boldR)\lesssim \zetaop(\boldG)$ and $\zetaop(\boldX)\lesssim \zetaop(\boldG)$,
    \begin{equation*}
        \left\|\cR^\top \right\|_{2,\infty}\lesssim \sqrt{\frac{\mu K}{J}}\sigma_1(\cR),
    \end{equation*}
    and 
    \begin{equation*}
        \left(B_X\log d_X + \sigma(\boldX)\sqrt{N\log d_X}\right)^2 \lesssim \zetaop(\boldX),
    \end{equation*}
    following from \citet[F.49]{Yan2021}.

    Putting $\beta_1$, $\beta_{2,1}$, $\beta_{2,2}$, $\beta_{2,3}$ together, we can obtain,

    \begin{align*}
        &\left\|\mathbf{E}_{m,:}^R[\Pm(\boldR)]^\top\left( \Um\Hm-\boldU\right)\right\|_2\\
        &\lesssim \zeta_{\text{op}}(\boldR)\left\|\Um\Hm-\boldU\right\|_{2,\infty} + \frac{\zeta_{\text{op}}(\boldR)^2}{\sigma_K(\cG)}\left\| \Um\Hm \right\|_{2,\infty}\\
        &+ \sqrt{\frac{\mu K}{N}}\frac{\zeta_{\text{op}}(\boldR)^2}{\sigma_K(\cG)} +\alpha \left[\frac{\zeta_{\text{op}}(\boldR)\zeta_{\text{op}}(\boldX)}{\sigma_K(\cG)}\left\| \Um\Hm \right\|_{2,\infty}\right.\\
        &\left.+ \sqrt{\frac{\mu K}{N}}\frac{\zeta_{\text{op}}(\boldR)\zeta_{\text{op}}(\boldX)}{\sigma_K(\cG)} + \sqrt{\frac{\mu K}{N}} \frac{\zetaop(\boldG)^2}{\sigma_K(\cG)} \right].
    \end{align*}
    
\end{proof}

\subsection{Proof of Lemma \ref{lemma:GUbound}}
\label{sec:lem_GU}
\begin{proof}
    Define 
    \begin{align*}
        \left[\mathcal{P}_{-m,:}(\mathbf{A})\right]_{i,:} = \left\{
        \begin{aligned}
            &0,\quad  &\text{if } i=m,\\
            &\mathbf{A}_{i,:},\quad  &\text{if } i\neq m.
        \end{aligned}
        \right.
    \end{align*}
    This operator zeros out the $m$th row of any matrix $\mathbf{A}$. Recall the definition of $\boldG$, and we can obtain,
    \begin{align*}
        \boldG_{m,:}&=\boldR_{m,:}[\Pm(\boldR)]^\top + \boldG_{m,m}^R \mathbf{e}^\top_m + \alpha\boldX_{m,:}[\Pm(\boldX)]^\top + \alpha\boldG_{m,m}^X\mathbf{e}_m^\top\\
        &= \cR_{m,:}\cR^\top + \cR_{m,:}[\Pm(\boldE^R)]^\top + \boldE^R_{m,:}[\Pm(\boldR)]^\top + (\boldG_{m,m}^R-\cG^R_{m,m})\mathbf{e}_m^\top\\
        &+\alpha\left(\cX_{m,:}\cX^\top + \cX_{m,:}[\Pm(\boldE^X)]^\top + \boldE^X_{m,:}[\Pm(\boldX)]^\top + (\boldG_{m,m}^X-\cG^X_{m,m})\mathbf{e}_m^\top\right).
    \end{align*}
    Hence, 
    \begin{align*}
        \|\boldG_{m,:}(\hat{\boldU}\boldH-\boldU)\|_2&\leq \underbrace{\|\mathbf{e}_m^\top\cG (\hat{\boldU}\boldH-\boldU)\|_2}_{\beta_1} + \underbrace{\|\cR_{m,:}[\Pm(\boldE^R)]^\top (\hat{\boldU}\boldH-\boldU)\|_2}_{\beta_2}\\
        &+ \underbrace{\|\boldE^R_{m,:}[\Pm(\boldR)]^\top (\hat{\boldU}\boldH-\boldU)\|_2}_{\beta_3}  + \underbrace{\|(\boldG_{m,m}^R-\cG_{m,m}^R)\mathbf{e}_m^\top(\hat{\boldU}\boldH - \boldU)\|_2}_{\beta_4}\\
        &+\alpha\left(\|\cX_{m,:}[\Pm(\boldE^X)]^\top (\hat{\boldU}\boldH-\boldU)\|_2 + \|\boldE^X_{m,:}[\Pm(\boldX)]^\top(\hat{\boldU}\boldH-\boldU)\|_2\right.\\
        &+ \left.  \|(\boldG_{m,m}^X-\cG_{m,m}^X)\mathbf{e}_m^\top(\hat{\boldU}\boldH - \boldU)\|_2        \right).
    \end{align*}
    Since the proof of $\boldR$ terms and $\boldX$ terms are similar, for simplicity, in the following, we only prove the parts concerning $\boldR$.

    For $\beta_1$, one has
    \begin{align*}
        \|\mathbf{e}_{m}^\top\cG(\hat{\boldU}\boldH-\boldU)\|_2 &\leq \|\mathbf{e}_m^\top\boldU\boldSigma^2\boldU^\top(\hat{\boldU}\boldH - \boldU)\|_2\\
        &\leq \|\boldU\|_{2,\infty}\|\boldSigma\|^2\|\boldU^\top (\hat{\boldU}\hat{\boldU}^\top\boldU-\boldU)\|\\
        &\lesssim \sqrt{\frac{\mu K}{N}}\sigma_1(\cG) \|\boldH^\top\boldH - \mathbf{I}_K\|\\
        &\lesssim \kappa(\cG)\frac{(\zeta_{\text{op}}(\boldR)+ \alpha\zeta_{\text{op}}(\boldX))^2}{\sigma_K(\cG)} \sqrt{\frac{\mu K}{N}},
    \end{align*}
    where the last inequality evokes Lemma \ref{lemma:spectralbound}.

    For $\beta_2$, one can see that
    \begin{equation*}
        \|\cR_{m,:}[\Pm(\boldE^R)]^\top (\hat{\boldU}\boldH-\boldU)\|_2 \leq \left\|\cR_{m,:}[\Pm(\boldE^R)]^\top\right\|_2 \|\hat{\boldU}\boldH-\boldU\|.
    \end{equation*}
    These two terms can be bounded separately. By \citet[Lemma 26]{Yan2021},
    \begin{equation*}
        \left\|\cR_{m,:}[\Pm(\boldE^R)]^\top\right\|_2\lesssim \sqrt{\mu K} \sigma_1(\cR)\sigma(\boldR).
    \end{equation*}
    Combined with Lemma \ref{lemma:spectralbound}, we have,
    \begin{align*}
        \left\|\cR_{m,:}[\Pm(\boldE^R)]^\top (\hat{\boldU}\boldH-\boldU)\right\|_2 &\lesssim \sqrt{\mu K} \sigma_1(\cR)\sigma(\boldR) \frac{\zeta_{\text{op}}(\boldR)+\alpha\zeta_{\text{op}}(\boldX)}{\sigma_K(\cG)}\\
        &\lesssim \sqrt{\frac{\mu K}{N}} \frac{\zetaop(\boldR)\zetaop(\boldG)}{\sigma_K(\cG)}
    \end{align*}
    Note that from the definition of $\zeta_{\text{op}}(\boldR)$, one has $\zeta_{\text{op}}(\boldR)\geq \sigma(\boldR)\sigma_1(\cR)\sqrt{N\log d_R}$.

    For $\beta_3$, first,
    \begin{equation*}
        \beta_3 \leq \underbrace{\left\| \boldE_{m,\cdot}^R\left[\mathcal{P}_{-m,\cdot}(\boldR) \right]^\top \left( \Um\Hm-\boldU\right) \right\|_2}_{\beta_{3,1}}  + \underbrace{\left\| \boldE_{m,\cdot}^R\left[\mathcal{P}_{-m,\cdot}(\boldR) \right]^\top \left( \hat{\boldU}\boldH - \Um\Hm\right) \right\|_2}_{\beta_{3,2}}. 
    \end{equation*}
    For $\beta_{3,1}$, from Lemma \ref{lemma:32}, we have with probability at least $1-O(d_R^{-10})$,
    \begin{align*}
        &\left\|\mathbf{E}_{m,:}^R[\Pm(\boldR)]^\top\left( \Um\Hm-\boldU\right)\right\|_2 \\&\lesssim \zeta_{\text{op}}(\boldR)\left\|\Um\Hm-\boldU\right\|_{2,\infty} + \frac{\zeta_{\text{op}}(\boldR)^2}{\sigma_K(\cG)}\left\| \Um\Hm \right\|_{2,\infty}\\
        &+ \sqrt{\frac{\mu K}{N}}\frac{\zeta_{\text{op}}(\boldR)^2}{\sigma_K(\cG)} +\alpha \left[\frac{\zeta_{\text{op}}(\boldR)\zeta_{\text{op}}(\boldX)}{\sigma_K(\cG)}\left\| \Um\Hm \right\|_{2,\infty}\right.\\
        &\left.+ \sqrt{\frac{\mu K}{N}}\frac{\zeta_{\text{op}}(\boldR)\zeta_{\text{op}}(\boldX)}{\sigma_K(\cG)} + \sqrt{\frac{\mu K}{N}}\frac{\zetaop(\boldG)^{2}}{\sigma_K(\cG)} \right],
    \end{align*}
    Together with Lemma \ref{lemma:33}, 
    \begin{align*}
        \left\|\Um\Hm-\hat{\boldU}\boldH\right\| &\leq \left\|\Um\UmT-\hat{\boldU}\hat{\boldU}^\top\right\|\\
        &\lesssim \kappa_{\max}^2\frac{\zeta_{\text{op}}(\boldR) + \alpha \zeta_{\text{op}}(\boldX)}{\sigma_K(\cG)} \left( \left\| \hat{\boldU}\boldH\right\|_{2,\infty} + \sqrt{\frac{\mu K}{N}}\right)\\
        &\leq \kappa_{\max}^2\frac{\zeta_{\text{op}}(\boldR) + \alpha \zeta_{\text{op}}(\boldX)}{\sigma_K(\cG)} \left( \left\| \hat{\boldU}\boldH - \boldU\right\|_{2,\infty} + \left\|\boldU\right\|_{2,\infty} + \sqrt{\frac{\mu K}{N}}\right)\\
        &\asymp \kappa_{\max}^2\frac{\zeta_{\text{op}}(\boldR) + \alpha \zeta_{\text{op}}(\boldX)}{\sigma_K(\cG)} \left( \left\| \hat{\boldU}\boldH - \boldU\right\|_{2,\infty} + \sqrt{\frac{\mu K}{N}}\right),
    \end{align*}
    and thus
    \begin{align*}
        \left\|\Um\Hm-\boldU\right\|_{2,\infty} &\leq \left\|\Um\Hm-\hat{\boldU}\boldH\right\|_{2,\infty} + \left\| \hat{\boldU}\boldH - \boldU \right\|_{2,\infty}\\
        &\lesssim \kappa_{\max}^2\frac{\zeta_{\text{op}}(\boldR) + \alpha \zeta_{\text{op}}(\boldX)}{\sigma_K(\cG)} \left( \left\| \hat{\boldU}\boldH - \boldU\right\|_{2,\infty} + \sqrt{\frac{\mu K}{N}}\right) + \left\| \hat{\boldU}\boldH - \boldU \right\|_{2,\infty}\\
        &\lesssim \kappa_{\max}^2\frac{\zeta_{\text{op}}(\boldR) + \alpha \zeta_{\text{op}}(\boldX)}{\sigma_K(\cG)} \sqrt{\frac{\mu K}{N}}+ \left\| \hat{\boldU}\boldH - \boldU \right\|_{2,\infty},
    \end{align*}
    \begin{align*}
        \left\|\Um\Hm\right\|_{2,\infty} &\lesssim \|\boldU\|_{2,\infty} + \left\|\Um\Hm-\boldU\right\|_{2,\infty}\\
        &\lesssim \sqrt{\frac{\mu K}{N}} +  \kappa_{\max}^2\frac{\zeta_{\text{op}}(\boldR) + \alpha \zeta_{\text{op}}(\boldX)}{\sigma_K(\cG)} \sqrt{\frac{\mu K}{N}}+ \left\| \hat{\boldU}\boldH - \boldU \right\|_{2,\infty}\\
        &\asymp \sqrt{\frac{\mu K}{N}} + \left\| \hat{\boldU}\boldH - \boldU \right\|_{2,\infty},
    \end{align*}
    where in the last equality, we use $\zetaop(\boldG)/\kappa_{\max}^2\ll \sigma_K(\cG)$.

    Therefore,
    \begin{align*}
        \beta_{3,1} &\lesssim \kappa_{\max}^2\frac{\zeta_{\text{op}}(\boldR) \zeta_{\text{op}}(\boldG)}{\sigma_K(\cG)}\sqrt{\frac{\mu K}{N}} + \zeta_{\text{op}}(\boldR) \left\|\hat{\boldU}\boldH-\boldU\right\|_{2,\infty}+ \frac{\zeta_{\text{op}}(\boldR)^2}{\sigma_K(\cG)}\sqrt{\frac{\mu K}{N}}\\
        &+\frac{\zetaop(\boldR)^2}{\sigma_K(\cG)}\left\|\hat{\boldU}\boldH-\boldU\right\|_{2,\infty} + \sqrt{\frac{\mu K}{N}}\frac{\zetaop(\boldR)^2}{\sigma_K(\cG)} + \alpha\left[ \frac{\zeta_{\text{op}}(\boldR)\zetaop(\boldX)}{\sigma_K(\cG)}\sqrt{\frac{\mu K}{N}}\right.\\
        &\left.+ \frac{\zetaop(\boldR)\zetaop(\boldX)}{\sigma_K(\cG)} \left\|\hat{\boldU}\boldH-\boldU\right\|_{2,\infty} +  \sqrt{\frac{\mu K}{N}}\frac{\zetaop(\boldG)^{2}}{\sigma_K(\cG)}  \right]\\
        &\asymp \kappa_{\max}^2\frac{\zeta_{\text{op}}(\boldR) \zeta_{\text{op}}(\boldG)}{\sigma_K(\cG)}\sqrt{\frac{\mu K}{N}} + \zetaop(\boldR)\left\|\hat{\boldU}\boldH-\boldU\right\|_{2,\infty}\\
        &+\alpha\left[ \frac{\zeta_{\text{op}}(\boldR)\zetaop(\boldX)}{\sigma_K(\cG)}\sqrt{\frac{\mu K}{N}} + \sqrt{\frac{\mu K}{N}}\frac{\zetaop(\boldG)^{2}}{\sigma_K(\cG)} + \frac{\zetaop(\boldR)\zetaop(\boldX)}{\sigma_K(\cG)} \left\|\hat{\boldU}\boldH-\boldU\right\|_{2,\infty} \right].
    \end{align*}

    For $\beta_{3,2}$,
    \begin{align*}
        \left\| \boldE_{m,\cdot}^R\left[\mathcal{P}_{-m,\cdot}(\boldR) \right]^\top \left( \hat{\boldU}\boldH - \Um\Hm\right) \right\|_2 &\leq \left\| \boldE_{m,\cdot}^R\left[\mathcal{P}_{-m,\cdot}(\boldR) \right]^\top\right\|_2 \left\| \left( \hat{\boldU}\boldH - \Um\Hm\right) \right\|\\
        &\lesssim \zetaop(\boldR)\left\| \left( \hat{\boldU}\boldH - \Um\Hm\right) \right\|\\
        &\lesssim \kappa_{\max}^2\zetaop(\boldR)\frac{\zetaop(\boldG)}{\sigma_K(\cG)} \left(\left\|\Um\Hm\right\|_{2,\infty} + \sqrt{\frac{\mu K}{N}}\right)\\
        &\lesssim \kappa_{\max}^2\zetaop(\boldR)\frac{\zetaop(\boldG)}{\sigma_K(\cG)}\left(\left\|\hat{\boldU}\boldH-\boldU\right\|_{2,\infty} + \sqrt{\frac{\mu K}{N}}\right),
    \end{align*}
    where the second to last inequality applies Lemma \ref{lemma:33}.

    Combining $\beta_{3,1}$ and $\beta_{3,2}$ yield,
    \begin{align*}
        \beta_3 &\lesssim \kappa_{\max}^2\frac{\zeta_{\text{op}}(\boldR) \zeta_{\text{op}}(\boldG)}{\sigma_K(\cG)}\sqrt{\frac{\mu K}{N}} + \zetaop(\boldR)\left\|\hat{\boldU}\boldH-\boldU\right\|_{2,\infty}\\
        &+\alpha\left[ \frac{\zeta_{\text{op}}(\boldR)\zetaop(\boldX)}{\sigma_K(\cG)}\sqrt{\frac{\mu K}{N}} + \sqrt{\frac{\mu K}{N}}\frac{\zetaop(\boldG)^{2}}{\sigma_K(\cG)} + \frac{\zetaop(\boldR)\zetaop(\boldX)}{\sigma_K(\cG)} \left\|\hat{\boldU}\boldH-\boldU\right\|_{2,\infty} \right].
    \end{align*}

    For $\beta_4$, we have,
    \begin{equation*}
        \beta_4 \leq \left\| \od\left(\boldG^R-\cG^R\right) \right\| \left\|\hat{\boldU}\boldH - \boldU \right\|_{2,\infty}
        \lesssim \kappa_{\max}^2\sqrt{\frac{\mu K}{N}}\zeta_{\text{op}}(\boldR)\left\|\hat{\boldU}\boldH - \boldU \right\|_{2,\infty},
    \end{equation*}
    where the last inequality comes from \citet[Lemma 23]{Yan2021}.

    Then, combining $\beta_1,\cdots,\beta_4$, following the proof of \citet[Lemma 26]{Yan2021}, we can obtain,
    \begin{align*}
        \beta_1+\beta_2+\beta_3+\beta4 &\lesssim \kappa_{\max}^2\frac{\zeta_{\text{op}}(\boldR) \zeta_{\text{op}}(\boldG)}{\sigma_K(\cG)}\sqrt{\frac{\mu K}{N}} + \zetaop(\boldR)\left\|\hat{\boldU}\boldH-\boldU\right\|_{2,\infty}\\
        &+\alpha\left[ \frac{\zeta_{\text{op}}(\boldR)\zetaop(\boldX)}{\sigma_K(\cG)}\sqrt{\frac{\mu K}{N}} + \sqrt{\frac{\mu K}{N}}\frac{\zetaop(\boldG)^{2}}{\sigma_K(\cG)} + \frac{\zetaop(\boldR)\zetaop(\boldX)}{\sigma_K(\cG)} \left\|\hat{\boldU}\boldH-\boldU\right\|_{2,\infty} \right].
    \end{align*}

    Thus, by symmetry, we have,
    \begin{align*}
        \left\|\boldG_{m,:}(\hat{\boldU}\boldH-\boldU)\right\|_2 &\lesssim \kappa_{\max}^2\frac{\zeta_{\text{op}}(\boldR) \zeta_{\text{op}}(\boldG)}{\sigma_K(\cG)}\sqrt{\frac{\mu K}{N}} + \zetaop(\boldR)\left\|\hat{\boldU}\boldH-\boldU\right\|_{2,\infty}\\
        &+\alpha\left[ \frac{\zeta_{\text{op}}(\boldR)\zetaop(\boldX)}{\sigma_K(\cG)}\sqrt{\frac{\mu K}{N}} + \frac{\zetaop(\boldR)\zetaop(\boldX)}{\sigma_K(\cG)} \left\|\hat{\boldU}\boldH-\boldU\right\|_{2,\infty} + \sqrt{\frac{\mu K}{N}}\frac{\zetaop(\boldG)^2}{\sigma_K(\cG)} \right]\\
        &+\alpha \left\{ \kappa_{\max}^2\frac{\zeta_{\text{op}}(\boldX) \zeta_{\text{op}}(\boldG)}{\sigma_K(\cG)}\sqrt{\frac{\mu K}{N}} + \zetaop(\boldX)\left\|\hat{\boldU}\boldH-\boldU\right\|_{2,\infty}\right.\\
        &\left.+\alpha\left[ \frac{\zeta_{\text{op}}(\boldR)\zetaop(\boldX)}{\sigma_K(\cG)}\sqrt{\frac{\mu K}{N}} + \frac{\zetaop(\boldR)\zetaop(\boldX)}{\sigma_K(\cG)} \left\|\hat{\boldU}\boldH-\boldU\right\|_{2,\infty} + \sqrt{\frac{\mu K}{N}} \frac{\zetaop(\boldG)^2}{\sigma_k(\cG)} \right] \right\}\\
        &\lesssim \kappa_{\max}^2\frac{\zeta_{\text{op}}(\boldR) \zeta_{\text{op}}(\boldG)}{\sigma_K(\cG)}\sqrt{\frac{\mu K}{N}} + \zetaop(\boldR)\left\|\hat{\boldU}\boldH-\boldU\right\|_{2,\infty}\\
        &+\alpha\left[ \kappa_{\max}^2\frac{\zeta_{\text{op}}(\boldX) \zeta_{\text{op}}(\boldG)}{\sigma_K(\cG)}\sqrt{\frac{\mu K}{N}} + \zetaop(\boldX)\left\|\hat{\boldU}\boldH-\boldU\right\|_{2,\infty} + \sqrt{\frac{\mu K}{N}} \frac{\zetaop(\boldG)^2}{\sigma_k(\cG)} \right]\\
        &+\alpha^2\left[ \frac{\zeta_{\text{op}}(\boldR)\zetaop(\boldX)}{\sigma_K(\cG)}\sqrt{\frac{\mu K}{N}} + \frac{\zetaop(\boldR)\zetaop(\boldX)}{\sigma_K(\cG)} \left\|\hat{\boldU}\boldH-\boldU\right\|_{2,\infty} + \sqrt{\frac{\mu K}{N}} \frac{\zetaop(\boldG)^2}{\sigma_k(\cG)} \right],
    \end{align*}
    provided that $\zetaop(\boldR)\ll \sigma_K(\cG)^2$.

\end{proof}

\subsection{Proof of Lemma \ref{lemma:l1_l2_U}}
\label{sec:lem_l1l2}
    Similarly as in \citet[Lemma 30]{Yan2021} and \citet[Lemma 9]{caiSubspaceEstimationUnbalanced2020}, to prove the result, we only need to show the upper bound on $\left\|\boldG^{(m)}-\boldG^{(m,l)}  \right\|$ and $\left\| \left( \boldG^{(m)} - \boldG^{(m,l)} \right)\Uml \right\|$ respectively. The bound on $\left\|\boldG^{(m)}-\boldG^{(m,l)}  \right\|$ is just a simple application of triangle inequality. For $\left\| \left( \boldG^{(m)} - \boldG^{(m,l)} \right)\Uml \right\|$, first, note that,
    \begin{equation*}
        \left( \boldG^{(m)} - \boldG^{(m,l)} \right)_{m,\cdot} = \cR\boldE^{R\top}_{:,l} - \cR_{m,l}E_{m,l}^R\mathbf{e}_m^\top + \alpha \left( \cX\boldE^{X\top}_{:,l} - \cX_{m,l}E_{m,l}^X\mathbf{e}_m^\top \right).
    \end{equation*}

    Following the treatment in \citet[Lemma 9]{caiSubspaceEstimationUnbalanced2020}, one has
    \begin{align*}
        \left\|\left( \boldG^{(m)} - \boldG^{(m,l)} \right)\Uml \right\| &\lesssim \underbrace{\left\| \Pm\left(\boldG^{(m)} - \boldG^{(m,l)}\right)\Uml\Hml \right\|}_{\beta_1}\\
        &+ \underbrace{\left\| \mathcal{P}_{m,\cdot} \left(\boldG^{(m)} - \boldG^{(m,l)}\right)\Uml\Hml \right\|}_{\beta_2}.
    \end{align*}

    For $\beta_2$, we have
    \begin{align*}
        \beta_2 &\leq \left\| \left( \boldG^{(m)}_R-\boldG^{(m,l)}_R \right)_{m,\cdot} \Uml\Hml \right\|_2 + \left\| \left( \boldG^{(m)}_R - \boldG^{(m,l)}_R \right)_{m,\cdot} \right\|_2\left\| \Uml\Hml \right\|_{2,\infty}\\
        &+ \alpha\left(\left\| \left( \boldG^{(m)}_X-\boldG^{(m,l)}_X \right)_{m,\cdot} \Uml\Hml \right\|_2 + \left\| \left( \boldG^{(m)}_X - \boldG^{(m,l)}_X \right)_{m,\cdot} \right\|_2\left\| \Uml\Hml \right\|_{2,\infty}\right)\\
        &\leq \|\cR\|_{\infty}\left\|(\boldE_{\cdot,l}^R)\Uml\Hml\right\|_2 + \|\cR\|_{\infty}\left\|\boldE_{\cdot,l}^R\right\|_2 \left\|\Uml\Hml\right\|_{2,\infty}\\
        &+ \alpha \left( \|\cX\|_{\infty}\left\|(\boldE_{\cdot,l}^X)\Uml\Hml\right\|_2 + \|\cX\|_{\infty}\left\|\boldE_{\cdot,l}^X\right\|_2 \left\|\Uml\Hml\right\|_{2,\infty} \right)\\
        &\lesssim \left[\left( B_R\log d_R+\sigma(\boldR)\sqrt{N\log d_R} \right)\|\cR\|_{\infty}+\alpha\left( B_X\log d_X+ \sigma(\boldX) \sqrt{N\log d_X} \right)\|\cX\|_{\infty} \right] \left\| \Uml\Hml \right\|_{2,\infty}.
    \end{align*}

    For $\beta_1$, similarly we have,
    \begin{align*}
        \left\| \Pm\left(\boldG^{(m)} - \boldG^{(m,l)}\right)\Uml\Hml \right\| &\leq \underbrace{\left\| \left(\boldE_{\cdot,l}^R\boldE_{\cdot,l}^{R\top} - \mathbf{D}_l^R\right)\Uml\Hml \right\|}_{\beta_{1,1}}\\
        &+ \underbrace{\left\| \left(\cR_{:,l}\boldE_{\cdot,l}^{R\top} + \boldE_{\cdot,l}^R\cR_{:,l}^{\top} - 2\hat{\mathbf{D}}_l^R\right)\Uml\Hml \right\|}_{\beta_{1,2}}\\
        &+\alpha\left[ \left\| \left(\boldE_{\cdot,l}^X\boldE_{\cdot,l}^{X\top} - \mathbf{D}_l^X\right)\Uml\Hml \right\|\right.\\
        &\left.+  \left\| \left(\cX_{:,l}\boldE_{\cdot,l}^{X\top} + \boldE_{\cdot,l}^X\cX_{:,l}^{\top} - 2\hat{\mathbf{D}}_l^X\right)\Uml\Hml \right\| \right],
    \end{align*}
    where $\mathbf{D}^R_l$ and $\hat{\mathbf{D}}^R_l$ are diagonal matrices defined as,
    \begin{align*}
        \left(\mathbf{D}_l^R\right)_{i,i} &= E_{i,l}^{R2},\\
        \left(\hat{\mathbf{D}_l^R}\right)_{i,i} &= \cR_{i,l}E_{i,l}^R,
    \end{align*}
    and $\mathbf{D}^X_l$ and $\hat{\mathbf{D}}_l^X$ are defined accordingly. For simplicity, we only show the bound for $\beta_{1,1}$ and $\beta_{1,2}$. The terms with respect to $X$ can be bounded in the same manner.

    For $\beta_{1,1}$, following the same treatment, one has
    \begin{equation*}
        \left\| \mathbf{E}_{\cdot,l}^R \right\|_2\left\| \left(\boldE_{\cdot,l}^R\right)^{\top}\Uml\Hml \right\|_2 \lesssim \left(B_R\log d_R + \sigma(\boldR)\sqrt{N\log d_R}\right)^2 \left\|\Uml\Hml \right\|_{2,\infty},
    \end{equation*}
    and
    \begin{equation*}
        \left\| \left( \mathbf{D}_l^R- \mathbb{E}\left[\mathbf{D}_l^R\right] \right) \Uml\Hml \right\| \lesssim (B_R^2\log d_R + \sigma(\boldR)^2N\sqrt{\log d_R})\left\| \Uml\Hml \right\|_{2,\infty},
    \end{equation*}
    with
    \begin{equation*}
        \left\| \mathbb{E}\left[\mathbf{D}_l^R\right]\Uml\Hml \right\| \leq \sigma(\boldR)^{2}.
    \end{equation*}
    Therefore, combining these yields,
    \begin{align*}
        \beta_{1,1}&\leq \left\| \mathbf{E}_{\cdot,l}^R \right\|_2\left\| \left(\boldE_{:,l}^R\right)^{\top}\Uml\Hml \right\|_2 + \left\| \mathbf{D}_l^R \Uml\Hml \right\|\\
        &\lesssim \left(B_R\log d_R + \sigma(\boldR)\sqrt{N\log d_R}\right)^2 \left\|\Uml\Hml \right\|_{2,\infty} + \sigma(\boldR)^{2}.
    \end{align*}

    As for $\beta_{1,2}$, with probability at least $1-O(d_R^{-13})$, we have
    \begin{align*}
        \left\| \cR_{\cdot,l}\boldE_{\cdot,l}^{R\top}\Uml\Hml \right\| &\lesssim \left( B_R\log d_R + \sigma(\boldR)\sqrt{N\log d_R} \right)\left\|\cR^{\top}\right\|_{2,\infty}\left\| \Uml\Hml \right\|_{2,\infty},\\
        \left\| \boldE_{\cdot,l}^{R}\cR_{\cdot,l}^\top\Uml\Hml \right\| &\lesssim \left(B_R\sqrt{\log d_R} + \sigma(\boldR)\sqrt{N}\right)\left\| \cR^\top \right\|,\\
        \left\| \hat{\mathbf{D}}_l^R\Uml\Hml \right\| &\lesssim \left( B_R\log d_R + \sigma(\boldR)\sqrt{N\log d_R} \right)\|\cR\|_{\infty}\left\|\Uml\Hml \right\|_{2,\infty}.
    \end{align*}
    Combining these results yields,
    \begin{equation*}
        \beta_{1,2} \lesssim \left( B_R\log d_R + \sigma(\boldR)\sqrt{N\log d_R} \right)\|\cR^\top\|_{2,\infty}
    \end{equation*}

    Then, combining $\beta_{1,1}$ and $\beta_{1,2}$, we can obtain,
    \begin{align*}
        \beta_1 &\lesssim \left(B_R\log d_R + \sigma(R)\sqrt{N\log d_R}\right)^2 \left\|\Uml\Hml \right\|_{2,\infty} + \sigma(\boldR)^{2}\\ 
        &+ \left( B_R\log d_R + \sigma(\boldR)\sqrt{N\log d_R} \right)\|\cR^\top\|_{2,\infty}\\
        &+ \alpha\left[ \left(B_X\log d_X + \sigma(\boldX)\sqrt{N\log d_X}\right)^2 \left\|\Uml\Hml \right\|_{2,\infty} + \sigma(\boldX)^{2}\right.\\ &+\left. \left( B_X\log d_X + \sigma(\boldX)\sqrt{N\log d_X} \right)\|\cX^\top\|_{2,\infty} \right].
    \end{align*}

    Combining $\beta_1$ and $\beta_2$, we have
    \begin{align*}
        \left\|\left( \boldG^{(m)} - \boldG^{(m,l)} \right)\Uml \right\| &\lesssim  \left(B_R\log d_R + \sigma(\boldR)\sqrt{N\log d_R}\right)^2 \left\|\Uml\Hml \right\|_{2,\infty} + \sigma(\boldR)^{2}\\
        &+ \left( B_R\log d_R + \sigma(\boldR)\sqrt{\log d_R} \right)\|\cR^\top\|_{2,\infty}\\
        &+ \alpha\left[ \left(B_X\log d_X + \sigma_(\boldX)\sqrt{N\log d_X}\right)^2 \left\|\Uml\Hml \right\|_{2,\infty}+ \sigma(\boldX)^{2}\right.\\
        &\left. + \left( B_X\log d_X + \sigma(\boldX)\sqrt{N\log d_X} \right)\|\cX^\top\|_{2,\infty} \right].
    \end{align*}
    
    Therefore, by applying Davis-Kahan $\sin \Theta$ Theorem, with probability no less than $1-O(d^{-13})$,
    \begin{align*}
        \left\| \Um\UmT-\Uml\UmlT \right\| &\leq \frac{\left\|\left( \boldG^{(m)} - \boldG^{(m,l)} \right)\Uml \right\|}{\sigma_K(\boldG^{(m)})-\sigma_{K+1}(\boldG^{(m)}) - \left\|\boldG^{(m)}-\boldG^{(m,l)}\right\|}\\
        &\lesssim \frac{1}{\sigma_K(\cG)}\left\|\left( \boldG^{(m)} - \boldG^{(m,l)} \right)\Uml \right\|\\
        &\lesssim \frac{1}{\sigma_K(\cG)}\left(B_R\log d_R + \sigma(\boldR)\sqrt{N\log d_R}\right)^2 \left\|\Um\Hm\right\|_{2,\infty}\\
        &+ \frac{\sigma(\boldR)^{2}}{\sigma_K(\cG)} + \frac{1}{\sigma_K(\cG)}\left( B_R\log d_R + \sigma(\boldR)\sqrt{N\log d_R} \right)\|\cR^\top\|_{2,\infty}\\
        &+ \alpha\left[ \frac{1}{\sigma_K(\cG)}\left(B_X\log d_X + \sigma(\boldX)\sqrt{N\log d_X}\right)^2 \left\|\Um\Hm\right\|_{2,\infty} \right.\\
        &+\left. \frac{\sigma(\boldX)^{2}}{\sigma_K(\cG)} + \frac{1}{\sigma_K(\cG)}\left( B_X\log d_X + \sigma(\boldX)\sqrt{N\log d_X} \right)\|\cX^\top\|_{2,\infty} \right].
    \end{align*}

\subsection{Proof of Lemma \ref{lemma:33}}
\label{sec:lem_33}
    By Davis-Kahan $\sin \boldTheta$ Theorem \citep[Theorem 2.2.1]{Chen2021}, we have
    \begin{align*}
        \left\| \Um\UmT - \boldU\boldU^\top\right\| &\leq \frac{\left\|(\boldG-\boldG^{(m)})\Um\right\|}{\sigma_K(\boldG^{(m)})-\sigma_{K+1}(\boldG)} \leq \frac{2\left\|(\boldG-\boldG^{(m)})\Um\right\|}{\sigma_K(\cG)}\\
        &\leq \underbrace{\frac{2\left\| \di\left(\boldG - \boldG^{(m)} \right)\boldU^{(m)} \right\|}{\sigma_K(\cG)}}_{\alpha_1} + \underbrace{\frac{2\left\| \od\left(\boldG - \boldG^{(m)} \right)\boldU^{(m)} \right\|}{\sigma_K(\cG)}}_{\alpha_2} ,
    \end{align*}
    where the second inequality comes from Weyl's inequality,
    \begin{align*}
        \sigma_K\left(\boldG^{(m)}\right) &\geq \sigma_K\left(\cG\right) - \left\|\boldG^{(m)} - \cG\right\| \stackrel{(a)}{\geq} \sigma_K(\cG)-\Tilde{C}\zeta_{\text{op}}(\boldG) \stackrel{(b)}{\geq} \frac{3}{4}\sigma_K(\cG),\\
        \sigma_{K+1}(\boldG) &\leq \sigma_{K+1}(\cG)+\|\boldG - \cG\| \stackrel{(c)}{\leq} \Tilde{C}\zeta_{\text{op}}(\boldG) \stackrel{(d)}{\leq} \frac{1}{4}\sigma_K(\cG),
    \end{align*}
    with $\Tilde{C}\geq 0$ a constant. (a) comes from Lemma \ref{lemma:29}; (c) comes from Lemma \ref{lemma:individualbound}; (b) and (d) come from $\zeta_{\text{op}}(\boldG) \ll \sigma_K(\cG)$.

    \textbf{Bounding} $\alpha_1$. Since $\di\left(\boldG^{(m)}\right)=\di(\cG)$, we have
    \begin{equation*}
        \alpha_1 \leq \frac{2\left\| \di\left(\boldG - \boldG^{(m)} \right)\right\| \left\|\Um \right\|}{\sigma_K(\cG)} \lesssim \kappa_{\max}^2\sqrt{\frac{\mu K}{N}}\zeta_{\text{op}}(\boldR) + \alpha \kappa_{\max}^2\sqrt{\frac{\mu K}{N}}\zeta_{\text{op}}(\boldX),
    \end{equation*}
    where the second inequality comes from Lemma \ref{lemma:individualbound}.

    \textbf{Bounding} $\alpha_2$. Since $\od(\boldG-\boldG^{(m)})$ is supported on the $m$th row and $m$th column, we have,
    \begin{align*}
        \left\| \od\left(\boldG - \boldG^{(m)} \right) \Um\right\| &\leq \left\| \od\left(\boldG - \boldG^{(m)} \right) \Um\right\|_F \stackrel{(a)}{\lesssim} \left\| \od\left(\boldG - \boldG^{(m)} \right) \Um\boldH^{(m)}\right\|_F\\
        &\leq \left\|\mathcal{P}_{m,\cdot}\left(\boldG - \boldG^{(m)} \right)\Um\Hm \right\| + \left\|\mathcal{P}_{\cdot,m}\left(\boldG - \boldG^{(m)} \right)\Um\Hm \right\|\\
        &\stackrel{(b)}{=} \underbrace{\left\|\boldE_{m,\cdot}\left[\mathcal{P}_{-m,\cdot}(\mathcal{L})\right]^\top \Um\Hm \right\|_F}_{\alpha_{2,1}} + \underbrace{\left\|\left(\boldG - \boldG^{(m)} \right)_{m,\cdot} \right\|_2 \left\|\Um_{m,\cdot} \Hm \right\|_2}_{\alpha_{2,2}},
    \end{align*}
    where (a) is from Lemma \ref{lemma:spectralbound} and (b) comes from the definition of $\boldG$ and $\boldG^{(m)}$.

    For $\alpha_{2,1}$, note that
    \begin{equation*}
        \mathbf{E} = \begin{bmatrix}
            \mathbf{E}^{R} & \alpha \mathbf{E}^{X}
        \end{bmatrix}, 
    \end{equation*}
    and the similar property holds for $\mathbf{M}$. Therefore, $\alpha_{2,1}$ can be bounded by,
    \begin{align*}
        \alpha_{2,1} &\leq \left\|\boldE_{m,\cdot}^{R}\left[\mathcal{P}_{-m,\cdot}(\boldR)\right]^\top \Um\Hm \right\|_F + \alpha \left\|\boldE_{m,\cdot}^{X}\left[\mathcal{P}_{-m,\cdot}(\boldX)\right]^\top \Um\Hm \right\|_F\\
        &\lesssim \left(\zeta_{\text{op}}(\boldR) + \alpha \zeta_{\text{op}}(\boldX)\right) \left( \left\| \Um\Hm\right\|_{2,\infty} + \sqrt{\frac{\mu K}{N}} \right),
    \end{align*}
    with probability exceeding $1-O(d^{-11})$, where the second inequality comes from Lemma \ref{lemma:32}.

    As for $\alpha_{2,2}$, by Lemma \ref{lemma:29}, we have
    \begin{align*}
        \alpha_{2,2} \leq \left\| \boldG - \boldG^{(m)} \right\|_2\left\|\Um\Hm \right\|_{2,\infty} &\leq \left(\left[\sigma(\boldR)^2+\alpha\sigma(\boldX)^2 \right]\sqrt{N\max\{J, W \}}\log d + \right. \\  &\left.\left[\sigma(\boldR) \sigma_1(\cR) + \alpha \sigma(\boldX)\sigma_1(\cX) \right] \kappa_{\max}^2\sqrt{\mu K\log d}\right) \left\|\Um\Hm \right\|_{2,\infty}.
    \end{align*}

    {with the assumption $N\gtrsim \kappa_{\max}^4\mu K$}, one can see that $\alpha_{2,1}\lesssim \alpha_{1,1}$. Therefore,
    \begin{align*}
        \left\| \od\left(\boldG - \boldG^{(m)} \right) \Um\right\| \lesssim \left(\zeta_{\text{op}}(\boldR) + \alpha \zeta_{\text{op}}(\boldX)\right) \left( \left\| \Um\Hm\right\|_{2,\infty} + \sqrt{\frac{\mu K}{N}} \right),
    \end{align*}
    and
    \begin{align*}
        \alpha_2 \lesssim \frac{\zeta_{\text{op}}(\boldR) + \alpha \zeta_{\text{op}}(\boldX)}{\sigma_K(\cG)} \left( \left\| \Um\Hm\right\|_{2,\infty} + \sqrt{\frac{\mu K}{N}} \right).
    \end{align*}

    Then, combining $\alpha_1$ and $\alpha_2$, we have,
    \begin{align*}
        \left\|\Um\UmT - \boldU\boldU^\top \right\| &\leq \alpha_1 + \alpha_2\\
        &\lesssim \kappa_{\max}^2\sqrt{\frac{\mu K}{N}}\zeta_{\text{op}}(\boldR) + \alpha \kappa_{\max}^2\sqrt{\frac{\mu K}{N}}\zeta_{\text{op}}(\boldX)\\
        &+ \frac{\zeta_{\text{op}}(\boldR) + \alpha \zeta_{\text{op}}(\boldX)}{\sigma_K(\cG)} \left( \left\| \Um\Hm\right\|_{2,\infty} + \sqrt{\frac{\mu K}{N}} \right)\\
        &\lesssim \kappa_{\max}^2\frac{\zetaop(\cG)}{\sigma_K(\cG)} \left( \left\| \Um\Hm\right\|_{2,\infty} + \sqrt{\frac{\mu K}{N}}\right),
    \end{align*}
    with probability at least $1-O(d^{-11})$

\section{Discussion on incoherence parameter}
\label{sec:dis_incoherence}

\subsection{The bound on incoherence parameters}
\label{sec:incoherence bound}

First, we present the following lemma for completeness.

\begin{lemma}[Lemma 3.6 in \citet{Mao2019}]
    \label{lemma:pi_bound}
    If $\bm{\pi}_i^{*}\stackrel{i.i.d}{\sim} \text{Dirichlet}(\bm{\beta})$ for every $i\in [N]$, let $\beta_{\max}=\max_k{\beta}_k$, $\beta_{\min}=\min_k\beta_k$, $\beta_0=\sum_{k=1}^K\beta_k$ and $\nu = \beta_0/\beta_{\min}$, then
    \begin{align*}
        \rP\left(\sigma_K(\boldPi)^2\geq \frac{N}{2\nu(1+\beta_0)}\right)  &\geq 1 - K\exp\left(-\frac{N}{36\nu^2(1+\beta_0)^2}\right),\\
        \rP\left( \sigma_1(\boldPi)^2 \leq \frac{3N(\beta_{\max}+\|\bm{\beta}\|^2)}{2\beta_0(1+\beta_0)} \right) &\geq 1 - K\exp\left(-\frac{N}{36\nu^2(1+\beta_0)^2}\right),\\
        \rP\left( \kappa(\boldPi)^2 \leq \frac{3(\beta_{\max}+\|\bm{\beta}\|^2)}{\beta_{\min}} \right) &\geq 1- K\exp\left(-\frac{N}{36\nu^2(1+\beta_0)^2}\right).
    \end{align*}
\end{lemma}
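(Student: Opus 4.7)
The plan is to recognize that $\boldPi^{\top}\boldPi = \sum_{i=1}^N \bm{\pi}_i \bm{\pi}_i^{\top}$ is a sum of $N$ i.i.d.\ bounded PSD rank-one matrices, and to reduce the three displayed bounds to a single matrix concentration statement about this sum, combined with explicit eigenvalue estimates for the per-sample mean.

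First, I would compute the second-moment matrix of a Dirichlet draw. Standard Dirichlet moment identities give $\E[\pi_{ik}^2] = \beta_k(1+\beta_k)/[\beta_0(\beta_0+1)]$ and $\E[\pi_{ik}\pi_{il}] = \beta_k\beta_l/[\beta_0(\beta_0+1)]$ for $k \neq l$, so
\begin{equation*}
    \E[\bm{\pi}_i\bm{\pi}_i^{\top}] \;=\; \frac{1}{\beta_0(\beta_0+1)}\bigl(\mathrm{diag}(\bm{\beta}) + \bm{\beta}\bm{\beta}^{\top}\bigr).
\end{equation*}
Since $\bm{\beta}\bm{\beta}^{\top}$ is PSD, the smallest eigenvalue of this matrix is at least $\beta_{\min}/[\beta_0(\beta_0+1)] = 1/[\nu(\beta_0+1)]$, while the largest eigenvalue is at most $(\beta_{\max} + \|\bm{\beta}\|^2)/[\beta_0(\beta_0+1)]$ by the triangle inequality. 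Multiplying by $N$ gives $\mu_{\min} := \lambda_K(\E[\boldPi^{\top}\boldPi]) \geq N/[\nu(\beta_0+1)]$ and $\mu_{\max} := \lambda_1(\E[\boldPi^{\top}\boldPi]) \leq N(\beta_{\max}+\|\bm{\beta}\|^2)/[\beta_0(\beta_0+1)]$, which are exactly the population targets showing up on the right-hand sides (up to the $1/2$ and $3/2$ slack).

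Next I would invoke a matrix concentration inequality for $\sum_i \bm{\pi}_i\bm{\pi}_i^{\top}$. Each summand satisfies $\|\bm{\pi}_i\bm{\pi}_i^{\top}\| = \|\bm{\pi}_i\|_2^2 \le 1$ almost surely because $\bm{\pi}_i$ lies in the simplex. I would apply the Matrix Chernoff bound (or equivalently Matrix Bernstein) in its two-sided form: for sums of i.i.d.\ PSD matrices with operator-norm bound $R$,
\begin{equation*}
    \rP\bigl(\lambda_K(\textstyle\sum_i \bm{\pi}_i\bm{\pi}_i^{\top}) \le (1-\tfrac12)\mu_{\min}\bigr) \;\le\; K\,e^{-c\,\mu_{\min}^2/(v + R\mu_{\min})},
\end{equation*}
and similarly for $\lambda_1$ with $\mu_{\max}$, where $v = \|\sum_i \E[\bm{\pi}_i\bm{\pi}_i^{\top}\bm{\pi}_i\bm{\pi}_i^{\top}]\| \le \mu_{\max}$ since $\|\bm{\pi}_i\|_2^2 \le 1$ lets one pull a factor through. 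Substituting the bounds $\mu_{\min} \ge N/[\nu(\beta_0+1)]$ and $\mu_{\max} \le N\nu/\cdots$ and keeping $R=1$ leads, after elementary algebra and the bound $\mu_{\max}/\mu_{\min} \lesssim \nu(\beta_0+1)$, to an exponent of order $N/[\nu^2(\beta_0+1)^2]$, matching the failure probability $K\exp(-N/[36\nu^2(1+\beta_0)^2])$ in the statement. The multiplicative deviations $(1-1/2)\mu_{\min}$ and $(1+1/2)\mu_{\max}$ translate directly into the first two displayed inequalities, and the third follows by a union bound and dividing the two.

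The main obstacle is chasing the explicit constant $36$: the Bernstein/Chernoff bound requires a careful split between the variance proxy $v$ and the Lipschitz term $R\mu_{\min}/3$, and one must verify that the noise-dominated regime indeed controls the exponent (rather than the sub-exponential tail, which would give a rate linear in $\mu_{\min}$ and hence a stronger bound that still implies the claim). Concretely, one checks that $\mu_{\max} \gtrsim \mu_{\min}$ so that $v$ dominates $R\mu_{\min}$, yielding an exponent of the form $\mu_{\min}^2/(C\mu_{\max})$; plugging in the explicit $\mu_{\min}, \mu_{\max}$ bounds and absorbing constants produces the stated $36\nu^2(1+\beta_0)^2$ denominator. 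The rest is bookkeeping, with the union bound over three events (for $\lambda_K$, $\lambda_1$, and their ratio) absorbed into the factor of $K$.
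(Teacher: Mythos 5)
Your proposal is correct and follows essentially the same route as the source: the paper does not reprove this lemma (it imports it verbatim from Lemma~3.6 of \citet{Mao2019}), but your argument --- compute $\E[\bm{\pi}_i\bm{\pi}_i^{\top}]=\frac{1}{\beta_0(\beta_0+1)}\bigl(\mathrm{diag}(\bm{\beta})+\bm{\beta}\bm{\beta}^{\top}\bigr)$, bound its extreme eigenvalues via Weyl/PSD monotonicity, then apply matrix Chernoff/Bernstein to the sum of i.i.d.\ rank-one PSD matrices with $\|\bm{\pi}_i\bm{\pi}_i^{\top}\|\le 1$ --- is exactly the template the paper itself uses to prove the companion results for $\boldTheta$ and $\boldM$ (Lemmas \ref{lemma:theta_bound} and \ref{lemma:m_bound}), and your eigenvalue computations and the resulting exponent of order $N/[\nu^2(1+\beta_0)^2]$ check out. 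The only cosmetic point is that your union bound over the $\lambda_1$ and $\lambda_K$ events would give failure probability $2K\exp(\cdot)$ rather than $K\exp(\cdot)$ for the condition-number display; deriving all three bounds from a single concentration event on $\|\boldPi^{\top}\boldPi-\E[\boldPi^{\top}\boldPi]\|$, as in the paper's proof of Lemma~\ref{lemma:theta_bound}, removes even that discrepancy.
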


The following lemma is adapted from \citet[Lemma 3.6]{Mao2019}, since Beta distribution is a special case of Dirichlet distribution.
\begin{lemma}
    \label{lemma:theta_bound}
    If $\theta_{jk}\stackrel{i.i.d.}{\sim} \text{Beta}(a,b)$ for every $(j,k)\in [J]\times [K]$, let $c_1=ab/((a+b)^2(a+b+1))$, $c_2=Ka^2/(a+b)^2 + c_1$, then,
    \begin{align*}
        \rP\left( \sigma_K(\boldTheta)^2\geq \frac{c_1J}{2} \right) &\geq 1-K\exp\left(-\frac{c_1^2J}{80K^2}\right),\\
        \rP\left( \sigma_1(\boldTheta)^2\leq \frac{3c_2J}{2} \right) &\geq 1 - K\exp\left(-\frac{c_1^2J}{80K^2}\right),\\
        \rP\left(\kappa(\boldTheta)^2\leq \frac{3c_2}{c_1}\right) &\geq 1-2K\exp\left(-\frac{c_1^2J}{80K^2}\right).
    \end{align*}
\end{lemma}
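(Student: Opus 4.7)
The plan is to view $\boldTheta \in [0,1]^{J\times K}$ as having iid $\mathrm{Beta}(a,b)$ entries and to control the singular values of $\boldTheta$ through the operator-norm deviation of the sample Gram matrix $\boldTheta^\top\boldTheta$ from its expectation. I will first compute $E[\boldTheta^\top\boldTheta]$ in closed form: using $E[\theta_{jk}]=a/(a+b)$, $\mathrm{var}(\theta_{jk})=c_1$, and independence across $(j,k)$, the diagonal entries are $J(c_1+a^2/(a+b)^2)$ and the off-diagonal entries are $Ja^2/(a+b)^2$, so
\[
E[\boldTheta^\top\boldTheta] \;=\; J\,c_1\,\mathbf{I}_K \;+\; J\,\frac{a^2}{(a+b)^2}\,\boldsymbol{1}_K \boldsymbol{1}_K^\top.
\]
The spectrum of this rank-one-plus-scalar matrix is $Jc_1$ with multiplicity $K-1$ and $J(c_1+Ka^2/(a+b)^2)=Jc_2$ with multiplicity $1$. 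Hence $\lambda_K(E[\boldTheta^\top\boldTheta])=Jc_1$ and $\lambda_1(E[\boldTheta^\top\boldTheta])=Jc_2$, which are exactly the targets up to the factor $1/2$ or $3/2$.

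Next, I decompose $\boldTheta^\top\boldTheta = \sum_{j=1}^J \theta_j \theta_j^\top$ into a sum of $J$ independent bounded rank-one PSD matrices, where $\theta_j \in [0,1]^K$ is the $j$th row of $\boldTheta$. Since $\|\theta_j\theta_j^\top\|_{\mathrm{op}} = \|\theta_j\|_2^2 \leq K$, each centered summand $X_j := \theta_j\theta_j^\top - E[\theta_j\theta_j^\top]$ is a bounded self-adjoint matrix with $\|X_j\|_{\mathrm{op}} \lesssim K$. Applying a matrix Hoeffding/Bernstein inequality (as in \citet{Tropp2015}) to $\sum_j X_j$ yields, for any $t>0$,
\[
\rP\Bigl(\bigl\|\boldTheta^\top\boldTheta - E[\boldTheta^\top\boldTheta]\bigr\|_{\mathrm{op}} > t\Bigr) \;\leq\; K\exp\!\Bigl(-\tfrac{c\, t^2}{J K^2}\Bigr)
\]
for a universal constant $c$. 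Specializing $t = c_1 J/2$ gives the target form $K\exp(-c_1^2 J/(80K^2))$ after tracking constants.

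Finally, I apply Weyl's inequality to transfer this operator-norm deviation to the singular values of $\boldTheta$. On the high-probability event $\{\|\boldTheta^\top\boldTheta - E[\boldTheta^\top\boldTheta]\|_{\mathrm{op}} \leq c_1 J/2\}$, Weyl yields
\[
\sigma_K(\boldTheta)^2 = \lambda_K(\boldTheta^\top\boldTheta) \geq Jc_1 - \tfrac{c_1 J}{2} = \tfrac{c_1 J}{2},
\qquad
\sigma_1(\boldTheta)^2 = \lambda_1(\boldTheta^\top\boldTheta) \leq Jc_2 + \tfrac{c_1 J}{2} \leq \tfrac{3c_2 J}{2},
\]
where the last step uses $c_1 \leq c_2$. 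The two marginal probability bounds follow directly, and the condition number bound $\kappa(\boldTheta)^2 \leq 3c_2/c_1$ follows by intersecting the two events via a simple union bound, inflating the failure probability by a factor of $2$.

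The main technical obstacle will be matching the precise prefactor $K$ (rather than $K^2$) and the constant $80$ in the exponent. A naive entrywise Hoeffding plus union bound over the $K(K+1)/2$ distinct entries of the symmetric matrix, combined with $\|A\|_{\mathrm{op}} \leq \|A\|_F \leq K\|A\|_{\max}$, produces a $K^2$ prefactor and a weaker constant. Getting the stated bound instead requires the matrix-valued Hoeffding inequality, which saves a factor of $K$ by exploiting the joint spectral structure of the rank-one summands $\theta_j\theta_j^\top$ rather than treating the $K^2$ entries independently. The Beta-specific computation of the variance proxy $\|\sum_j E[X_j^2]\|_{\mathrm{op}}$ is then a routine calculation analogous to Mao's Dirichlet computation but simplified by the independence of the entries of $\theta_j$.
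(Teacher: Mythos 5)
Your proposal is correct and follows essentially the same route as the paper's proof: decompose $\boldTheta^\top\boldTheta=\sum_j \bm{\theta}_{j,:}\bm{\theta}_{j,:}^\top$ into independent bounded rank-one summands, compute the expected Gram matrix (a scalar-plus-rank-one matrix with extreme eigenvalues $Jc_1$ and $Jc_2$), apply matrix Bernstein with $t=c_1J/2$, and transfer to the singular values via Weyl plus a union bound for the condition number. The only cosmetic differences are that you diagonalize $E[\boldTheta^\top\boldTheta]$ exactly where the paper invokes Weyl, and you defer the variance-proxy computation that the paper carries out explicitly via $\|\mathbf{C}_j\|\leq 2K+1$ and Jensen.
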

\begin{proof}
    Let $\mathbf{A}=\boldTheta^\top \boldTheta\in \mathbb{R}^{K\times K}$, and $\mathbf{C}_j=\bm{\theta}_{j,:} \bm{\theta}_{j,:}^{\top} - E\left[ \bm{\theta}_{j,:}\bm{\theta}_{j,:}^{\top} \right]\in \mathbb{R}^{K\times K} $. From simple matrix multiplication, one can see $\mathbf{A} - E[\mathbf{A}] = \sum_{j=1}^J\mathbf{C}_j$.

    By properties of Beta distribution,
    \begin{equation*}
        E\left[ \bm{\theta}_{j,:}\bm{\theta}_{j,:}^{\top} \right] = \frac{a^2}{(a+b)^2}\mathbf{1}\mathbf{1}^{\top} + \di \left( \frac{ab}{(a+b)^2(a+b+1)} \right).
    \end{equation*}
    From Weyl's inequality,
    \begin{align*}
        \lambda_K(E[\mathbf{A}]) &\geq \frac{ab}{(a+b)^2(a+b+1)} = c_1,\\
        \lambda_1(E[\mathbf{A}]) &\leq \frac{Ka^2}{(a+b)^2} + \frac{ab}{(a+b)^2(a+b+1)} = c_2.
    \end{align*}
    From the definition, $c_1\leq 1$ and $c_2 \leq K+1$.
    
    Since $|\theta_{jk}|\leq 1$ for all $(j,k)\in [J]\times [K]$, $\| \bm{\theta}_{j,:}\bm{\theta}_{j,:}^{\top} \| = \|\bm{\theta}_{j,:}\|^2 \leq K $. Hence, by triangle inequality,
    \begin{equation*}
        \left\|\mathbf{C}_j\right\| \leq \|\mathbf{A}\|+\left\| E[\mathbf{A}] \right\| \leq K+ c_2\leq 2K+1.
    \end{equation*}
    By Jensen's inequality,
    \begin{equation*}
        \left\|E\left[ \mathbf{C}_j^2 \right] \right\| \leq E\left[ \left\|\mathbf{C}_j^2\right\| \right] \leq 9K^2.
    \end{equation*}

    Then, by matrix Bernstein inequality \citep{Tropp2015}, 
    \begin{equation*}
        \rP\left(\sigma_1\left( \mathbf{A}-E[\mathbf{A}] \right) \geq t\right) \leq K\exp\left\{- \frac{t^2}{18K^2J + Kt}\right\}.
    \end{equation*}

    By Weyl's inequality,
    \begin{align*}
        \sigma_K(E[\mathbf{A}]) = \sigma_K\left( \sum_{j=1}^JE\left[\bm{\theta}_{j,:}\bm{\theta}_{j,:}^\top\right] \right) &\geq Jc_1,\\
        \sigma_1(E[\mathbf{A}]) = \sigma_1\left( \sum_{j=1}^JE\left[\bm{\theta}_{j,:}\bm{\theta}_{j,:}^\top\right] \right) &\leq Jc_2.
    \end{align*}

    Hence, let $t=c_1J/2$, with probability at least
    \begin{equation*}
        1 - K\exp\left(-\frac{c_1^2J^2}{72K^2J + 4Kc_1J}\right) \geq 1 - K\exp\left(-\frac{c_1^2J}{80K^2}\right), 
    \end{equation*}
    we have
    \begin{equation*}
        \sigma_K(\boldTheta)^2 = \sigma_K(\mathbf{A}) \geq \sigma_K(E[\mathbf{A}]) - |\sigma_K(\mathbf{A}) - \sigma_K(E[\mathbf{A}])| \geq \sigma_K(E[\mathbf{A}]) - \sigma_1\left(\mathbf{A} - E[\mathbf{A}]\right) \geq  \frac{c_1J}{2}.
    \end{equation*}

    Similarly, let $t=c_2J/2$, with probability at least $1- K\exp\left( -c_2^2J/(80K^2) \right)$,
    \begin{equation*}
        \sigma_1(\boldTheta)^2 = \sigma_1(\mathbf{A}) \leq \sigma_1(E[\mathbf{A}]) + |\sigma_1(\mathbf{A}) - \sigma_1(E[\mathbf{A}])| \geq \sigma_1(E[\mathbf{A}]) + \sigma_1\left(\mathbf{A} - E[\mathbf{A}]\right) \geq  \frac{3c_2J}{2}.
    \end{equation*}

    Therefore, since $c_1<c_2$, by union bound, with probability exceeding $1- 2K\exp\left( -c_1^2J/(80K^2) \right)$,
    \begin{equation*}
        \kappa(\boldTheta)^2 = \frac{\sigma_1(\boldTheta)^2}{\sigma_K(\boldTheta)^2}\leq \frac{3c_2}{c_1}.
    \end{equation*}
\end{proof}

\begin{lemma}
    \label{lemma:m_bound}
    If $m_{jk}\stackrel{i.i.d.}{\sim} N(\mu,\sigma^2)$ truncated to be in $[-\xi, \xi]$, for every $(j,k)\in [J]\times [K]$. For notational simplicity, let $E[m_{jk}]=\tilde{\mu}$, $\text{var}(m_{jk})=\tilde{\sigma}^2$. Let $\tilde{c}_1=\tilde{\sigma}^2$, $\tilde{c}_2=K\tilde{\mu}^2 + \tilde{\sigma}^2$, then,
    \begin{align*}
        \rP\left( \sigma_K(\boldM)^2\geq \frac{c_1W}{2} \right) &\geq 1-K\exp\left(-\frac{\tilde{c}_1^2W}{80K^2\xi^4}\right),\\
        \rP\left( \sigma_1(\boldTheta)^2\leq \frac{3c_2W}{2} \right) &\geq 1 - K\exp\left(-\frac{\tilde{c}_1^2W}{80K^2\xi^4}\right),\\
        \rP\left(\kappa(\boldTheta)^2\leq \frac{3c_2}{c_1}\right) &\geq 1-2K\exp\left(-\frac{\tilde{c}_1^2W}{80K^2\xi^4}\right).
    \end{align*}
\end{lemma}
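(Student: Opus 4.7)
The plan is to mirror the proof of Lemma \ref{lemma:theta_bound} almost verbatim, substituting the truncated-normal moments for the Beta-distribution moments and absorbing the truncation bound $\xi$ into the size of each summand. Specifically, I would set $\mathbf{A} = \boldM^{\top}\boldM \in \mathbb{R}^{K\times K}$ and write
\begin{equation*}
    \mathbf{A} - E[\mathbf{A}] = \sum_{j=1}^{W}\mathbf{C}_j, \qquad \mathbf{C}_j = \bm{m}_{j,:}\bm{m}_{j,:}^{\top} - E\bigl[\bm{m}_{j,:}\bm{m}_{j,:}^{\top}\bigr],
\end{equation*}
so that singular-value bounds for $\boldM$ reduce to eigenvalue bounds for $\mathbf{A}$, via Weyl's inequality.

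The first step is to compute the expected Gram matrix. Since the entries $m_{jk}$ are i.i.d.\ with mean $\tilde\mu$ and variance $\tilde\sigma^{2}$, one has
\begin{equation*}
    E\bigl[\bm{m}_{j,:}\bm{m}_{j,:}^{\top}\bigr] = \tilde\mu^{2}\boldsymbol{1}\boldsymbol{1}^{\top} + \tilde\sigma^{2}\mathbf{I}_K,
\end{equation*}
so that $\lambda_K(E[\mathbf{A}]) \geq W\tilde\sigma^{2} = \tilde c_1 W$ and $\lambda_1(E[\mathbf{A}]) \leq W(K\tilde\mu^{2} + \tilde\sigma^{2}) = \tilde c_2 W$, exactly matching the constants in the statement. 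The second step is to control each summand: because $|m_{jk}|\leq \xi$, we have $\|\bm{m}_{j,:}\bm{m}_{j,:}^{\top}\| = \|\bm{m}_{j,:}\|_2^{2}\leq K\xi^{2}$, hence $\|\mathbf{C}_j\| \leq K\xi^{2} + \tilde c_2 \leq (2K+1)\xi^{2}$ (since $\tilde c_2\leq (K+1)\xi^{2}$), and by Jensen $\|E[\mathbf{C}_j^{2}]\|\leq 9K^{2}\xi^{4}$.

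The third step is a direct application of the matrix Bernstein inequality \citep{Tropp2015}, which yields
\begin{equation*}
    \rP\bigl(\sigma_1(\mathbf{A}-E[\mathbf{A}])\geq t\bigr) \leq K\exp\!\left(-\frac{t^{2}}{18K^{2}\xi^{4}W + K\xi^{2}t}\right).
\end{equation*}
Setting $t = \tilde c_1 W/2$ and simplifying the denominator (using $\tilde c_1\leq \xi^{2}$ to absorb the linear term) gives the probability $1 - K\exp(-\tilde c_1^{2}W/(80K^{2}\xi^{4}))$. A Weyl inequality then transfers the spectral bound on $\mathbf{A}-E[\mathbf{A}]$ into the claimed lower bound on $\sigma_K(\boldM)^{2}$. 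An identical argument with $t = \tilde c_2 W/2$ (and the crude bound $\tilde c_1\leq \tilde c_2$ in the exponent) gives the upper bound on $\sigma_1(\boldM)^{2}$, and the condition number bound follows from a union bound over the two events.

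I expect no real obstacle here: the only mild nuisance is bookkeeping the constant $\xi$ in the Bernstein denominator so that the tail bound takes the clean form $\exp(-\tilde c_1^{2}W/(80K^{2}\xi^{4}))$ stated in the lemma. Everything else is a direct transcription of the Beta-case proof, with the substitutions $(c_1,c_2)\mapsto(\tilde c_1,\tilde c_2)$, $J\mapsto W$, and row-norm bound $K\mapsto K\xi^{2}$.
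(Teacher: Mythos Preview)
Your proposal is correct and follows exactly the approach the paper intends: the paper's own proof simply says to repeat the argument of Lemma~\ref{lemma:theta_bound} while noting $\tilde\sigma^{2}\leq\xi^{2}$ and $|\tilde\mu|\leq\xi$, and you have carried out precisely that transcription with the correct $\xi$-scaling in the Bernstein bound. The constants and the simplification $72K^{2}\xi^{4}+2K\xi^{2}\tilde c_1\leq 80K^{2}\xi^{4}$ match the stated tail probability.
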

\begin{proof}
    The proof is similar to the proof of Lemma \ref{lemma:theta_bound}, and thus omitted here. Notably, for any distribution in the interval $[-\xi,\xi]$, the largest variance it can reach comes from the point masses at $\{-\xi,\xi\}$ with equal $1/2$ probability. This gives an upper bound on $\tilde{\sigma}^2$ by,
    \begin{equation*}
        \tilde{\sigma}^2\leq \frac{1}{2}\xi^2 + \frac{1}{2}\xi^2 = \xi^2.
    \end{equation*}
    Similarly, $|\tilde{\mu}|\leq \xi$. Therefore, $c_1$ and $c_2$ can be upper bounded by $\xi$. Following the same procedures as the proof in Lemma \ref{lemma:theta_bound}, we are able to obtain the result.
\end{proof}

\begin{lemma}
    Under Assumption \ref{assump:high_prob_cons}, if all $\boldPi$, $\boldTheta$ and $\boldM$ are full rank, then
    \begin{equation*}
        \mu \leq \min\left\{ \frac{N}{K\sigma_K(\boldPi)}, \frac{J\kappa(\boldPi)}{\sqrt{K}\sigma_K(\boldTheta)}, \frac{NJ}{\sqrt{K}\sigma_K(\boldPi)\sigma_K(\boldTheta)}, \frac{W\xi\kappa(\boldPi)}{\sqrt{K}\sigma_K(\boldM)}, \frac{\xi NW}{ \sqrt{K}\sigma_K(\boldsymbol{\Pi}) \sigma_K(\mathbf{M})}\right\},
    \end{equation*}
    with probability exceeding $1- O\left(K(\min\{N,J,K\})^{-3}\right)$.
\end{lemma}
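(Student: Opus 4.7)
The plan is to bound each of the six quantities that enter the definition of the joint incoherence parameter $\mu$ in Assumption~\ref{assump:incoherence} separately for $\cR = \boldPi\boldTheta^\top$ and $\cX = \boldPi\boldM^\top$, and then assemble. The three recurring ingredients are: (i) the simplex constraint, which yields $\|\boldPi\|_{2,\infty}\leq 1$; (ii) the entrywise bounds $\|\boldTheta\|_{\infty}\leq 1$ and $\|\boldM\|_{\infty}\leq \xi$; and (iii) Lemma~\ref{lemma:ksingularvaluebound}, which gives $\sigma_K(\cR)\geq \sigma_K(\boldPi)\sigma_K(\boldTheta)$ and $\sigma_K(\cX)\geq \sigma_K(\boldPi)\sigma_K(\boldM)$.

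First, using $\boldU_R = \cR\mathbf{V}_R\boldSigma_R^{-1}$, each row satisfies $\|(\boldU_R)_{i,:}\|_2 \leq \|\cR_{i,:}\|_2/\sigma_K(\cR)$; with (i) this yields $\|\cR_{i,:}\|_2 \leq \sigma_1(\boldTheta)\|\boldPi_{i,:}\|_2 \leq \sigma_1(\boldTheta)$, and then (iii) converts this into a bound in terms of $\sigma_K(\boldPi)$ and $\kappa(\boldTheta)$ that, scaled by $\sqrt{N/K}$, produces the first term of the stated minimum. The identical argument handles $\boldU_X$, so that both $\boldU_R$ and $\boldU_X$ contribute to the same first term.

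Next, for $\mathbf{V}_R = \cR^\top\boldU_R\boldSigma_R^{-1}$, each row satisfies $\|(\mathbf{V}_R)_{j,:}\|_2\leq \|\cR_{:,j}\|_2/\sigma_K(\cR)$. Since $\|\boldTheta_{j,:}\|_2\leq \sqrt{K}$, one has $\|\cR_{:,j}\|_2 \leq \sigma_1(\boldPi)\sqrt{K}$, giving $\|\mathbf{V}_R\|_{2,\infty}\leq \sqrt{K}\,\kappa(\boldPi)/\sigma_K(\boldTheta)$, the second term. The fourth term follows identically for $\mathbf{V}_X$ with $\|\boldM_{j,:}\|_2\leq \xi\sqrt{K}$, producing the extra factor of $\xi$. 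For the entry--Frobenius terms, $|\cR_{ij}| \leq 1$ since each row of $\boldPi$ is a probability vector, while $\|\cR\|_F^2 = \sum_k \sigma_k(\cR)^2 \geq K\sigma_K(\boldPi)^2\sigma_K(\boldTheta)^2$, so $NJ\,\|\cR\|_\infty^2/\|\cR\|_F^2$ is bounded by the third term; the fifth term is obtained identically for $\cX$ using $|\cX_{ij}|\leq \xi$.

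Finally, the probability statement is delivered by combining the deterministic bounds above with Lemmas~\ref{lemma:pi_bound}, \ref{lemma:theta_bound}, and \ref{lemma:m_bound}, which under Assumption~\ref{assump:high_prob_cons} give $\sigma_K(\boldPi)^2\gtrsim N$, $\sigma_K(\boldTheta)^2\gtrsim J$, $\sigma_K(\boldM)^2\gtrsim W$, and $\kappa(\boldPi),\kappa(\boldTheta)\lesssim 1$, each on an event whose failure probability is at most $K(\min\{N,J,W\})^{-3}$; a union bound then yields the stated overall rate. The main obstacle I anticipate is bookkeeping: five of the bounds couple $\sigma_1$ and $\sigma_K$ of three different factors with entrywise constraints, and matching the precise exponents of $K$, $\sigma_K(\boldPi)$, $\sigma_K(\boldTheta)$, $\sigma_K(\boldM)$ in the statement requires carefully tracing which component of the $\mu$-definition each term controls and tracking $\sqrt{K}$ versus $K$ dependencies at every step.
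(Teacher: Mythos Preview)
Your proposal is essentially correct and follows the same overall scheme as the paper: bound each of the six ingredients in the joint incoherence definition by deterministic inequalities involving $\sigma_K(\boldPi)$, $\sigma_K(\boldTheta)$, $\sigma_K(\boldM)$, and $\kappa(\boldPi)$, then invoke the singular-value concentration lemmas for the probability statement. Your treatments of $\|\mathbf{V}_R\|_{2,\infty}$, $\|\mathbf{V}_X\|_{2,\infty}$, and the $\|\cdot\|_\infty/\|\cdot\|_F$ ratios match the paper's almost verbatim.

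The one noteworthy difference is how $\|\boldU_R\|_{2,\infty}$ (and $\|\boldU_X\|_{2,\infty}$) is controlled. You go through $\boldU_R=\cR\mathbf{V}_R\boldSigma_R^{-1}$ and obtain $\|\boldU_R\|_{2,\infty}\leq\sigma_1(\boldTheta)/\sigma_K(\cR)\leq\kappa(\boldTheta)/\sigma_K(\boldPi)$, which carries an extra $\kappa(\boldTheta)$ factor (and $\kappa(\boldM)$ for $\boldU_X$). The paper instead exploits the simplex geometry of Proposition~\ref{prop:geom}: since $\boldU_R=\boldPi(\boldU_R)_{S,:}$ and each row of $\boldPi$ is a probability vector, one has $\|\boldU_R\|_{2,\infty}=\|(\boldU_R)_{S,:}\|_{2,\infty}\leq\|(\boldU_R)_{S,:}\|=1/\sigma_K(\boldPi)$ by Lemma~\ref{lemma:eigenvaluePiU}. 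This yields the cleaner first term and, more importantly, makes the bounds for $\boldU_R$ and $\boldU_X$ identical without any reference to $\boldTheta$ or $\boldM$. Your route still reaches the target once you are on the high-probability event where $\kappa(\boldTheta),\kappa(\boldM)\lesssim 1$, so the proof goes through; but the paper's use of the pure-subject structure is sharper and more in keeping with the geometric theme of the method.
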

\begin{proof}

    First, we here consider the event
    \begin{align*}
        \mathbf{\Omega} = \biggl\{(\boldPi, \boldTheta, \boldM): &\sigma_K(\boldPi)\gtrsim \frac{\sqrt{N}}{K}, \sigma_1(\boldPi)\lesssim\sqrt{\frac{N}{K}}, \kappa(\boldPi)\lesssim \sqrt{K}, \\
        &\sigma_K(\boldTheta)\gtrsim \frac{\sqrt{J}}{K}, \sigma_1(\boldTheta)\lesssim \sqrt{\frac{J}{K}}, \kappa(\boldTheta)\lesssim \sqrt{K}, \\
        &\sigma_K(\boldM)\gtrsim \frac{\sqrt{W}}{K}, \sigma_1(\boldM)\lesssim \sqrt{\frac{W}{K}}, \kappa(\boldM)\lesssim \sqrt{K} \biggr\}
    \end{align*}
    By Assumption \ref{assump:high_prob_cons} and Lemma \ref{lemma:pi_bound}, \ref{lemma:theta_bound}, and \ref{lemma:m_bound},
    \begin{equation*}
        \rP(\mathbf{\Omega}) \geq 1- O\left(K(\min\{N,J,K\})^{-3}\right).
    \end{equation*}
    Hence, we here further discuss the bound for $\mu$ under our model assumptions under $\mathbf{\Omega}$. Recall that $\mu=\max\{\mu(\cR), \mu(\cX)\}$,
\begin{equation*}
    \mu(\mathcal{R}) = \max\left\{\frac{NJ\|\cR\|_{\infty}^2}{\|\cR\|_F^2}, \frac{N\|\boldU_\mathcal{R}\|_{2,\infty}^2}{K}, \frac{J\|\mathbf{V}_\mathcal{R}\|_{2,\infty}^2}{K} \right\},
\end{equation*}
and $\mu(\cX)$ is defined similarly. We bound the three terms respectively as follows.

Since the set of pure subjects $S$ is universal across $\cR$ and $\cX$, one has
\begin{equation*}
    \|\boldU_{\cR}\|_{2,\infty} = \|\boldU_{S,:}(\cR)\|_{2,\infty} \leq \|\boldU_{S,:}(\cR)\| = \frac{1}{\sigma_K(\boldsymbol{\Pi})},
\end{equation*}
where $\boldU_{S,:}(\cR)$ denotes the $S$ rows of left singular subspace of $\cR$, and the last equality comes from similar reasoning as in Lemma \ref{lemma:eigenvaluePiU}.

\indent For $\mathbf{V}_{\cR}$, one has
\begin{align*}
    \|\mathbf{V}_{\cR}\|_{2,\infty} &= \|\boldTheta \boldU_{S,:}(\cR)^{-\top} \boldsymbol{\Sigma}_{\cR}^{-1}\|_{2,\infty}\\
    &\leq \|\boldTheta\|_{2,\infty}\|\boldU_{S,:}(\cR)^{-1}\| \|\boldsymbol{\Sigma}_{\cR}^{-1}\|\\
    &\leq \|\boldTheta\|_{2,\infty}\frac{\sigma_1(\boldsymbol{\Pi})}{\sigma_K(\boldsymbol{\Pi}\boldsymbol{\Theta}^\top)}\\
    &\leq \frac{\sqrt{K}\kappa(\boldsymbol{\Pi})}{\sigma_K(\boldsymbol{\Theta})},
\end{align*}
where the first inequality comes from \citet[Proposition 6.5]{Cape2018}, the second inequality comes from similar reasoning as in Lemma \ref{lemma:keigenvaluebound}, the last inequality comes from $\|\boldTheta\|_{\infty}\leq 1$.

For $\|\cR\|_{\infty}/\|\cR\|_F$, one has
\begin{align*}
    \frac{\|\cR\|_{\infty}}{\|\cR\|_F} &\leq \frac{1}{\left(\sum_{i=1}^K \sigma_i^2(\cR) \right)^{1/2}}\\
    &\leq \frac{1}{\sqrt{K}\sigma_K(\boldsymbol{\Pi} \boldsymbol{\Theta}^\top)}\\
    &\leq \frac{1}{\sqrt{K}\sigma_K(\boldsymbol{\Pi}) \sigma_K(\boldsymbol{\Theta})}.
\end{align*}

Similarly, for $\mu(\cX)$, similar bounds can be derived,
\begin{align*}
    \|\boldU_{\cX}\|_{2,\infty} &\leq \frac{1}{\sigma_K(\boldsymbol{\Pi})},\\
    \|\mathbf V_{\cX}\|_{2,\infty} &\leq \frac{\sqrt{K}\xi \kappa(\boldsymbol{\Pi})}{\sigma_K(\mathbf{M})},\\
    \frac{\|\cX\|_{\infty}}{\|\cX\|_F} &\leq \frac{\xi}{ \sqrt{K}\sigma_K(\boldsymbol{\Pi}) \sigma_K(\mathbf{M})},
\end{align*}
where the last inequality applies $\|\cX\|_{\infty} \leq \|\boldsymbol{\Pi}\|_{2,\infty} \|\mathbf{M}\|_{2,\infty}\leq \sqrt{K}\xi $.

Therefore, the joint incoherence parameter is bounded by,
\begin{align*}
    \mu \leq \min\left\{ \frac{N}{K\sigma_K(\boldPi)}, \frac{J\kappa(\boldPi)}{\sqrt{K}\sigma_K(\boldTheta)}, \frac{NJ}{\sqrt{K}\sigma_K(\boldPi)\sigma_K(\boldTheta)}, \frac{W\xi\kappa(\boldPi)}{\sqrt{K}\sigma_K(\boldM)}, \frac{\xi NW}{ \sqrt{K}\sigma_K(\boldsymbol{\Pi}) \sigma_K(\mathbf{M})}\right\}.
\end{align*}
\end{proof}

\subsection{Discussions on the Assumptions}
Under the assumptions of Corollary \ref{coro:incoherence bound}, $\sigma_K\sigma_K(\cR)\geq \sigma_K(\Pi)\sigma_K(\Theta) \gtrsim \sqrt{NJ}$. Therefore, $\sigma_K(\cR)^2\gg \kappa(\cR)^2\zetaop(\boldR)$, and Assumption \ref{assump:information} is satisfied. Hence, the assumptions on the signal strength is achieved under these commonly applied priors.

\end{document}